\newtheorem{theorem}{Theorem}[section]
\newtheorem{definition}[theorem]{Definition}
\newtheorem{corollary}[theorem]{Corollary}
\newtheorem{proposition}[theorem]{Proposition}
\newtheorem{lemma}[theorem]{Lemma}
\newtheorem{remark}[theorem]{Remark}
\newtheorem*{conjecture*}{Conjecture}
\newtheorem*{theorem*}{Theorem}
\newtheorem*{corollary*}{Corollary}
\newcommand{\Sf}{\accentset{\land}{\mathscr{S}}}
\newcommand{\Sff}{{\accentset{\land}{\mathscr{S}}^\prime}}
\newcommand{\Si}{\underaccent{\lor}{\mathscr{S}}}
\newcommand{\ns}{\slashed{\nabla}}
\newcommand{\divs}{\slashed{\mathrm{div}}}
\newcommand{\ds}{\slashed{\Delta}}
\newcommand{\Olin}{\Omega^{-1}\accentset{\scalebox{.6}{\mbox{\tiny (1)}}}{\Omega}}
\newcommand{\Olino}{\accentset{\scalebox{.6}{\mbox{\tiny (1)}}}{\Omega}}
\newcommand{\glinh}{\accentset{\scalebox{.6}{\mbox{\tiny (1)}}}{\hat{\slashed{g}}}}
\newcommand{\glin}{\accentset{\scalebox{.6}{\mbox{\tiny (1)}}}{\slashed{g}}}
  \newcommand{\glinto}{\accentset{\scalebox{.6}{\mbox{\tiny (1)}}}{\sqrt{\slashed{g}}}}
\newcommand{\bmlin}{\accentset{\scalebox{.6}{\mbox{\tiny (1)}}}{b}}
\newcommand{\Pmcalin}{\accentset{\scalebox{.6}{\mbox{\tiny (1)}}}{\mathcal{P}}}
\newcommand{\Qlin}{\accentset{\scalebox{.6}{\mbox{\tiny (1)}}}{\mathcal{Q}}}
\newcommand{\xblin}{\accentset{\scalebox{.6}{\mbox{\tiny (1)}}}{\underline{\hat{\chi}}}}
\newcommand{\xlin}{\accentset{\scalebox{.6}{\mbox{\tiny (1)}}}{{\hat{\chi}}}}
\newcommand{\eblin}{\accentset{\scalebox{.6}{\mbox{\tiny (1)}}}{\underline{\eta}}}
\newcommand{\elin}{\accentset{\scalebox{.6}{\mbox{\tiny (1)}}}{{\eta}}}
\newcommand{\otx}{\accentset{\scalebox{.6}{\mbox{\tiny (1)}}}{\left(\Omega \mathrm{tr} \chi\right)}}
\newcommand{\otxb}{\accentset{\scalebox{.6}{\mbox{\tiny (1)}}}{\left(\Omega \mathrm{tr} \underline{\chi}\right)}}
\newcommand{\olin}{\accentset{\scalebox{.6}{\mbox{\tiny (1)}}}{\omega}}
\newcommand{\olinb}{\accentset{\scalebox{.6}{\mbox{\tiny (1)}}}{\underline{\omega}}}
\newcommand{\ablin}{\accentset{\scalebox{.6}{\mbox{\tiny (1)}}}{\underline{\alpha}}}
\newcommand{\alin}{\accentset{\scalebox{.6}{\mbox{\tiny (1)}}}{{\alpha}}}
\newcommand{\pblin}{\accentset{\scalebox{.6}{\mbox{\tiny (1)}}}{\underline{\psi}}}
\newcommand{\plin}{\accentset{\scalebox{.6}{\mbox{\tiny (1)}}}{{\psi}}}
\newcommand{\bblin}{\accentset{\scalebox{.6}{\mbox{\tiny (1)}}}{\underline{\beta}}}
\newcommand{\blin}{\accentset{\scalebox{.6}{\mbox{\tiny (1)}}}{{\beta}}}
\newcommand{\rlin}{\accentset{\scalebox{.6}{\mbox{\tiny (1)}}}{\rho}}
\newcommand{\slin}{\accentset{\scalebox{.6}{\mbox{\tiny (1)}}}{{\sigma}}}
\newcommand{\Klin}{\accentset{\scalebox{.6}{\mbox{\tiny (1)}}}{K}}
\newcommand{\Psilin}{\accentset{\scalebox{.6}{\mbox{\tiny (1)}}}{\Psi}}
\newcommand{\Psilinb}{\accentset{\scalebox{.6}{\mbox{\tiny (1)}}}{\underline{\Psi}}}
\DeclareMathAlphabet\mathbfcal{OMS}{cmsy}{b}{n}
\title{Linear Stability of Schwarzschild-Anti-de Sitter spacetimes I: \\ The system of gravitational perturbations}
\author[1]{Olivier Graf\thanks{olivier.graf@univ-grenoble-alpes.fr}}
\author[2,3]{Gustav Holzegel\thanks{gholzegel@uni-muenster.de}}
\affil[1]{\small Univ.~Grenoble~Alpes, CNRS, IF, 38000 Grenoble, France \vskip.2pc \ }
\affil[2]{\small Universit\"at M\"unster,
Mathematisches~Institut, Einsteinstrasse~62~48149~M\"unster,~Bundesrepublik~Deutschland \vskip.2pc \ }
\affil[3]{\small Imperial College London,
Department of Mathematics,
South~Kensington~Campus,~London~SW7~2AZ,~United~Kingdom}
\begin{document}
\maketitle
\begin{abstract}
This is the main paper of a series establishing the linear stability of Schwarzschild-Anti-de Sitter (AdS) black holes to gravitational perturbations. Specifically, we prove that solutions to the linearisation of the Einstein equations $\textrm{Ric}(g) = \Lambda g$ with $\Lambda<0$ around a Schwarzschild-AdS metric arising from regular initial data and with standard Dirichlet-type boundary conditions imposed at the conformal boundary (inherited from fixing the conformal class of the non-linear metric) remain globally uniformly bounded on the black hole exterior and in fact decay inverse logarithmically in time to a linearised Kerr-AdS metric. The proof exploits a hierarchical structure of the equations of linearised gravity in  double null gauge and crucially relies on boundedness and logarithmic decay results for the Teukolsky system, which are independent results proven in Part II of the series. Contrary to the asymptotically flat case, addition of a residual pure gauge solution to the original solution is not required to prove decay of all linearised null curvature and Ricci coefficients. One may however normalise the solution at the conformal boundary to be in standard AdS-form by adding such a pure gauge solution, which is constructed dynamically from the trace of the original solution at the conformal boundary and quantitatively controlled by initial data.
\end{abstract}

{
  \hypersetup{linkcolor=black}
  \tableofcontents
}

\hypersetup{linkcolor=MidnightBlue}

\section{Introduction}\label{sec:intro}

The study of the stability of black hole solutions of the Einstein equations with cosmological constant $\Lambda$,
\begin{align} \label{EVEL}
  \mathrm{Ric}(g) = \Lambda g \, ,
\end{align}
originates in the physics literature with the pioneering work of Regge and Wheeler \cite{Reg.Whe57} for the Schwarzschild solution. In the past two decades the subject has seen a tremendous development through the introduction of modern PDE theory. As of today, satisfactory \emph{non-linear} stability results are available for Schwarzschild and very slowly rotating Kerr black holes, i.e.~$\Lambda=0$ in (\ref{EVEL}), and Kerr-de Sitter black holes, i.e.~$\Lambda>0$ in (\ref{EVEL}); see \cite{Daf.Hol.Rod.Tay21, Gio.Kla.Sze22, Hin.Vas16} and references therein. On the other hand, still very little is known about the non-linear evolution of perturbations of the Schwarzschild-Anti de Sitter, and more generally Kerr-Anti de Sitter, family of solutions, i.e.~$\Lambda<0$ in (\ref{EVEL}). 

The main difficulty in the analysis of asymptotically Anti-de Sitter (aAdS) spacetimes lies in their non-globally hyperbolic nature: The spacetimes possess a timelike conformal boundary at infinity, which is most easily seen for the maximally symmetric solution of (\ref{EVEL}) with $\Lambda<0$, Anti-de Sitter (AdS) space \cite{Haw.Ell08}. The existence of the boundary necessitates the study of a boundary initial value problem to understand the dynamics of the hyperbolic system (\ref{EVEL}). Formulating geometric boundary conditions to establish local well-posedness is highly non-trivial. See \cite{Fri95, Enc.Kam19} for some classical well-posedness theorems for (\ref{EVEL}) with $\Lambda<0$. 

Turning to the \emph{global} dynamics in the $\Lambda<0$ case, stability properties of stationary solutions are expected to depend crucially on the type of boundary condition imposed. Notably, here even the simplest case -- that of perturbations around pure AdS -- is still open. In the case that \emph{Dirichlet} type boundary conditions are imposed at the boundary, non-linear \emph{instability} of AdS has been proven for spherically symmetric toy-models \cite{Mos18, Mos20} and is expected to hold in general.\footnote{Dirichlet conditions can be thought of as a form of reflecting boundary conditions. In particular, the asymptotic mass is held constant along the conformal boundary, hence gravitational radiation cannot escape through the conformal boundary. Geometrically, it corresponds to fixing the conformal class of the metric on the boundary to be that induced by the pure AdS metric. See \cite{Biz.Ros11} for very influential numerical study of the Dirichlet problem is the context of the spherically symmetric scalar field model.} On the other hand, the (linear) results of \cite{Hol.Luk.Smu.War20} suggest non-linear stability to hold in the case of \emph{dissipative} boundary conditions, where radiation is allowed to escape through the conformal boundary. Going from pure AdS to black hole spacetimes, the additional characteristic phenomena of trapped null geodesics and superradiance couple with the effect of the boundary making the analysis of the problem even more difficult. Nevertheless, the general expectation is again that of instability for reflecting boundary conditions and stability for dissipative boundary conditions.

\subsection{The scalar wave equation, linear stability and non-linear stability}
From a PDE perspective, non-linear stability results typically rely on a robust understanding of the underlying linearised problem including quantitative estimates on the rates of decay in the geometry under consideration. Such an analysis also seems a prerequisite for non-linear \emph{instability} results in order to gain control on potential blow-up or growth mechanisms. Estimating the linearisation of the equations (\ref{EVEL}) requires choosing a gauge and, independently of the specifics of the gauge, already results in a complicated coupled system of equations. A good first intuition can often be gained from the study of the scalar wave equation $\Box_{g} \psi = 0$ on the background under consideration. This removes the problem of gauge as well as the coupled tensorial character of the problem. At the same time, the scalar equation (due to its Lagrangian structure) inherits natural coercive conservation laws from the symmetries of the background which can be exploited in the analysis.

In the case of pure AdS, one thus discovers the existence of time-periodic (i.e.~non-decaying) solutions of $\Box_{g_{AdS}} \psi =0$, which lie at the heart of the non-linear instability exploited in \cite{Mos18}. In the case of asymptotically AdS black holes, the corresponding scalar problem was studied in \cite{Hol.Smu13, Hol.Smu14}, where it is shown that solutions of $\Box_{g_{KAdS} \psi} + \mu \psi= 0$ with Dirichlet boundary conditions decay inverse logarithmically and not faster on Kerr-AdS black holes whose parameters satisfy the Hawking-Reall bound.\footnote{Beyond that bound, one has exponentially growing solutions. See \cite{Dol17}.} In view of the slow decay, the authors of \cite{Hol.Smu14} conjectured non-linear instability of these black holes. Recently, a concrete instability mechanism (related to weak turbulence and the growth of higher order Sobolev norms) has been suggested for a non-linear scalar toy-model on Schwarzschild-AdS \cite{Keh.Mos}. 

\subsection{The main result}
The goal of this series of works is to show that the logarithmic decay established for the scalar toy problem also holds for the linearised Einstein equations on Schwarzschild-AdS. More precisely we will prove the following statement:
\begin{theorem*}[Informal version]
Solutions to the linearisation of the Einstein
equations $Ric=-3k^2 g$ around a Schwarzschild-AdS metric arising from regular initial data and with
standard Dirichlet boundary conditions at the conformal boundary (inherited from fixing the conformal class of the non-linear metric) remain globally uniformly bounded on the black hole exterior and in fact decay inverse logarithmically to a linearised Kerr-AdS metric.
\end{theorem*}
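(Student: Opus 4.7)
The plan is to work in a double null gauge adapted to the Schwarzschild-AdS background and exploit a hierarchical structure in the linearised equations. First I would isolate the two extremal linearised null curvature components $\alin$ and $\ablin$, which decouple from the rest of the system and satisfy two independent (spin $\pm 2$) Teukolsky equations. The remaining linearised quantities -- the null curvature components $\blin,\bblin,\rlin,\slin$, the Ricci coefficients $\xlin,\xblin,\elin,\eblin,\otx,\otxb,\olin,\olinb$, and the metric coefficients $\glin,\bmlin,\Olino$ -- will then be treated as solving transport and elliptic equations on the $2$-spheres of the double null foliation, driven by $\alin,\ablin$. The Dirichlet boundary conditions at the conformal boundary, inherited from fixing the conformal class of the non-linear metric, must be carefully tracked through this decomposition to supply boundary conditions for each linearised quantity at infinity.

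Step two invokes the standalone Part~II result: boundedness and inverse logarithmic decay for solutions of the Teukolsky system on Schwarzschild-AdS. Concretely, I would pass from $\alin,\ablin$ to Regge-Wheeler type quantities $\Psilin,\Psilinb$ via a Chandrasekhar-type transformation adapted to $\Lambda<0$, quote boundedness and inverse logarithmic decay for $\Psilin,\Psilinb$ (and their commuted analogues), and then invert the transformation by integrating its defining transport equations along the null generators. This recovers the same quantitative control for $\alin,\ablin$ themselves, with boundary terms at infinity matched through the prescribed Dirichlet condition.

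Step three is to climb the hierarchy. Using the linearised null Bianchi identities and null structure equations, I would integrate transport equations first to recover $\blin,\rlin,\slin,\bblin$ from $\alin,\ablin$, then the Ricci coefficients $\xlin,\xblin,\elin,\eblin,\otx,\otxb$ via combinations of further transport and elliptic estimates on the spheres, and finally the metric quantities $\glin,\bmlin,\Olino$. Each integration must be carried out in a transport direction in which the natural weighted norm is coercive up to the conformal boundary. A key point is that several of these transport equations admit kernels supported on the lowest spherical harmonics ($\ell=0$ and $\ell=1$), and these kernels correspond precisely to perturbations within the Kerr-AdS family; they must be isolated and identified with a linearised Kerr-AdS solution whose parameters are determined by conserved charge integrals on the initial data. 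The decay of the remaining, higher-mode part will then be inherited, up to polynomial loss of weights and derivatives, from the inverse logarithmic decay of $\Psilin,\Psilinb$.

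Step four is to normalise the solution to standard AdS form at the conformal boundary by subtracting a pure gauge solution constructed dynamically from the boundary trace of the original solution, with all seed functions quantitatively controlled by initial data. Combined with the output of steps two and three this will give uniform boundedness on the exterior together with inverse logarithmic convergence to the linearised Kerr-AdS solution identified in step three. The main obstacle I expect is the interaction between the Dirichlet boundary conditions and the transport hierarchy: in contrast to the asymptotically flat case, the presence of a timelike conformal boundary forces one to choose the transport directions and weighted norms compatible with prescribed Dirichlet data, to match integration constants at infinity for every quantity in the hierarchy, and to do so without degrading the already slow logarithmic rate supplied by Part~II.
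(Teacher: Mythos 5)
Your overall architecture --- double null gauge, Part~II input for the Teukolsky quantities $\alin,\ablin$, a hierarchical integration of the remaining system, identification of the $\ell=0,1$ content with pure gauge plus linearised Kerr--AdS, and a dynamically constructed boundary gauge --- matches the paper's. However, your Step 3 is under-specified at exactly the point where the argument is delicate, and as written its ordering would fail. You propose to first recover the curvature one-forms $\blin,\bblin$ by integrating their transport equations from data, and only afterwards the shears. Integrating $\slashed{\nabla}_4\blin$ forward from data runs into the blueshift factor $-\Omega^{-1}\omega\,\blin$ near the horizon, and integrating $\slashed{\nabla}_3\bblin$ inward from data has unfavourable weights; the paper instead obtains $\blin$ and $\bblin$ algebraically from $\Omega\slashed{\nabla}_3(r\Omega^2\alin)$ and $\Omega\slashed{\nabla}_4(r^5\Omega^{-2}\ablin)$ \emph{together with the shears}, so the shears must come first. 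The genuinely new mechanism --- and the reason no future-normalised gauge is needed here, unlike in the $\Lambda=0$ case --- is the following order: one first controls $\xlin$ by integrating its $\slashed{\nabla}_4$ equation from data after commuting twice with $\Omega^{-1}\slashed{\nabla}_3$ to convert the blueshift into a redshift; the boundary condition $\lim r\xlin=\lim r\xblin$ then transfers this control to $\xblin$ on $\mathcal{I}$, and $\xblin$ is recovered globally by integrating its $\slashed{\nabla}_3$ equation \emph{inward from the conformal boundary} using the $\ablin$ estimates. Your general principle (``choose transport directions in which the weighted norm is coercive'') does not by itself identify this, and without it the hierarchy does not close.

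A second, more minor point: your Step 4 presents the boundary-normalising pure gauge solution as part of establishing decay and convergence to linearised Kerr--AdS. In the paper this gauge solution is \emph{not} needed for that: all Ricci coefficients and curvature components already decay inverse logarithmically in the initial-data-normalised gauge of Theorem \ref{mtheo:boundedness}, and the additional pure gauge solution of Theorem \ref{mtheo:decay} only improves the radial weights and puts the induced boundary metric in standard form. This is precisely the structural difference from the asymptotically flat case that the paper emphasises, so your proof sketch should not route the decay statement through that gauge transformation.
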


For a precise statement of the theorem, see already Theorem \ref{mtheo:boundedness} below. 

\begin{remark}
The theorem should be directly compared with the result of \cite{Daf.Hol.Rod19} in the $\Lambda=0$ case. The main difference is that here only a logarithmic decay rate (as opposed to inverse polynomial in  \cite{Daf.Hol.Rod19}) can be concluded. This is characteristic of the reflective boundary conditions as explained above. In Part III of the series, we actually prove that the decay rate cannot be improved for general solutions. The other main difference compared with the asymptotically flat case is that here all quantities can be shown to decay \emph{without} adding a residual pure gauge solution. This is to be constrasted with Theorem 3 in  \cite{Daf.Hol.Rod19} which establishes boundedness and Theorem 4 in \cite{Daf.Hol.Rod19} where decay is established after having added to the solution an appropriately future normalised (dynamically determined) pure gauge solution.
\end{remark}

\begin{remark}
Note that despite the \underline{linear stability} statement of the theorem above, one may still expect \underline{non-linear instability} in view of the slow decay rate. However, the statement can still be used to establish rigidity properties concerning the Schwarzschild-AdS metric. These will be explored elsewhere. 
\end{remark}

\begin{remark}
We finally remark that an analogous theorem is expected to hold in the Kerr-AdS case for black hole parameters satisfying the Hawking-Reall bound. For small $a$, such a result should follow perturbatively from the techniques of this paper. We leave this for future study and refer to \cite{Gra.Hol23} for further discussion.
\end{remark}

\subsection{Overview of the paper and comments on the proof}
We will only very briefly comment on the global structure of the proof of the main theorem here. Afterwards, we immediately provide the formal set-up for the problem in Section  \ref{sec:overview}, which includes a derivation of the linearised Einstein equations in double null gauge. Section \ref{sec:data}  is concerned with the construction of appropriate initial data and solutions to this linearised system, i.e.~well-posedness of the linearised system.  A formal version of the main theorem is then formulated in Section \ref{sec:maintheorem} and proven in Section \ref{sec:proof}. The impatient reader wishing to take the existence of solutions of the linearised system (Theorem \ref{theo:wp}) for granted may turn immediately to the main theorem in Section \ref{sec:maintheorem}, which concerns the global properties of such solutions. 

 At the highest level, our strategy follows closely that of \cite{Daf.Hol.Rod19} in the asymptotically flat ($\Lambda=0$) case and starts by expressing the linearised Einstein equations in a double null gauge. A first key ingredient of the analysis, carried out in our companion paper \cite{Gra.Hola}, is to prove boundedness and decay estimates for the so-called Teukolsky quantities, denoted $\alin, \ablin$. These are certain linearised null-curvature components of the linearised system, which (a) do not depend on the specific gauge in which the equations (\ref{EVEL}) are linearised and (b) satisfy decoupled wave equations.\footnote{For $\Lambda=0$ these observations go back to Bardeen--Press \cite{Bardeen73} and Teukolsky \cite{Teu72} in the physics literature and are easily generalised to $\Lambda \neq 0$, see \cite{Kha83}. In our case, the two equations couple through the boundary condition imposed at the conformal boundary.} The second key ingredient is to exploit the hierarchical structure of the double null gauge to prove boundedness of all geometric quantities in a gauge normalised with respect to initial data using the bounds for the Teukolsky quantities. As already mentioned, in contrast to the asymptotically flat case, all Ricci coefficients and null curvature components can be shown to decay \emph{without} adding a residual pure gauge solution.
The reason can be understood as follows. As in \cite{Daf.Hol.Rod19}, one first proves boundedness and decay of the linearised shear $ \xlin$ from the estimates for $\alin$. This relies on the (commuted) redshift effect for $\xlin$. The quantity $\xblin$ then inherits this decay through the boundary conditions and we can hence integrate $\xblin$ in the ingoing direction from the boundary (using the estimates for $\ablin$) to establish boundedness and decay for $\xblin$.\footnote{This second step is not possible in the asymptotically flat case as there is no boundary and the $\xblin$ equation cannot be integrated directly from data in the ingoing direction either because of unfavourable $r$-weights in the integrating factor.} Decay for the other Ricci-coefficients and curvature components then follows by going hierarchically through the system analogous to \cite{Daf.Hol.Rod19}, except that here the boundary condition and its consequences need to be exploited at various stages. While one does not have to add a pure gauge solution to establish decay, one can improve the radial decay of certain geometric quantities and ensure that also the metric on the double null spheres converges to the round metric \emph{in standard form} by adding one. Such a pure gauge solution is constructed from the trace of the original solution at the conformal boundary and controlled uniformly by initial data. See Theorem \ref{mtheo:decay} below.

\subsection{Acknowledgements} 
G.H.~acknowledges support by the Alexander von Humboldt Foundation in the framework of the Alexander von Humboldt Professorship endowed by the Federal Ministry of Education and Research as well as ERC Consolidator Grant 772249. Both authors acknowledge funding through Germany’s Excellence Strategy EXC 2044 390685587, Mathematics M\"unster: Dynamics-Geometry-Structure.

\section{Preliminaries} \label{sec:overview}

In this section we provide the necessary background to set up the problem. We define the manifold on which the analysis takes place, introduce the double null gauge and explain the linearisation procedure leading us to the system of gravitational perturbations on the Schwarzschild-AdS manifold. The boundary conditions for the system are then also derived from the non-linear theory. The section ends with a discussion of pure gauge and linearised Kerr-AdS solutions of the system of gravitational perturbations.

\subsection{The manifold with boundary}
Let $\mathcal{Q} \subset \mathbb{R}^2_{U,V}$ be the $2$-dimensional submanifold with (piecewise smooth) boundary defined by 
\begin{align}
\mathcal{Q} :=  \left( \left[-1,0\right]_U \times \left[1,\infty\right)_V \right) \cap \{ -VU \leq 1\} \, .
\end{align}
We define the associated $4$-dimensional manifold
\begin{align}
\mathcal{M} := \mathcal{Q} \times S^2 \, , 
\end{align}
equipped with coordinates $(U,V,\theta^1,\theta^2)$, which we will refer to as Kruskal coordinates on $\mathcal{M}$. We denote the boundary components
\begin{align}
\mathcal{I} := \mathcal{M} \cap \{ -VU = 1\} \ \ \ , \ \ \ \mathcal{H}^+ = \mathcal{M} \cap \{U=0\} \ \ \ , \ \ \ N_{data} = \mathcal{M} \cap \{V=1\} \, , 
\end{align}
which will be referred to as null infinity, the future event horizon and the initial data hypersurface respectively. Observe that all boundary components are topologically $\mathbb{R} \times \mathbb{S}^2$.  We denote by $S^2_{U,V}$ the $2$-spheres $(U,V) \times S^2$ in $\mathcal{M}$. 

Given a fixed parameter $M>0$ we can define coordinates $(u,v,\theta^1,\theta^2)$ on $\mathcal{M} \setminus \mathcal{H}^+$ by
\begin{align}
u = - 2M \log (-U) \ \ \ , \ \ \ v = 2M \log V \, . 
\end{align}
Note $v \geq 0$ on $\mathcal{M} \setminus \mathcal{H}^+$ and $v=u$ on $\mathcal{I}$. 
Defining also $r^\star (u,v) = v-u$ and $t(u,v)=v+u$ we have another coordinate system $(t,r^\star,\theta^1,\theta^2)$ on $\mathcal{M} \setminus \mathcal{H}^+$. We observe that $r^\star$ is a boundary defining function for $\mathcal{I}$ in that it vanishes at $\mathcal{I}$ and $\partial_u r^\star=-1$, $\partial_v r^\star =1$. One may parametrise the boundary $\mathcal{I}=[0,\infty)_t \times S^2_{t,t}$ by the coordinate $t$. Finally, we can define on $\mathcal{M} \setminus \mathcal{H}^+$ a function $r(u,v)=r(r^\star=v-u)$ by the relation
\begin{align} \label{rstardef}
\frac{dr}{dr^\star} = 1-\frac{2M}{r} +k^2 r^2 \ \ \ , \ \ \ r(0) = \infty \, ,
\end{align}
where $-3k^2 =\Lambda$. Note that the function $r$ depends on $M$ and that we have the asymptotics 
\begin{align}
|(v-u) r| = |r^\star  r| = \frac{1}{k^2} + O(r^{-2}) .  \label{rrstarrel}
\end{align}

\subsection{One-parameter families of aAdS metrics in double null gauge} \label{sec:onepf}
Let us denote 
\begin{align}
\mathcal{M}_{int} = \mathcal{M} \setminus \mathcal{I}
\end{align}
and fix $M>0$, which we may think of as the mass of a Schwarzschild-AdS background metric we are about to install on $\mathcal{M}_{int}$. Given $\mathcal{M}_{int}$, equipped with local coordinates $(u, v, {\theta}^1, {\theta}^2)$ on $\mathcal{M}_{int} \setminus \mathcal{H}^+$, we consider a $1$-parameter family of metrics $\boldsymbol{g}(\epsilon)$ expressed in double null gauge\footnote{Note that the $\boldsymbol{b}$ is on $du$ here instead of on $dv$ as in \cite{Daf.Hol.Rod19}. As is well-known, this does not change the null-structure and Bianchi equations (collected in Section \ref{sec:propequations} below). The only change is in the propagation equation for the metric component $\boldsymbol{b}$, equation (\ref{bnle}), which is now in the outgoing direction and with an additional minus on the right. An \emph{outgoing} transport equation for $\boldsymbol{b}$ is desirable as it can be integrated from data, where $\boldsymbol{b}$ is normalised. Note also that it is $\boldsymbol{\Omega}^{-2} \boldsymbol{b}$ which extends regularly to the horizon $\mathcal{H}^+$ as can be seen by transforming (\ref{gindn}) to the regular Kruskal coordinates. \label{footnoteb}} 
\begin{align} \label{gindn}
\boldsymbol{g} \left(\epsilon\right) = -4 \boldsymbol\Omega^2 \left(\epsilon\right) d{u} d{v} + \slashed{\boldsymbol{g}}_{AB} \left(\epsilon\right) \left(d\theta^A - \boldsymbol{b}^A\left(\epsilon\right)   d{u}\right) \left(d\theta^B - \boldsymbol{b}^B \left(\epsilon\right) d{u}\right) \, 
\end{align}
such that
\begin{enumerate}
\item The $\boldsymbol{g} \left(\epsilon\right)$ 
satisfy on $\mathcal{M}_{int} \setminus \mathcal{H}^+$ the Einstein equations with negative cosmological constant $\Lambda = - 3k^2$,
\begin{align}
\mathrm{Ric}(\boldsymbol{g}) = -3k^2 \boldsymbol{g} \, .
\end{align}
\item We have that $\boldsymbol{g} \left(0\right)$ is the Schwarzschild-AdS metric of mass $M$ and cosmological constant $\Lambda = - 3k^2$ 
\begin{align}
\boldsymbol{g} \left(0\right) := -4 \left(1-\frac{2M}{r(u,v)} + k^2 \cdot r^2({u},{v}) \right) d{u} d{v} + r^2 ({u},{v}) {\gamma}_{AB} d{\theta}^A d{\theta}^B \, ,
\end{align}
where $r(u,v)$ is defined as in (\ref{rstardef}) and ${\gamma}$ denotes the round metric on the unit sphere.
\item The $\boldsymbol{g} \left(\epsilon\right)$ are asymptotically Anti-de Sitter in that the function
$(v-u)^2 \boldsymbol\Omega^2$ as well as the conformally rescaled metric $\boldsymbol\Omega^{-2} \boldsymbol{g}$ extend regularly to $\mathcal{I}$, i.e.~in particular they can be defined on the larger manifold $\mathcal{M} \setminus \mathcal{H}^+$. Specifically, recalling $t=u+v$, the family
\begin{align}
\boldsymbol{g}^{(3)}_{\mathcal{I}} (\epsilon)= -d{t}^2 + \frac{\slashed{\boldsymbol{g}}_{AB}(\epsilon)}{\boldsymbol\Omega^2(\epsilon)} \left( d\theta^A - \boldsymbol{b}^A(\epsilon) \frac{d{t}}{2}\right)  \left( d\theta^B - \boldsymbol{b}^B(\epsilon) \frac{d{t}}{2}\right) \, 
\end{align}
defines a smooth family of $3$-dimensional Lorentzian metrics on $\mathcal{I}$.
\item The $\boldsymbol{g}^{(3)}_{\mathcal{I}} (\epsilon)$ are all conformal to the Lorentzian cylinder $(\mathbb{R} \times S^2, -k^2\mathrm{d}t^2 + \gamma)$, the latter being the metric induced on $\mathcal{I}$ by the conformally rescaled Schwarzschild-AdS metric $\boldsymbol\Omega^{-2} \boldsymbol{g}(0)$. In particular, the $\boldsymbol{g}^{(3)}_{\mathcal{I}} (\epsilon)$ are all locally conformally flat.  

\item In regular Kruskal coordinates, the $\boldsymbol{g} \left(\epsilon\right)$ and regular derivatives thereof 
extend smoothly to the boundary $\mathcal{H}^+$ of $\mathcal{M}_{int}$ in the sense of Section 5.1.1 of \cite{Daf.Hol.Rod19}. In particular, arbitrary concatenations of frame vectors from the set $\{ \boldsymbol\Omega^{-2}(\epsilon) (\partial_u + \boldsymbol{b}^A \partial_A ), \partial_v, \partial_{\theta^1},\partial_{\theta^2} \}$ applied to $\boldsymbol{g} \left(\epsilon\right)$ extend regularly to $\mathcal{H}^+$.

\end{enumerate}

The existence of such families is of course an implicit assumption. Locally in time, i.e.~for a finite $v$-interval this can be fully justified by a local-well-posedness theorem in double null gauge. Such a theorem could either be inferred from the literature \cite{Fri95, Enc.Kam19} or be proven directly by combining the (linear) estimates obtained in this paper with an appropriate contraction mapping argument. Since our main theorem is formulated directly as a statement concerning the linearised system (for which we will prove well-posedness directly) we will not address this issue further. 


\subsection{The geometry of a double null gauge}
For the reader's convenience we briefly recall the basic geometric notions in a double null gauge. The familiar reader can jump immediately to Section \ref{sec:linproc} while the reader unfamiliar with the double null gauge can consult Section 3 of \cite{Daf.Hol.Rod19} or the original \cite{Chr09} for many more details.  

Associated with a double null gauge (\ref{gindn}) on $\mathcal{M}_{int} \setminus \mathcal{H}^+$ is a double null frame consisting of the null vectorfields
\[
e_3 = \frac{1}{\boldsymbol{\Omega}} \left( \partial_u + \boldsymbol{b}^A \partial_A \right) \ \ , \ \ e_4 = \frac{1}{\boldsymbol{\Omega}} \partial_v \, ,
\]
satisfying ${\bf g}(e_3,e_4)=-2$, ${\bf g}(e_3,e_3)={\bf g}(e_4,e_4)=0$, which is complemented with a local coordinate frame on $S^2_{u,v}$, $e_A=\partial_A$ for $A \in \{1,2\}$, satisfying ${\bf g}(e_A,e_B)=\slashed{\boldsymbol{g}}_{AB}$. Note ${\bf g}(e_3,e_A)=0$ and ${\bf g}(e_4,e_A)$ for $A \in \{1,2\}$.

\subsubsection{$S^2_{u,v}$-tensor algebra} \label{sec:ta}
Let $\boldsymbol\xi, \boldsymbol{\tilde \xi}$ be arbitrary $S^2_{u,v}$ one-forms and
$\boldsymbol\theta, \boldsymbol{\tilde \theta}$
be arbitrary
symmetric  $S^2_{u,v}$ $2$-tensors.

We denote by ${}^\star \boldsymbol\xi$ and ${}^\star \boldsymbol\theta$ the Hodge-dual on $\left( S^2_{u,v}, \boldsymbol{\slashed{g}}\right)$ of $\boldsymbol\xi$ and $\boldsymbol\theta$, respectively, and
denote by $\boldsymbol \theta^\sharp$ the tensor obtained from $\boldsymbol\theta$ by raising an index with $\slashed{\boldsymbol{g}}$. We define the contractions
\[
\left(\boldsymbol\xi,\boldsymbol{\tilde \xi}\right):=\boldsymbol{\slashed{g}}^{AB}\boldsymbol\xi_A \boldsymbol{\tilde \xi}_B \textrm{ \ \ \  and \ \ \ } \left(\boldsymbol\theta,\boldsymbol{\tilde\theta}\right):=\boldsymbol{\slashed{g}}^{AB}\boldsymbol{\slashed{g}}^{CD}\boldsymbol\theta_{AC} \boldsymbol{\tilde\theta}_{BD},
\]
and
denote by $\boldsymbol\theta^\sharp \cdot \boldsymbol\xi$ the one-form $\boldsymbol\theta_A^{\phantom{A}B} \boldsymbol\xi_B$ arising from the contraction with $\slashed{\boldsymbol{g}}$. We finally define
\begin{align}
\left(\boldsymbol\theta \times \boldsymbol{\tilde\theta}\right)_{BC}&:=\slashed{\boldsymbol{g}}^{AD}\boldsymbol\theta_{AB}\boldsymbol{\tilde\theta}_{DC} \, ,\nonumber \\
\left(\boldsymbol\xi  \widehat{\otimes} \boldsymbol{\tilde \xi} \right)_{AB}
&:= \boldsymbol\xi_A \boldsymbol{\tilde \xi}_B + \boldsymbol\xi_B \boldsymbol{\tilde \xi}_A - \slashed{\boldsymbol{g}}^{AB}\boldsymbol\xi_A \boldsymbol{\tilde \xi}_B  \, ,
\nonumber \\
\boldsymbol\theta  \wedge \boldsymbol{\tilde\theta} &:= \boldsymbol{\slashed{\epsilon}}^{AB} \slashed{\boldsymbol{g}}^{CD}\boldsymbol\theta_{AC} \boldsymbol{\tilde\theta}_{BD} \, , \nonumber
\end{align}
where $\boldsymbol{\slashed{\epsilon}}_{AB}$ denotes the components of the volume form associated with $\slashed{\boldsymbol{g}}$ on  $S^2_{u,v}$. %
\subsubsection{$S^2_{u,v}$-projected Lie and
covariant derivates} \label{sec:pcd}
We define the derivative operators $\underline{\boldsymbol D}$ and ${\boldsymbol D}$ to act on an  $S^2_{u,v}$-tensor $\boldsymbol\phi$ as 
the projection onto  $S^2_{u,v}$ of the Lie-derivative of $\boldsymbol\phi$ in the direction of $\boldsymbol \Omega{\boldsymbol{e}}_3$ and $\boldsymbol \Omega{\boldsymbol{e}}_4$ respectively. We hence have the following relations between the projected Lie-derivatives $\underline{\boldsymbol D}$ and ${\boldsymbol D}$ and the  $S^2_{u,v}$-projected spacetime covariant derivatives $ \slashed{\nabla}_{\boldsymbol 3} = \slashed{\nabla}_{{\boldsymbol{e}}_3},\slashed{\nabla}_{\boldsymbol 4}= \slashed{\nabla}_{{\boldsymbol{e}}_4}$ in the direction ${\boldsymbol{e}}_3$ and ${\boldsymbol{e}}_4$ respectively:
\begin{equation} \label{Dcovtransform}
\begin{split}
\boldsymbol {D} \boldsymbol f &= \boldsymbol{\Omega} {\slashed{\nabla}}_{\boldsymbol 4} \boldsymbol f \textrm{ \ \ \ on functions $\boldsymbol{f}$,} \\
\boldsymbol {D} \boldsymbol\xi &= \boldsymbol{\Omega} {\slashed{\nabla}}_{\boldsymbol 4}  \boldsymbol \xi + \boldsymbol{\Omega} \boldsymbol\chi^\sharp \cdot \boldsymbol\xi \textrm{ \ \ \ on one-forms $\boldsymbol \xi$,}
  \\
\boldsymbol {D} \boldsymbol\theta &=\boldsymbol{\Omega} {\slashed{\nabla}}_{\boldsymbol 4}  \boldsymbol\theta + \boldsymbol{\Omega} \boldsymbol\chi \times \boldsymbol\theta + \boldsymbol{\Omega} \boldsymbol\theta \times \boldsymbol\chi \textrm{ \ \ \ on symmetric $2$-tensors $\boldsymbol\theta$,}
\end{split}
\end{equation}
and similarly for ${\slashed{\nabla}}_{\boldsymbol 3}$ replacing $\boldsymbol\chi$ by $\underline{\boldsymbol\chi}$ and ${\boldsymbol D}$ by $\underline{\boldsymbol D}$.

\subsubsection{Angular operators on  $S^2_{u,v}$} \label{sec:angop}
Let $\boldsymbol\xi$ be an arbitrary one-form and $\boldsymbol\theta$ an arbitrary symmetric traceless $2$-tensor on  $S^2_{u,v}$.
\begin{itemize}
\item $\boldsymbol{\slashed{\nabla}}$ denotes the covariant derivative associated with the metric $\slashed{\boldsymbol g}_{AB}$ on  $S^2_{u,v}$.
\item $\boldsymbol{\slashed{\mathcal{D}}}_1$ takes $\boldsymbol \xi$ into the pair of functions $\left(\slashed{\bf {div}} \boldsymbol \xi, \slashed{\bf {curl}} \boldsymbol \xi\right)$
where $\slashed{\bf {div}} \boldsymbol \xi = \slashed{\boldsymbol{g}}^{AB} \boldsymbol{\slashed{\nabla}}_A \boldsymbol\xi_B$ and $\slashed{\bf {curl}} \boldsymbol \xi = \boldsymbol{\slashed{\epsilon}}^{AB}  \boldsymbol{\slashed{\nabla}}_A \boldsymbol\xi_B$.
\item $\slashed{\mathbfcal{D}}_1^\star$, the $L^2$-adjoint of $\boldsymbol{\slashed{\mathcal{D}}}_1$, takes any pair of scalars $\boldsymbol \rho, \boldsymbol \sigma$ into the  $S^2_{u,v}$-one-form $-\boldsymbol{\slashed{\nabla}}_A \boldsymbol \rho + \boldsymbol{\slashed{\epsilon}}_{AB} \boldsymbol{\slashed{\nabla}}^B \boldsymbol \sigma$.
\item $\slashed{\mathbfcal{D}}_2$ takes $\boldsymbol\theta$ into the  $S^2_{u,v}$-one-form $\left(\slashed{\bf {div}} \boldsymbol\theta\right)_C=\slashed{\boldsymbol{g}}^{AB} \boldsymbol{\slashed{\nabla}}_A \boldsymbol\theta_{BC}$.
\item $\slashed{\mathbfcal{D}}_2^\star$, the $L^2$ adjoint of $\slashed{\mathbfcal{D}}_2$, takes $\boldsymbol\xi$ into $2$-tensor $\left(\slashed{\mathbfcal{D}}^\star_2 \boldsymbol \xi\right)_{AB}=-\frac{1}{2} \left(\boldsymbol{\slashed{\nabla}}_B \boldsymbol\xi_A + \boldsymbol{\slashed{\nabla}}_A \boldsymbol\xi_B - \left(\slashed{\bf {div}} \boldsymbol\xi\right) \slashed{\boldsymbol{g}}_{AB}\right)$.
\end{itemize}

\subsubsection{Ricci-coefficients and curvature components}  \label{sec:ta3}
We define the non-vanishing null-decomposed Ricci coefficients as follows:
\begin{equation} \label{RicC}
\begin{split}
\boldsymbol\chi_{AB} &= \boldsymbol{g} \left(\boldsymbol{\nabla}_A {\boldsymbol{e}}_4,{\boldsymbol{e}}_B\right) \textrm{ \ \ , \ \ \ \ } \underline{\boldsymbol\chi}_{AB} =  \boldsymbol{g} \left(\boldsymbol{\nabla}_A {\boldsymbol{e}}_3,{\boldsymbol{e}}_B\right) \, ,
\\
 \boldsymbol\eta_{A} &= -\frac{1}{2}  \boldsymbol{g} \left(\boldsymbol{\nabla}_{{\boldsymbol{e}}_3} {\boldsymbol{e}}_A,{\boldsymbol{e}}_4\right) \textrm{ \ \ , \ \ } \underline{\boldsymbol\eta}_{A} = -\frac{1}{2}  \boldsymbol{g} \left(\boldsymbol{\nabla}_{{\boldsymbol{e}}_4}{\boldsymbol{e}}_A,{\boldsymbol{e}}_3\right)  \ \  , \ \ \ \boldsymbol\zeta = \frac{1}{2}  \boldsymbol{g} \left(\boldsymbol{\nabla}_A {\boldsymbol{e}}_4,{\boldsymbol{e}}_3\right) \, ,
\\
\boldsymbol\omega &= \frac{1}{2} \boldsymbol{\Omega} \boldsymbol{g} \left(\boldsymbol{\nabla}_{{\boldsymbol{e}}_4} {\boldsymbol{e}}_3,{\boldsymbol{e}}_4\right) \textrm{ \ \ \ \ , \ \ \ \ \ } \underline{\boldsymbol\omega} = \frac{1}{2}  \boldsymbol{\Omega} \boldsymbol{g} \left(\boldsymbol{\nabla}_{{\boldsymbol{e}}_3} {\boldsymbol{e}}_4,{\boldsymbol{e}}_3\right)  \, .
\end{split}
\end{equation}
%
%
The above objects are $S^2_{u,v}$ scalars, one-forms and symmetric traceless tensors respectively. In particular, they transform tensorially under a choice of frame on the sphere. 
It is natural to decompose ${\boldsymbol\chi}$ into its $\slashed{\boldsymbol{g}}$-tracefree part
$\boldsymbol{\hat\chi}$ (a symmetric traceless $S^2_{u,v}$ 2-tensor) and its trace $\boldsymbol{tr}\boldsymbol\chi$,
and similarly for $\underline{\boldsymbol\chi}$. Note also the relations
\begin{align} \label{melek}
{\boldsymbol\omega} = \frac{\boldsymbol{\partial}_v {\boldsymbol\Omega}}{\boldsymbol\Omega} \ \ , \ \ \underline{{\boldsymbol\omega}} = \frac{(\boldsymbol{\partial}_{\boldsymbol{u}} + \boldsymbol{b}^A \boldsymbol{\partial}_{\boldsymbol\theta^A}) {\boldsymbol\Omega}}{\boldsymbol\Omega} \ \ , \ \
\boldsymbol\eta_A = \boldsymbol\zeta_A + \boldsymbol{\slashed{\nabla}}_A \log \boldsymbol{\Omega}  \textrm{ \ \  ,  \ \ } \underline{\boldsymbol\eta}_A = -\boldsymbol\zeta_A + \boldsymbol{\slashed{\nabla}}_A \log \boldsymbol{\Omega}  .
\end{align}

With $\boldsymbol{W}$ denoting the Weyl curvature tensor of the metric (\ref{gindn}), the null-decomposed Weyl curvature components  are defined as follows:
\begin{equation} \label{curvC}
\begin{split}
\boldsymbol\alpha_{AB} &= \boldsymbol{W} \left( {\boldsymbol{e}}_A,  {\boldsymbol{e}}_4,  {\boldsymbol{e}}_B,  {\boldsymbol{e}}_4\right) \textrm{ \ \ \ ,  \ \  } \underline{\boldsymbol\alpha}_{AB} = \boldsymbol{W} \left( {\boldsymbol{e}}_A,  {\boldsymbol{e}}_3,  {\boldsymbol{e}}_B,  {\boldsymbol{e}}_3\right) \, , \\
\boldsymbol\beta_{A} &= \frac{1}{2} \boldsymbol{W} \left( {\boldsymbol{e}}_A,   {\boldsymbol{e}}_4,  {\boldsymbol{e}}_3,   {\boldsymbol{e}}_4\right) \textrm{ \ \ \ ,  \ \ \ } \underline{\boldsymbol\beta}_{A} = \frac{1}{2} \boldsymbol{W} \left( {\boldsymbol{e}}_A,  {\boldsymbol{e}}_3,  {\boldsymbol{e}}_3,   {\boldsymbol{e}}_4\right) \, , \ \\
\boldsymbol\rho &= \frac{1}{4} \boldsymbol{W}\left(  {\boldsymbol{e}}_4,  {\boldsymbol{e}}_3, {\boldsymbol{e}}_4,  {\boldsymbol{e}}_3\right) \textrm{ \ \ \ , \ \ \ \ \ \  } \boldsymbol\sigma = \frac{1}{4} {}^\star \boldsymbol{W}\left(  {\boldsymbol{e}}_4,  {\boldsymbol{e}}_3, {\boldsymbol{e}}_4,  {\boldsymbol{e}}_3\right) \, ,
\end{split}
\end{equation}
with ${}^\star \boldsymbol{W} $ denoting the Hodge dual on $({\mathcal{M},\boldsymbol{g}})$ of $\boldsymbol{W}$.
Again the above objects are $S^2_{u,v}$-tensors (functions, vectors, symmetric $2$-tensors) on $(\mathcal{M},\boldsymbol{g})$.

\subsubsection{The null structure and Bianchi equations} \label{sec:propequations}
In the geometric setting outline above, the Einstein equations (\ref{EVEL}) imply (via the Bianchi equations and the geometric structure equations) a complicated system of coupled hyperbolic, transport and elliptic equations for $S^2_{u,v}$-tensors that we collect below. See again \cite{Daf.Hol.Rod19} and in particular \cite{Chr09} for a detailed derivation of the equations in the $\Lambda=0$ case. We have highlighted the additional terms arising from the non-vanishing cosmological constant in (\ref{EVEL}) as boxed terms below.

We have the first variational formulae:
\begin{align} \label{firstvarf}
\boldsymbol{D} \slashed{\boldsymbol{g}} = 2 \boldsymbol{\Omega} \boldsymbol\chi = 2\boldsymbol\Omega \hat{\boldsymbol\chi} + \boldsymbol\Omega {\bf tr} \boldsymbol\chi \slashed{g} \textrm{ \ \ \ \ \ and \ \ \ \ \ } \underline{\boldsymbol{D}} \slashed{\boldsymbol{g}} = 2 \boldsymbol\Omega \underline{\boldsymbol\chi} = 2 \boldsymbol\Omega \underline{\hat{\boldsymbol\chi}} + \boldsymbol\Omega {\bf  tr} {\boldsymbol\chi} \slashed{\boldsymbol{g}} \, .
\end{align}
The transport equations for the second fundamental forms take the form
\begin{align}  \label{chieq}
\boldsymbol{\slashed{\nabla}}_3 \underline{\hat{\boldsymbol\chi}} + {\bf tr} \underline{\boldsymbol\chi} \ \hat{\underline{\boldsymbol\chi}} - \boldsymbol{\Omega}^{-1} {\underline{\boldsymbol\omega}} \ \underline{\hat{\boldsymbol\chi}} =  -\underline{\boldsymbol\alpha} \ \ \ ,  \ \ \ \boldsymbol{\slashed{\nabla}}_4 {\hat{\boldsymbol\chi}} + {\bf tr} {\boldsymbol\chi} \ \hat{\boldsymbol\chi} - \boldsymbol{\Omega}^{-1}{{\boldsymbol\omega}} \ {\hat{\boldsymbol\chi}} =  -{\boldsymbol\alpha} ,
\end{align}
\begin{align}
\boldsymbol{\slashed{\nabla}}_3 \left({\bf tr} \underline{\boldsymbol\chi}\right) + \frac{1}{2}\left({\bf tr} \underline{\boldsymbol\chi}\right)^2 - \boldsymbol{\Omega}^{-1} \underline{{\boldsymbol\omega}} {\bf tr} \underline{{\boldsymbol\chi}} = - \left(\underline{\hat{\boldsymbol\chi}} , \underline{\hat{\boldsymbol\chi}}\right) \ \ , \ \ \boldsymbol{\slashed{\nabla}}_4 \left({\bf tr} \boldsymbol\chi \right) + \frac{1}{2}\left({\bf tr} \boldsymbol\chi\right)^2 - \boldsymbol{\Omega}^{-1}{\boldsymbol\omega} {\bf tr} {\boldsymbol\chi} = -
\left( {\hat{\boldsymbol\chi}}, {\hat{\boldsymbol\chi}}\right) ,
\end{align}
\begin{align} \label{chieq2a}
\boldsymbol{\slashed{\nabla}}_3 {\hat{\boldsymbol\chi}} + \frac{1}{2} {\bf tr} \underline{\boldsymbol\chi} \ \hat{{\boldsymbol\chi}} +\boldsymbol{\Omega}^{-1} {\underline{\boldsymbol\omega}} \ {\hat{\boldsymbol\chi}} = -2 \slashed{\mathbfcal{D}}_2^\star \boldsymbol{\eta} - \frac{1}{2} {\bf tr} \boldsymbol \chi \ \underline{\hat{\boldsymbol\chi}} + \left({\boldsymbol\eta} \widehat{\otimes} {\boldsymbol\eta}\right) \, ,
\\
\label{chieq2b}
\boldsymbol{\slashed{\nabla}}_4 \underline{\hat{\boldsymbol\chi}} + \frac{1}{2} {\bf tr} {\boldsymbol\chi} \ \hat{\underline{\boldsymbol\chi}} +\boldsymbol{\Omega}^{-1}{{\boldsymbol\omega}} \ \underline{\hat{\boldsymbol\chi}} = -2 \slashed{\mathbfcal{D}}_2^\star \underline{\boldsymbol\eta} - \frac{1}{2} {\bf tr} \underline{\boldsymbol \chi} \ \hat{\boldsymbol \chi} + \left(\underline{\boldsymbol\eta} \widehat{\otimes} \underline{\boldsymbol\eta}\right) \, ,
\end{align}

\begin{align}
\boldsymbol{\slashed{\nabla}}_3 \left({\bf tr} {\boldsymbol\chi}\right) + \frac{1}{2}\left({\bf tr} \underline{\boldsymbol\chi}\right) \left({\bf tr} {\boldsymbol\chi}\right)+ \boldsymbol{\Omega}^{-1}\underline{{\boldsymbol\omega}} {\bf tr} {{\boldsymbol\chi}} = - \left(\underline{\hat{\boldsymbol\chi}}, {\hat{\boldsymbol\chi}}\right) + 2 \left( \boldsymbol\eta , \boldsymbol\eta\right) + 2\boldsymbol\rho +2\slashed{\bf{div}} \boldsymbol\eta \boxed{-4k^2} \, ,
\\
 \label{endchi}
\boldsymbol{\slashed{\nabla}}_4 \left({\bf tr} \underline{\boldsymbol\chi}\right) + \frac{1}{2}\left({\bf tr} {\boldsymbol\chi}\right) \left({\bf tr} \underline{\boldsymbol\chi}\right)+\boldsymbol{\Omega}^{-1} {{\boldsymbol\omega}} {\bf tr} \underline{{\boldsymbol\chi}} = - \left( \underline{\hat{\boldsymbol\chi}}, {\hat{\boldsymbol\chi}} \right) + 2 \left( \underline{\boldsymbol\eta} ,\underline{\boldsymbol\eta}\right) + 2\boldsymbol\rho + 2\slashed{\bf{div}} \underline{\boldsymbol\eta} \boxed{- 4k^2} \, .
\end{align}
The transport equations for the torsions and the (derivative of the) lapse become:
\begin{align}
\boldsymbol{\slashed{\nabla}}_3 \underline{\boldsymbol\eta} = \underline{\boldsymbol\chi}^\sharp \cdot \left(\boldsymbol\eta - \underline{\boldsymbol\eta}\right) + \underline{\boldsymbol\beta} \ \ \ , \ \ \
\boldsymbol{\slashed{\nabla}}_4 {\boldsymbol\eta} = -{\boldsymbol\chi}^\sharp \cdot \left(\boldsymbol\eta - \underline{\boldsymbol\eta}\right) - {\boldsymbol\beta} \, ,
\end{align}
\begin{align}
\boldsymbol{D} \left( {\underline{\boldsymbol\omega}}\right) = \boldsymbol\Omega^2 \left[ 2 \left(\boldsymbol\eta, \underline{\boldsymbol\eta}\right) - |\boldsymbol\eta|^2 - \boldsymbol\rho  \boxed{-k^2}  \right] \textrm{ \ , \ } \underline{\boldsymbol{D}} \left( {\boldsymbol\omega} \right)= \boldsymbol\Omega^2 \left[ 2 \left(\boldsymbol\eta, \underline{\boldsymbol\eta}\right) - |\underline{\boldsymbol\eta}|^2 - \boldsymbol\rho \boxed{-k^2} \right] \, ,\nonumber
\end{align}
\begin{align} \label{bnle}
\boldsymbol{\partial}_{v}
\boldsymbol{b}^A = -2\boldsymbol{\Omega}^2 \left(\boldsymbol\eta^A-\underline{\boldsymbol\eta}^A\right) \, .
\end{align}
Finally, we have the elliptic relations on spheres:
\begin{align}
\slashed{\bf{curl}} \boldsymbol\eta = - \frac{1}{2} \boldsymbol\chi \wedge \underline{\boldsymbol\chi} + \boldsymbol\sigma \textrm{ \ \ \ \ \ and \ \ \ \ \ }\slashed{\bf {curl}} \underline{\boldsymbol\eta} = +\frac{1}{2} \boldsymbol\chi \wedge \underline{\boldsymbol\chi} - \boldsymbol\sigma \, ,
\end{align}
\begin{align} \label{cod1}
\slashed{\bf{ div}} \hat{\boldsymbol\chi} &=  -\frac{1}{2} \hat{\boldsymbol\chi}^\sharp \cdot \left( \boldsymbol\eta - \underline{\boldsymbol\eta}\right) + \frac{1}{4} {\bf tr} \boldsymbol\chi \left( \boldsymbol\eta - \underline{\boldsymbol\eta}\right)  + \frac{1}{2} \boldsymbol{\slashed{\nabla}} {\bf tr} {\boldsymbol \chi} - \boldsymbol\beta \nonumber \\
&=  -\frac{1}{2} \hat{\boldsymbol\chi}^\sharp \cdot \left( \boldsymbol\eta - \underline{\boldsymbol\eta}\right) - \frac{1}{2} {\bf tr} \boldsymbol\chi  \underline{\boldsymbol\eta}  + \frac{1}{2\boldsymbol\Omega} \boldsymbol{\slashed{\nabla}} \left( {\bf \Omega} {\bf tr} {\boldsymbol \chi}\right) - \boldsymbol\beta \, ,
\end{align}
\begin{align} \label{cod2}
\slashed{\bf {div}} \underline{\hat{\boldsymbol\chi}} &=  \frac{1}{2} \underline{\hat{\boldsymbol\chi}}^\sharp \cdot \left( \boldsymbol\eta - \underline{\boldsymbol\eta}\right) - \frac{1}{4} {\bf tr} \underline{\boldsymbol\chi} \left( \boldsymbol\eta - \underline{\boldsymbol\eta}\right)  + \frac{1}{2} \boldsymbol{\slashed{\nabla}} {\bf tr} \underline{\boldsymbol \chi} + \underline{\boldsymbol\beta} \nonumber \\
&=  \frac{1}{2} \underline{\hat{\boldsymbol\chi}}^\sharp \cdot \left( \boldsymbol\eta - \underline{\boldsymbol\eta}\right) - \frac{1}{2} {\bf tr} \underline{\boldsymbol\chi} {\boldsymbol\eta}  + \frac{1}{2\boldsymbol\Omega} \boldsymbol{\slashed{\nabla}} \left({\bf \Omega} {\bf tr} \underline{\boldsymbol \chi}\right) + \underline{\boldsymbol\beta} \, ,
\end{align}
\begin{align} \label{gauss}
\mathbf{K} = -\frac{1}{4} {\bf tr} \boldsymbol\chi {\bf tr} \underline{\boldsymbol\chi} + \frac{1}{2}\left( \hat{\boldsymbol\chi} , \hat{\underline{\boldsymbol\chi}} \right) - \boldsymbol\rho \boxed{-k^2}\, .
\end{align}
Equations (\ref{cod1})--(\ref{cod2}) are known as the Codazzi equations, (\ref{gauss}) is the Gauss equation on $S^2_{u,v}$. 
We finally collect the Bianchi equations for the null Weyl curvature components which are formally unchanged in the presence of a cosmological constant. 
\begin{align}
\boldsymbol{\slashed{\nabla}}_3  \boldsymbol\alpha + \frac{1}{2} {\bf  tr} \underline{\boldsymbol\chi} \boldsymbol\alpha + 2 \boldsymbol{\Omega}^{-1} \underline{{\boldsymbol\omega}} \boldsymbol\alpha &= -2 {\bf \slashed{\mathbfcal{D}}_2^\star }\boldsymbol\beta - 3 \hat{\boldsymbol\chi} \boldsymbol\rho - 3{}^\star \hat{\boldsymbol\chi} \boldsymbol\sigma  + \left(4\boldsymbol\eta + \boldsymbol\zeta\right) \hat{\otimes} \boldsymbol\beta \, , \nonumber \\
\boldsymbol{\slashed{\nabla}}_4 \boldsymbol\beta + 2 {\bf  tr}\boldsymbol\chi \boldsymbol\beta - \boldsymbol{\Omega}^{-1} {\boldsymbol\omega} \boldsymbol\beta &=  \slashed{\bf{div}}\boldsymbol\alpha + \left(\underline{\boldsymbol\eta}^\sharp + 2 \boldsymbol\zeta^\sharp\right) \cdot \boldsymbol\alpha 
 \, ,  \nonumber \\
\boldsymbol{\slashed{\nabla}}_3 \boldsymbol\beta + {\bf  tr} \underline{\boldsymbol\chi} + \boldsymbol{\Omega}^{-1} \underline{{\boldsymbol\omega}} \boldsymbol\beta &= \slashed{\mathbfcal{D}}_1^\star \left(-\boldsymbol\rho, \boldsymbol\sigma\right) + 3 \boldsymbol\eta \boldsymbol\rho + 3{}^\star \boldsymbol\eta \boldsymbol\sigma + 2\hat{\boldsymbol\chi}^\sharp \cdot \underline{\boldsymbol\beta}
 \, ,  \nonumber \\
\boldsymbol{\slashed{\nabla}}_4 \boldsymbol\rho  + \frac{3}{2}{\bf  tr} \boldsymbol\chi\boldsymbol\rho &= \slashed{\bf{div}} \boldsymbol\beta + \left(2\underline{\boldsymbol\eta} + \boldsymbol\zeta, \boldsymbol\beta \right) - \frac{1}{2} \left(\underline{\hat{\boldsymbol\chi}}, \boldsymbol\alpha\right) 
 \, ,  \nonumber \\
\boldsymbol{\slashed{\nabla}}_4 \boldsymbol\sigma + \frac{3}{2} {\bf  tr} \boldsymbol\chi \boldsymbol\sigma &= -\slashed{\bf{curl}}\boldsymbol\beta - \left(2\underline{\boldsymbol\eta} + \boldsymbol\zeta \right) \wedge \boldsymbol\beta  + \frac{1}{2} \underline{\hat{\boldsymbol\chi}} \wedge \boldsymbol\alpha 
 \, ,  \nonumber
\\
\boldsymbol{\slashed{\nabla}}_3 \boldsymbol\rho + \frac{3}{2} {\bf  tr} \underline{\boldsymbol\chi}\boldsymbol\rho &= -\slashed{\bf{div}}\underline{\boldsymbol\beta} - \left(2\boldsymbol\eta - \boldsymbol\zeta, \underline{\boldsymbol\beta}\right) - \frac{1}{2} \left(\hat{\boldsymbol\chi}, \underline{\boldsymbol\alpha}\right) 
 \, , \nonumber \\
\boldsymbol{\slashed{\nabla}}_3 \boldsymbol\sigma + \frac{3}{2} {\bf  tr} \underline{\boldsymbol\chi} \boldsymbol\sigma &= -\slashed{\bf{curl}} \underline{\boldsymbol\beta} 
- \left(2\boldsymbol\eta - \boldsymbol\zeta\right) \wedge \underline{\boldsymbol\beta} - \frac{1}{2} \hat{\boldsymbol\chi} \wedge \underline{\boldsymbol\alpha}
 \, ,  \nonumber \\
\boldsymbol{\slashed{\nabla}}_4 \underline{\boldsymbol\beta} + {\bf  tr} \boldsymbol\chi  \underline{\boldsymbol\beta} + \boldsymbol{\Omega}^{-1} {\boldsymbol\omega} \underline{\boldsymbol\beta} &= \slashed{\mathbfcal{D}}_1^\star \left(\boldsymbol\rho, \boldsymbol\sigma\right) - 3 \underline{\boldsymbol\eta} \boldsymbol\rho + 3{}^\star \underline{\boldsymbol\eta} \boldsymbol\sigma  + 2 \underline{\hat{\boldsymbol\chi}}^\sharp \cdot \boldsymbol\beta 
 \, , \nonumber \\
\boldsymbol{\slashed{\nabla}}_3 \underline{\boldsymbol\beta} + 2 {\bf  tr} \underline{\boldsymbol\chi}  \underline{\boldsymbol\beta} - \boldsymbol{\Omega}^{-1} {\underline{\boldsymbol\omega}} \underline{\boldsymbol\beta} &= - \slashed{\bf{div}} \underline{\boldsymbol\alpha} - \left(\boldsymbol\eta^\sharp - 2 \boldsymbol\zeta^\sharp\right) \cdot \underline{\boldsymbol\alpha}
 \, ,  \nonumber \\
\boldsymbol{\slashed{\nabla}}_4\underline{\boldsymbol\alpha} + \frac{1}{2}{\bf  tr} \boldsymbol\chi \underline{\boldsymbol\alpha} + 2 \boldsymbol{\Omega}^{-1} {\boldsymbol\omega} \underline{\boldsymbol\alpha} &=  2 \slashed{\mathbfcal{D}}_2^\star \underline{\boldsymbol\beta} - 3 \underline{\hat{\boldsymbol\chi}} \boldsymbol\rho + 3{}^\star \underline{\hat{\boldsymbol\chi}} \boldsymbol\sigma - \left(4 \underline{\boldsymbol\eta} - \boldsymbol\zeta\right) \hat{\otimes} \underline{\boldsymbol\beta}
\, .
\nonumber
\end{align}


\subsection{Boundary regularity and boundary conditions}\label{sec:boundarynonlinear} 
We collect the asymptotic behaviour towards the conformal boundary for the geometric quantities on $\mathcal{M}_{int}$ from the assumption that (by $3.$ of Section \ref{sec:onepf}) the metric $(u-v)^2 {\boldsymbol{g}}$ extends regularly to the larger manifold $\mathcal{M}$. The proof of Proposition~\ref{prop:ad} is postponed to Appendix~\ref{app:proofpropad}.
\begin{proposition} \label{prop:ad}
  Assume that $\boldsymbol{g}$ is asymptotically Anti-de Sitter ($3.$ of Section \ref{sec:onepf}). The following geometric quantities associated with $\boldsymbol{g}$ defined in Section \ref{sec:onepf} extend regularly to the conformal boundary $\mathcal{I}$ in the sense that the components with respect to a $\boldsymbol{g}$-orthonormal frame extend smoothly:
  \begin{align} \label{boundsextend}
    \begin{aligned}
      r^2\left(k^2(u-v)^2\boldsymbol{\Omega}^2 -1\right), r^{-1} \boldsymbol{b},\\
      \boldsymbol{\Omega} {\bf tr} {\boldsymbol{\chi}} - \frac{2}{u-v} , \boldsymbol{\Omega} {\bf tr} \underline{\boldsymbol{\chi}} + \frac{2}{u-v} , r \hat{\boldsymbol{\chi}}, r  \hat{\underline{\boldsymbol{\chi}}}, r^2 \boldsymbol{\eta}, r^2 \underline{\boldsymbol{\eta}}, r\Big(\boldsymbol{\omega} - \frac{1}{u-v}\Big), r\Big(\boldsymbol{\underline{\omega}} + \frac{1}{u-v}\Big),\\
      r \Big(\boldsymbol{\Omega} {\bf tr} {\boldsymbol{\chi}} - \boldsymbol{\Omega} {\bf tr} \underline{\boldsymbol{\chi}} - \frac{4}{u-v}\Big),  r^2 \left( \hat{\boldsymbol{\chi}} -   \hat{\underline{\boldsymbol{\chi}}}\right), r^3\left(\boldsymbol{\eta}+\boldsymbol{\underline{\eta}}\right), r^2\left(\boldsymbol{\omega} + \boldsymbol{\underline{\omega}}\right),  \\
      r^3 \boldsymbol{\alpha},  r^3\underline{\boldsymbol{\alpha}}, r^3 \boldsymbol{\beta}, r^3 \underline{\boldsymbol{\beta}}, r^3 \boldsymbol\rho, r^3 \boldsymbol\sigma \, .
      \end{aligned}
\end{align}
\end{proposition}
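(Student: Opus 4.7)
The plan is to translate the assumption that $(v-u)^2\boldsymbol{\Omega}^2$ and $\boldsymbol{\Omega}^{-2}\boldsymbol{g}$ extend regularly to $\mathcal{I}$, together with the pinning-down assumption~4 of Section~\ref{sec:onepf}, into regularity statements for every Ricci coefficient and curvature component at the claimed $r$-weights. I work throughout in coordinates adapted to $\mathcal{I}$ (for instance $(t,r^\star,\theta^A)$), freely using $r^\star r \to k^{-2}$ from (\ref{rrstarrel}) to trade $(v-u)$-weights for $r$-weights.

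First, assumption~3 provides smooth expansions near $\mathcal{I}$ of $(v-u)^2\boldsymbol{\Omega}^2$, of $\boldsymbol{\Omega}^{-2}\slashed{\boldsymbol{g}}_{AB}$ and of $\boldsymbol{\Omega}^{-2}\boldsymbol{b}^A$. Combined with (\ref{rrstarrel}) this immediately yields the smooth extension of $r^{-1}\boldsymbol{b}$, $r^{-2}\slashed{\boldsymbol{g}}$ and of $k^2(v-u)^2\boldsymbol{\Omega}^2-1$. The sharper statement $r^2\bigl(k^2(v-u)^2\boldsymbol{\Omega}^2-1\bigr)=O(1)$ comes from assumption~4: requiring $\boldsymbol{g}^{(3)}_{\mathcal{I}}(\epsilon)$ to be conformal to the stationary Lorentzian cylinder pins down the two leading coefficients of $(v-u)^2\boldsymbol{\Omega}^2$ and of $\boldsymbol{\Omega}^{-2}\slashed{\boldsymbol{g}}$ to agree with the Schwarzschild-AdS values up to an error of order $(r^\star)^2\sim r^{-2}$.

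With these metric expansions in hand, every Ricci coefficient in (\ref{RicC}) is a rational function of $\boldsymbol{\Omega},\slashed{\boldsymbol{g}},\boldsymbol{b}$ and their first partial derivatives: (\ref{firstvarf}) reduces $\boldsymbol{\Omega}\boldsymbol{\chi}, \boldsymbol{\Omega}\underline{\boldsymbol{\chi}}$ to Lie derivatives of $\slashed{\boldsymbol{g}}$, while (\ref{melek}) expresses $\boldsymbol{\omega},\underline{\boldsymbol{\omega}},\boldsymbol{\eta},\underline{\boldsymbol{\eta}}$ through logarithmic derivatives of $\boldsymbol{\Omega}$ and $\boldsymbol{b}$. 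Substituting the expansions from the previous step produces the weights of the second line of (\ref{boundsextend}). The improved cancellations on the third line --- combinations such as $\boldsymbol{\Omega}\,\mathrm{tr}\boldsymbol{\chi}-\boldsymbol{\Omega}\,\mathrm{tr}\underline{\boldsymbol{\chi}}-4/(u-v)$ losing one extra power of $r^{-1}$ --- arise because $\boldsymbol{\partial}_v - (\boldsymbol{\partial}_u + \mathcal{L}_{\boldsymbol{b}})$ applied to $\boldsymbol{\Omega}^{-2}\slashed{\boldsymbol{g}}$ is precisely the $\boldsymbol{\partial}_t$-derivative of the induced boundary $3$-metric, which vanishes at $\mathcal{I}$ by the \emph{stationary} form of $\boldsymbol{g}^{(3)}_{\mathcal{I}}$ guaranteed by assumption~4. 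The analogous statement $\boldsymbol{\partial}_t\log\boldsymbol{\Omega}|_\mathcal{I}=0$ produces the cancellation in $\boldsymbol{\omega}+\underline{\boldsymbol{\omega}}$ and, via (\ref{melek}), in $\boldsymbol{\eta}+\underline{\boldsymbol{\eta}}$.

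For the curvature weights on the last line of (\ref{boundsextend}) I use the algebraic content of Section~\ref{sec:propequations}: the Gauss equation (\ref{gauss}) gives $\boldsymbol{\rho}$ in terms of $\mathbf{K}$ and the shears; the Codazzi equations (\ref{cod1})--(\ref{cod2}) together with the curl relations for $\boldsymbol{\eta},\underline{\boldsymbol{\eta}}$ give $\boldsymbol{\beta},\underline{\boldsymbol{\beta}},\boldsymbol{\sigma}$; and the shear transport equations (\ref{chieq}), solved algebraically for $\boldsymbol{\alpha},\underline{\boldsymbol{\alpha}}$, yield the two extreme curvature components. Substituting the Ricci-coefficient expansions from the previous paragraph and collecting $r$-weights produces all stated bounds. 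Finally, the passage from coordinate to $\boldsymbol{g}$-orthonormal frame components introduces factors of $\boldsymbol{\Omega}$ and $\slashed{\boldsymbol{g}}^{\pm 1/2}$ per frame contraction, and the weights listed in (\ref{boundsextend}) are precisely what is needed to guarantee smooth extension of those frame components. The main obstacle is isolating the extra cancellations of the previous paragraph, where assumption~4 plays its essential role: mere regularity of $\boldsymbol{\Omega}^{-2}\boldsymbol{g}$ would only give the weights on the first two lines of (\ref{boundsextend}) and not the improvement by one power of $r$ present in the third line.
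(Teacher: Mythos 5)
There is a genuine gap, and it sits exactly where you locate the ``main obstacle.'' The improved rates on the first and third lines of \eqref{boundsextend} do \emph{not} come from assumption 4: the proposition assumes only 3.~of Section \ref{sec:onepf}, and assumption 4 is reserved for the boundary conditions of Proposition \ref{prop:NullWeylBC}. (Nor would 4.~deliver what you want: conformality of $\boldsymbol{g}^{(3)}_{\mathcal{I}}$ to the Lorentzian cylinder permits a $t$-dependent conformal factor, so it does not force $\boldsymbol{\partial}_t\log\boldsymbol{\Omega}|_{\mathcal{I}}=0$ or the stationarity of $\boldsymbol{\Omega}^{-2}\slashed{\boldsymbol{g}}$ that you invoke.) The actual source of the gain of one power of $r$ is the Einstein equation \eqref{EVEL} itself, exploited through the conformal transformation law of the Ricci tensor: writing $\boldsymbol{\widetilde g}=(u-v)^2\boldsymbol{g}$, the identity \eqref{eq:Ricconf} and its trace show that $(u-v)^{-2}\left(k^2(u-v)^2\boldsymbol{\Omega}^2-1\right)$ and $(u-v)^{-1}\boldsymbol{\hat\Theta}$ extend smoothly to $\mathcal{I}$ (Lemma \ref{lem:Fri}, Items 1 and 2), and it is these two facts that produce $r^2(\boldsymbol{\omega}+\underline{\boldsymbol{\omega}})$, $r^3(\boldsymbol{\eta}+\underline{\boldsymbol{\eta}})$, $r^2(\hat{\boldsymbol\chi}-\hat{\underline{\boldsymbol\chi}})$, and indeed already the $r^2$ weights on $\boldsymbol{\eta},\underline{\boldsymbol{\eta}}$ in the second line (since $\slashed{\boldsymbol{\nabla}}\log\boldsymbol{\Omega}$ is only $O(r^{-1})$ in an orthonormal frame without the improved control of $\boldsymbol{\Omega}$). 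Without feeding the field equations into the argument, ``mere regularity of $\boldsymbol{\Omega}^{-2}\boldsymbol{g}$'' gives strictly weaker weights than claimed even on the second line.

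The second gap is the curvature weights. The algebraic route through \eqref{gauss}, \eqref{cod1}--\eqref{cod2} and \eqref{chieq} does not close at the stated orders: substituting the Ricci-coefficient asymptotics into the Gauss equation yields only $\boldsymbol\rho=O(r^{-1})$ (the $-\tfrac14{\bf tr}\boldsymbol\chi\,{\bf tr}\underline{\boldsymbol\chi}$ term cancels $k^2$ only to leading order), Codazzi yields only $\boldsymbol\beta=O(r^{-2})$, and solving \eqref{chieq} for $\boldsymbol\alpha$ requires the transversal derivative $\slashed{\nabla}_4\hat{\boldsymbol\chi}$, whose expansion must be controlled to \emph{two} orders beyond what regularity of $r\hat{\boldsymbol\chi}$ provides. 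In each case the claimed $r^3$ weight amounts to one or two nontrivial cancellations that your argument asserts but does not establish. The paper obtains them in one stroke from the conformal invariance of the Weyl tensor together with the contracted Bianchi identity \eqref{eq:confBianchi} for $\boldsymbol{\widetilde d}=(u-v)^{-1}\boldsymbol{\widetilde W}$, which shows $(u-v)^{-1}\boldsymbol{\widetilde W}$ extends smoothly to $\mathcal{I}$; unwinding the conformal rescaling of an orthonormal frame, this is precisely the statement that all components $r^3\boldsymbol\alpha,\dots,r^3\boldsymbol\sigma$ extend. Some such use of the Bianchi equations (or an equivalent higher-order expansion of the structure equations) is indispensable here.
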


From the assumption that the metric is conformal to the Anti-de Sitter metric on the boundary at infinity ($4.$ of Section~\ref{sec:onepf}), one has the following boundary conditions for the null curvature components. The proof is postponed to Appendix~\ref{app:proofpropad}.
\begin{proposition}\label{prop:NullWeylBC}
  Assume that $\boldsymbol{g}$ is asymptotically Anti-de Sitter ($3.$ of Section \ref{sec:onepf}) and its conformally induced metric on $\mathcal{I}$ is conformal to the conformally induced metric of Anti-de Sitter ($4.$ of Section \ref{sec:onepf}). Then, the following boundary conditions hold
  \begin{align} 
    \lim_{v \rightarrow u} (r^\star)^{-3} \boldsymbol\alpha &= \lim_{v \rightarrow u}  (r^\star)^{-3} \underline{\boldsymbol\alpha} \, ,  \label{bc1} \\
    \lim_{v \rightarrow u} (r^\star)^{-3} \boldsymbol\beta &= -\lim_{v \rightarrow u}  (r^\star)^{-3} \underline{\boldsymbol\beta} \, , \label{bc2} \\
    \lim_{v \rightarrow u} (r^\star)^{-3} \boldsymbol\sigma &= 0, \,  \label{bc3}
  \end{align}
  as well as 
  \begin{align}
    \lim_{v \rightarrow u}  [\boldsymbol{\Omega} \boldsymbol{\slashed{\nabla}}_4 - \boldsymbol\Omega \boldsymbol{\slashed{\nabla}}_3]( (r^\star)^{-3} \boldsymbol\alpha )&= -\lim_{v \rightarrow u}    [\boldsymbol{\Omega} \boldsymbol{\slashed{\nabla}}_4 - \boldsymbol\Omega \boldsymbol{\slashed{\nabla}}_3]( (r^\star)^{-3} \underline{\boldsymbol\alpha}) \, . \label{bc4}
  \end{align} 
\end{proposition}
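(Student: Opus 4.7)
The plan is to convert the asymptotic smoothness of the conformally rescaled metric at $\mathcal{I}$ (assumption 3 of Section~\ref{sec:onepf}) and the conformal flatness of the induced 3-metric (assumption 4) into algebraic constraints on the null Weyl components. The key tool is the conformal invariance of the Weyl tensor $\boldsymbol{W}^a{}_{bcd}$, which transfers information from the smooth rescaled metric $\tilde{\boldsymbol{g}} := \boldsymbol{\Omega}^{-2}\boldsymbol{g}$ back to the original null frame. Proposition~\ref{prop:ad} dictates precisely the powers of $r^\star$ that appear when converting between the two frames, justifying the normalisations $(r^\star)^{-3}\boldsymbol\alpha$, etc.~in the statement.

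First, introduce a $\tilde{\boldsymbol{g}}$-orthonormal frame at $\mathcal{I}$: a unit timelike $\hat{T}$ tangent to $\mathcal{I}$ aligned with $\partial_t$, a unit spacelike outward normal $\hat{N}$ aligned with $\partial_{r^\star}$, and an angular frame $\hat{e}_A$ on the spheres, so that $\boldsymbol{e}_{3/4} \propto \boldsymbol{\Omega}^{-1}(\hat{T} \mp \hat{N})$ up to corrections vanishing at $\mathcal{I}$. The leading boundary value of $\boldsymbol{W}$ then decomposes via its electric/magnetic parts relative to $\hat{N}$ into symmetric traceless $(0,2)$-tensors on $\mathcal{I}$. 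The Einstein equations $\mathrm{Ric}(\boldsymbol{g}) = -3k^2 \boldsymbol{g}$ together with the conformal flatness of the induced 3-metric should force this leading data to reduce to a single symmetric traceless tensor (the ``electric'' piece), with no magnetic contribution. Comparing frame components: the tangential-tangential pieces appear symmetrically in $(r^\star)^{-3}\boldsymbol\alpha$ and $(r^\star)^{-3}\underline{\boldsymbol\alpha}$, giving (\ref{bc1}); the tangential-normal pieces appear with opposite sign in $(r^\star)^{-3}\boldsymbol\beta$ and $(r^\star)^{-3}\underline{\boldsymbol\beta}$ via the reflection $\hat{N} \to -\hat{N}$, giving (\ref{bc2}); and the absence of the magnetic component --- equivalent to the Cotton tensor of the (conformally flat) boundary metric vanishing --- forces $(r^\star)^{-3}\boldsymbol\sigma|_{\mathcal{I}} = 0$, giving (\ref{bc3}).

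For (\ref{bc4}) I would start from the Bianchi equations of Section~\ref{sec:propequations} for $\boldsymbol{\slashed{\nabla}}_3 \boldsymbol\alpha$ and $\boldsymbol{\slashed{\nabla}}_4 \underline{\boldsymbol\alpha}$, multiply by $(r^\star)^{-3}$, apply Leibniz on the $(r^\star)^{-3}$ factor, and substitute (\ref{bc2}) together with the decay rates of the lower-order terms furnished by Proposition~\ref{prop:ad}. The right-hand sides, containing $\slashed{\mathbfcal{D}}_2^\star \boldsymbol\beta$ and $\slashed{\mathbfcal{D}}_2^\star \underline{\boldsymbol\beta}$ at leading order, should then combine to produce the desired relation between the normal derivatives $[\boldsymbol\Omega\boldsymbol{\slashed{\nabla}}_4 - \boldsymbol\Omega\boldsymbol{\slashed{\nabla}}_3]((r^\star)^{-3}\boldsymbol\alpha)$ and $[\boldsymbol\Omega\boldsymbol{\slashed{\nabla}}_4 - \boldsymbol\Omega\boldsymbol{\slashed{\nabla}}_3]((r^\star)^{-3}\underline{\boldsymbol\alpha})$ at $\mathcal{I}$.

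The principal technical difficulty will be the meticulous handling of the null-frame degeneracy near $\mathcal{I}$: in particular the divergence of the shift $\boldsymbol{b}^A \sim r \sim (r^\star)^{-1}$ and the degeneration of the null vectors $\boldsymbol{e}_3, \boldsymbol{e}_4$ with respect to $\tilde{\boldsymbol{g}}$. One must expand each geometric quantity in powers of $r^\star$ near the boundary and verify that subleading, connection, and angular-frame corrections do not contaminate the leading-order algebraic relations. The crux of the argument for (\ref{bc1})--(\ref{bc3}) --- namely that the leading boundary value of the Weyl tensor reduces under conformal flatness to a single symmetric traceless $(0,2)$-tensor on $\mathcal{I}$ --- also requires careful tracking through the Gauss-Codazzi-Ricci and Bianchi identities restricted to the boundary.
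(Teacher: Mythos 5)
Your overall strategy is the one the paper uses: pass to the conformally rescaled metric, observe that the three combinations $(r^\star)^{-3}(\boldsymbol\alpha-\underline{\boldsymbol\alpha})$, $(r^\star)^{-3}(\boldsymbol\beta+\underline{\boldsymbol\beta})$ and $(r^\star)^{-3}\boldsymbol\sigma$ are precisely the ``one normal, three tangential'' (magnetic) components of the rescaled Weyl tensor at $\mathcal{I}$, identify the vanishing of that magnetic part with the vanishing of the Cotton tensor of the conformally flat boundary metric, and obtain (\ref{bc4}) from the Bianchi equations together with (\ref{bc1})--(\ref{bc3}) and $\lim r\hat{\boldsymbol\chi}=\lim r\hat{\underline{\boldsymbol\chi}}$. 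However, the sentence ``the Einstein equations together with the conformal flatness \ldots{} \emph{should force} this leading data to reduce to the electric piece'' is where the entire proof lives, and you do not supply it. That step is the paper's Lemma~\ref{lem:WeylBC}: one uses that $\boldsymbol{\widetilde{d}}=(u-v)^{-1}\boldsymbol{\widetilde{W}}$ is divergence-free by conformal invariance of the Bianchi equations, that the divergence of a Weyl tensor equals the Cotton tensor of $\boldsymbol{\widetilde{g}}$, and then the Gauss--Codazzi equations to convert the restriction of that spacetime Cotton tensor to $\mathcal{I}$ into the intrinsic Cotton tensor of $\boldsymbol{h}$ plus error terms; killing those error terms at the right order requires the \emph{sharpened} boundary regularity of $\boldsymbol{\Theta}$ and of $k^2(u-v)^2\boldsymbol{\Omega}^2-1$ established in Lemma~\ref{lem:Fri} (items \ref{item:optimals} and \ref{item:secondform}), not merely the statements of Proposition~\ref{prop:ad}. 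Without this chain your proposal is an outline, not a proof. Relatedly, be careful with ``leading boundary value'': $(u-v)^{-1}\boldsymbol{\widetilde{W}}$ already extends smoothly for \emph{any} aAdS Einstein metric (no conformal-flatness needed), so the leading order of $\boldsymbol{\widetilde{W}}$ at $\mathcal{I}$ is zero for all components; the boundary conditions concern the \emph{subleading} coefficient, and an argument pitched at the leading order proves nothing.

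A second, smaller point: your attribution of (\ref{bc1}) and (\ref{bc2}) to ``symmetry/reflection'' and only (\ref{bc3}) to the Cotton tensor is misleading. Expanding $\boldsymbol{e}_4\propto\boldsymbol{\widetilde{T}}+\boldsymbol{\widetilde{N}}$, $\boldsymbol{e}_3\propto\boldsymbol{\widetilde{T}}-\boldsymbol{\widetilde{N}}$ shows that $\boldsymbol\alpha-\underline{\boldsymbol\alpha}$ and $\boldsymbol\beta+\underline{\boldsymbol\beta}$ are exactly cross terms with one $\boldsymbol{\widetilde{N}}$-slot, i.e.\ they \emph{are} magnetic components, just as $\boldsymbol\sigma$ is. All three boundary conditions therefore stand or fall with the Cotton-tensor computation; the reflection symmetry of the electric part is only usable once the magnetic part has already been shown to vanish at order $(u-v)$. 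Your sketch of (\ref{bc4}) is consistent with the paper's (one also needs (\ref{bc3}) to discard the ${}^\star\hat{\boldsymbol\chi}\boldsymbol\sigma$ terms and the relation $\lim r\hat{\boldsymbol\chi}=\lim r\hat{\underline{\boldsymbol\chi}}$ for the $\hat{\boldsymbol\chi}\boldsymbol\rho$ terms), but it inherits the gap above since it presupposes (\ref{bc1})--(\ref{bc3}).
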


\subsection{The Schwarzschild-AdS background}

In this section we discuss the Schwarzschild-AdS manifold $(\mathcal{M}_{int},g=\boldsymbol{g}(0))$. In complete analogy with \cite{Daf.Hol.Rod19} we use unbolded notation to indicate $\epsilon=0$ quantities, for instance we write $\Omega$, $\slashed{g}$ and $b$ for the metric components, $\hat{\chi}$ for the outgoing shear etc.

Moreover, all the constructions and definitions of Sections \ref{sec:ta}--\ref{sec:ta3} may be repeated in unbolded notation. As this is done in detail in Section 4.3.1 of \cite{Daf.Hol.Rod19} we only give brief summary. 

\subsubsection{Ricci coefficients, curvature components} 
The only non-vanishing Ricci-coefficients in the $\epsilon=0$ case are:
\begin{align}
\Omega \chi_{AB} =  -\Omega\underline{\chi}_{AB} = \frac{\Omega^2}{r} r^2 \gamma_{AB}  \ \ \ , \ \ \ \omega = -\underline{\omega} = \frac{M}{r^2}+ k^2 \, r \, , 
\end{align}
where $\Omega^2 = 1-\frac{2M}{r} + k^2 r^2$ and $r$ is defined implicitly as in (\ref{rstardef}). 
In particular, $\Omega\chi$ and $\Omega\underline{\chi}$ are $\slashed{g}$-traceless with $\Omega tr \chi = -\Omega tr \underline{\chi}= \frac{2\Omega^2}{r}$. The only non-vanishing null-curvature component is $\rho = -\frac{2M}{r^3}$. 

\subsubsection{Differential operators and commutation formulae} 
We have the simplified coordinate formulae for the the projected Lie-derivatives for a general $S^2_{u,v}$-tensor $\xi$ of rank $N$:
\[
(D \xi)_{A_1...A_N}=\partial_v (\xi_{A_1...A_N}) \ \  \textrm{and} \ \   (\underline{D} \xi)_{A_1...A_N}=\partial_u (\xi_{A_1...A_N})
\]
For the projected covariant derivatives one finds 
\[
\Omega (\slashed{\nabla}_3 \xi)_{A_1...A_N} =  \partial_u (\xi_{A_1...A_N}) - \frac{N}{2} \Omega tr \underline{\chi}  (\xi_{A_1...A_N})  \ \  \textrm{and} \ \ \Omega (\slashed{\nabla}_4 \xi)_{A_1...A_N} =  \partial_v (\xi_{A_1...A_N}) - \frac{N}{2} \Omega tr {\chi}  (\xi_{A_1...A_N}).
\]
We recall the (unbolded) $S^2_{u,v}$-angular operators $\slashed{\nabla}$,$\slashed{\mathcal{D}}_1$, $\slashed{\mathcal{D}}_2$, $\slashed{\mathcal{D}}_1^\star$, $\slashed{\mathcal{D}}_2^\star$, now all defined with respect to the metric $\slashed{g} = r^2 \gamma$ on $S^2_{u,v}$. 


We define the vectorfield 
\[
T = \frac{1}{2} \partial_u + \frac{1}{2} \partial_v \, ,
\]
which is the static Killing field of Schwarzschild-AdS. The shall employ the notation $2\slashed{\nabla}_T\xi= \Omega \slashed{\nabla}_3 \xi +  \Omega \slashed{\nabla}_4\xi$. Note that $(\slashed{\nabla}_T \xi)_{A_1 \ldots A_N} = \partial_t (\xi_{A_1 \ldots A_n})$ since $\Omega tr \chi + \Omega tr \underline{\chi}=0$ for the background Schwarzschild-AdS metric. 

We finally collect the commutation formulae holding on $S^2_{u,v}$ tensors $\xi$:
\begin{align} \label{angularcommute}
[\Omega \slashed{\nabla}_3 , r\slashed{\nabla}_A] \xi = 0 \ \ \ , \ \ \ [\Omega \slashed{\nabla}_4 , r\slashed{\nabla}_A] \xi = 0 \ \ \ , \ \ \ [\Omega \slashed{\nabla}_3 , \Omega \slashed{\nabla}_4] \xi = 0 \, ,
\end{align}
which will be used frequently. As in \cite{Daf.Hol.Rod19} we will define the angular operators $\mathcal{A}^{[i]}$ (which commute trivially with $\slashed{\nabla}_3$, $\slashed{\nabla}_4$) acting on symmetric traceless tensors as follows:
\begin{align}
\mathcal{A}^{[0]}=1 \ \ , \textrm{and then inductively}  \ \ \mathcal{A}^{[2i+1]} = r \slashed{\mathcal{D}}_2 \mathcal{A}^{[2i]} \ \ , \ \ \mathcal{A}^{[2i]} = r^2 \slashed{\mathcal{D}}^\star_2  \slashed{\mathcal{D}}_2 \mathcal{A}^{[2i-2]} \, .
\end{align}
Elementary elliptic theory on the round sphere (see Section \ref{sec:elliptic}) establishes that these operators have trivial kernel. Consistent with the above, we will here also allow $\mathcal{A}^{[i]}$ to act on one-forms as $\mathcal{A}^{[i]} \xi = \mathcal{A}^{[i-1]} r \slashed{\mathcal{D}}^\star_2 \xi$.

\subsubsection{Norms on the spheres $S^2_{u,v}$}
Let $\theta$ and $\phi$ denote the standard spherical coordinates on $S^2_{u,v}$. We define the pointwise norm on $S^2_{u,v}$ tensors $\xi$ of rank $N$ by 
\[
|\xi|^2 = \slashed{g}^{A_1 ... A_N}  \slashed{g}^{B_1, ..., B_N} \xi_{A_1...A_N} \xi_{B_1...B_N} \, .
\]
A weighted $L^2(S^2_{u,v})$ norm on such tensors is then defined by
\begin{align} \label{spherenorm}
\|\xi\|^2_{u,v} := \int_{S^2_{u,v}} |\xi|^2 \sin \theta d\theta d\phi \, .
\end{align}
Note the absence of a factor of $r^2(u,v)$ in the integral, which if present would make (\ref{spherenorm}) the \emph{induced} norm. 

\subsubsection{The $\ell=0$ and $\ell=1$ modes}  \label{sec:l01d}
We recall the spherical harmonics $Y^\ell_m$ on the round sphere, where $\ell \in \mathbb{N}^0$ and $m \in \{-\ell, ... \ell\}$ which form a basis of $L^2(S^2)$. The $\ell=0$ and $\ell=1$ spherical harmonic will play a distinguished role in our problem and are given explicitly by
\[
Y^0_0 = \frac{1}{\sqrt{4\pi}} \ \ \ , \ \ \ Y^{1}_{0} = \sqrt{\frac{3}{4\pi}} \cos \theta   \ \ \ , \ \ \ Y^{1}_{ 1} = \sqrt{\frac{3}{4\pi}} \sin \theta \cos \phi \ \ \ , \ \ \ Y^{1}_{-1} = \sqrt{\frac{3}{4\pi}} \sin \theta \sin \phi  \, .
\]
A function is supported for $\ell \geq 1$ is a function whose spherical means vanishes. A function supported for $\ell \geq 2$ is a function supported for $\ell \geq 1$ whose projection to the $Y^{1}_m$ also vanishes.

We can also make sense of $S^2_{u,v}$ one-forms and symmetric traceless tensors being supported on specific $\ell$ modes as in Section 4.2.2 of \cite{Daf.Hol.Rod19}. To summarise this, recall that any $S^2_{u,v}$ one-form $\xi$ can be written uniquely as
\begin{align} \label{dpo}
\xi = r \slashed{\mathcal{D}}_1^\star (f,g)
\end{align}
for functions $f$ and $g$ of vanishing mean. (Note that only constants are in the kernel of $\slashed{\mathcal{D}}_1^\star$). We say that $\xi$ is supported for $\ell \geq 2$ if both $f$ and $g$ in the above representation are supported on $\ell \geq 2$. Furthermore, we define the projection of $\xi$ to $\ell=1$ by the expression (\ref{dpo}) where $f$ and $g$ are projected to $\ell=1$. 

Similarly, an $S^2_{u,v}$ symmetric traceless tensor $\xi$ can be represented uniquely by functions $f$ and $g$ supported on $\ell \geq 2$ as 
\[
\xi = r^2 \slashed{\mathcal{D}}_2^\star  \slashed{\mathcal{D}}_1^\star (f,g) \, .
\]
Note in particular that the kernel of the operator $r\slashed{\mathcal{D}}_2^\star$ consists precisely of functions supported for $\ell \geq 1$ only (as shown explicitly in \cite{Daf.Hol.Rod19}). It is in the above sense that we can say that one-forms are supported for $\ell \geq 1$ and symmetric traceless tensors for $\ell \geq 2$. 

\subsubsection{Basic elliptic estimates} \label{sec:elliptic}
We finally collect a few elliptic estimates that are immediate consequences of Section 4.4.3 in \cite{Daf.Hol.Rod19}:

\begin{proposition} 
Let $(f,g)$ be a pair of functions supported on $\ell \geq 2$, then for any $j \geq 0$
\begin{align}
\sum_{i=0}^{j+2} \| [r\slashed{\nabla}]^i (f,g)\|_{u,v}^2 \lesssim \| \mathcal{A}^{[j]} r^2 \slashed{\mathcal{D}}_2^\star  \slashed{\mathcal{D}}_1^\star (f,g) \|^2_{u,v} \, .
\end{align}
Let $\eta$ be an $S^2_{u,v}$ one-form supported on $\ell \geq 2$, then for any $j \geq 0$
\begin{align}
\sum_{i=0}^{j+1} \| [r\slashed{\nabla}]^i \eta \|_{u,v}^2 \lesssim \| \mathcal{A}^{[j]} r \slashed{\mathcal{D}}_2^\star  \eta \|^2_{u,v} \, .
\end{align}
Let $\xi$ be an $S^2_{u,v}$ symmetric traceless tensor, then for any $j \geq 0$
\begin{align}
\sum_{i=0}^{j} \| [r\slashed{\nabla}]^j \xi \|_{u,v}^2 \lesssim \| \mathcal{A}^{[j]} \xi \|^2_{u,v} \, .
\end{align}
\end{proposition}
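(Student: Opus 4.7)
The plan is to prove all three estimates by induction on $j$, with the base cases coming from standard Hodge-type elliptic estimates on the round 2-sphere $(S^2, \gamma)$ and the inductive steps reducing to the base case after exploiting the commutation formulae (\ref{angularcommute}) and the fact that on the round sphere the operators $r\slashed{\mathcal{D}}_1^\star, r\slashed{\mathcal{D}}_2^\star, r\slashed{\mathcal{D}}_1, r\slashed{\mathcal{D}}_2$ commute with $r\slashed{\nabla}$ up to compactly supported curvature corrections and have trivial $L^2$-kernel on modes supported for $\ell\geq 1$ (resp.\ $\ell\geq 2$).

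First I would establish the base case $j=0$ for each inequality. For the third estimate, this is just $\|\xi\|_{u,v} \leq \|\mathcal{A}^{[0]}\xi\|_{u,v}$, which is trivial since $\mathcal{A}^{[0]}=1$. For the second estimate at $j=0$, I would integrate the identity $\slashed{\mathcal{D}}_2 \slashed{\mathcal{D}}_2^\star = -\tfrac{1}{2}\slashed{\Delta} + K$ (valid on any Riemannian $2$-surface) against $\eta$: multiplying by $r^2$ and using that $\eta$ is supported for $\ell\geq 2$ so that $\slashed{\mathcal{D}}_2^\star$ has trivial kernel on $\eta$, this yields $\|r\slashed{\nabla}\eta\|_{u,v}^2 + \|\eta\|_{u,v}^2 \lesssim \|r\slashed{\mathcal{D}}_2^\star \eta\|_{u,v}^2$. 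For the first estimate at $j=0$, I would apply the second estimate to $\eta := r\slashed{\mathcal{D}}_1^\star(f,g)$ (noting that $\eta$ is automatically supported for $\ell\geq 2$ since $(f,g)$ are), and then use the standard Hodge estimate $\|r\slashed{\nabla}(f,g)\|_{u,v} \lesssim \|r\slashed{\mathcal{D}}_1^\star(f,g)\|_{u,v}$ together with the Poincar\'e inequality on the round sphere for functions supported on $\ell\geq 2$ (in fact $\ell\geq 1$) to bootstrap down to $\|(f,g)\|_{u,v}$.

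For the inductive step, suppose the estimates hold at level $j-2$ (resp.\ $j-1$). For the third estimate with $\xi$ symmetric traceless, I write $\mathcal{A}^{[j]}\xi = r^2 \slashed{\mathcal{D}}_2^\star \slashed{\mathcal{D}}_2 \mathcal{A}^{[j-2]}\xi$ if $j$ is even, or $\mathcal{A}^{[j]}\xi = r\slashed{\mathcal{D}}_2 \mathcal{A}^{[j-1]}\xi$ if $j$ is odd. Applying the base case to $\mathcal{A}^{[j-2]}\xi$ (even case) using the self-adjoint operator $r^2\slashed{\mathcal{D}}_2^\star\slashed{\mathcal{D}}_2$ via integration by parts, I obtain $\|[r\slashed{\nabla}]^2 \mathcal{A}^{[j-2]}\xi\|_{u,v} \lesssim \|\mathcal{A}^{[j]}\xi\|_{u,v}$, and then commute $[r\slashed{\nabla}]^2$ past $\mathcal{A}^{[j-2]}$ (which is a polynomial in $r\slashed{\mathcal{D}}_2$ and $r\slashed{\mathcal{D}}_2^\star$) using constant-curvature commutators on the round sphere to conclude $\|[r\slashed{\nabla}]^j\xi\|_{u,v}^2 \lesssim \|\mathcal{A}^{[j]}\xi\|_{u,v}^2$; the lower-order terms $\|[r\slashed{\nabla}]^i\xi\|^2_{u,v}$ for $i<j$ are absorbed by the inductive hypothesis. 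The odd case is analogous. The first and second estimates then follow by the same induction, at the very top replacing the innermost $r\slashed{\mathcal{D}}_2^\star$ by $r\slashed{\mathcal{D}}_1^\star$ where appropriate, which costs one extra derivative gain as captured by the indices $j+1$ and $j+2$ on the left-hand sides.

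The main obstacle is bookkeeping the commutators between $r\slashed{\nabla}$ and the $\mathcal{A}^{[j]}$ operators cleanly, since each commutator produces curvature terms of the schematic form $K\cdot[r\slashed{\nabla}]^{i}\xi$ with $i<j$, and one must verify that these are absorbable by the inductive hypothesis. Since everything takes place on the round sphere $(S^2,\gamma)$ of constant Gauss curvature (after factoring out the $r$-weights explicit in the definition of $\mathcal{A}^{[i]}$), the constants $K$ are harmless numbers and no $r$-weight mismatch arises. Once the commutator algebra is unpacked this way, all three estimates reduce to the standard elliptic theory recalled in Section 4.4.3 of \cite{Daf.Hol.Rod19}, and the three inequalities follow by a finite induction on $j$.
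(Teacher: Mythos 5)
Your argument is correct, and it is essentially the proof behind the paper's statement: the paper itself offers no proof here, simply asserting that these bounds are ``immediate consequences of Section 4.4.3 in \cite{Daf.Hol.Rod19}'', and the route you take --- the $j=0$ case from the Hodge identities on the round sphere together with the spectral gap on $\ell \geq 2$ (resp.~$\ell\geq 1$) modes, followed by induction on $j$ peeling off one factor of $\mathcal{A}$ at a time and commuting $r\slashed{\nabla}$ past the constant-curvature operators --- is exactly the standard argument that reference carries out. One small correction that does not affect the proof: the identity on one-forms is $\slashed{\mathcal{D}}_2 \slashed{\mathcal{D}}_2^\star = -\tfrac{1}{2}\left(\slashed{\Delta} + K\right)$ rather than $-\tfrac{1}{2}\slashed{\Delta} + K$, so integrating it against $\eta$ alone does not produce the $\|\eta\|_{u,v}^2$ term with a favourable sign; the zeroth-order control must come (as you implicitly invoke) from the quantitative spectral gap of $r\slashed{\mathcal{D}}_2^\star$ on $\ell\geq 2$ one-forms, not merely from the triviality of its kernel.
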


\subsection{The linearisation procedure} \label{sec:linproc}
Recall that in our set up of Section \ref{sec:onepf}, the members of the $1$-parameter family of metrics (\ref{gindn}) all live on the same underlying manifold $\mathcal{M}$. Moreover, the hypersurfaces $u=\textrm{const.}$ and $v=\textrm{const.}$ are null for \emph{any} metric in the family. In other words, the notion of $S^2_{u,v}$-tensor is independent of $\epsilon$ and we can, in particular, add and subtract $S^2_{u,v}$-tensors associated with different $\boldsymbol{g}(\epsilon)$. If $\boldsymbol\xi$ denotes an $S^2_{u,v}$-tensor (a null-decomposed Ricci-coefficient or curvature component associated with the metric $\boldsymbol{g}(\epsilon)$) and $\xi$ denotes the corresponding tensor for $\epsilon=0$, we define its linearisation by
\begin{align} \label{linod}
\overset{(1)}{\xi} := \frac{d}{d\epsilon} \boldsymbol{\xi}|_{\epsilon=0} = \lim_{\epsilon \rightarrow 0} \frac{\boldsymbol{\xi} -\xi}{\epsilon} \, .
\end{align}
The linearised Einstein equations in double null gauge are then obtained formally in the following way: One writes down the null-decomposed Bianchi and null structure equations first for general $\epsilon$ (i.e.~for the metric $\boldsymbol{g}(\epsilon)$) and secondly for $\epsilon=0$ (i.e.~for the Schwarzschild-AdS metric) and then subtracts the respective equations, divides by $\epsilon$ and takes the limit $\epsilon \rightarrow 0$ inserting the definition (\ref{linod}). This then yields the system of gravitational perturbations in double null gauge as collected in Section \ref{sec:system}. It should be noted that for most of the equations deriving the linearisation is trivial because many of the Schwarzschild-AdS background quantities vanish, which trivialises a significant number of null structure and Bianchi equations for $\epsilon=0$. 

\subsection{The system of gravitational perturbations} \label{sec:system}
In summary, the system of gravitational perturbations in double null gauge is encoded by the linearised metric quantities
\begin{align} \label{metricq}
\glinto , \glinh_{AB}, \bmlin_A,\Olin,
\end{align}
where $\glinto=\frac{1}{2} 
\sqrt{\slashed{g}} \cdot \mathrm{tr}_{\slashed{g}} \glin$ and $ \glinh_{AB}= \glin_{AB}- \frac{1}{2} \slashed{g}_{AB} \mathrm{tr}_{\slashed{g}} \glin$ are defined from the linearised metric $\glin$, which is in turn defined by (\ref{linod}),  the linearised connection coefficients
\begin{align} \label{connectionq}
\otx,\otxb,\olin,\olinb,\elin_A,\eblin_A, \xlin_{AB},\xblin_{AB} ,
\end{align}
and the linearised curvature components
\begin{align} \label{curvatureq}
\rlin,\slin,\blin_A,\bblin_A, \alin_{AB},\ablin_{AB}.
\end{align}
Depending on the number of indices, the above quantities are $S^2_{u,v}$ scalars, one-forms and symmetric traceless tensors respectively. 
As in~\cite{Daf.Hol.Rod19}, we will speak of a solution $\mathscr{S}$ to the system of gravitational perturbations to mean a a collection of quantities 
\begin{align} \label{scollect}
\mathscr{S}=\left(\, \glinh \, , \, \glinto \, , \, \Olino \, , \,  \bmlin\, , \,  \otx \, , \,  \otxb\, , \,  \xlin\, , \, \xblin\, , \,  \eblin \, , \,  \elin \, , \, \olin \, , \,  \olinb \, , \,  \alin \, , \,  \blin \, , \,  \rlin \, , \,  \slin \, , \,  \bblin \, , \,  \ablin \, , \, \Klin \right)
\end{align}
satisfying the system (\ref{stos})--(\ref{Bianchi10}) below, which we call the system of linearised gravity on the Schwarzschild background. Finally, it follows just as in Section 5.1.3 of \cite{Daf.Hol.Rod19} (from the $1$-parameter family of metrics being smooth in the extended sense) that the following linearised quantities extend smoothly to the horizon:
\begin{align} \label{smoothextended}
\big(\, \glinh , \, \glinto , \, \Olin , \,  \bmlin \Omega^{-2} , \,  \otx \, , \,  \Omega^{-2} \otxb\, , \,  \Omega\xlin\, , \, \Omega^{-1} \xblin\, , \,  \eblin \, , \,  \elin \, , \, \olin \, , \,  \Omega^{-2} \olinb \, , \,  \Omega^2 \alin \, , \,  \Omega \blin \, , \,  \rlin  , \,  \slin \, , \, \Omega^{-1} \bblin \, , \,  \Omega^{-2}\ablin \, , \, \Klin \big)  . 
\end{align}
The $\Omega^{-2}$ weight for $\bmlin$ does not appear in \cite{Daf.Hol.Rod19} as in that paper, the shift satisfies an equation in the ingoing direction. See footnote \ref{footnoteb}.

\subsubsection{Equations for the linearised metric components}
\label{=lmc}
%
The following equations hold for the linearised metric components, $\glinto \, , \, \glinh \, , \, \bmlin \, , \, \Olin$:
\begin{align} \label{stos} 
\partial_u \Big(\frac{\glinto}{\sqrt{\slashed{g}}}\Big)  = \otxb -  \divs\, \bmlin
\qquad , \qquad 
\partial_v \Big(\frac{\glinto}{\sqrt{\slashed{g}}}\Big) = \otx   \, ,
\end{align}
\begin{align} \label{stos2}
\Omega \slashed{\nabla}_3  \, \glinh   =2\Omega\, \xblin + 2\slashed{\mathcal{D}}_2^\star \bmlin 
\ \ \ ,  \ \ \
\Omega \slashed{\nabla}_4   \, \glinh &=2\Omega\, \xlin   ,
\end{align}
\begin{align} \label{bequat} 
\partial_v \bmlin^A = -2 \Omega^2\left(\elin^A - \eblin^A\right) \, ,
\end{align}
\begin{align} \label{oml3}
\partial_v \left( \Olin \right) = \olin \textrm{ \ , \ }  \partial_u \left(\Olin\right)=\olinb \textrm{ \  , \ }   2 \slashed{\nabla}_A \left(\Olin\right) = \elin_A + \eblin_A.
\end{align}

\subsubsection{Equations for the linearised Ricci coefficients}
\label{=lRc}
For $\otx \, , \, \otxb$ we have the equations
\begin{align} \label{dtcb}
\partial_v \otxb  = \Omega^2 \left( 2 \divs\, \eblin + 2\rlin + 4 \left(\rho-2k^2 \right) \, \Olin \right) - \frac{1}{2}  \Omega tr \chi \Big( \otxb - \otx  \Big) ,
\end{align}
\begin{align} \label{dbtc}
\partial_u \otx  = \Omega^2 \left( 2 \divs\, {\elin} + 2 \rlin + 4 \left(\rho-2k^2\right) \, \Olin \right) - \frac{1}{2}  \Omega tr \chi \Big( \otxb - \otx  \Big) ,
\end{align}
\begin{align} \label{uray}
\partial_v \otx = - \left(\Omega tr \chi\right)\otx + 2 \omega \otx  + 2  \left(\Omega tr \chi \right) \olin ,
\end{align}
\begin{align} \label{vray}
\partial_u \otxb = - \left(\Omega tr \underline{\chi}\right) \otxb  + 2 \underline{\omega} \otxb + 2  \left(\Omega tr \underline{\chi} \right) \olinb ,
\end{align}
while for $\xlin \, , \, \xblin$ we have
\begin{equation} \label{tchi} 
\begin{split}
\slashed{\nabla}_3  \Big(\Omega^{-1} \xblin  \Big)  +  \Omega^{-1} \left(tr \underline{\chi}\right) \xblin = -\Omega^{-1} \ablin \, , \\
\slashed{\nabla}_4  \Big(\Omega^{-1} \xlin \Big)  +  \Omega^{-1} \left(tr{\chi}\right) \xlin= -\Omega^{-1} \alin   \, ,
\end{split}
\end{equation}
\begin{align} \label{chih3}
\slashed{\nabla}_3  \left(\Omega \xlin \right)  + \frac{1}{2} \left(\Omega tr \underline{\chi}\right) \xlin + \frac{1}{2} \left( \Omega tr \chi\right) \xblin  &= -2 \Omega \slashed{\mathcal{D}}_2^\star \elin \, , \\
\slashed{\nabla}_4  \left(\Omega \xblin  \right) + \frac{1}{2} \left(\Omega tr \chi \right) \xblin  + \frac{1}{2} \left( \Omega tr \underline{\chi}\right) \xlin &= -2 \Omega \slashed{\mathcal{D}}_2^\star \eblin \, . \label{chih3b}
\end{align}
We also have the (purely elliptic) linearised Codazzi equations on the spheres $S^2_{u,v}$, which read
\begin{equation}
\begin{split}
\slashed{div} \xblin = -\frac{1}{2} \left(tr \underline{\chi}\right)  \elin + \bblin + \frac{1}{2\Omega} \slashed{\nabla}_A \otxb , \\
\slashed{div} \xlin = -\frac{1}{2} \left( tr {\chi}\right) \eblin  -\blin + \frac{1}{2\Omega} \slashed{\nabla}_A \otx \label{ellipchi} \, .
\end{split}
\end{equation}
For $\elin$ and $\eblin$ we have the transport equations
\begin{align} \label{propeta}
\slashed{\nabla}_3 \eblin =  \frac{1}{2} \left(tr \underline{\chi}\right) \left( \elin - \eblin\right)  + \bblin
\textrm{ \ \ \ \ , \ \ \ \ }
\slashed{\nabla}_4 \elin =  -  \frac{1}{2} \left( tr {\chi}\right) \left( \elin - \eblin\right) - \blin ,
\end{align}
together with the elliptic equations on the spheres $S^2_{u,v}$
\begin{align} \label{curleta}
\slashed{curl} \elin = \slin \ \ \ , \ \ \ \slashed{curl} \eblin = -\slin \, .
\end{align}
We finally have the transport equations for $\olin$ and $\olinb$ 
\begin{align} \label{oml1}
\partial_v \olinb = -\Omega^2 \left(\rlin + 2 \left(\rho+k^2\right) \Olin \right) \, ,
\end{align}
\begin{align} \label{oml2}
\partial_u \olin = -\Omega^2 \left(\rlin + 2 \left(\rho+k^2\right)\Olin \right) \, ,
\end{align}
and the linearised Gauss equation on the spheres $S^2_{u,v}$, which reads
\begin{equation} \label{lingauss}
\Klin = -\rlin - \frac{1}{4} \frac{tr {\chi}}{\Omega}\left( \otxb - \otx  \right) +\frac{1}{2}\Olin \left(tr \chi tr \underline{\chi} \right) \, .
\end{equation}
We also note that $\Klin$, the linearised Gauss curvature of the double null spheres satisfies (see (221) of~\cite{Daf.Hol.Rod19})
\begin{align} \label{gaussfootnote}
2\Klin=-\frac{1}{2}\ds \mathrm{tr}_{\slashed{g}} \glin +\divs \divs\glinh - \frac{1}{r^2}\mathrm{tr}_{\slashed{g}} \glin \ \ \ \  \textrm{where $\glinto=\frac{1}{2} 
\sqrt{\slashed{g}} \cdot \mathrm{tr}_{\slashed{g}} \glin$.}
\end{align}

\subsubsection{Equations for linearised curvature components}
\label{=lcc}
We finally collect the equations satisfied by the linearised curvature components, which arise from the linearisation of the Bianchi equations:
\begin{align}
\slashed{\nabla}_3 \alin + \frac{1}{2} tr \underline{\chi}\alin + 2 \Omega^{-1} \underline{{\omega}} \alin &= -2 \slashed{\mathcal{D}}_2^\star \blin - 3 \rho\, \xlin \, ,  \label{Bianchi1} \\
\slashed{\nabla}_4 \blin + 2 (tr \chi) \blin -  \Omega^{-1} {\omega} \blin &= \slashed{div}\, \alin \, , \label{Bianchi2} \\
\slashed{\nabla}_3 \blin + (tr \underline{\chi}) \blin + \Omega^{-1} \underline{{\omega}} \blin &= \slashed{\mathcal{D}}_1^\star \left(-\rlin \, , \, \slin \, \right) + 3\rho \, \elin \, ,   \label{Bianchi3}
\\
\slashed{\nabla}_4 \rlin + \frac{3}{2} (tr \chi) \rlin &= \slashed{div}\, \blin - \frac{3}{2} \frac{\rho}{\Omega}  \otx \, , \label{Bianchi4}
\\
\slashed{\nabla}_3 \rlin + \frac{3}{2} (tr \underline{\chi}) \rlin &= -\slashed{div}\, \bblin - \frac{3}{2} \frac{\rho}{\Omega} \otxb \, , \label{Bianchi5}
\\
\slashed{\nabla}_4 \slin + \frac{3}{2} (tr \chi) \slin&= -\slashed{curl}\, \blin \, , \label{Bianchi6} \\
\slashed{\nabla}_3 \slin + \frac{3}{2} (tr \underline{\chi}) \slin &= -\slashed{curl}\, \bblin \, , \label{Bianchi7} \\
\slashed{\nabla}_4 \bblin + (tr \chi)  \bblin + \Omega^{-1} {\omega} \bblin &= \slashed{\mathcal{D}}_1^\star \left(\rlin \, ,  \, \slin \, \right) - 3 \rho\, \eblin  \, ,  \label{Bianchi8} \\
\slashed{\nabla}_3 \bblin + 2 (tr \underline{\chi})  \bblin - \Omega^{-1} {\underline{\omega}} \bblin &= - \slashed{div}\, \ablin \, , \label{Bianchi9} \\
\slashed{\nabla}_4 \ablin + \frac{1}{2} (tr \chi) \ablin + 2 \Omega^{-1} {\omega} \ablin &=  2 \slashed{\mathcal{D}}_2^\star \bblin - 3 \rho\, \xblin \, .  \label{Bianchi10}
\end{align}

\subsubsection{Projections to the $\ell=0$ and $\ell=1$ modes}

Suppose we are given a smooth solution $\mathscr{S}$ of the above system of gravitational perturbations. Then we may project all quantities of $\mathscr{S}$ (see (\ref{scollect})) to $\ell=0$ and $\ell=1$ respectively (as defined in Section \ref{sec:l01d}), thereby obtaining a collection of quantities denoted by $\mathscr{S}_{\ell=0}$ and $\mathscr{S}_{\ell=1}$ respectively. One now readily checks that $\mathscr{S}_{\ell=0}$ and $\mathscr{S}_{\ell=1}$ solve the system of gravitational perturbations individually.\footnote{More abstractly, this is a consequence of the spherical symmetry of the background (in particular projection operators commuting with $\slashed{\nabla}_3$ and $\slashed{\nabla}_4$) and the linearity of the equations.} We can therefore decompose
\[
\mathscr{S} = \mathscr{S}_{\ell=0} + \mathscr{S}_{\ell=1} +  \mathscr{S}_{\ell\geq 2} \, ,
\]
with the last term defined by the equation. This decomposition will later allow us to deal with the $\ell=0,1$ part of the solution independently (as far as initial data and boundary conditions are concerned), which will turn out to be convenient, as the $\ell=0,1$ part of the solution can be computed (more or less) explicitly.
\subsection{Boundary conditions for the system of gravitational perturbations}
Recall that the boundary at infinity, $\mathcal{I}$, is not part of our interior manifold $\mathcal{M}_{int}$. On the other hand, to formulate boundary conditions (on certain weighted quantities of $\mathscr{S}$ in (\ref{scollect})) we will need to consider $S^2_{u,v}$ tensors $\xi$ on the Schwarzschild-AdS manifold $(\mathcal{M}_{int},g)$, which extend smoothly to $\mathcal{I}$, i.e.~to the larger manifold $\mathcal{M}$. To keep notation clean, we will often simply write $\xi (u,u,\theta,\phi)$ or $\|\xi\|_{u_0,u_0}$ to denote the appropriate limit of such tensors on $\mathcal{I}$. Recall in this context from (\ref{spherenorm}) that the $\|\cdot\|_{u,v}$ norm is independent of the radius of the sphere $S^2_{u,v}$.

The boundary conditions for the non-linear spacetime null-curvature components (see Proposition~\ref{prop:NullWeylBC}) can easily be linearised,\footnote{Equations~\eqref{bc1}-\eqref{bc4} are all trivial to linearise since the quantities $\boldsymbol\alpha, \boldsymbol\beta, \underline{\boldsymbol\alpha}, \underline{\boldsymbol\beta}, \boldsymbol\sigma$ all \emph{vanish} for the background Schwarzschild-AdS metric.}  leading to the following definition.
\begin{definition} \label{def:bc}
We will say that a smooth solution $\mathscr{S}$ of the system of gravitational perturbations satisfies conformal boundary conditions provided we have for any $u\geq u_0=0$ the limits
\begin{align}
\lim_{v \rightarrow u} r^3 \alin &= \lim_{v \rightarrow u} r^{3} \ablin  \, , \label{bcl1} \\
\lim_{v \rightarrow u} [\Omega \slashed{\nabla}_4 - \Omega\slashed{\nabla}_3]( r^3 \alin )&= -\lim_{v \rightarrow u}   [\Omega \slashed{\nabla}_4 - \Omega\slashed{\nabla}_3] ( r^3 \ablin) \label{bcl1b} \, , \\
\lim_{v \rightarrow u} r^3 \blin &= -\lim_{v \rightarrow u}  r^3 \bblin \, , \label{betaboundary} \\
\lim_{v \rightarrow u} r^3 \slin &= 0 \label{bcl4} \, .
\end{align}
Here the tensorial limits are to be understood componentwise in an orthonormal frame on the spheres $S^2_{u,v}$. 
\end{definition}

\begin{remark}
Note that if the solution $\mathscr{S}$ satisfies conformal boundary conditions then also
\begin{align} \label{bclxi}
\lim_{v \rightarrow u} r \xlin = \lim_{v \rightarrow u} r \xblin  \, .
\end{align}
using the linearised Bianchi and null-structure equations. The bound (\ref{bclxi}) could of course also be deduced directly  from the fact that $\lim_{v \rightarrow u} r \hat{\boldsymbol{\chi}}= \lim_{v \rightarrow u} r \underline{\hat{\boldsymbol{\chi}}}$ holds by Proposition \ref{prop:ad} and trivially linearises to (\ref{bclxi}). 
\end{remark}

We close the section with one more definition, which translates the asymptotic behaviour of the non-linear geometric quantities collected
in Proposition \ref{prop:ad} to the linearised setting. 
\begin{definition} \label{def:aAdSlin}
We will say that a smooth solution $\mathscr{S}$ of the system of gravitational perturbations is \underline{asymptotically AdS in the linearised sense} if the following quantities as well as arbitrary many derivatives from the set $\{ r^2 \Omega^{-1}\slashed{\nabla}_3, \Omega \slashed{\nabla}_4, [r \slashed{\nabla}]\}$ extend to the conformal boundary $\mathcal{I}$:
\begin{align}
 \glinh \, , \, \frac{\glinto}{\sqrt{\slashed{g}}} \, , \, r^2\frac{\Olino}{\Omega} \, , \,  r \Omega^{-2} \bmlin\, , \,  \otx \, , \,  r^2 \Omega^{-2} \otxb\, , \,  \Omega \xlin\, , \, r^2\Omega^{-1} \xblin\, , \,  r^2 \eblin \, , \,  r^2 \elin \, , \, r\olin \, , \,  \Omega^{-2}r^{3} \olinb \, , \, r^2 \Klin   \label{lin1} 
\end{align}
and
\begin{align} \label{lin2}
r \Omega^2 \alin \, , \,  r^2 \Omega \blin \, , \,  r^3\rlin \, , \,  r^3 \slin \, , \, r^4 \Omega^{-1} \bblin \, , \,  r^5\Omega^{-2}\ablin .
\end{align}

\end{definition}


\begin{remark} \label{rem:betterbounds}
The bounds on (\ref{lin1})--(\ref{lin2}) should be thought as having been derived by linearising the non-linear statement in (\ref{boundsextend}). In fact, from the bounds on (\ref{lin1})--(\ref{lin2}) we can (and will) also deduce bounds for the difference quantities $r^2 \xlin - r^2 \xblin$, $r^2 \olin + r^2 \olinb$ and $r \otx -  r^2\otxb$ consistent with (\ref{boundsextend}) later in the paper, see (\ref{sum1}) and Remarks \ref{rem:soc} and \ref{rem:toc}. For simplicity, we have not included them in the above definition. 
\end{remark}

\subsection{Special solutions}
\subsubsection{Pure gauge solutions}
There are special solutions to the system of gravitational perturbations (\ref{stos})--(\ref{Bianchi10})  corresponding to infinitesimal coordinate transformation that generate a change of double null gauge (i.e.~a choice of nearby sphere and corresponding foliations of the associated ingoing and outgoing cone). In complete analogy to \cite{Daf.Hol.Rod19} we call these \emph{pure gauge solutions}. In our setting, the additional requirement that the pure gauge solutions should preserve the boundary conditions reduces the admissible pure gauge solutions and they can in fact be parametrised by a single scalar function.

\begin{lemma} \label{lem:exactsol} 
Given an arbitrary smooth function $f : \mathbb{R}_0^+ \times S^2 \rightarrow \mathbb{R}$, the corresponding functions $f_u = f(u, \theta, \phi)$ and $f_v=f(v,\theta,\phi)$, interpreted as functions on $\mathcal{M} \setminus \mathcal{H}^+$  independent of one of the coordinates, generate the following (pure gauge) solution of the system of gravitational perturbations on $\mathcal{M} \setminus \mathcal{H}^+$:
\begin{align}
\frac{\Olino}{\Omega} &= \frac{1}{2\Omega^2} \partial_v \left(\Omega^2f_v \right)+\frac{1}{2\Omega^2}\partial_u\big(\Omega^2f_u\big)  , & \glinh&= +\frac{4}{r} r^2 \slashed{\mathcal{D}}_2^\star \slashed{\nabla} {f_u}  \nonumber \\
 \frac{\glinto}{\sqrt{\slashed{g}}} &= \frac{2\Omega^2 ({f_v-f_u})}{r} -\frac{2}{r} r^2 \slashed{\Delta}{f_u}  , 
 &\bmlin&= +2r^2 \slashed{\nabla} \left[ {\partial_u} \left(\frac{{f_u}}{r}\right)\right]+2\Omega^2\ns {f_v} \nonumber \\
\elin &= \frac{\Omega^2}{r^2} [r \slashed{\nabla}] f_v +\frac{r}{\Omega^2}\ns\Big[\partial_u\Big(\frac{\Omega^2}{r}f_u\Big)\Big]  , & \eblin &= \frac{r}{\Omega^2}  \slashed{\nabla} \left[\partial_v \left(\frac{\Omega^2}{r}f_v\right) \right] -\frac{\Omega^2}{r}\ns f_u  \nonumber \, ,
\nonumber \\
\xlin &= -2\frac{\Omega}{r^2} r^2 \slashed{\mathcal{D}}_2^\star (\slashed{\nabla} f_u)   , & \otx &= 2 \partial_v \left(\frac{f_v \Omega^2}{r}\right)+\frac{2\Omega^2}{r^2}\Big[\Big(1-\frac{4M}{r} - k^2 r^2  \Big)f_u+\Delta_{S^2}f_u\Big],     \nonumber \\
 \xblin &= -2\frac{\Omega}{r^2} r^2 \slashed{\mathcal{D}}_2^\star (\slashed{\nabla} f_v) ,& \otxb &=  \frac{2\Omega^2}{r^2} \left[\Delta_{\mathbb{S}^2} f_v +  \Big(1-\frac{4M}{r} -k^2 r^2 \Big)f_v \right] -\partial_u\Big(\frac{2\Omega^2}{r}f_u\Big)  , \nonumber \\
 \blin &=-\frac{6M\Omega}{r^4}  [r \slashed{\nabla} ]f_u , & \bblin&= \frac{6M\Omega}{r^4}  r \slashed{\nabla} f_v , \nonumber \\
\rlin &= \frac{6M \Omega^2}{r^4} (f_v-f_u)   , & \Klin &= -\frac{\Omega^2}{r^3}\left(\Delta_{\mathbb{S}^2} (f_v-f_u) + 2(f_v-f_u)\right) \nonumber
\end{align}
and
\[
 \alin = \ablin = 0  \ \ \ , \ \ \ \slin = 0 \nonumber \, .
\]
The solution satisfies the conformal boundary conditions of Definition \ref{def:bc}. 
We will call $f$ a gauge function and denote the corresponding pure gauge solution by $\mathscr{G}_f$. Finally, if $\Omega^2(u,v) f(u,\theta,\phi)$ extends smoothly to $\mathcal{H}^+$ then so does the associated pure gauge solution.\footnote{In particular, the quantities (\ref{smoothextended}) extend smoothly to $\mathcal{H}^+$.}
\end{lemma}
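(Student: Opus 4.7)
The plan is to derive the stated formulas as the induced linearisation of an explicit one-parameter family of diffeomorphisms of $\mathcal{M}$ that preserves the double null form (\ref{gindn}) of the metric, and then to verify the linearised equations, the boundary conditions, and horizon regularity.

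First, I would start from a vector field on $\mathcal{M} \setminus \mathcal{H}^+$ of the schematic form $X = f_v \, \partial_v + f_u \, \partial_u + X^A \partial_A$, where $f_u = f(u,\theta,\phi)$ and $f_v = f(v,\theta,\phi)$ arise from the same smooth function $f$ of a single time argument, and where the angular components $X^A$ are fixed by demanding that $\mathcal{L}_X g$ has no $(du)^2$ or $(dv)^2$ component and that its $d\theta^A \, du$ component matches the linearised shift $\bmlin$. Solving these constraints (which reduces to integrating an ODE in $u$ of $\Omega^2$-weighted terms) produces precisely the $\bmlin$ in the statement. The restriction to a single function $f$ (rather than independent $f_u, f_v$) is what ensures $f_u = f_v$ along $\mathcal{I} = \{v=u\}$, which will be needed for the boundary conditions.

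With $X$ fixed, I would then set $\glin := \mathcal{L}_X g$ and extract its decomposition into $\glinto$, $\glinh$, $\Olino$, $\bmlin$, reading off these four formulas directly. Next, I would linearise each of the definitions (\ref{RicC}), (\ref{curvC}) and the Gauss curvature around Schwarzschild-AdS; since most background quantities vanish, each linearised Ricci coefficient and curvature component reduces to an algebraic combination of derivatives of $f_u, f_v$ and background factors, producing exactly the stated formulas. The fact that these formulas do define a solution of (\ref{stos})--(\ref{Bianchi10}) follows essentially for free by diffeomorphism invariance: $\Phi_\epsilon^\star g$ solves the non-linear Einstein equations in double null gauge to all orders, so its $\epsilon$-derivative at $\epsilon=0$ solves the linearised system. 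A direct substitution check into each equation of the system, using the commutators (\ref{angularcommute}) and the background identities $\Omega\,\mathrm{tr}\chi = -\Omega\,\mathrm{tr}\underline{\chi} = 2\Omega^2/r$, $\omega=-\underline{\omega}=M/r^2+k^2 r$, $\rho = -2M/r^3$, provides an independent verification; the most involved pieces are (\ref{chih3})--(\ref{chih3b}) for $\xlin,\xblin$ and the Bianchi pairs (\ref{Bianchi1})--(\ref{Bianchi10}).

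Finally, the boundary conditions of Definition \ref{def:bc} are verified as follows: since $\alin=\ablin=0$ and $\slin=0$ identically, (\ref{bcl1}), (\ref{bcl1b}), (\ref{bcl4}) hold trivially; for (\ref{betaboundary}), the formulas give $r^3 \blin / (\Omega) = -6M \, r\slashed{\nabla} f_u$ and $r^3 \bblin / (\Omega) = 6M \, r\slashed{\nabla} f_v$, and $f_u|_{v=u} = f_v|_{v=u}$ by construction, so the limits sum to zero. Horizon regularity in the sense of (\ref{smoothextended}) follows by inspecting each formula after passage to Kruskal coordinates: every appearance of $f_u$ is accompanied by the hypothesised smooth factor $\Omega^2 f_u$, while $f_v$ and its derivatives are manifestly smooth in $v$ at $\mathcal{H}^+$. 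The main obstacle is purely the bookkeeping: tracking the large number of $\Omega$-weights, the distribution of $\slashed{\nabla}$ versus $\slashed{\mathcal{D}}_2^\star$ factors, and ensuring that the $u$- and $v$-dependent pieces of each formula match the corresponding differential operator in each equation — no single computation is deep, but the collective verification is long.
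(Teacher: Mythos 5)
Your proposal is correct and follows essentially the same route as the paper, which simply verifies the formulas "by direct computation as in \cite{Daf.Hol.Rod19}" (i.e.~by linearising the double-null-gauge-preserving diffeomorphism generated by $f$ and substituting into the system) and then observes, exactly as you do, that the boundary conditions reduce to the vanishing of $f_v - f_u$ (and its angular derivatives) on $\mathcal{I}$, with (\ref{bcl1}), (\ref{bcl1b}), (\ref{bcl4}) trivial since $\alin=\ablin=\slin=0$. The only minor imprecision is that preserving the form (\ref{gindn}) requires the $d\theta^A\,dv$ component of $\mathcal{L}_X g$ to vanish (the $d\theta^A\,du$ component then \emph{defines} $\bmlin$ rather than being matched to it), but this does not affect the argument.
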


\begin{proof}
This is verified exactly as in \cite{Daf.Hol.Rod19} by direct computation. Since $f_v - f_u$ vanishes on $\mathcal{I}$, the linearised boundary conditions (\ref{bcl1})--(\ref{bcl4}) are indeed satisfied.
\end{proof}

There is a further pure gauge solution which only changes the linearised metric quantities but leaves linearised Ricci-coefficients and curvature components invariant:

\begin{lemma} \label{lem:puregaugemetric}
For any smooth functions $q_1(u,\theta,\phi)$ and $q_2(u,\theta,\phi)$ the following is a pure gauge solution of the system of gravitational perturbations 
\begin{align}
\Olin = 0 \ \ \ , \ \ \ \glinh= 2 r^2 \slashed{\mathcal{D}}_2^\star \slashed{\mathcal{D}}_1^\star(q_1,q_2) \ \ , \ \ \ \   
 \frac{\glinto}{\sqrt{\slashed{g}}} = r^2 \slashed{\Delta} q_1    \ \ , \ \ \ \  \bmlin= r^2 \slashed{\mathcal{D}}_1^\star (\partial_u q_1, \partial_u q_2) \, , 
\end{align}
while all linearised Ricci and null curvature components vanish identically. We denote the solution by $\mathscr{G}_q$.
\end{lemma}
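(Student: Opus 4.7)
The argument is a direct verification: one substitutes the prescribed tuple into every equation of the system (\ref{stos})--(\ref{Bianchi10}) and into the derived linearised Gauss identity (\ref{gaussfootnote}) and checks that each is satisfied. Because all linearised curvature components, all linearised Ricci coefficients, $\Klin$, and $\Olin$ vanish by prescription, the overwhelming majority of the equations collapse immediately to $0 = 0$. Specifically, the ten Bianchi equations (\ref{Bianchi1})--(\ref{Bianchi10}), the Ricci transport equations (\ref{dtcb})--(\ref{chih3b}), the elliptic Codazzi and curl relations (\ref{ellipchi})--(\ref{curleta}), the lapse transport equations (\ref{oml1})--(\ref{oml2}), the linearised Gauss equation (\ref{lingauss}), and (\ref{oml3}) all have vanishing left-hand side and a right-hand side that is also manifestly zero.

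What requires genuine checking are the three remaining metric equations (\ref{stos}), (\ref{stos2}), (\ref{bequat}) together with the Gauss identity (\ref{gaussfootnote}). For (\ref{stos}), observe that $\frac{\glinto}{\sqrt{\slashed{g}}} = r^2 \ds q_1 = \Delta_{\mathbb{S}^2} q_1$ is $v$-independent, so the $\partial_v$-equation reads $0 = \otx = 0$. For the $\partial_u$-equation, use the round-sphere identity $\divs \Dso(f,g) = -\ds f$ (the curl piece of $\Dso$ contributes nothing, since $\slashed{\epsilon}^{AB}\slashed{\nabla}_A\slashed{\nabla}_B \equiv 0$ on scalars); this gives $\divs \bmlin = -\Delta_{\mathbb{S}^2}\partial_u q_1$ and hence $\otxb - \divs \bmlin = \partial_u(\Delta_{\mathbb{S}^2} q_1)$. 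For (\ref{stos2}), apply the background formula $\Omega\slashed{\nabla}_3 \xi = \partial_u \xi - \frac{N}{2}(\Omega \mathrm{tr}\,\underline{\chi})\xi$ for rank-$N$ tensors with $\Omega \mathrm{tr}\, \underline{\chi} = -2\Omega^2/r$ and $\partial_u r = -\Omega^2$: the $r$-derivative terms generated by $\partial_u r^2 = -2r\Omega^2$ exactly cancel against the $\Omega \mathrm{tr}\, \underline{\chi}$ contribution, leaving $\Omega\slashed{\nabla}_3 \glinh = 2r^2 \Dst\Dso(\partial_u q_1, \partial_u q_2) = 2\Dst\bmlin$. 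The $\slashed{\nabla}_4$-version reduces to $0 = 0$ by the analogous cancellation combined with $\partial_v q_i = 0$. Equation (\ref{bequat}) follows because after raising the index with $\slashed{g}^{AB} = r^{-2}\gamma^{AB}$, the vector components $\bmlin^A$ are $v$-independent.

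The consistency of the linearised Gauss identity (\ref{gaussfootnote}) is the most substantive computation and the expected main obstacle. One needs
\[
0 = -\tfrac{1}{2}\ds\bigl(\mathrm{tr}_{\slashed{g}}\glin\bigr) + \divs \divs \glinh - \tfrac{1}{r^2}\mathrm{tr}_{\slashed{g}}\glin,
\]
with $\mathrm{tr}_{\slashed{g}}\glin = 2\Delta_{\mathbb{S}^2} q_1$ and $\glinh = 2r^2 \Dst\Dso(q_1,q_2)$. Using the Bochner-type identity $\divs(\Dst \xi) = -\tfrac{1}{2}(\ds + K)\xi$ on one-forms $\xi$ (derived by commuting $\slashed{\nabla}^A \slashed{\nabla}_B$ and invoking $R_{AB} = K \slashed{g}_{AB}$ on $S^2_{u,v}$), the scalar commutator $[\ds, \slashed{\nabla}_A]f = K\slashed{\nabla}_A f$, and the vanishing of the curl divergence, one obtains $\divs\divs(\Dst\Dso(q_1,q_2)) = \tfrac{1}{2}\ds^2 q_1 + K\ds q_1$. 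Substituting $K = 1/r^2$ makes the three terms in (\ref{gaussfootnote}) cancel identically. The remaining difficulty is purely bookkeeping: tracking the $r^2$-weights generated by $\slashed{g} = r^2\gamma$ and being careful with vector vs.~one-form conventions. Note that $q_2$ enters only through the skew part of $\Dso$ and therefore drops out of every scalar constraint, consistent with the gauge interpretation of the solution.
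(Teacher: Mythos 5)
Your direct verification is correct and is exactly what the paper intends: Lemma \ref{lem:puregaugemetric} is stated without proof (as with Lemma \ref{lem:exactsol}, the verification is left as a direct computation in the spirit of \cite{Daf.Hol.Rod19}). The substantive checks you identify — the cancellation of the $\Omega \mathrm{tr}\chi$ (resp.~$\Omega \mathrm{tr}\underline{\chi}$) terms against $\partial_v r^2$ (resp.~$\partial_u r^2$) in (\ref{stos2}), the identity $\divs\slashed{\mathcal{D}}_1^\star(f,g)=-\ds f$ in (\ref{stos}), and the combination $\divs\divs\slashed{\mathcal{D}}_2^\star\xi=(-\tfrac{1}{2}\ds-K)\divs\xi$ with $K=1/r^2$ in the Gauss identity (\ref{gaussfootnote}) — are all correct and consistent with the operator conventions the paper itself uses elsewhere (e.g.~in the proof of Theorem \ref{theo:wp}).
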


\subsubsection{The family of linearised Kerr-AdS solutions}
It is well-known that the Schwarzschild-AdS family sits as a $1$-parameter family in the larger $2$-parameter family of Kerr-AdS metrics. At the linear level there exists (due to the spherical symmetry of the background) a $3$-dimensional (choosing an axis and a magnitude) family of explicit solutions that move the Schwarzschild-AdS metric to a nearby Kerr-AdS metric. Moreover, there is also the $1$-parameter family of changing the mass. We summarise both in Lemma \ref{lem:kerr} below. Let us already remark that for the $\ell=0$ modes, the pure gauge solution takes a significantly more complicated form compared to the asymptotically flat case. The underlying reason is that the variable $r$ (defined in terms of the (fixed) Eddington-Finkelstein coordinates $(u,v)$) by (\ref{rstardef}) depends implicitly on the mass, with a dependence that is more involved than in the asymptotically flat case. See Appendix \ref{sec:l0} for computational details regarding the $\ell=0$ mode. 

\begin{lemma} \label{lem:kerr}
For any $\mathfrak{a} \in \mathbb{R}$ and $m \in \{-1,0,1\}$ the following linearised metric quantities generate a smooth solution of the system of gravitational perturbations on $\mathcal{M}$:
\begin{align} \label{kui}
\Olin = 0 \ \ \ , \ \ \ \glinh = 0 \ \ \ , \ \ \ \glinto = 0 \ \ \ ,  \ \ \ \bmlin^A = \left(b^{KAdS,m}\right)^A = -2 \left(\frac{2M}{r} -k^2 r^2\right) \mathfrak{a} \slashed{\epsilon}^{AB} \partial_B Y^{\ell=1}_m \, . 
\end{align}
The solution has the following non-vanishing Ricci-coefficients and curvature components:
\begin{align}
\elin^A = -\eblin^A = \left(\eta^{KAdS,m}\right)^A := \frac{3M\mathfrak{a}}{r^2} \slashed{\epsilon}^{AB} \partial_B Y^{\ell=1}_m  \ \ \ , \ \ \ \blin = -\bblin = \frac{\Omega}{r} \eta^{KAdS, m} \ \ \ , \ \ \ \slin = \frac{6}{r^4} \mathfrak{a} M Y^{\ell=1}_m \, .
\end{align}
Moreover for any $\mathfrak{m} \in \mathbb{R}$, the following linearised metric quantities generate a (spherically symmetric) smooth solution of the system of gravitational perturbations on $\mathcal{M}$ (where we have set $l^2 = k^{-2} = -\frac{1}{3 \Lambda}$)
\begin{align} \label{glins}
 \glinh &= 0 \, , &\qquad \frac{\glinto}{\sqrt{\slashed{g}}} &=\mathfrak{m}  \left(
 -\frac{2M}{r \left(1+\frac{3r^2}{l^2}\right)} +\frac{M\Omega^2}{r} \int_r^\infty \frac{2}{\Omega^2} \left(\frac{l^2(l^2-3r_+^2)}{(l^2+3r_+^2)^2} \frac{1}{r_+} -  \frac{l^2(l^2-3\tilde{r}^2)}{(l^2+3\tilde{r}^2)^2}\frac{1}{ \tilde{r}}\right) d\tilde{r} \right) \, ,
 \\
\bmlin &= 0 \, ,  &\qquad \Olin &= \frac{1}{4} \left( \frac{2\mathfrak{m} M}{r \Omega^2} - 4\mathfrak{m} M \frac{l^2(l^2-3r_+^2)}{r_+(l^2+3r_+^2)^2}\right) +\frac{1}{4} \left(\frac{2M}{r^2} + \frac{2r}{l^2} \right) \frac{\glinto}{\sqrt{\slashed{g}}}\frac{ r}{\Omega^2}  \, . \label{omlins}
\end{align}
In particular,\footnote{For convenience we collect the values of all Ricci-coefficients and curvature components in Appendix \ref{sec:l0}. 
} this solution satisfies
\begin{align} \label{gaugedep0}
r^3 \rlin - 3M \frac{\glinto}{\sqrt{\slashed{g}}} = \mathfrak{m} \cdot M \, .
\end{align}
We call the first type of solution a linearised Kerr-AdS solution with fixed mass and the second solution a linearised Schwarzschild-AdS solution. Together these solutions form a $4$-parameter family of linearised Kerr-AdS solutions. Given parameters $(\mathfrak{m}, \mathfrak{a}_{-1}, \mathfrak{a}_0,\mathfrak{a}_1)$ we denote by $\mathscr{K}_{m,\vec{\mathfrak{a}}}$ the sum of the four corresponding linearised Kerr-AdS solutions. 
\end{lemma}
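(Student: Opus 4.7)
The proof is by direct verification: substitute each claimed expression into the linearised system (\ref{stos})--(\ref{Bianchi10}) and check all equations. I treat the two families separately.

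\textbf{Linearised Kerr-AdS at fixed mass.} The key structural observation is that every non-trivial one-form in the claimed solution -- $\bmlin$, $\elin=-\eblin$, and $\blin=-\bblin$ -- is proportional to $\slashed{\epsilon}^{AB}\partial_B Y^{\ell=1}_m$. For any such $\ell=1$ ``curl'' one-form $\xi$ on the round sphere one has (i) $\slashed{\mathrm{div}}\,\xi=0$ and (ii) $\slashed{\mathcal{D}}_2^\star \xi=0$, the latter because the $\ell=1$ identity $\slashed{\nabla}_A\slashed{\nabla}_B Y^{\ell=1}_m = -\frac{1}{r^2}\slashed{g}_{AB} Y^{\ell=1}_m$ makes the relevant symmetrisation pure trace. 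These two identities immediately collapse the metric equations (\ref{stos})--(\ref{stos2}) to $0=0$, making the choice $\Olin=\glinh=\glinto=0$ and $\xlin=\xblin=0$ consistent. Equation (\ref{bequat}) is then verified by computing $\partial_v \bmlin$ from the explicit formula using $\partial_v r = \Omega^2$; the transport equations (\ref{propeta}) are checked with $\partial_u r = -\Omega^2$ and the explicit $\blin, \bblin$; and the Bianchi equations (\ref{Bianchi2})--(\ref{Bianchi10}) collapse drastically because $\alin=\ablin=\rlin=0$, reducing to identities involving $\slashed{\Delta} Y^{\ell=1}_m = -\frac{2}{r^2}Y^{\ell=1}_m$.

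\textbf{Linearised Schwarzschild-AdS (mass perturbation).} Conceptually, this family arises by differentiating at $\epsilon=0$ the non-linear 1-parameter family of Schwarzschild-AdS metrics with mass $M+\epsilon\,\mathfrak{m}M$, and then adding a pure gauge solution from Lemma \ref{lem:exactsol} to restore smoothness across the horizon in Kruskal coordinates. The naive mass variation gives a $\delta r / r$-type trace perturbation that is smooth on $\mathcal{M}_{int}\setminus\mathcal{H}^+$ but not at $\mathcal{H}^+$, because the coordinates $(u,v)$ themselves depend on $M$ via (\ref{rstardef}); the $r_+$-dependent subtraction $\tfrac{l^2(l^2-3r_+^2)}{(l^2+3r_+^2)^2 r_+}$ inside the integrand of $\frac{\glinto}{\sqrt{\slashed{g}}}$ is precisely the compensation that removes this obstruction. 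With the expressions in hand, I verify the system hierarchically: first the metric equations (\ref{stos})--(\ref{oml3}) by differentiation under the integral sign; then the Ricci-coefficient equations (which must reproduce the full list collected in Appendix \ref{sec:l0}); finally the Bianchi equations. In the spherically symmetric $\ell=0$ sector most equations trivialise: divergences and curls of scalars vanish, so (\ref{curleta}), (\ref{Bianchi6}), (\ref{Bianchi7}) force $\slin=0$ automatically, and Bianchi reduces essentially to the coupled pair (\ref{Bianchi4})--(\ref{Bianchi5}) for $\rlin$ together with the propagation (\ref{dtcb})--(\ref{vray}) of $\otx, \otxb$.

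The identity (\ref{gaugedep0}) is best proved abstractly. Using (\ref{Bianchi4}) and (\ref{Bianchi5}) with $\blin=\bblin=0$, together with (\ref{stos}) and the background $\rho=-2M/r^3$, one computes $\partial_u(r^3 \rlin - 3M \tfrac{\glinto}{\sqrt{\slashed{g}}}) = \partial_v(r^3 \rlin - 3M \tfrac{\glinto}{\sqrt{\slashed{g}}}) = 0$, so the combination is a spacetime constant; evaluating at the conformal boundary $r=\infty$, where the integral in the formula for $\frac{\glinto}{\sqrt{\slashed{g}}}$ vanishes by construction and only the first (local) term contributes, yields the claimed value $\mathfrak{m}M$. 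The main technical obstacle is of a bookkeeping nature: verifying that the specific integral formula for $\frac{\glinto}{\sqrt{\slashed{g}}}$ together with the companion expression for $\Olin$ simultaneously satisfies all metric transport equations \emph{and} extends smoothly across $\mathcal{H}^+$ in Kruskal coordinates -- a constraint that turns out to fix the $r_+$-dependent subtraction uniquely.
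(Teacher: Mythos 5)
Your verification strategy for the fixed-mass Kerr--AdS family is sound and is essentially the only available route (the paper gives no explicit proof of that part): the observations that $\ell=1$ ``curl'' one-forms are divergence-free and lie in the kernel of $\slashed{\mathcal{D}}_2^\star$ are exactly what trivialises the metric and shear equations, and the remaining transport and Bianchi equations reduce to elementary identities for $Y^{\ell=1}_m$. For the $\ell=0$ family your route genuinely differs from the paper's: Appendix \ref{sec:l0} does not start from the mass-differentiated nonlinear family, but introduces the conserved quantities $\Pmcalin = r^3\rlin-3M\glinto/\sqrt{\slashed{g}}$ and $\Qlin=r\otx\,\Omega^{-2}-4\Olin+\glinto/\sqrt{\slashed{g}}$, rewrites the linearised Gauss equation as the radial ODE (\ref{rado}) for $f=\glinto/\sqrt{\slashed{g}}$ with $d=\Pmcalin$ and $c=\Qlin$ as parameters, and finds that horizon regularity forces $c=2dM\,l^2(l^2-3r_+^2)/\big(r_+(l^2+3r_+^2)^2\big)$ --- which is exactly where the $r_+$-dependent subtraction in the integrand comes from. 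In that derivation (\ref{gaugedep0}) holds by construction. Your ``linearise the mass family, then gauge-fix at the horizon'' heuristic followed by direct verification is a legitimate alternative and correctly identifies the role of the subtraction term; what each approach buys is that the paper's ODE method produces the formula and the identity (\ref{gaugedep0}) simultaneously, whereas yours must verify the Gauss equation (which you do not list among the equations to check) and (\ref{gaugedep0}) separately.

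There is, however, a concrete error in your proof of (\ref{gaugedep0}). The constancy argument via (\ref{Bianchi4}), (\ref{Bianchi5}) and (\ref{stos}) is fine (it is precisely (\ref{prol0})), but the evaluation of the constant at the conformal boundary fails: it is not true that ``only the first (local) term contributes'' to $\glinto/\sqrt{\slashed{g}}$ at $r=\infty$. The integral $\int_r^\infty(\cdots)\,d\tilde r$ does tend to $0$, but only like $r^{-1}$, while its prefactor $M\Omega^2/r\sim Mk^2r$ blows up; the product has the finite \emph{nonzero} limit
\begin{align}
\lim_{r\to\infty}\frac{\glinto}{\sqrt{\slashed{g}}} = 2\mathfrak{m}M\,\frac{l^2(l^2-3r_+^2)}{r_+(l^2+3r_+^2)^2}\,, \nonumber
\end{align}
as recorded in Appendix \ref{sec:l0}, and correspondingly $\lim_{r\to\infty}r^3\rlin=\mathfrak{m}M+3M\lim_{r\to\infty}\glinto/\sqrt{\slashed{g}}\neq \mathfrak{m}M$. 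So the boundary evaluation as you describe it would return the wrong constant (the first local term also vanishes at infinity, so your reading would give $0$ for the $f$-contribution). The fix is easy --- evaluate the conserved combination at the horizon, where $f(r_+)=-2\mathfrak{m}M/\big(r_+(1+3r_+^2/l^2)\big)$, or verify the pointwise identity $\rlin=\mathfrak{m}M r^{-3}+3Mr^{-3}\glinto/\sqrt{\slashed{g}}$ directly --- but as written this step is incorrect and needs to be repaired.
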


\begin{remark}
Note in particular that the quantity on the left hand side of (\ref{gaugedep0}) is gauge invariant: Any pure gauge solution supported on $\ell=0$ leaves the quantity invariant. This follows directly from Lemma \ref{lem:exactsol}.
\end{remark}

\begin{remark}
The solution (\ref{kui}) actually remains a solution without the $k^2 r^2$ term in the definition of $\bmlin$ as this term corresponds to a pure gauge solution from Lemma \ref{lem:puregaugemetric} (with $q_1=0$ and $q_2$ supported on $\ell=1$).
\end{remark}

\subsubsection{Regularity at the horizon and at infinity}
We close this section noting that our special solutions are regular at the horizon and asymptotically AdS in the linearised sense:
\begin{proposition}
The following are smooth solutions of the system of gravitational perturbations, which are moreover asymptotically AdS in the linearised sense (Definition \ref{def:aAdSlin}), satisfy the boundary conditions (\ref{bcl1})--(\ref{bcl4}) and are such that the quantities (\ref{lin1}), (\ref{lin2}) extend smoothly to $\mathcal{H}^+$. 
\begin{itemize}
\item The four-parameter family of linearised Kerr-AdS solutions of Lemma \ref{lem:kerr}.
\item The pure gauge solutions of Lemma \ref{lem:exactsol}, provided the function $f(u)$ generating them is smooth on $\mathbb{R}^+$ and $f(u) \frac{\Omega^2(u,v)}{r^2(u,v)}$ is smooth in the extended sense at $\mathcal{H}^+$.
\item The pure gauge solutions of Lemma~\ref{lem:puregaugemetric}, provided the functions $q_1(u),q_2(u)$ generating them are smooth on $\mathbb{R}^+$ and $q_1(u), q_2(u)$ are smooth in the extended sense at $\mathcal{H}^+$.
\end{itemize}
\end{proposition}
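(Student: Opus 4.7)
The plan is to verify each of the three bullet points by direct inspection of the explicit formulas in Lemmas \ref{lem:exactsol}, \ref{lem:puregaugemetric}, and \ref{lem:kerr}. Smoothness on $\mathcal{M} \setminus \mathcal{H}^+$ is already part of the content of those lemmas, so the task reduces to: (a) showing that the $r$-weighted quantities of \eqref{lin1}--\eqref{lin2} extend smoothly to $\mathcal{I}$ under arbitrary derivatives from $\{r^2\Omega^{-1}\slashed{\nabla}_3, \Omega\slashed{\nabla}_4, [r\slashed{\nabla}]\}$, (b) verifying the linearised boundary conditions \eqref{bcl1}--\eqref{bcl4}, and (c) showing that the quantities \eqref{smoothextended} extend smoothly to $\mathcal{H}^+$.

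For the pure gauge solutions of Lemma \ref{lem:exactsol}, which form the most involved case, I would compute each entry of \eqref{lin1}--\eqref{lin2} using the Taylor expansion $f_v - f_u = (v-u)\partial_u f + O((v-u)^2)$ near $\mathcal{I}$ together with $\Omega^2 \sim k^2 r^2$ and $r \cdot r^\star \to k^{-2}$ from \eqref{rrstarrel}. For example, $\frac{\glinto}{\sqrt{\slashed{g}}} = \frac{2\Omega^2(f_v-f_u)}{r} - \frac{2}{r} r^2 \slashed{\Delta} f_u$ has a finite limit because $\Omega^2(f_v-f_u)/r \sim k^2 r(v-u) \to \partial_u f$ and $r \slashed{\Delta} f_u \to 0$, and each remaining quantity is handled analogously, with the $r$-weights in Definition \ref{def:aAdSlin} matching the powers of $r$ appearing in the explicit formulas of Lemma \ref{lem:exactsol}. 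The boundary conditions \eqref{bcl1}, \eqref{bcl1b}, and \eqref{bcl4} are trivial since $\alin$, $\ablin$, and $\slin$ vanish identically, while \eqref{betaboundary} reduces to the observation that the formulas for $\blin$ and $\bblin$ differ only by $f_u \leftrightarrow f_v$ up to sign, and $f_u = f_v$ on $\mathcal{I}$. Regularity at $\mathcal{H}^+$ follows from the hypothesis that $\Omega^2 f_u/r^2$ is smooth in the extended sense, since each entry of \eqref{smoothextended} becomes a Kruskal-regular polynomial combination of $\Omega^2 f_u/r^2$, $\partial_v f_v$, spherical derivatives of $f$, and background Schwarzschild-AdS quantities.

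For the pure gauge solutions of Lemma \ref{lem:puregaugemetric}, only the four metric quantities require checking, since all linearised Ricci-coefficients and curvature components vanish identically; the four entries are polynomial in $q_1, q_2$ and their derivatives and are immediately verified to satisfy the relevant bounds in \eqref{lin1} and the extension statement for \eqref{smoothextended} under the stated hypothesis on $q_1(u), q_2(u)$. For the linearised Kerr-AdS family of Lemma \ref{lem:kerr}, I would treat the three-dimensional rotational family and the one-dimensional mass family separately. For the rotational family, the $k^2 r^2$ piece in $\bmlin$ is, by the remark following Lemma \ref{lem:kerr}, an instance of a pure gauge solution of Lemma \ref{lem:puregaugemetric} already covered, and the remaining $-4M/r$ contribution together with the explicit $\elin, \eblin, \blin, \bblin, \slin$ have manifestly decaying weighted forms. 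For the mass family, the integral defining $\frac{\glinto}{\sqrt{\slashed g}}$ in \eqref{glins} is engineered precisely so that $\frac{\glinto}{\sqrt{\slashed g}}$ remains bounded as $r \to \infty$, and $r^3\rlin$ extends to $\mathcal{I}$ via \eqref{gaugedep0}; both subfamilies are explicitly Kruskal-regular at $\mathcal{H}^+$.

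The main obstacle is book-keeping: there are of order twenty weighted quantities in \eqref{lin1}--\eqref{lin2} to check for three separate solution families, and the cancellations between $\Omega^2 \sim k^2 r^2$ at infinity and the inverse powers of $r$ built into the explicit formulas must be tracked carefully, along with the proper interpretation of norms and frame components on the spheres $S^2_{u,v}$. Beyond this computational overhead, however, there is no conceptual difficulty, and the analogous verifications carried out in \cite{Daf.Hol.Rod19} for the asymptotically flat case provide a direct template; the only genuinely new element is the interaction of the $k^2 r^2$ growth in $\Omega^2$ with the conformal boundary at $r = \infty$, which is absent in the $\Lambda = 0$ setting.
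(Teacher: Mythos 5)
Your strategy is the same as the paper's: direct verification of the explicit formulas, with smoothness granted by construction, the boundary conditions reduced to the $\beta$-condition via $f_u=f_v$ on $\mathcal{I}$, and the asymptotics at infinity extracted from a Taylor expansion of $f_v-f_u$ combined with \eqref{rrstarrel}. However, there is a concrete gap at precisely the point the paper singles out as the only delicate one: the entry $r^2\,\Olin$ of \eqref{lin1} for the pure gauge solutions of Lemma \ref{lem:exactsol}. Writing $\Olin = \tfrac12(\partial_v f_v+\partial_u f_u) + (k^2 r + \tfrac{M}{r^2})(f_v-f_u)$, your first-order expansion $f_v-f_u=(v-u)\partial_u f+O((v-u)^2)$ only exhibits the cancellation of the $O(1)$ terms (via $k^2r(v-u)\to -1$) and therefore yields $\Olin=O(r^{-1})$, which is \emph{not} sufficient for $r^2\,\Olin$ to extend to $\mathcal{I}$. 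One must expand $f_u$ about $v$ to second order and observe that the resulting $O(r^{-1})$ contributions --- the $f''(v)(v-u)$ term from $\tfrac12(\partial_vf_v+\partial_uf_u)$ and the $k^2r\cdot\tfrac12 f''(v)(v-u)^2$ term from the second piece --- also cancel, again using \eqref{rrstarrel}. Your blanket claim that the remaining quantities are ``handled analogously, with the $r$-weights in Definition \ref{def:aAdSlin} matching the powers of $r$ appearing in the explicit formulas'' is false for this entry: the required weight exceeds what naive power counting of the formula gives, and a genuine second-order cancellation is needed.

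A smaller instance of the same omission occurs for the linearised Schwarzschild-AdS (mass) family: Definition \ref{def:aAdSlin} demands that $r^2\,\Olin$ extend, whereas \eqref{omlins} is manifestly only a difference of two $O(1)$ expressions; you check boundedness of $\glinto/\sqrt{\slashed g}$ and of $r^3\rlin$ but never address $\Olin$, for which one must expand the integrand of \eqref{glins} and verify that the leading-order terms in \eqref{omlins} cancel (the paper's appendix computes $\lim_{r\to\infty}\Olin=0$ explicitly). The rest of your write-up --- the boundary conditions, the horizon regularity under the stated hypotheses on $f$, $q_1$, $q_2$, and the treatment of $\mathscr{G}_q$ and the rotational Kerr-AdS family --- matches the paper's argument and is fine.
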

\begin{proof}
Both linearised Kerr-AdS and pure gauge solutions are by construction smooth solutions to the system of gravitational perturbations. The validity of the boundary conditions is straightforward to check for the linearised Kerr-AdS solutions (note in particular $\slin \sim r^{-4}$). For the pure gauge solutions we only need to check the boundary condition for $r^3 \blin + r^3\bblin \rightarrow 0$ which follows from $[r\slashed{\nabla}]( f_u(u)-f_v(v))$ vanishing at the conformal boundary (where $v=u$). It remains to check the extension of the quantities (\ref{lin1}) and (\ref{lin2}) towards the horizon and infinity. Near the horizon the regularity claims are easily verified. Near infinity, the claim on $\Olin$ for linearised Kerr-AdS solution follows by carefully expanding the integrand in (\ref{glins}) and observing a cancellation of the leading order term in (\ref{omlins}). For the pure gauge solution, the hardest to check is that $r^2 \Olin$ extends to the boundary. For this we Taylor expand the expression of Lemma \ref{lem:exactsol} 
\begin{align}
\Olin &= \frac{1}{2} (\partial_v (f_v) + \partial_u (f_u)) + \left(k^2 r + \frac{M}{r^2}\right)(f_v- f_u) \nonumber \\
&= f^\prime(v) + \frac{ f^{\prime \prime}(v)}{2} (v-u) + \left(k^2 r + \frac{M}{r^2}\right)\left(-f^\prime(v) (u-v) + \frac{f^{\prime \prime}(v)}{2} (v-u)^2 \right) + \mathcal{O}(r^{-2}) =  \mathcal{O}(r^{-2}) , \nonumber
\end{align}
where we have recalled $f_v=f(v)$ and $f_u=f(u)$ and used (\ref{rrstarrel}) in conjunction with $u-v \geq 0$. The conditions on higher derivatives implicit in Definition \ref{def:aAdSlin} are straightforward to check. The last item is immediate and and this finishes the proof of the proposition.
\end{proof}

\subsection{The Teukolsky equations} \label{sec:teuk}

Remarkably, the extremal linearised curvature components $\alin$ and $\ablin$, which by Lemma \ref{lem:exactsol} and~\ref{lem:puregaugemetric} vanish identically for pure gauge solutions, satisfy decoupled equations. These are the well-known Teukolsky equations. We now derive these equations and define the associated gauge invariant hierarchies $(\alin, \plin, \Psilin)$ and $(\ablin,\pblin,\Psilinb)$. See our companion paper \cite{Gra.Hola} and Section \ref{sec:teukolsky} below for analytic results on the Teukolsky system.

\subsubsection{Derivation of the equations}
\begin{lemma}
Given a smooth solution $\mathscr{S}$ of the system of gravitational perturbations 
the quantities $\alin$ and $\ablin$ satisfy the decoupled Teukolsky equations:
\begin{align}
(\Omega \slashed{\nabla}_4)(\Omega \slashed{\nabla}_3)  (r \Omega^2 \alin) - 2  \left(-\frac{2}{r} + \frac{6M}{r^2}\right)\Omega \slashed{\nabla}_3  (r \Omega^2 \alin) - \frac{\Omega^2}{r^2} \left(-2 r^2 \slashed{\mathcal{D}}_2^\star  \slashed{div} \Omega^2 r \alin - \frac{6M}{r}  r \Omega^2 \alin \right)  &= 0 \, ,   \label{Teuk1} \\
(\Omega \slashed{\nabla}_3)(\Omega \slashed{\nabla}_4) (r\Omega^2 \ablin) + 2 \left(-\frac{2}{r} + \frac{6M}{r^2}\right)\Omega \slashed{\nabla}_4  (r \Omega^2 \ablin) - \frac{\Omega^2}{r^2} \left(-2 r^2 \slashed{\mathcal{D}}_2^\star  \slashed{div} \Omega^2 r \alin - \frac{6M}{r}  r \Omega^2 \ablin \right)  &= 0  \, . \label{Teuk2}
\end{align}
\end{lemma}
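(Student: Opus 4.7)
The proof is a direct computation that combines three equations of the linearised system in order to eliminate $\blin$ and $\xlin$ in favour of $\alin$ and its derivatives. I sketch the derivation of (\ref{Teuk1}); equation (\ref{Teuk2}) for $\ablin$ follows by the completely analogous argument under $3\leftrightarrow 4$ and $\omega \leftrightarrow \underline{\omega}$, applied instead to (\ref{Bianchi10}), (\ref{Bianchi9}) and the first equation in (\ref{tchi}).

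The first step is to rewrite the Bianchi equation (\ref{Bianchi1}) as a clean transport equation for the rescaled quantity $r\Omega^2\alin$ in the ingoing direction. Using the Schwarzschild--AdS background values $\Omega\,\mathrm{tr}\,\underline{\chi} = -2\Omega^2/r$, $\underline{\omega} = -M/r^2 - k^2 r$ and $\rho = -2M/r^3$ from Section \ref{=lmc}, a direct computation shows that (\ref{Bianchi1}) is equivalent to
\begin{equation*}
  \Omega\slashed{\nabla}_3(r\Omega^2\alin) = -2 r\Omega^3 \slashed{\mathcal{D}}_2^\star \blin - 3\rho\, r\Omega^3 \xlin,
\end{equation*}
with the $k^2$-contributions from the transport coefficient $2\Omega^{-1}\underline{\omega}\alin$ cancelling exactly against those produced by $\partial_u(r\Omega^2)$ in the rescaling, so that no explicit cosmological-constant terms remain at this stage. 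Apply $\Omega\slashed{\nabla}_4$ to both sides. By the commutation identity $[\Omega\slashed{\nabla}_3,\Omega\slashed{\nabla}_4]=0$ from (\ref{angularcommute}), the left-hand side is precisely $(\Omega\slashed{\nabla}_4)(\Omega\slashed{\nabla}_3)(r\Omega^2\alin)$, matching the leading term of (\ref{Teuk1}).

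On the right-hand side I would commute $\Omega\slashed{\nabla}_4$ past the angular operator $\slashed{\mathcal{D}}_2^\star$ using the second part of (\ref{angularcommute}), then substitute (\ref{Bianchi2}) to trade $\Omega\slashed{\nabla}_4\blin$ for $\slashed{div}\,\alin$ (plus background terms in $\blin$) and substitute the second equation in (\ref{tchi}) to trade $\Omega\slashed{\nabla}_4\xlin$ for $-\alin$ (plus a background term in $\xlin$). The first substitution produces the principal spatial term of (\ref{Teuk1}), while the second combines with $-3\rho = 6M/r^3$ to produce the zeroth-order coefficient $(6M/r) r\Omega^2\alin$. The first-order coefficient $-2(-2/r + 6M/r^2)$ of $\Omega\slashed{\nabla}_3(r\Omega^2\alin)$ arises from the factor $\Omega\slashed{\nabla}_4(r\Omega^3)/(r\Omega^2)$ after inserting $\omega = M/r^2 + k^2 r$, the $k^2 r$ contributions being absorbed by the identity $\Omega^2 = 1 - 2M/r + k^2 r^2$. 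The main obstacle is the bookkeeping required to verify that the residual contributions containing $\blin$ (arising from the $-\Omega^{-1}\omega\blin$ term in (\ref{Bianchi2}) and from $\slashed{\mathcal{D}}_2^\star$ acting on background coefficients) and $\xlin$ (arising from the $\Omega^{-1}\mathrm{tr}\,\chi\cdot\xlin$ term in (\ref{tchi})) cancel identically, forcing decoupling; this cancellation is driven by the Schwarzschild--AdS identities $\omega = -\underline{\omega}$ and $\Omega\,\mathrm{tr}\,\chi = -\Omega\,\mathrm{tr}\,\underline{\chi}$ together with the spherical symmetry of $\rho$. Once these cancellations are verified by direct substitution, the decoupled equations (\ref{Teuk1})--(\ref{Teuk2}) follow.
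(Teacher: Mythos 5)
Your proposal is correct and follows essentially the same route as the paper: rewriting (\ref{Bianchi1}) as a renormalised transport equation for $r\Omega^2\alin$ in the $3$-direction (your form $\Omega\slashed{\nabla}_3(r\Omega^2\alin)=-2r\Omega^3\slashed{\mathcal{D}}_2^\star\blin-3\rho\, r\Omega^3\xlin$ agrees with the paper's after commuting the radial weights through $\slashed{\mathcal{D}}_2^\star$), then applying $\Omega\slashed{\nabla}_4$, commuting via (\ref{angularcommute}), and substituting (\ref{Bianchi2}) and (\ref{tchi}). The only cosmetic difference is that in the paper the residual $\blin$- and $\xlin$-terms are seen to recombine into the first-order term $2(-\tfrac{2}{r}+\tfrac{6M}{r^2})\Omega\slashed{\nabla}_3(r\Omega^2\alin)$ rather than cancel outright, which is what your bookkeeping would also reveal.
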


\begin{proof}
We give the argument for $\alin$, the one for $\ablin$ being entirely analogous. Write (\ref{Bianchi1}) as
\begin{align}
(\Omega \slashed{\nabla}_3) (r \Omega^2 \alin) = \frac{\Omega^4}{r^4} \left(-2 r \slashed{\mathcal{D}}_2^\star (\Omega^{-1} \blin r^4) +6M \Omega^{-1} r^2 \xlin \right) \, .
\end{align}
Apply $\Omega \slashed{\nabla}_4$, commute derivatives using (\ref{angularcommute}) and insert (\ref{Bianchi1}) and (\ref{tchi}) to deduce \begin{align}
(\Omega \slashed{\nabla}_4)(\Omega \slashed{\nabla}_3)  (r \Omega^2 \alin) = 2 \left(-\frac{2}{r} + \frac{6M}{r^2}\right)\Omega \slashed{\nabla}_3 (r \Omega^2 \alin) + \frac{\Omega^4}{r^4} \left(-2 r^5 \slashed{\mathcal{D}}_2^\star \slashed{div} \alin - 6M r^2 \alin \right) \, .
\end{align}
\end{proof}

\subsubsection{The gauge invariant hierarchy} \label{sec:gih}
Given symmetric traceless tensors $\alin$, $\ablin$ we define (consistently with \cite{Daf.Hol.Rod19}) the quantities $(\plin, \pblin)$ by:
\begin{align}
-2r^3 \Omega \plin := \frac{r^2}{\Omega^2} \Omega\slashed{\nabla}_3 (r \Omega^2 \alin) &= - 2 r \slashed{\mathcal{D}}_2^\star \left(\Omega r^2 \blin\right) - 3 \rho r^3 \Omega \xlin  \, , \label{rewriteBianchi} \\
2r^3 \Omega \pblin :=\frac{r^2}{\Omega^2} \Omega\slashed{\nabla}_4 (r \Omega^2 \ablin) &= 2 r \slashed{\mathcal{D}}_2^\star \left(\Omega r^2 \bblin\right) - 3 \rho r^3 \Omega \xblin .\label{rewriteBianchibar}
\end{align}
In the above, the second equality follows from a rewriting of the Bianchi equations (\ref{Bianchi1}) and (\ref{Bianchi10}) respectively. Note that we can also rewrite (\ref{rewriteBianchibar}) in a form where all terms extend smoothly to $\mathcal{H}^+$. 
\begin{align}
\Omega\slashed{\nabla}_4 (r^5 \Omega^{-2} \ablin) - 4 r^5 \Omega^{-2} \ablin \left(\frac{1}{r} - \frac{3M}{r^2}\right)&=  2 r^5 \slashed{\mathcal{D}}_2^\star \Omega^{-1} \bblin - 3 \rho r^5 \Omega^{-1} \xblin \, . \label{rewriteBianchi2} 
\end{align}
Again consistent with \cite{Daf.Hol.Rod19} we define also the higher order quantities $(\Psilin, \Psilinb)$ by
\begin{align}
\Psilin := \frac{r^2}{\Omega^2}  \Omega \slashed{\nabla}_3 (r^3 \Omega \plin)= r^2 \Omega^{-1} \slashed{\nabla}_3 \left(-\frac{1}{2}r^2 \Omega^{-1} \slashed{\nabla}_3 (r \Omega^2 \alin)\right) &= r^5 \slashed{\mathcal{D}}_2^\star  \slashed{\mathcal{D}}_1^\star \left(-\rlin, \slin\right) + \frac{3}{4} r^5 \rho tr \chi (\xlin - \xblin)   \label{Prel}\, , \\
-\Psilinb := + \frac{r^2}{\Omega^2}  \Omega \slashed{\nabla}_4 (r^3 \Omega \pblin) = r^2 \Omega^{-1} \slashed{\nabla}_4 \left(\frac{1}{2}r^2 \Omega^{-1} \slashed{\nabla}_4 (r \Omega^2 \ablin)\right) &= r^5 \slashed{\mathcal{D}}_2^\star  \slashed{\mathcal{D}}_1^\star \left(\rlin, \slin\right) - \frac{3}{4} r^5 \rho tr \underline{\chi} (\xlin - \xblin) .
 \label{Prel2} 
\end{align}
Here the last equality follows by  plugging in (\ref{rewriteBianchi}) and (\ref{rewriteBianchibar}) and diligently inserting the relevant null-structure and Bianchi equations produced by the extra derivative. Defining $\mathcal{L} := -r^2\slashed{\Delta}+4$, one has from~\eqref{Teuk1},~\eqref{Teuk2} and the definitions~\eqref{Prel},~\eqref{Prel2}, that $\Psilin,\Psilinb$ satisfy the \emph{Regge-Wheeler} equation (see~\cite{Gra.Hola})
\begin{align}
  \label{eq:RW}
  (\Omega \slashed{\nabla}_4)(\Omega \slashed{\nabla}_3)\Phi + \frac{\Omega^2}{r^2}\left(\mathcal{L} - \frac{6M}{r}\right)\Phi & = 0,
\end{align}
where $\Phi=\Psilin,\Psilinb$. We further define\footnote{Recall from Proposition 4.4.4 of \cite{Daf.Hol.Rod19} that $\mathcal{L}$ has eigenvalues $\ell(\ell+1) \geq 6$ acting on symmetric traceless tensors.}
\begin{align}
\Psilin^D := \Psilin- \Psilinb \ \ \ , \ \ \ \Psilin^R := \Psilin + \Psilinb + 12M \mathcal{L}^{-1} (\mathcal{L}-2)^{-1}\partial_t \left( \Psilin- \Psilinb\right).
\end{align}
It is easy to see that, if $\alin$ and $\ablin$ are regular at the horizon and at infinity (see~\eqref{smoothextended} and Definition~\ref{def:aAdSlin}), $\Psilin$ and $\Psilinb$ and hence also $\Psilin^D$ and $\Psilin^R$ extend regularly to both $\mathcal{H}^+$ and $\mathcal{I}$. In fact, the quantities $\Psilin^D$ and $\Psilin^R$ correspond (up to an unimportant numerical constant) to the analogous quantities $\Psi^D$ and $\Psi^R$ defined in \cite{Gra.Hola}, where they are spin weighted functions. We refer the reader to \cite{Gra.Hola} for further details. 

\section{Construction of initial data and local well-posedness} \label{sec:data}
In this section we define the class of solutions of the system of gravitational perturbations that will be the relevant class for the main theorem of the paper. This is Definition \ref{def:ssogp}. The remainder of the section is concerned with constructing such solutions from an appropriate notion of seed initial data on $\underline{C}_{v_0}$ by solving an initial boundary value problem. The reader wishing to take for granted the existence of this class of solutions upon first reading may move directly to Section \ref{sec:maintheorem} after having read Definition \ref{def:ssogp}. 

\subsection{The class of solutions}

In the following, we will consider a class of solutions $\mathscr{S}$ of the system of gravitational perturbations. 
\begin{definition} \label{def:ssogp}
We will say that $\mathscr{S}$ (as in (\ref{scollect})) is a smooth solution of the system of gravitational perturbations satisfying the boundary conditions if 
\begin{itemize}
\item $\mathscr{S}$ satisfies the equations (\ref{stos})--(\ref{Bianchi10}) on $\mathcal{M}_{int} \setminus \mathcal{H}^+$, with all dynamical quantities of $\mathscr{S}$ being smooth functions on $\mathcal{M}_{int} \setminus \mathcal{H}^+$ and 
\item the boundary conditions (\ref{bcl1})--(\ref{bcl4}) hold on $\mathcal{I}$.
\end{itemize}
\end{definition}

As mentioned, our goal is to construct such solutions uniquely from an appropriate notion of smooth seed data on $\underline{C}_{v_0}$. 

\begin{remark} \label{rem:typesol}
  The solutions we will construct (and hence the data) will have the additional regularity property of being aAdS in the linearised sense, see Definition \ref{def:aAdSlin}. In fact, we will prove uniform bounds on all quantities appearing in (\ref{lin1}) and (\ref{lin2}) in Section \ref{sec:proof} of the paper. However, we have not included the condition of being aAdS in the linearised sense in Definition \ref{def:ssogp} to make showing existence of solutions easier. 
\end{remark}

\subsection{Smooth seed initial data} 
We will now define the notion of smooth seed initial data along the cone $\underline{C}_{v_0=0}$. Below, an $S^2_{u,v}$-tensor $\xi$ is called smooth along $\underline{C}_{v_0}$ if for all $i, j \in \mathbb{N} \cup \{0\}$, the tensor $\left[\frac{r^2}{\Omega^2} \Omega \slashed{\nabla}_3\right]^{i} [r\slashed{\nabla}]^j \xi$ extends smoothly to infinity and to the horizon along $\underline{C}_{v_0}$.\footnote{In particular, the components in an orthonormal frame extend as smooth functions to $\mathcal{I}$.} 

\begin{definition} \label{def:seeddata}
A smooth seed initial data set on $\underline{C}_{v_0=0}$ for the system of gravitational perturbations consists of
\begin{enumerate}
\item  a tuple  $(\ablin_0; \pblin_0,\Psilinb_0)$, called {\bf the gauge independent part},
\item a tuple $(\underline{B}_0,  R_0; \underline{H}_0; \Olin_0,\bmlin_0, G^{\ell=1}_0,\hat{G}_0)$, called {\bf the gauge dependent part},
\item a $4$-dimensional vector $(\mathfrak{m},\mathfrak{a}_{-1},\mathfrak{a}_0,\mathfrak{a}_1) \in \mathbb{R}^4$ called {\bf the Kerr-AdS part}.
\end{enumerate}
The gauge independent part $(\ablin_0; \pblin_0,\Psilinb_0)$ consists of
\begin{itemize}
\item a smooth symmetric traceless tensor $\ablin_0$ prescribed along $\underline{C}_{v_0}$ with $r^3 \ablin_0$ extending smoothly to $S^2_{u_0=v_0,v_0}$,
\item smooth symmetric traceless $S^2_{u_0=v_0,v_0}$ tensors $\pblin_0$ and $\Psilinb_0$.
\end{itemize}
The gauge dependent part $(\underline{B}_0,  R_0; \underline{H}_0; \Olin_0,\bmlin_0, G^{\ell=1}_0,\hat{G}_0)$ consists of
\begin{itemize}
\item $S^2_{u_0=v_0,v_0}$ scalars $\underline{B}_0$,  $R_0$, $\underline{H}_0$, $G^{\ell=1}_0$ with $G^{\ell=1}_0$ supported for $\ell =1$,
\item a smooth symmetric traceless tensor $\hat{G}_0$ on $S^2_{u_0=v_0,v_0}$,
\item a smooth lapse function $\Olin_0$ prescribed along $\underline{C}_{v_0}$ with $r^2\cdot \Olin_0$ extending smoothly to $S^2_{u_0=v_0,v_0}$,
\item a smooth shift function $\bmlin_0$ prescribed along $\underline{C}_{v_0}$,with $\frac{1}{r} \bmlin_0$ extending smoothly to $S^2_{u_0=v_0,v_0}$.

\end{itemize}
\end{definition}

Some remarks are in order regarding the interpretation of the quantities prescribed. Suppose we can indeed construct, from a seed initial data set as above, a solution $\mathscr{S}$ as in (\ref{scollect}) of the system of gravitational perturbations satisfying the boundary conditions (as we will eventually in Theorem \ref{theo:wp} below). Then we would like that solution to be related to the seed data as follows.

\begin{definition} \label{def:assumedata}
Given a smooth seed initial data set 
\[
(\ablin_0; \pblin_0,\Psilinb_0) \ \ ,  \ \ (\underline{B}_0,  R_0; \underline{H}_0; \Olin_0,\bmlin_0, G^{\ell=1}_0,\hat{G}_0) \ \ , \ \  (\mathfrak{m},\mathfrak{a}_{-1},\mathfrak{a}_0,\mathfrak{a}_1),
\]
we say that a solution $\mathscr{S}$ of the system of gravitational perturbations satisfying the boundary conditions in the sense of Definition~\ref{def:ssogp} realises the given seed initial data if we have $ \ablin (u,v_0, \theta) = \ablin_0(u, \theta)$, $\Olin (u,v_0,\theta) = \Olin_0 (u,\theta)$ and $\bmlin(u,v_0,\theta)=\bmlin_0(u,\theta)$ along $\underline{C}_{v_0}$ and 
\begin{align}
\pblin_0 &= \lim_{u \rightarrow u_0} \left( \Omega\slashed{\nabla}_4 (r^3 \ablin)\right) (u,v_0,\theta)=  \lim_{u \rightarrow u_0} k\left( 2 r^4 \slashed{\mathcal{D}}_2^\star  \bblin + 6M r \xblin \right) (u,v_0,\theta) \label{psi0rel} \, ,  \\
\Psilinb_0 &=   \lim_{u \rightarrow u_0} \left( \Omega\slashed{\nabla}_4 (\Omega \slashed{\nabla}_4 (r^3 \ablin))\right) (u,v_0,\theta)
= -2 k^2r^2 \slashed{\mathcal{D}}_2^\star  \slashed{\nabla} R_0 + 6M k^3 \lim_{u \rightarrow u_0} r^2 (\xlin - \xblin) (u,v_0,\theta)\label{Psi0rel} \, , \\
\underline{B}_0 &=\lim_{u \rightarrow u_0} [r\slashed{div}] \bblin r^3(u,v_0,\theta) \, , \\
R_0 &= \lim_{u \rightarrow u_0} r^3 \rlin (u,v_0,\theta) \, ,  \\
\underline{H}_0 &=\lim_{u \rightarrow u_0} r^2 \Omega^{-2} \otxb  (u,v_0,\theta) \, ,  \\
 G^{\ell=1}_0 &= \lim_{u \rightarrow u_0} \Big(\frac{\glinto}{\sqrt{\slashed{g}}} \Big)_{\ell=1} (u,v_0,\theta) \, , \label{eq:H0bc}\\
 \hat{G}_0 &= \lim_{u \rightarrow u_0} \glinh (u,v_0,\theta) \, .
\end{align}
Finally, 
\[
 \mathfrak{a}_{m} = \frac{1}{6M} \lim_{u \rightarrow u_0} (r^4 \slashed{curl} \blin)_{\ell=1,m} (u,v_0,\theta) \ \ \ \textrm{and} \ \ \  \mathfrak{m}  =  \frac{1}{M} \lim_{u \rightarrow u_0} \bigg( r^3 \rlin - 3M \frac{\glinto}{\sqrt{\slashed{g}}} \bigg)_{\ell=0} (u,v_0,\theta) \, .
\]
\end{definition}



It is clear that the quantities $(\ablin_0; \pblin_0,\Psilinb_0)$ are gauge independent. The terminology ``gauge dependent" for $(\Olin_0, \bmlin_0; \underline{B}_0, \underline{H}_0, R_0, G^{\ell=1}_0, \hat{G}_0)$ and ``Kerr-AdS” for $(\mathfrak{m},\mathfrak{a}_{-1},\mathfrak{a}_0,\mathfrak{a}_1)$ becomes clear from the following proposition. To state it, note that clearly any solution $\mathscr{S}$ of the system of gravitational perturbations satisfying the boundary conditions induces a seed initial data set by restricting the solution to $\underline{C}_{v_0}$ and taking the above limits. 

\begin{proposition} \label{prop:pgz}
Consider a smooth seed initial data set 
\[
(\ablin_0; \pblin_0,\Psilinb_0) \ \ , \ \ (\underline{B}_0,  R_0; \underline{H}_0; \Olin_0,\bmlin_0, G^{\ell=1}_0,\hat{G}_0) \ \ , \ \ ({\mathfrak{m}},{\mathfrak{a}}_{-1},{\mathfrak{a}}_0,{\mathfrak{a}}_1)
\]
and assume there exists a solution $\mathscr{S}$ of the system of gravitational perturbations satisfying the boundary conditions and realising the given seed data in the sense of Definition \ref{def:assumedata}. Then there exists a function $f : [u_0,\infty) \times \mathbb{S}^2 \rightarrow \mathbb{R}$, a function $q : [u_0,\infty) \times \mathbb{S}^2 \rightarrow \mathbb{R}$ 
such that  the solution 
\[
\mathscr{S} + \mathscr{G}_f + \mathscr{G}_q - \mathscr{K}_{\mathfrak{m}, \vec{\mathfrak{a}}}
\]
induces a seed initial data set whose gauge dependent part and Kerr-AdS part vanishes identically. Here $\mathscr{G}_f$ is the pure gauge solution induced by $f$ as in Lemma \ref{lem:exactsol}, $\mathscr{G}_q$ the pure gauge solution induced by $q$ as in Lemma \ref{lem:puregaugemetric} and $\mathscr{K}_{\mathfrak{m}, \vec{\mathfrak{a}}}$ the linearised Kerr-AdS solution of Lemma \ref{lem:kerr}.
\end{proposition}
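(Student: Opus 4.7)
The proof proceeds in three steps. First, subtract the linearised Kerr-AdS solution $\mathscr{K}_{\mathfrak{m}, \vec{\mathfrak{a}}}$ to kill the Kerr-AdS parameters of the induced seed data. Second, construct the function $f$ generating the pure gauge solution $\mathscr{G}_f$ of Lemma \ref{lem:exactsol} to normalize the lapse $\Olin_0$ along the entire cone $\underline{C}_{v_0}$, together with two additional sphere-function quantities at the corner $S^2_{u_0, v_0}$. Third, construct $q = (q_1, q_2)$ generating the pure gauge solution $\mathscr{G}_q$ of Lemma \ref{lem:puregaugemetric} to normalize the remaining purely metric seed quantities.

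\textbf{Step 1 (Kerr-AdS).} Set $\mathscr{S}_1 := \mathscr{S} - \mathscr{K}_{\mathfrak{m}, \vec{\mathfrak{a}}}$. By Lemma \ref{lem:kerr}, the identity (\ref{gaugedep0}) and the explicit value of $\curls \blin$ given there, the Kerr-AdS parameters of $\mathscr{K}_{\mathfrak{m}, \vec{\mathfrak{a}}}$, defined by the limits in Definition \ref{def:assumedata}, are precisely $(\mathfrak{m}, \vec{\mathfrak{a}})$. Hence $\mathscr{S}_1$ has vanishing Kerr-AdS part.

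\textbf{Step 2 (The function $f$).} The condition $\Olin^{\mathscr{S}_1}(u, v_0) + \Olin^{\mathscr{G}_f}(u, v_0) = 0$ becomes, via the formula for $\Olino$ in Lemma \ref{lem:exactsol}, a first-order linear transport equation in $u$ for $f(u, \cdot)$ along $\underline{C}_{v_0}$, whose inhomogeneity depends on the two corner values $f(u_0, \cdot)$ and $\partial_u f(u_0, \cdot)$. Prescribing these as two sphere functions on $S^2_{u_0, v_0}$, the ODE is uniquely solvable on $[u_0, \infty)$. We then use the remaining freedom in $f(u_0, \cdot)$ and $\partial_u f(u_0, \cdot)$ to kill two scalar seed quantities at the corner, say $R_0$ and $\underline{H}_0$: taking corner limits $v \to u$ using $(v - u) r \to 1/k^2$ from (\ref{rrstarrel}), the contributions $\lim r^3 \rlin^{\mathscr{G}_f}$ and $\lim r^2\Omega^{-2} \otxb^{\mathscr{G}_f}$ are expressions linear in $\partial_u f(u_0, \cdot)$ and $f(u_0, \cdot)$ which are invertible on the $\ell \geq 2$ modes, the relevant sphere operators being injective there by the elliptic estimates of Section \ref{sec:elliptic}. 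The $\ell = 0, 1$ parts are already handled by Step 1: by (\ref{gaugedep0}) the $\ell = 0$ part of $R_0$ is precisely the Kerr-AdS mass parameter $\mathfrak{m}$, while the $\ell = 1$ parts of the relevant curvature quantities are carried by the $\vec{\mathfrak{a}}$ parameters.

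\textbf{Step 3 (The function $q$) and main obstacle.} Set $\mathscr{S}_2 := \mathscr{S}_1 + \mathscr{G}_f$. By Steps 1--2, all induced Ricci-coefficient and curvature seed quantities of $\mathscr{S}_2$ vanish, and so do $\Olin_0$ and the Kerr-AdS part. It remains to normalize the purely metric data $\bmlin_0, G^{\ell=1}_0, \hat{G}_0$, which is exactly what $\mathscr{G}_q$ modifies by Lemma \ref{lem:puregaugemetric}. One solves first for $(q_1, q_2)(u_0, \cdot)$ on $S^2_{u_0, v_0}$ by inverting $r^2 \slashed{\mathcal{D}}_2^\star \slashed{\mathcal{D}}_1^\star$ on the $\ell \geq 2$ part of $\hat{G}_0$ and $r^2 \slashed{\Delta}$ on the $\ell \geq 1$ part of $G^{\ell=1}_0$, then determines $\partial_u q_i(u, \cdot)$ along $\underline{C}_{v_0}$ via the transport formula $\bmlin^{\mathscr{G}_q} = r^2 \slashed{\mathcal{D}}_1^\star (\partial_u q_1, \partial_u q_2)$, reconstructing $(q_1, q_2)$ by integration in $u$. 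The main obstacle is the careful bookkeeping of the $\ell = 0, 1$ modes, which coincide with the kernels of the sphere operators used to invert for $f$ and $q$ and hence cannot be killed by pure gauge; they are instead absorbed into Step 1, and consistency of this separation relies on the support of the linearised Kerr-AdS solutions in Lemma \ref{lem:kerr} being exactly on these low modes.
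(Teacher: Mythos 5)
Your overall architecture (subtract $\mathscr{K}_{\mathfrak{m},\vec{\mathfrak{a}}}$, then fix $f$ via an ODE along $\underline{C}_{v_0}$ enforcing $\Olin=0$, then fix $q$ from the purely metric data) is the same as the paper's, and your Steps 1 and 3 are essentially correct. The gap is in Step 2: you count the gauge freedom in $f$ wrongly, and as a consequence you never normalise $\underline{B}_0$, which is part of the gauge dependent seed data and must vanish for the proposition to hold. The condition $\Olin_{\mathscr{S}_1}+\Olin_{\mathscr{G}_f}=0$ on $\underline{C}_{v_0}$ is a first-order ODE for $\Omega^2 f_u$ whose inhomogeneity involves the corner values $\lambda_0=f(u_0,\cdot)$ and $\lambda_1=\partial_u f(u_0,\cdot)$ through the $f_v$-terms, but this ODE is \emph{not} uniquely solvable once those two corner values are prescribed: it carries a genuine constant of integration $\lambda_2$, entering $f$ as $\lambda_2/\Omega^2=\mathcal{O}(r^{-2})$, which affects neither $f(u_0,\cdot)$ nor $\partial_u f(u_0,\cdot)$ (since $\Omega^2\to\infty$ at the corner) and is therefore an independent third sphere function of freedom. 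The paper's explicit solution $\Omega^2 f_u=\lambda_1 r+\lambda_0\Omega^2+\lambda_2-\int_{u_0}^u 2\Omega^2\Olin$ makes all three parameters visible, and uses them to kill the \emph{three} scalar corner quantities $\underline{B}_0$, $R_0$, $\underline{H}_0$ (via $\lambda_0$, $\lambda_1$ and $\lambda_2$ respectively, the last through the combination $2\Delta_{\mathbb{S}^2}\lambda_0-2\lambda_2$).

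With only two parameters you cannot do this: killing $R_0$ fixes $\lambda_1$, and then $\underline{B}_0$ and $\underline{H}_0$ are each controlled by $\lambda_0$ through two \emph{different} elliptic operators ($6Mkr^2\slashed{\Delta}\lambda_0$ versus $2\Delta_{\mathbb{S}^2}\lambda_0$ plus the missing $-2\lambda_2$), so they cannot in general be normalised simultaneously. Your write-up in fact drops $\underline{B}_0$ from the discussion altogether, so the assertion at the start of your Step 3 that "all induced Ricci-coefficient and curvature seed quantities of $\mathscr{S}_2$ vanish" is unjustified. The fix is simply to recognise the constant of integration $\lambda_2$ as a third free sphere function and use the triple $(\lambda_0,\lambda_1,\lambda_2)$ to kill $(\underline{B}_0,R_0,\underline{H}_0)$. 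A secondary, smaller point: your remark that the $\ell=0,1$ parts of $R_0$ and of $\slashed{curl}\blin$ are "handled by Step 1" is right in spirit, but note the paper does not need elliptic invertibility restricted to $\ell\geq2$ for this step; the residual kernel is only the $\ell=0$ mode of $\lambda_0$ (a trivial time translation), as recorded in the paper's footnote.
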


\begin{proof}
{\bf Step 1. Subtracting the linearised Kerr-AdS solution $\mathscr{K}_{\mathfrak{m},\vec{\mathfrak{a}}}$.} We note that 
\[
\bigg( r^3 \rlin - 3M \frac{\glinto}{\sqrt{\slashed{g}}}\bigg)_{\ell=0} \Bigg|_{\mathscr{S} - \mathcal{K}_{\mathfrak{m},\vec{\mathfrak{a}}}}  (u_0,v_0)  = \bigg( r^3 \rlin - 3M \frac{\glinto}{\sqrt{\slashed{g}}}\bigg)_{\ell=0} \Bigg|_{\mathscr{S}} (u_0,v_0) - \mathfrak{m} \cdot M = 0 \, .
\]
We choose the $\vec{\mathfrak{a}}=(\mathfrak{a}_{1},\mathfrak{a}_0,\mathfrak{a}_{-1})$ in the linearised Kerr-AdS solution such that for $m \in \{-1,0,1\}$ we have 
\[
\left( r \slashed{curl} (\blin r^3)\right)_{\ell=1, m} \Big|_{\mathscr{S} - \mathcal{K}_{\mathfrak{m},\vec{\mathfrak{a}}}}(u_0,v_0) = \left( r \slashed{curl} (\blin r^3)\right)_{\ell=1, m} \Big|_{\mathscr{S}}(u_0,v_0) -6M \mathfrak{a}_m = 0 \, .
\]
Note that both these quantities are gauge invariant and hence not affected by adding $\mathscr{G}_f$ and $\mathscr{G}_q$ in the following steps. Clearly, by construction $\tilde{\mathscr{S}}:=\mathscr{S}-\mathscr{K}_{\mathfrak{m},\vec{\mathfrak{a}}}$ induces seed data with vanishing Kerr-AdS part. \\

{\bf Step 2. Defining $\mathscr{G}_f$.} Given the solution $\tilde{\mathscr{S}}=\mathscr{S}-\mathscr{K}_{\mathfrak{m},\vec{\mathfrak{a}}}$, we define the function $f$ inducing the desired pure gauge solution $\mathscr{G}_f$ (by setting $f_u=f(u, \theta)$ and $f_v=f(v,\theta)$ in Lemma \ref{lem:exactsol}) as follows:
\begin{align}
f (u,\theta) &= \frac{ \lambda_1 r(u,v_0) + \lambda_0\Omega^2 (u,v_0) + \lambda_2}{\Omega^2(u,v_0)} - \frac{1}{\Omega^2(u,v_0)}  \int_{u_0}^u 2\Omega^2 \cdot \Olin(\hat{u},v_0,\theta) d\hat{u} 
\end{align}
for scalars $\lambda_0, \lambda_1, \lambda_2$ defined in turn by\footnote{The $\lambda_0, \lambda_1, \lambda_2$ are unique up to the spherical means of $\lambda_0$. Note that the $\ell=0$ mode of $\lambda_0$ generates a trivial pure gauge solution (changing $u$ and $v$ by a constant) corresponding to the static isometry of the background.}
\begin{itemize}
\item $\lambda_0$ such that $(\underline{B}_0)_{\tilde{\mathscr{S}}} (\theta) +6Mk r^2 \slashed{\Delta} \lambda_0=0$,
\item $\lambda_1$ such that $(R_0)_{\tilde{\mathscr{S}}} (\theta) + 6M \lambda_1 = 0$,
\item $\lambda_2$ such that $(\underline{H}_0)_{\tilde{\mathscr{S}}} (\theta) +  2 \Delta_{\mathbb{S}^2} \lambda_0 - 2 \lambda_2=0$.
\end{itemize}
We claim that with $\mathscr{G}_f$ thus defined, the solution $\tilde{\mathscr{S}} + \mathscr{G}_f$ already has $(\Olin_0)_{\tilde{\mathscr{S}} + \mathscr{G}_f}=0$ and $(\underline{B}_0)_{\tilde{\mathscr{S}} + \mathscr{G}_f}=0$, $(R_0)_{\tilde{\mathscr{S}} + \mathscr{G}_f}=0$ and $(\underline{H}_0)_{\tilde{\mathscr{S}} + \mathscr{G}_f}=0$. To verify this, we first note that 
\[
\lim_{u \rightarrow u_0} f_u (u, \theta) =\lim_{v \rightarrow v_0} f_v (v,\theta) = \lambda_0 \ \ \ \textrm{and} \ \ \ \ \lim_{u \rightarrow u_0}(\partial_u f_u) (u,\theta)=\lim_{v \rightarrow v_0}(\partial_v f_v) (v,\theta)=\lambda_1 \, .
\]
Using the expressions in Lemma \ref{lem:exactsol} we now easily check that 
\begin{align}
&(\Olin)_{\tilde{\mathscr{S}}+\mathscr{G}_f}(u,v_0,\theta) \nonumber \\
&=(\Olin) (u,v_0,\theta)_{\tilde{\mathscr{S}}}+\frac{1}{2} (\partial_v f_v) (v_0,\theta) 
+ \frac{1}{2} (\partial_r\Omega^2) (u,v_0) f_v(v_0,\theta)+\frac{1}{2\Omega^2}\partial_u\big(\Omega^2f_u\big)(u,v_0,\theta) \nonumber \\ 
&=(\Olin)_{\tilde{\mathscr{S}}}  (u,v_0,\theta) +\frac{1}{2} \lambda_1 + \frac{1}{2} \lambda_0 (\partial_r \Omega^2)(u,v_0) + \frac{1}{2} \left(-\lambda_1-\lambda_0 \partial_r (\Omega^2)  - 2\Olin_{\tilde{\mathscr{S}}} \right)(u,v_0,\theta) = 0
\end{align} 
on the ingoing cone $\underline{C}_{v_0}$ independently of the $\lambda_i$. From Lemma \ref{lem:exactsol} we then verify
\begin{align}
\lim_{u \rightarrow u_0} (r\slashed{div} r^4 \Omega^{-1} \bblin)_{\tilde{\mathscr{S}}+\mathscr{G}_f} (u,v_0,\theta)  &= \lim_{u \rightarrow u_0} (r\slashed{div} r^4 \Omega^{-1} \bblin)_{\tilde{\mathscr{S}}} (u,v_0,\theta) + 6M r^2 \slashed{\Delta} \lambda_0 = 0 \, .
\end{align}
To check the condition on $\rlin$, we first compute the limit
\[
\lim_{u \rightarrow v} \left((r(u,v)(f_v (v,\theta) - f_u(u,\theta))\right) = \lim_{v \rightarrow u} \left(\frac{f_v - f_u}{v-u}\right) = f^\prime(u) \, .
\]
It therefore follows from Lemma \ref{lem:exactsol} that we have 
\begin{align}
(r^3 \rlin)_{\tilde{\mathscr{S}}+\mathscr{G}_f} (v_0,v_0,\theta) &= (r^3 \rlin)_{\tilde{\mathscr{S}}} (v_0,v_0,\theta) + 6M \lim_{u \rightarrow v_0} \left((r(u,v_0)(f_v (v_0,\theta) - f_u(u,\theta))\right)  \nonumber \\
&=  (r^3 \rlin)_{\tilde{\mathscr{S}}} (v_0,v_0,\theta) + 6M \lim_{u \rightarrow v_0} \left(\frac{f(v_0,\theta) - f(u,\theta)}{v_0-u}\right)\nonumber \\ &=  (r^3 \rlin)_{\tilde{\mathscr{S}}} (v_0,v_0,\theta) + 6M f^\prime(v_0) \, ,
\end{align}
and we see that $(R_0)_{\tilde{\mathscr{S}}+\mathscr{G}_f}=0$ if $\lambda_1$ is chosen as above. For the last limit we note using Lemma \ref{lem:exactsol} and the fact that we already obtained $\olinb_{\tilde{\mathscr{S}}+\mathscr{G}_f}=0$, the equalities
\begin{align}
\frac{r^2}{\Omega^2} \otxb_{\tilde{\mathscr{S}}+\mathscr{G}_f} (u,v_0,\theta)   
\nonumber \\
= \frac{r^2}{\Omega^2} \otxb_{\tilde{\mathscr{S}}} (u,v_0,\theta) + 2\left[\Delta_{\mathbb{S}^2} \lambda_0 + \lambda_0\Big(1-\frac{4M}{r} - k^2 r^2 \Big)(u,v_0) \right] - \frac{r^2}{\Omega^2}  \partial_u\Big(\frac{2\Omega^2}{r}f_u\Big) (u,v_0,\theta) \, . \nonumber
\end{align}
We want to take the limit $u \rightarrow u_0=v_0$. We compute up to terms vanishing in the limit $u \rightarrow u_0$ (indicated by $\equiv$)
\begin{align}
 - \frac{r^2}{\Omega^2}  \partial_u\Big(\frac{2\Omega^2}{r}f_u\Big)(u,v_0,\theta) &\equiv -2 \Omega^2 f_u (u,v_0,\theta) -2r \left(-\lambda_1 - (\Omega^2)_r \lambda_0 - 2 \Olin_{\tilde{\mathscr{S}}}   \right) (u,v_0,\theta) \nonumber \\
 &\equiv -2 \left( + \lambda_0\Omega^2 (u,v_0) -r (\Omega^2)_r \lambda_0+ \lambda_2\right) \nonumber \\
 &\equiv -2 \left( - \lambda_0 k^2r^2 (u,v_0) + \lambda_0  + \lambda_2\right) \, .
\end{align}
We conclude
\begin{align}
(\underline{H}_0)_{\tilde{\mathscr{S}} + \mathscr{G}_f} = (\underline{H}_0)_{\tilde{\mathscr{S}}}  +  2 \Delta_{\mathbb{S}^2} \lambda_0 - 2 \lambda_2 = 0 \, . 
\end{align}
{\bf Step 3. Defining $\mathscr{G}_q$.} Let us define (unique) functions $\phi_1,\phi_1$ with vanishing spherical means by $\bmlin_{\tilde{\mathscr{S}}+\mathscr{G}_f}=r \slashed{\mathcal{D}}_1^\star (\phi_1,\phi_2)$. Set $q_1,q_2$ to solve 
\[
\partial_u q_1 + \frac{\phi_1}{r} = 0  \ \ \ , \ \ \ \partial_u q_2 + \frac{\phi_2}{r} = 0  \, ,
\]
with initial conditions at $u_0$ determined by $q_1(u_0,\theta)$ and $q_2(u_0,\theta)$ having vanishing spherical means and solving $\glinh(u_0,\theta)= 2 r^2 \slashed{\mathcal{D}}_2^\star \slashed{\mathcal{D}}_1^\star(q_1(u_0,\theta),q_2(u_0,\theta))$ and $r^2 \slashed{\Delta} q_1 = G^{\ell=1}_0 (u_0,\theta)$. Note that this determines $q_1, q_2$ uniquely up to $\ell=1$ modes of $q_2$ on the sphere $S^2_{u_0,v_0}$ (corresponding to the rotational invariance), which generate trivial gauge transformations.  It is now immediate from Lemma \ref{lem:puregaugemetric} that $ (G^{\ell=1}_0)_{\tilde{\mathscr{S}}+\mathscr{G}_f+\mathscr{G}_q} (\theta)=0$, $ (\hat{G})_{\tilde{\mathscr{S}}+\mathscr{G}_f+\mathscr{G}_q} (\theta)=0$ and $\bmlin_{\tilde{\mathscr{S}}+\mathscr{G}_f+\mathscr{G}_q} (u,v_0,\theta) = 0$, while all other metric, Ricci and curvature components remain unaffected. 
%
\end{proof}

\subsection{Construction of all geometric quantities on the initial cone}
\begin{proposition} \label{prop:allqdata}
Given a smooth seed initial data set 
\[
(\ablin_0; \pblin_0,\Psilinb_0) \ \ , \ \ (\underline{B}_0,  R_0; \underline{H}_0; \Olin_0,\bmlin_0, G^{\ell=1}_0,\hat{G}_0) \ \ , \ \ (\mathfrak{m},\mathfrak{a}_{-1},\mathfrak{a}_0,\mathfrak{a}_1)
\]
one can construct uniquely all geometric quantities (\ref{scollect}) of the system of gravitational perturbations (including all their tangential and transversal derivatives) on $\underline{C}_{v_0}$ such that
\begin{enumerate}[(1)]
\item  The equations  (\ref{stos})--(\ref{Bianchi10}) and all tangential and transversal derivatives thereof, hold on the cone $\underline{C}_{v_0}$.
\item  The boundary conditions (\ref{bcl1})--(\ref{bclxi}) hold on the sphere $S^2_{u_0,v_0}$ of the conformal boundary. 
\item The condition $[\slashed{\nabla}_T]^p [r\slashed{\nabla}]^q \olin = 0$ holds on the sphere $S^2_{u_0,v_0}$ for any $p,q \in \mathbb{N}^0$.
\item  On $\underline{C}_{v_0}$, the relations  of Proposition \ref{def:assumedata} between the seed data and the geometric quantities hold.
\end{enumerate}
Moreover, if the seed initial data set vanishes identically then so do all geometric quantities on $\underline{C}_{v_0}$. 
\end{proposition}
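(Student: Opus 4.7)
The construction proceeds in three stages: (i) all quantities of $\mathscr S$ are determined at the corner sphere $S^2_{u_0,v_0}\subset\mathcal I$ from the seed, the boundary conditions \eqref{bcl1}--\eqref{bclxi} and the gauge condition (3); (ii) they are propagated along $\underline C_{v_0}$ in the $u$-direction via the tangential transport equations of Sections \ref{=lmc}--\ref{=lcc}; (iii) all transverse $\slashed\nabla_4$-derivatives are obtained inductively from the $\slashed\nabla_4$-equations of the system. Since each step is linear in the seed, the uniqueness/vanishing statement at the end of the proposition follows by inspection.

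\textbf{Step 1 (Corner).} The seed fixes $\ablin,\Olin,\bmlin,\hat G_0,G_0^{\ell=1},\underline H_0,R_0,\underline B_0$ at $S^2_{u_0,v_0}$. From \eqref{oml3} and $\olin=0$ one reads off $\olinb$ and $\elin+\eblin$ from $\partial_u\Olin_0(u_0)$ and $\slashed\nabla\Olin_0(u_0)$. The Kerr--AdS parameters $(\mathfrak m,\vec{\mathfrak a})$ fix the $\ell=0$ projection of $\glinto/\sqrt{\slashed g}$ via the gauge-invariant identity \eqref{gaugedep0} and the $\ell=1$ projections of $r^4\slashed{\mathrm{curl}}\,\blin$, while the $\ell\geq 2$ projection of $\glinto/\sqrt{\slashed g}$ is recovered from the boundary limit of \eqref{gaussfootnote} (an elliptic equation on the sphere). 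The boundary conditions \eqref{bcl1}, \eqref{betaboundary}, \eqref{bcl4} and \eqref{bclxi} give $\alin=\ablin$, $\blin=-\bblin$, $\slin=0$ and $\xlin=\xblin$ at the corner; combined with the algebraic identities \eqref{psi0rel} and \eqref{Psi0rel}, the Teukolsky data $\pblin_0,\Psilinb_0$ then disentangle to fix the shear $\xblin$ (hence $\xlin$) and the curvature $\bblin$ (together with $\underline B_0$ and the $\mathfrak a_m$), as well as the next-order difference $r^2(\xlin-\xblin)$. The remaining scalar $\otx$ is extracted from the appropriate weighted limit of \eqref{lingauss}, and the Codazzi equations \eqref{ellipchi} separate $\elin$ and $\eblin$ from their sum.

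\textbf{Step 2 (Propagation in $u$).} With the corner fully specified, the tangential equations of the system --- the transport equations in $u$ for $\glinto/\sqrt{\slashed g},\glinh,\olinb,\olin,\otx,\otxb,\xblin,\xlin,\eblin,\alin,\blin,\rlin,\slin,\bblin$ --- form a closed, linear, first-order ODE system along $\underline C_{v_0}$ with smooth coefficients and sources involving $\ablin_0,\Olin_0,\bmlin_0$. Standard linear ODE theory yields a unique smooth solution on the entire cone. The elliptic and algebraic sphere constraints (Codazzi \eqref{ellipchi}, curl relations \eqref{curleta}, Gauss \eqref{lingauss} and the last equation of \eqref{oml3}) propagate from the corner automatically: differentiating each in $u$ and substituting the transport equations produces a tautology.

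\textbf{Step 3 (Transverse derivatives, uniqueness).} For every quantity with a direct $\slashed\nabla_4$-equation in the system, the first transverse derivative on the cone is read off algebraically from that equation in terms of the tangential data from Step 2. The two quantities lacking a direct $\slashed\nabla_4$-equation are $\alin$ and $\olin$: their corner $\slashed\nabla_4$-derivatives are supplied by \eqref{bcl1b} and the higher-order gauge condition $[\slashed\nabla_T]^p[r\slashed\nabla]^q\olin=0$ respectively, and are propagated inward in $u$ by applying the commutator $[\Omega\slashed\nabla_3,\Omega\slashed\nabla_4]=0$ from \eqref{angularcommute} to the tangential equations \eqref{Bianchi1} and \eqref{oml2}. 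Iterating on the order of the $\slashed\nabla_4$-derivative reconstructs every higher mixed derivative consistently. Uniqueness and the vanishing-data statement are immediate from the linearity of every step. The \emph{main obstacle} lies in Step 1, where the coupled algebraic/elliptic identities at the corner (the two boundary-condition families, the Teukolsky data, the Kerr--AdS projections and the linearised Codazzi/Gauss equations in their appropriately weighted boundary form) must be untangled in the correct order, with the $\ell=0,1$ and $\ell\geq 2$ modes treated separately because of the non-trivial kernels of $\slashed{\mathcal D}_2^\star$ and $\slashed{\mathcal D}_2^\star\slashed{\mathcal D}_1^\star$ on the round sphere. Once the corner data is consistently assembled, Steps 2 and 3 are routine.
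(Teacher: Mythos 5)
Your three-stage architecture (corner data at $S^2_{u_0,v_0}$, propagation in the $3$-direction along $\underline{C}_{v_0}$, transverse derivatives from the $4$-equations with $\alin$ and $\olin$ as the two exceptions) is the same as the paper's, and your identification of the corner disentanglement via \eqref{psi0rel}, \eqref{Psi0rel}, \eqref{gaussfootnote} and the boundary conditions is essentially what is done there. However, there is a genuine gap in your Step 2. The ODE system you propose to solve by ``standard linear ODE theory'' does \emph{not} have smooth coefficients: the initial sphere $S^2_{u_0,v_0}$ sits on the conformal boundary where $r=\infty$, and the $3$-direction transport equations are singular there ($\Omega^2\sim k^2r^2$, $\mathrm{tr}\underline{\chi}\sim -2/r^\star$, $\rho\, r^3$ bounded but $\rho\,\xi$ with the wrong weight not integrable, etc.). Prescribing data at the singular endpoint and integrating towards the horizon is only meaningful for the correctly $r$-weighted renormalisations of each quantity, chosen so that the right-hand side of its transport equation is integrable in $u$ near $u=u_0$ (and also near $u=\infty$). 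This is precisely the content of the paper's quantity-by-quantity treatment: one integrates $\Omega\slashed{\nabla}_3(r^4\Omega^{-1}\bblin)=\Omega^2 r^4\slashed{div}\,\ablin\,\Omega^{-2}$, then $r^2\Omega^{-1}\xblin$ with boundary value $\underline{X}$, then $\Omega^{-2}r^2\otxb$, and so on. The sharpest instance is $\xlin$: the naive equation \eqref{chih3} contains the source $\tfrac12(\Omega\mathrm{tr}\chi)\xblin$, which is \emph{not} integrable in $u$ near the corner for the relevant weighted quantity; one must instead integrate the renormalised equation
\begin{align*}
\Omega\slashed{\nabla}_3\left(r\Omega\xlin-r\underline{X}\right)=-2\Omega^2 r\slashed{\mathcal{D}}_2^\star\elin-\Omega^2\left(\Omega\xblin-\underline{X}\right),
\end{align*}
whose right-hand side is integrable only because $\Omega\xblin-\underline{X}=\mathcal{O}(r^{-2})$. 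Without identifying these renormalisations and verifying integrability, the existence and uniqueness of your ``ODE solution'' with data at $r=\infty$ does not follow, and this is where the actual work of the proof lies rather than (as you suggest) only in assembling the corner data.

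Two smaller points. First, $\elin$ has no $3$-direction transport equation (only \eqref{propeta} in the $4$-direction), so it cannot be propagated as you list; it must be determined on each sphere of the cone either from the Codazzi equation \eqref{ellipchi} (read as an elliptic identity, as the paper does) or from $\elin=2\slashed{\nabla}\Olin_0-\eblin$, after which the consistency of \eqref{propeta} with the other equations has to be checked — your claim that all sphere constraints ``propagate automatically as a tautology'' is too quick, since several of them (e.g.\ $\slashed{curl}\,\eblin+\slin$, the Codazzi equation for $\xlin$) require an argument combining the propagation identity with the vanishing of the relevant quantity at the corner, which in turn rests on the specific corner construction. Second, the integration range in $u$ is infinite (the cone reaches the horizon at $u=\infty$), so integrability of the sources there must also be checked; the paper does this via the $\Omega$-weights of the regular horizon quantities.
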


\begin{remark}
To explain $(3)$ above, we observe that some condition on $\olin$ at $S^2_{u_0,v_0}$ is necessary to determine all quantities on the cone as the quantity $\olin$ only admits a null structure equation in the $3$-direction and hence needs to be supplemented with data on $S^2_{u_0,v_0}$. Our choice in $(3)$ is weaker but consistent with our desire to construct data and solutions that are being aAdS in the linearised sense (Definition \ref{def:aAdSlin}).
\end{remark}


\begin{proof}
The logic of the proof is to construct all the geometric quantities of the solution using freely equations and relations that have to hold on $\underline{C}_{v_0}$ by Items $(1)-(4)$ above. In a second step, once all geometric quantities have been constructed without contradiction, we verify that all equations and relations have been used in the construction. 

First, we define $\olinb = \Omega \slashed{\nabla}_3 (\Olin_0) (u,v_0,\theta)$ consistent with (\ref{oml3}). Note that $r\olinb(u_0,v_0,\theta)$ extends smoothly to $u=u_0$. 

Since all of (\ref{stos})--(\ref{Bianchi10}) have to hold for the geometric quantities we want to construct, it follows that also the Teukolsky equations (\ref{Teuk1})--(\ref{Teuk2}) restricted to $\underline{C}_{v_0}$ must hold. This determines (by transport along $\underline{C}_{v_0}$) the regular transversal derivatives $\Omega \slashed{\nabla}_4 (r^5 \Omega^{-2} \ablin)$ and $\Omega \slashed{\nabla}_4 (\Omega \slashed{\nabla}_4 (r^5 \Omega^{-2} \ablin))$ along all of $\underline{C}_{v_0}$ in terms of the prescribed seed data $\ablin_0$ and $\pblin_0$, $\Psilinb_0$ using (\ref{psi0rel}), (\ref{Psi0rel}).  

Next, we impose that on the boundary sphere $S^2_{u_0,v_0}$,
\begin{align}\label{eq:defbblinpblin}
  -(r^2 \slashed{\Delta} + 2K) r\slashed{curl} (r^3 \bblin) = 2r^3  \slashed{curl} \slashed{div} \slashed{\mathcal{D}}_2^\star (r^3 \bblin)=r^2   \slashed{curl} \slashed{div} \pblin_0,
\end{align}
which, since the $\ell=1$ mode of $r^4 \slashed{curl} \bblin$ is determined by the seed data on $S^2_{u_0,v_0}$, determines $ r^4 \slashed{curl} \bblin (u_0,v_0,\theta)$ at infinity from seed data. Since the seed data require $\lim_{u \rightarrow u_0} r^4 \slashed{div} \bblin (u,v_0,\theta) = \underline{B}_0$, the quantity $ r^3 \bblin (u_0,v_0,\theta)$ is determined uniquely by elliptic theory. We now define the limit $\underline{X} := \lim_{u \rightarrow u_0}  r \underline{\hat{\chi}}$ by
\begin{align}\label{eq:defunderlineXboundary}
  6Mk \underline{X} & = \pblin_0 - \lim_{u\to u_0} 2kr^4\slashed{\mathcal{D}}^\star_2 \bblin, 
\end{align}
consistently with (\ref{psi0rel}).
We can now obtain $\bblin$ along all of $\underline{C}_{v_0}$ by integrating (\ref{Bianchi9}) written as
\begin{align*}
  \Omega \slashed{\nabla}_3 (r^4 \Omega^{-1} \bblin) = \Omega^2 r^4 \slashed{div} \ablin \Omega^{-2} \, . 
\end{align*}
Note that the right hand side is integrable in $u$ on $\underline{C}_{v_0}$ and thus determines $r^4 \Omega^{-1} \bblin$. Similarly, we can integrate (\ref{tchi}) from infinity with boundary condition~\eqref{eq:defunderlineXboundary} to obtain $r^2 \Omega^{-1}\xblin$ along $\underline{C}_{v_0}$. This comes with the (smooth) expansion
\begin{align}\label{eq:expxblinX}
r^2 \Omega^{-1} \xblin = \underline{X} + \mathcal{O}(r^{-2})\, .
\end{align}
Note that we can now determine $r^3 \slin$ from (\ref{Bianchi7}) and the boundary condition $r^3 \slin (u_0,v_0,\theta)=0$. 

From (\ref{vray}) we have
\begin{align*}
\partial_u ((\Omega^{-2} \otxb r^2) = -4r  \olinb \, .
\end{align*}
Since $r \olinb$ extends smoothly to $S^2_{u_0,v_0}$ the right hand side is integrable. Using the boundary condition~\eqref{eq:H0bc} this defines $\otxb \Omega^{-2} r^2$ on all of $\underline{C}_{v_0}$. Using (\ref{Bianchi5}), we can also determine $\rlin$ from $R_0$ at infinity (set by seed data) and the bounds on $\otxb$ and $\bblin$. 
%
%
We now write the Codazzi equation (\ref{ellipchi}) as 
\begin{align} \label{elid}
 \left( r^2 \elin  \right) + \frac{1}{2} r \slashed{\nabla} \Big(\otxb r^2 \Omega^{-2} \Big)  = r^3 \slashed{div} \Omega^{-1} \xblin - r^3 \Omega^{-1} \bblin \, .
\end{align}
All four terms have a finite limit on null infinity ($r^3 \Omega^{-1} \bblin$ actually vanishes) which therefore determines $r^2\elin$ (which extends regularly to $u=u_0$). We also set, consistent with (\ref{oml3}),
\begin{align}\label{eq:defeblinid}
\eblin = - \elin + 2 \slashed{\nabla} (\Olin_0) \, . 
\end{align}
One easily shows that the $3$-equation for $\eblin$, (\ref{propeta}), holds as a consequence of (\ref{elid}) and the $3$-equations holding for $\xblin$, $\otxb$, $\Olin$ and $\blin$. Moreover, using~\eqref{propeta} and the definition of $\slin$ by \eqref{Bianchi7}, one can show that $\Omega\slashed{\nabla}_3\left(r^3\slashed{curl}\eblin + r^3\slin\right) = 0$. Now, both quantities in the parenthesis vanish at infinity, which shows that the elliptic equations~\eqref{curleta} hold on $\underline{C}_{v_0}$. Indeed, $r^3\slin \to 0$ by definition of $\slin$ and, by~\eqref{eq:defbblinpblin} and~\eqref{eq:defunderlineXboundary}, one has $\lim_{u \rightarrow u_0} r^3 \slashed{curl} \slashed{div} \xblin = 0$, which, by the fact that $r^2\bblin=0$ as $u \rightarrow u_0$ and~\eqref{eq:defeblinid}, shows that $ r^3\slashed{curl}\eblin = r^3\slashed{curl}\elin = 0$ as $u \rightarrow u_0$. 

We next determine $r^2\Omega \blin$ from (\ref{Bianchi3}) and the previously defined quantities, using the boundary condition at infinity $r^3 \blin=-r^3 \bblin$. To determine $\xlin$ we write (\ref{chih3}) as 
\begin{align}
\Omega \slashed{\nabla}_3  \left(r \Omega \xlin -r \underline{X}\right)   &= -2 \Omega^2 r \slashed{\mathcal{D}}_2^\star \elin -  \Omega^2 \left( \Omega \xblin -\underline{X} \right) \, .
\end{align}
Using~\eqref{eq:expxblinX}, the right hand side is integrable and we define $\Omega\xlin$ by imposing the condition $r \Omega \xlin -r \underline{X} \to 0$ when $u\to u_0$. 

We can now determine the last curvature component $\alin$ from (\ref{Bianchi1}) and the fact that $r^3 \alin=r^3 \ablin$ at infinity. Note that with this all Bianchi equations in the $3$-directions hold by construction. 

The definition of $\olin$ follows from (\ref{oml2}) and the condition that $\olin=0$ on $S^2_{u_0,v_0}$.

Consistently with (\ref{Psi0rel}) and the linearised formula for the Gauss curvature~\eqref{gaussfootnote}, we define the asymptotic linearised Gauss curvature $\Klin$ to be 
\begin{align} \label{axe}
  \lim_{u \rightarrow u_0} r^3 \slashed{\nabla} \Klin := \frac{1}{3M} \left(r\slashed{div} \Psilinb_0+2r^3 \slashed{div} \slashed{\mathcal{D}}_2^\star  \slashed{\nabla} R_0 \right) = \lim_{u \rightarrow u_0} \left(-r^3  \slashed{div}  \xblin  + r^3  \slashed{div} \xlin \right) \, ,
\end{align}
where the second identity comes from the previous definitions of $\xblin$ and $\xlin$. This determines $(\Klin)_{\ell \geq 2}$. 
Now recall that we have  $(\frac{\glinto}{\sqrt{\slashed{g}}})_{\ell=0} (u_0,v_0)=\frac{1}{3M}(R_0)_{\ell=0}-\frac{1}{3}\mathfrak{m}$ and $(\frac{\glinto}{\sqrt{\slashed{g}}})_{\ell=1} (u_0,v_0)=0$ by construction, which in particular determines $(\Klin)_{\ell=0,1}$ from formula (\ref{gaussfootnote}). Conversely, formula (\ref{gaussfootnote}) determines $(\frac{\glinto}{\sqrt{\slashed{g}}})_{\ell\geq 2} (u_0,v_0)$ when read as an elliptic equation for this quantity, since $\glinh (u_0,v_0,\theta)=\hat{G}_0$ is prescribed at infinity. Finally, (\ref{stos}) and (\ref{stos2}) determine determine $(\frac{\glinto}{\sqrt{\slashed{g}}})$ and  $\glinh$ on all of $\underline{C}_{v_0}$.

We next use the linearised Gauss equation (\ref{lingauss}) to determine $\otx$ on $\underline{C}_{v_0}$. Note that this shows in particular that $\otx-\otxb=0$ at $(u_0,v_0)$. Differentiating the linearised Gauss equation in $\slashed{\nabla}_3$ we verify that (\ref{dbtc}) has to hold on $\underline{C}_{v_0}$. To show validity of the other Codazzi equations of~\eqref{ellipchi}, which we rewrite as
\begin{align} \label{codm}
r \slashed{div} \Omega \xlin = -\Omega^2 \eblin  -r \Omega \blin + \frac{1}{2} r \slashed{\nabla} \otx \, , 
\end{align}
on $\underline{C}_{v_0}$, we first use the boundary asymptotics on $\blin,\bblin$ and $\otx,\otxb$, the definition of $\eblin$ and the asymptotics of $\Olin_0$, and the validity of the underlined Codazzi equation~\eqref{elid}, to verify that (\ref{codm}) holds on $S^2_{u_0,v_0}$, and then differentiate in $\slashed{\nabla}_3$: Since the resulting expression vanishes after inserting the evolution equations already established, we deduce that (\ref{codm}) indeed holds on $\underline{C}_{v_0}$. 

We have thus determined all quantities from seed data with quantitative estimates and obtained validity of all evolution equations in the $3$-direction as well as elliptic relations on spheres contained in (\ref{stos})--(\ref{Bianchi10}). Obviously, tangential derivatives to $\underline{C}_{v_0}$ applied to these equations also hold. 

Moreover, all relations of Definition \ref{def:assumedata} and all the boundary conditions of Definition~\ref{def:bc} hold as they have been explicitly used to define quantities. To determine finally the transversal derivatives, we use that any geometric quantity (except $\olin$ and $\alin$) satisfies a Bianchi or null structure equation which determines its $\Omega \slashed{\nabla}_4$ derivative in terms of angular (=tangential) derivatives. This consistently constructs all transversal derivatives and ensures validity of the relevant equations by definition.  For the exceptional $\alin$ we can determine transversal derivatives from the Teukolsky equation~\eqref{Teuk1} and the boundary conditions relating $\alin$ and the seed data $\ablin$ at $S^2_{u_0,v_0}$.  For $\olin$ we have can obtain the $\Omega \slashed{\nabla}_4$-derivative from the fact that we can determine the $\slashed{\nabla}_T$-derivative from the boundary condition in Item $2$ and the commuted equation for $\olin$, equation (\ref{oml2}). 

The last claim about the trivial data follows easily from redoing the above proof with trivial data. 
 \end{proof}

 \begin{corollary} \label{cor:puregaugepluskerr}
Consider a smooth seed initial data set 
\[
(\ablin_0; \pblin_0,\Psilinb_0) \ \ , \ \ (\underline{B}_0,  R_0; \underline{H}_0; \Olin_0,\bmlin_0, G^{\ell=1}_0,\hat{G}_0) \ \ , \ \ ({\mathfrak{m}},{\mathfrak{a}}_{-1},{\mathfrak{a}}_0,{\mathfrak{a}}_1)
\]
and assume there exists a solution $\mathscr{S}$ of the system of gravitational perturbations satisfying the boundary conditions, realising the given seed data in the sense of Definition \ref{def:assumedata} and satisfying condition (3) of Proposition \ref{prop:allqdata}.  If the gauge independent part of the solution $\mathscr{S}$ vanishes (which in particular happens if the solution is supported for $\ell \leq 1$), the solution $\mathscr{S}$ is the sum of a pure gauge and a linearised Kerr-AdS solution.
 \end{corollary}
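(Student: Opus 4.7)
The plan is to combine Proposition \ref{prop:pgz}, which exhibits explicit pure gauge and linearised Kerr-AdS solutions renormalising the gauge-dependent and Kerr-AdS components of the seed data, with Proposition \ref{prop:allqdata} and uniqueness from the well-posedness theorem, to force the renormalised solution to vanish identically.

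First I would apply Proposition \ref{prop:pgz} to $\mathscr{S}$ to produce a gauge function $f$, a pair of functions $q=(q_1,q_2)$, and Kerr-AdS parameters $(\mathfrak{m},\vec{\mathfrak{a}})$ such that the modified solution
\[
\tilde{\mathscr{S}} := \mathscr{S} + \mathscr{G}_f + \mathscr{G}_q - \mathscr{K}_{\mathfrak{m},\vec{\mathfrak{a}}}
\]
induces a seed initial data set whose gauge-dependent part and Kerr-AdS part vanish identically. Next I would observe that the gauge-independent seed data of $\tilde{\mathscr{S}}$ coincides with that of $\mathscr{S}$: by Lemmas \ref{lem:exactsol}, \ref{lem:puregaugemetric} and \ref{lem:kerr}, each of the solutions $\mathscr{G}_f$, $\mathscr{G}_q$ and $\mathscr{K}_{\mathfrak{m},\vec{\mathfrak{a}}}$ has $\alin \equiv \ablin \equiv 0$, hence contributes nothing to the quantities $(\ablin_0;\pblin_0,\Psilinb_0)$ built from $\ablin$ (and its transversal derivatives) via the relations in Definition \ref{def:assumedata}. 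In the particular case where $\mathscr{S}$ is supported on $\ell \leq 1$, the quantity $\ablin$ — a symmetric traceless $S^2_{u,v}$ tensor, which is always supported on $\ell \geq 2$ by Section \ref{sec:l01d} — must already vanish, so the hypothesis of vanishing gauge-independent part is automatic.

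Since by assumption the gauge-independent part of $\mathscr{S}$ vanishes, it follows that $\tilde{\mathscr{S}}$ realises identically zero seed initial data. Invoking the last clause of Proposition \ref{prop:allqdata}, all geometric quantities of $\tilde{\mathscr{S}}$ vanish on the initial cone $\underline{C}_{v_0}$. Appealing to uniqueness for the linearised boundary initial value problem from the well-posedness theorem (Theorem \ref{theo:wp}) then yields $\tilde{\mathscr{S}} \equiv 0$ on all of $\mathcal{M}_{int}$. Rearranging and using the linearity of the constructions in Lemmas \ref{lem:exactsol} and \ref{lem:puregaugemetric}, the original solution equals
\[
\mathscr{S} = \mathscr{G}_{-f} + \mathscr{G}_{-q} + \mathscr{K}_{\mathfrak{m},\vec{\mathfrak{a}}},
\]
which is manifestly a sum of a pure gauge solution and a linearised Kerr-AdS solution, as claimed.

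The essentially only nontrivial input is the use of uniqueness for the linearised IBVP, i.e.~the uniqueness part of Theorem \ref{theo:wp}; the rest of the argument is a bookkeeping exercise assembling Propositions \ref{prop:pgz} and \ref{prop:allqdata} together with the observation that the explicit special solutions have vanishing $\alin$ and $\ablin$ and therefore trivial gauge-independent seed data.
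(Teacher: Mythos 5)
Your first three steps — applying Proposition \ref{prop:pgz} to kill the gauge-dependent and Kerr-AdS parts of the seed data, observing that $\mathscr{G}_f$, $\mathscr{G}_q$ and $\mathscr{K}_{\mathfrak{m},\vec{\mathfrak{a}}}$ have vanishing $\alin,\ablin$ and hence leave the gauge-independent part untouched, and invoking the last clause of Proposition \ref{prop:allqdata} to conclude that all geometric quantities of $\tilde{\mathscr{S}}$ vanish on $\underline{C}_{v_0}$ — agree with the paper's argument. The gap is in your final step: you appeal to the uniqueness part of Theorem \ref{theo:wp} to conclude $\tilde{\mathscr{S}}\equiv 0$, but in this paper that uniqueness statement is itself \emph{derived from} the proof of the present corollary (the proof of Theorem \ref{theo:wp} opens with ``The uniqueness part follows from the proof of Corollary \ref{cor:puregaugepluskerr} where it is shown that a zero seed initial data set can only produce the zero solution''). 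As written, your argument is therefore circular: the substantive content of this corollary is precisely to establish that vanishing data on $\underline{C}_{v_0}$ propagates to the vanishing of the solution globally.

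What is actually needed, and what the paper supplies, is a direct propagation argument exploiting the hierarchical structure of the system: Lemma \ref{lem:zeromodevanishing} disposes of the $\ell=0$ mode; the decoupled Teukolsky equations (\ref{Teuk1})--(\ref{Teuk2}) together with uniqueness for the Teukolsky initial-boundary value problem (Theorem 1.4 of \cite{Gra.Hol23}) give $\alin\equiv\ablin\equiv 0$ globally; then $\xlin,\blin$ vanish by integrating their $4$-direction transport equations from data, $\xblin,\bblin$ vanish by first using the boundary conditions on $\mathcal{I}$ and then integrating in the $3$-direction, and the remaining quantities ($\slin$, $\elin-\eblin$, $\elin$, $\eblin$, $\rlin$, $\otx$, $\otxb$, $\olin$, $\olinb$) are eliminated one by one from the Bianchi, Codazzi and null structure equations. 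Note also the essential role of the boundary conditions here: $\xblin$ and $\bblin$ satisfy transport equations only in the ingoing direction, so their vanishing cannot be read off from the data cone alone and must be seeded at $\mathcal{I}$ via the reflection conditions. This chain of arguments is the missing idea in your proposal; without it (or an independently established uniqueness theorem for the full linearised IBVP) the conclusion does not follow.
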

 
 \begin{proof}
 We add a pure gauge and linearised Kerr-AdS solution $\mathscr{G_f}+\mathscr{G}_q - \mathscr{K}_{\mathfrak{m},\vec{\mathfrak{a}}}$ as in Proposition \ref{prop:pgz} to the solution $\mathscr{S}$ such that also the gauge dependent and the Kerr-AdS part of the seed data vanish. By Proposition \ref{prop:allqdata} (which applies since the data induced by both $\mathscr{S}$ and $\mathscr{S}^\prime = \mathscr{S} + \mathscr{G_f}+\mathscr{G}_q - \mathscr{K}_{\mathfrak{m},\vec{\mathfrak{a}}}$ on $\underline{C}_{v_0}$ satisfy $(1)-(4)$) we conclude that all geometric quantities of $\mathscr{S}^\prime$ vanish on $\underline{C}_{v_0}$. In particular, Lemma \ref{lem:zeromodevanishing} applies and the solution $\mathscr{S}^\prime$ must have vanishing $\ell=0$ mode. Moreover, since $\ablin$ and $\alin$ vanish identically on $\underline{C}_{v_0}$, it follows from the decoupled Teukolsky equations (\ref{Teuk1})--(\ref{Teuk2}) and uniqueness of its solutions (see for instance Theorem 1.4 of \cite{Gra.Hol23} that $\ablin$ and $\alin$ vanish identically globally in $\mathcal{M}$. Since $\xlin$ and $\blin$ vanish on $\underline{C}_{v_0}$, they vanish globally by their evolution equation in the $4$-direction. Inserting the boundary conditions we conclude vanishing of $\xblin$ and $\bblin$ on $\mathcal{I}$ and by their evolution equation in the $3$-direction, globally. Similarly, $\slin$ vanishes by the equation (\ref{Bianchi6}) and the vanishing on $\underline{C}_{v_0}$. 
 From (\ref{Bianchi3}) and (\ref{Bianchi8}) and the vanishing of $\slin,\blin,\bblin$ we now conclude $\elin - \eblin = 0$. The equation (\ref{propeta}) and the vanishing of $\elin$  on $\underline{C}_{v_0}$ produces global vanishing of $\elin$ and hence of $\eblin$ individually. Revisiting  (\ref{Bianchi3}) it follows that $\rlin$ vanishes (recall we have established in the beginning that the $\ell=0$ mode vanishes). Codazzi shows the global vanishing of $\otx$ and $\otxb$ and (\ref{oml1}), (\ref{oml2}) that of $\olin$ and $\olinb$ respectively. The proof is complete. 
 \end{proof}

\subsection{Local Well-posedness}

We can finally state the well-posedness theorem for solutions to the system of gravitational perturbations:

\begin{theorem} \label{theo:wp}
Given a smooth seed initial data set
\[
(\ablin_0; \pblin_0,\Psilinb_0) \ \ , \ \ (\Olin_0, \bmlin_0; \underline{B}_0, \underline{H}_0, R_0, G^{\ell=1}_0, \hat{G}_0) \ \ , \ \ (\mathfrak{m},\mathfrak{a}_{-1},\mathfrak{a}_0,\mathfrak{a}_1) \, , 
\]
 there exists a unique solution $\mathscr{S}$ of the system of gravitational perturbations on $\mathcal{M}$ satisfying the boundary conditions (Definition \ref{def:ssogp}), realising the given seed data in the sense of Definition \ref{def:assumedata} and satisfying that $[\slashed{\nabla}_T]^p [r\slashed{\nabla}]^q \olin = 0$ holds on the sphere $S^2_{u_0,v_0}$ for any $p,q \in \mathbb{N}^0$.
\end{theorem}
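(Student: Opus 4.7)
The plan is to combine Proposition \ref{prop:allqdata} on the initial cone with the well-posedness of the Teukolsky system from the companion paper \cite{Gra.Hola} and then reconstruct the full solution hierarchically. First, Proposition \ref{prop:allqdata} converts the seed into a complete, consistent dataset for $\mathscr{S}$ (and all its tangential and transversal derivatives) along $\underline{C}_{v_0}$, with the full set of equations, the boundary conditions at $S^2_{u_0,v_0}$, and the normalisation $[\slashed{\nabla}_T]^p[r\slashed{\nabla}]^q \olin|_{S^2_{u_0,v_0}}=0$ already satisfied at the initial time. The remaining task is to evolve this data off $\underline{C}_{v_0}$ while preserving the equations and the boundary conditions on $\mathcal{I}$, and to establish uniqueness.

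Second, using Lemmas \ref{lem:exactsol}, \ref{lem:puregaugemetric}, \ref{lem:kerr} together with Proposition \ref{prop:pgz}, I would subtract an explicit pure gauge solution $\mathscr{G}_f+\mathscr{G}_q$ and a linearised Kerr-AdS solution $\mathscr{K}_{\mathfrak{m},\vec{\mathfrak{a}}}$---all globally smooth solutions on $\mathcal{M}$---to reduce the problem to a seed whose gauge-dependent and Kerr-AdS parts vanish. Since $\alin, \ablin$ are symmetric traceless, the surviving gauge-independent seed $(\ablin_0;\pblin_0,\Psilinb_0)$ is automatically supported for $\ell \geq 2$. For this reduced seed, the companion paper \cite{Gra.Hola} provides the unique smooth global solution of the coupled boundary initial value problem for the Teukolsky equations (\ref{Teuk1})--(\ref{Teuk2})---equivalently, of the Regge--Wheeler equation (\ref{eq:RW}) for $\Psilin^D, \Psilin^R$ with the appropriate Dirichlet-type boundary conditions---yielding $(\alin, \ablin)$ on all of $\mathcal{M}$ with the required regularity at $\mathcal{H}^+$ and the asymptotic behaviour of Definition \ref{def:aAdSlin} at $\mathcal{I}$.

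Third, I would reconstruct the remaining quantities of $\mathscr{S}$ from $\underline{C}_{v_0}$ by integrating the null-structure and Bianchi equations as outgoing $v$-transports: in order, $\xlin$ from (\ref{tchi}), $\blin$ from (\ref{Bianchi2}), then $\rlin$, $\slin$ from (\ref{Bianchi4}), (\ref{Bianchi6}), $\elin$ from (\ref{propeta}), the linearised metric components $\glinto, \glinh, \Olin, \bmlin$ from (\ref{stos})--(\ref{bequat}) and (\ref{oml3}), then $\eblin$ from $2\slashed{\nabla}\Olin=\elin+\eblin$, the lapse derivatives $\olin, \olinb$ from (\ref{oml2}), (\ref{oml1}), and finally $\xblin, \bblin, \otx, \otxb$ by the corresponding outgoing equations. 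Each transport is a regular linear ODE in $v$ with smooth coefficients on $\mathcal{M}$, so smooth solutions exist and are unique. The equations not used as defining transports---namely the $3$-direction null-structure and Bianchi equations, the elliptic Codazzi and Gauss relations on $S^2_{u,v}$, and the propagation of the boundary conditions (\ref{bcl1})--(\ref{bcl4}) into $v>v_0$---are then verified a posteriori by writing each such failed identity $\mathcal{E}=0$ as a quantity that satisfies, modulo the equations that \emph{are} imposed, a homogeneous $v$-transport equation, with vanishing data on $\underline{C}_{v_0}$ provided by Proposition \ref{prop:allqdata}; Gr\"onwall then gives $\mathcal{E}\equiv 0$.

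Finally, uniqueness follows by applying the construction to the difference of two solutions realising the same seed and satisfying the $\olin$ normalisation: Proposition \ref{prop:allqdata} forces the difference to vanish on $\underline{C}_{v_0}$; uniqueness of the Teukolsky system in \cite{Gra.Hola} then yields $\alin = \ablin = 0$ on $\mathcal{M}$; and the hierarchical argument already used in the proof of Corollary \ref{cor:puregaugepluskerr} propagates the vanishing to all remaining quantities. The main technical obstacle I anticipate is the propagation-of-constraints step combined with the propagation of the boundary conditions: verifying that the boundary conditions chosen for the Teukolsky system in \cite{Gra.Hola} are precisely compatible with (\ref{bcl1})--(\ref{bcl4}) for the reconstructed curvature components, and that the $r$-weights in the hierarchical reconstruction are consistent with the regularity demanded by Definition \ref{def:aAdSlin} at $\mathcal{I}$.
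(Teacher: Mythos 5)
Your overall architecture (reduce to $\ell\geq 2$ via Proposition \ref{prop:pgz}, use Proposition \ref{prop:allqdata} on $\underline{C}_{v_0}$, invoke the Teukolsky well-posedness for $\alin,\ablin$, reconstruct hierarchically, and prove uniqueness via the argument of Corollary \ref{cor:puregaugepluskerr}) matches the paper's. But there is a genuine gap in the reconstruction step: you propose to obtain \emph{all} remaining quantities, including $\xblin$, $\bblin$, $\eblin$, $\olin$, by outgoing $v$-transport from $\underline{C}_{v_0}$, and to verify the boundary conditions (\ref{bcl1})--(\ref{bcl4}) a posteriori as propagated constraints with vanishing data on $\underline{C}_{v_0}$. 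This cannot work. The problem is an initial-\emph{boundary} value problem: the data on $\underline{C}_{v_0}$ together with the interior equations do not determine the solution, and the boundary conditions (\ref{betaboundary}), (\ref{bcl4}) and $r\xlin=r\xblin$ on $\mathcal{I}$ are genuine boundary \emph{data}, not consequences. Concretely, if $\bblin$ and $\xblin$ are defined by the outgoing equations (\ref{Bianchi8}) and (\ref{chih3b}), their traces on each sphere $S^2_{u,u}\subset\mathcal{I}$ are uncontrolled outputs of the transport, and no homogeneous $v$-transport argument starting from $\underline{C}_{v_0}$ reaches them: each outgoing cone meets $\mathcal{I}$ only in its endpoint sphere. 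Moreover, the verification of the ingoing constraints --- (\ref{tchi}) for $\xblin$, (\ref{Bianchi9}) for $\bblin$, and especially (\ref{Bianchi10}) for $\ablin$ (which is \emph{not} a defining equation once $\ablin$ comes from the Teukolsky system) --- requires propagating the corresponding error quantities in the $3$-direction, whose only admissible data surface is $\mathcal{I}$ itself; showing the errors vanish there requires precisely the boundary conditions for $\bblin,\xblin$ you have not established. The paper resolves this by inverting the direction: $\xblin$ and $\bblin$ are constructed by \emph{ingoing} transport of (\ref{tchi}) and (\ref{Bianchi9}) \emph{from the conformal boundary}, with the boundary conditions $r^3\bblin=-r^3\blin$ and $r\xblin=r\xlin$ imposed as initial conditions; this simultaneously enforces the boundary conditions and supplies the boundary surface on which (\ref{Bianchi10}) and the remaining constraints can be checked before propagating them inward. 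This is not a cosmetic choice but the structural point of the AdS construction.

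A secondary, related problem is your ordering for the torsions and lapse: (\ref{oml2}) for $\olin$ and (\ref{propeta}) for $\eblin$ are \emph{ingoing} equations, so they too would need boundary data on all of $\mathcal{I}$ (not just the normalisation of $\olin$ at the single corner sphere $S^2_{u_0,v_0}$), and your chain $\Olin\to\eblin\to\olin\to\Olin$ is circular since $\partial_v\Olin=\olin$. The paper avoids this by determining $\elin,\eblin$ from an elliptic div--curl system on the spheres ($\slashed{curl}\,\elin:=\slin$, $\slashed{curl}\,\eblin:=-\slin$, with the divergences read off from (\ref{chih3}), (\ref{chih3b})), and then \emph{defining} $\Olin$ by $2\slashed{\nabla}\Olin=\elin+\eblin$ and $\olin:=\Omega\slashed{\nabla}_4\Olin$, $\olinb:=\Omega\slashed{\nabla}_3\Olin$, rather than by transport. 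Your proposal would need to be restructured along these lines to close.
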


\begin{remark}
With more work one can show that the solution constructed in the above theorem is aAdS in the linearised sense according to Definition \ref{def:aAdSlin}. In particular, we will establish uniform bounds on all quantities in (\ref{lin1}) and (\ref{lin2}) as part of our main theorem. See also Remarks \ref{rem:betterbounds} and \ref{rem:typesol} above.
\end{remark}

\begin{proof}\footnote{The authors would like to thank Leonhard Kehrberger for discussions and suggesting the argument with $\sigma^\prime$ in Step 2 below.}
The uniqueness part follows from the proof of Corollary \ref{cor:puregaugepluskerr} where it is shown that a zero seed initial data set can only produce the zero solution. We also know from the proof of Proposition \ref{prop:pgz} that the given seed data restricted to $\ell \leq 1$ agrees with the seed data induced by an appropriate pure gauge solution supported on $\ell=1$ plus the seed data of a linearised Kerr-AdS solution which establishes existence for $\ell \leq 1$. 

In summary, we only need to construct the solution for $\ell \geq 2$ from seed data supported for $\ell \geq 2$.

{\bf Step 1: \emph{Constructing} geometric quantities in $\mathcal{M}$, Part I.}
We first construct from the seed data set all quantities on $\underline{C}_{v_0}$ as in the proof of Proposition \ref{prop:allqdata}. This in particular determines smooth $\ablin$, $\alin$ on $\underline{C}_{v_0}$ such that in particular the master energy defined in (\ref{masterdata}) is finite for all $n$. We can hence apply the well-posedness theorem for the Teukolsky system (\ref{Teuk1})--(\ref{Teuk2}) (cf.~Theorem 1.4 in \cite{Gra.Hol23}) and obtain smooth $\ablin$ and $\alin$ globally on $\mathcal{M}$. We next determine $\xlin$ and $\blin$ globally from their transport equation in the $4$-direction, i.e.~integrating  (\ref{tchi}) and (\ref{Bianchi2}) from data. We then determine $\xblin$ and $\bblin$ globally by integrating the transport equations (\ref{tchi}) and (\ref{Bianchi9}) from the boundary (using the boundary conditions as initial conditions for $\xblin$ and $\bblin$, i.e.~$r^3 \bblin= -r^3 \blin$ and $r\xlin=r\xblin$ on $\mathcal{I}$). With this, (\ref{Bianchi2}) holds by definition and it is easy to see that (\ref{Bianchi1}) also holds because it holds on data and $\Omega \slashed{\nabla}_4$ of this equation vanishes by the fact that the Teukolsky equation~\eqref{Teuk1} holds for $\alin$ and the equations (\ref{tchi}), (\ref{Bianchi2}) hold in the $4$-direction by construction. Similarly (\ref{Bianchi10}) holds on the boundary (by the fact that (\ref{Bianchi1}) holds there and the boundary conditions imposed) and its $\Omega \slashed{\nabla}_3$-derivative is zero by the validity of the Teukolsky equation and (\ref{Bianchi9}) and (\ref{tchi}). Observe also that all quantities constructed are smooth up to the boundary. This will continue to be true for the quantities constructed below. 

We next define the quantity $\sigma$ by integrating (\ref{Bianchi6}) from initial data. Note that with this definition, $r^3\slin$ has a finite limit on the boundary (which we do not know vanishing of yet). We then set
$\slashed{curl} \elin := \slin$ and $\slashed{curl} \eblin := -\slin$ so that (\ref{curleta}) is satisfied. On the other hand, we determine $\slashed{div} \elin$ and $\slashed{div} \eblin$ globally from (\ref{chih3}), (\ref{chih3b}) by taking the $\slashed{div} \slashed{div}$ of these equations and observing that $\slashed{div} \slashed{div}\slashed{\mathcal{D}}_2^\star \xi = \left( -\frac{1}{2} \slashed{\Delta} - K \right) \slashed{div} \xi$ with $\left( \slashed{\Delta} + 2K \right) $ having trivial kernel on the space of functions with $\ell \geq 2$. This determines $\elin$ and $\eblin$ uniquely by standard elliptic theory and one can also show that $r (\elin + \eblin) = 0$ holds on the conformal boundary $\mathcal{I}$.\footnote{To see the latter, follow the proof of Proposition \ref{prop:improvedchidifference} below to establish (\ref{diff1}), from which the claim follows after subtracting (\ref{chih3}) and (\ref{chih3b}) multiplied by $r$.}

{\bf Step 2. Verifying the $\slashed{curl}$-equations.} With the quantities defined we can already verify some of the equations. We claim that the $\slashed{curl}$ applied to the Codazzi equation (\ref{ellipchi}) for $\hat{\chi}$ holds. To see this, note that 
\[
\Omega \slashed{\nabla}_4 \left( r^2 \slashed{curl} \slashed{div} \Omega^{-1}r^2\xlin + \frac{1}{r} (r \slashed{curl}) \Omega^{-1}r^4 \blin  - r^3 \slin \right) = 0 \, .
\]
by the propagation equations we have defined in the $4$-direction. Since the quantity in round brackets also vanishes on $\underline{C}_{v_0}$ we conclude that it is zero globally and hence $\slashed{curl}$ applied to the Codazzi equation (\ref{ellipchi}) for $\hat{\chi}$ holds. 
We similarly conclude that the $\slashed{curl}$ applied to (\ref{Bianchi3}) holds: Indeed, defining 
\[
B:=\Omega \slashed{\nabla}_3 \Big(  \frac{\Omega^2}{r^2} r\slashed{curl} \frac{r^4 \blin}{\Omega} \Big) +r^3 \Omega^2 \slashed{\Delta}\slin - 3\rho r^3 \Omega^2 \slin 
\]
we derive (using the transport equations in the $4$-direction for $\blin$ and $\slin$ as well as (\ref{Bianchi1}) and the just established $\slashed{curl}$ applied to the Codazzi equation (\ref{ellipchi}) for $\hat{\chi}$) an equation of the form $\Omega \slashed{\nabla}_4 B = -\left(\frac{2}{r} - \frac{6M}{r^2}\right)B$. Since $B=0$ on $\underline{C}_{v_0}$ we conclude that $B=0$ globally.

We next define an auxiliary quantity $\slin^\prime$ such that $\lim_{u\to v}r^3\slin^\prime = \lim_{u\to v}r^3\slin$ and 
\begin{align} \label{auxprop}
\Omega \slashed{\nabla}_3 (r^3 \slin^\prime) = -\slashed{curl} r^3 \bblin \Omega \, .
\end{align}
Note that $\slin^\prime$ agrees with $\slin$ on $\underline{C}_{v_0}$ and on $\mathcal{I}$. We verify that 
\[
\Omega \slashed{\nabla}_3 \Big( r^2 \slashed{curl} \slashed{div} \Omega^{-1}\hat{\underline{\chi}}r^2 - \frac{1}{r} (r \slashed{curl}) \Omega^{-1}r^4 \underline{\beta}  - r^3 \slin^\prime \Big) = 0 \, , 
\]
and since the quantity vanishes on $\mathcal{I}$ (by the previously imposed/derived boundary relations and the Codazzi equation for $\xlin$), the expression in round brackets vanishes globally. Defining 
\begin{align} \label{Budef}
\underline{B}:=\Omega \slashed{\nabla}_4 \left(  \frac{\Omega^2}{r^2} r\slashed{curl} \frac{r^4 \underline{\blin}}{\Omega} \right) +r^3 \Omega^2 \slashed{\Delta}\slin^\prime - 3\rho r^3 \Omega^2 \slin^\prime \, ,
\end{align}
we first check that $\Omega \slashed{\nabla}_3 \left(\frac{r^2}{\Omega^2} \underline{B}\right)=0$ holds (a computation similar to the one for $B$ above) and then verify that $\underline{B}$ vanishes at infinity (which follows from $T(r^4 \slashed{curl} \blin + r^4 \slashed{curl} \bblin)=0$ which is in turn a consequence of how we defined $\bblin$). The two observations allow us to conclude $\underline{B}=0$ globally. We can finally conclude that $\slin^\prime$ satisfies the Regge-Wheeler equation~\eqref{eq:RW} by applying $\Omega \slashed{\nabla}_4$ to (\ref{auxprop}) and inserting (\ref{Budef}). In summary, $\slin$ and $\slin^\prime$ satisfy the same wave equation with the same data and boundary condition at $\mathcal{I}$ and hence agree globally, \emph{i.e.} $\slin=\slin^\prime$ on $\mathcal{M}$. In particular, Bianchi equation~\eqref{Bianchi7} holds by~\eqref{auxprop} and, adding the two Bianchi equations~\eqref{Bianchi6}, \eqref{Bianchi7}, one has that $\lim_{u\to v}T(r^3\slin) = 0$. Thus, using that on the initial sphere at infinity $S_{u_0,v_0}$ $r^3\slin$ vanishes, one has that $r^3\slin$ vanishes globally on $\mathcal{I}$. At this point we have ensured the validity of the Bianchi equations (\ref{Bianchi1}), (\ref{Bianchi2}), (\ref{Bianchi6}), (\ref{Bianchi7}), (\ref{Bianchi9}), (\ref{Bianchi10}) as well as the $\slashed{curl}$ of (\ref{Bianchi3}) and -- by the vanishing of~\eqref{Budef} and $\slin^\prime=\slin$ -- (\ref{Bianchi8}) respectively. By construction also the equations (\ref{curleta}) and the $\slashed{curl}$ of (both of) (\ref{ellipchi}) hold.  One easily checks that the $\slashed{curl}$ of (\ref{propeta}) holds. Moreover, we check that $\slashed{curl} \slashed{div}$ applied to (\ref{chih3}) and (\ref{chih3b}) respectively hold (inserting the $\slashed{curl}$ of (\ref{ellipchi}) and using the propagation equations already established). This means that (\ref{chih3}) and (\ref{chih3b}) hold unconditionally since the $\slashed{div} \slashed{div}$ part of these holds by construction. 

{\bf Step 3: \emph{Constructing} geometric quantities in $\mathcal{M}$, Part II.}
We now set $2\slashed{\nabla}(\Olin) := \elin + \eblin$, which is well defined as $\slashed{curl} (\elin + \eblin) = 0$ by our definition of $\elin, \eblin$ above. Moreover, we set $\olin = \Omega \slashed{\nabla}_4 (\Olin)$ and $\olinb = \Omega \slashed{\nabla}_3 (\Olin)$. Next we define $\otx$ by (\ref{ellipchi}) (which is well defined as we have already verified that the $\slashed{curl}$ of this equation holds) and $\otxb$ directly integrating the evolution equation (\ref{dtcb}) from data. We define the quantity $\rho$ by integrating (\ref{Bianchi4}). Finally, the metric quantities $\glinto$, $\glinh$, $\bmlin$ are defined by integrating their equation in the $4$-direction from data, i.e.~(\ref{stos}), (\ref{stos2}) and (\ref{bequat}).

{\bf Step 4. Verifying the remaining equations.} 
Differentiating (\ref{chih3}) with respect to $\Omega \slashed{\nabla}_4$ and (\ref{chih3b}) with respect to $\Omega \slashed{\nabla}_3$ shows that (\ref{propeta}) must also both hold globally (recall we are on $\ell \geq 2$). From our definition of $\olin$ and $\olinb$ also the equations
\begin{align}
\Omega \slashed{\nabla}_4 \left(r^2 \eblin\right) = 2 r^2 \slashed{\nabla} \olin + r^2\Omega \blin \ \ \ , \ \ \ \Omega \slashed{\nabla}_3 \left(r^2 \elin\right) = 2 r^2 \slashed{\nabla} \olinb - r^2\Omega \bblin
\end{align}
hold. Using~\eqref{propeta} we can also conclude that the $\slashed{div}$ of (\ref{Bianchi3}) (and hence (\ref{Bianchi3}) unconditionally since we already verified the $\slashed{curl}$-equation) must hold globally because it holds on data and is propagated in the $4$-direction.
 
 Differentiating the Codazzi equation for $\xlin$ with $\Omega \slashed{\nabla}_3$ and $\Omega \slashed{\nabla}_4$ shows that also (\ref{dbtc}) and (\ref{uray}) hold. Differentiating (\ref{uray}) with respect to $\Omega \slashed{\nabla}_3$ shows that (\ref{oml2}) holds and (\ref{oml1}) follows from the fact that $\Omega_4 \slashed{\nabla} \olinb = \Omega \slashed{\nabla}_3 \olin$ by the way we defined $\olin$, $\olinb$. 
One now verifies that 
\begin{align}
D:= \Omega \slashed{\nabla}_3 (r^3 \rho +r^3 \slashed{div} \elin) + 2 r^2 \Omega \slashed{\Delta} \olinb - 3M \otxb = 0 
 \end{align}
 holds on the cone $\underline{C}_{v_0}$ and $\Omega \slashed{\nabla}_4 D = 0$, which implies $D=0$ in $\mathcal{M}$, which in turn implies that (\ref{Bianchi5}) holds. We next verify that (\ref{Bianchi8}) holds by noting that it holds on $\mathcal{I}$ by the boundary conditions and is propagated in the $3$-direction. Indeed, we have
 \[
 \Omega \slashed{\nabla}_4 (r^2 \Omega \bblin) = - \Omega \slashed{\nabla}_3 (r^2 \Omega \blin) 
 \]
on $\mathcal{I}$ after replacing $\Omega \slashed{\nabla}_4 = 2 \slashed{\nabla}_T - \Omega \slashed{\nabla}_3$ and inserting the Bianchi equations (\ref{Bianchi2}), (\ref{Bianchi9}) as well as the boundary conditions for $\alin, \ablin$ and $\blin$, $\bblin$. Since also $r \elin = -r \eblin$ and $r^3 \slin=0$ hold on $\mathcal{I}$ we see that (\ref{Bianchi8}) is equivalent to (\ref{Bianchi3}) on $\mathcal{I}$. Differentiating now $E:= \Omega \slashed{\nabla}_4 (r^2 \Omega \bblin) - r^2 \Omega^2 \slashed{\mathcal{D}}_1^\star(\rlin,\slin) - 3 \rho r^2 \Omega^2 \eblin$ in $\Omega \slashed{\nabla}_3$ one obtains $\Omega \slashed{\nabla}_3 \left(\frac{r^2}{\Omega^2} E\right) = 0$ after inserting the equations that have already shown to hold. As $E=0$ on $\mathcal{I}$ we conclude $E=0$ globally. 
 
 Now the $\xblin$-Codazzi equation (\ref{ellipchi}) and the equation (\ref{vray}) can be verified by noting that they hold on $\underline{C}_{v_0}$ and are propagated in the $4$-direction.  
 
We have now verified that our constructed solution satisfies all equations of the system of gravitational perturbations except the Gauss equation (\ref{lingauss}) and the $3$-equations for the metric components. But all these equations hold on $\underline{C}_{v_0}$ and applying $\Omega \slashed{\nabla}_4$ and inserting the already established equations one verifies they propagate to hold in all of $\mathcal{M}$. This finishes the proof of the proposition. 
\end{proof}

\subsection{The initial data normalisation}

Consider a given seed data with associated solution $\mathscr{S}$ from Theorem \ref{theo:wp}.
The main objective of this section is to construct a pure gauge solution $\mathscr{G}$ from the given seed data which 
when added to $\mathscr{S}$ achieves a certain normalisation of the solution $\mathscr{S} + \mathscr{G}$ at the horizon. This normalisation will be crucial in the main theorem. We begin by defining the normalisation followed by a proposition establishing that it can be achieved.

\begin{definition} \label{def:datanormalised}
Consider a smooth seed initial data set 
\[
(\ablin_0; \pblin_0,\Psilinb_0) \ \ ,  \ \ (\underline{B}_0,  R_0; \underline{H}_0; \Olin_0,\bmlin_0, G^{\ell=1}_0,\hat{G}_0) \ \ , \ \  (\mathfrak{m},\mathfrak{a}_{-1},\mathfrak{a}_0,\mathfrak{a}_1),
\]
as in Definition \ref{def:seeddata} and let $\mathscr{S}$ be the unique solution of the system of gravitational perturbations arising from Theorem \ref{theo:wp}. We say that $\mathscr{S}$ is initial data normalised if the following holds for $\mathscr{S}$ on the ingoing initial cone $\underline{C}_{v_0=0}$:
\begin{itemize}
\item $\bmlin=0$ and $\Olin = 0$ on $v=v_0=0$,
\item $\hat{G}_0$ and $G_0^{\ell=1}=0$ and $\underline{H}_0=0$,
\item $\slashed{div} \elin +\rlin = 0$ and $\otx=0$ on the horizon sphere $S^2_{\infty,v_0}$.
\end{itemize}
Moreover, we call the solution initial data normalised with vanishing $\ell=0,1$ modes if in addition the $\ell=0,1$ modes of all geometric quantities of $\mathscr{S}$ vanish.
\end{definition}

The point is that we can always achieve the initial data normalisation:
\begin{proposition} \label{prop:gotoinitialgauge}
Given a solution $\mathscr{S}$ arising from a smooth seed initial data set as in Theorem \ref{theo:wp} there exists a pure gauge solution $\mathscr{G}_f+\mathscr{G}_q$ (computable  in terms of the seed data)  and a linearised Kerr-AdS solution $\mathscr{K}_{\mathfrak{m},\vec{\mathfrak{a}}}$ such that $\mathscr{S} + \mathscr{G}_f + \mathscr{G}_q - \mathscr{K}_{\mathfrak{m},\vec{\mathfrak{a}}}$ is an initial data normalised solution with vanishing $\ell=0,1$ modes. 
\end{proposition}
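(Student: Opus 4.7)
The plan is to follow the strategy of Proposition \ref{prop:pgz} closely, exploiting Corollary \ref{cor:puregaugepluskerr} to handle the $\ell \leq 1$ modes and redirecting the sphere-function gauge parameters of Step 2 of that proof to target the horizon-sphere conditions required by Definition \ref{def:datanormalised}. The first observation is that the gauge-independent seed data $(\ablin_0, \pblin_0, \Psilinb_0)$ are symmetric traceless tensors, which by Section \ref{sec:l01d} are necessarily supported on $\ell \geq 2$; hence the $\ell \leq 1$ projection $\mathscr{S}_{\ell \leq 1}$ has vanishing gauge-independent part. Corollary \ref{cor:puregaugepluskerr} therefore yields gauge functions $f^{(0)}, q^{(0)}$ and Kerr-AdS parameters $(\mathfrak{m}_0, \vec{\mathfrak{a}}_0)$ such that $\mathscr{S}_{\ell \leq 1} = \mathscr{G}_{f^{(0)}} + \mathscr{G}_{q^{(0)}} - \mathscr{K}_{\mathfrak{m}_0, \vec{\mathfrak{a}}_0}$. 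This contribution can be absorbed into the overall gauge + Kerr-AdS correction so that the resulting solution has vanishing $\ell = 0, 1$ modes.

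For the $\ell \geq 2$ part, I would repeat the construction of the gauge function $f$ from Step 2 of Proposition \ref{prop:pgz}: the prescription $\Olin = 0$ on $\underline{C}_{v_0}$ determines $f(u,\theta)$ up to three parameters $\lambda_0, \lambda_1, \lambda_2 \in C^\infty(S^2)$, and the verification in that proof shows $\Olin$ vanishes on the initial cone regardless of the choice of the $\lambda_i$. Instead of using the $\lambda_i$ to kill $\underline{B}_0, R_0, \underline{H}_0$ at the infinity sphere, I would use $\lambda_2$ (together with a compensating $\slashed{\Delta} \lambda_0$ term, precisely as in the last identity of the proof of Proposition \ref{prop:pgz}) to enforce $\underline{H}_0 = 0$, and use the remaining two freedoms $\lambda_0, \lambda_1$ to enforce the two horizon-sphere conditions $\otx = 0$ and $\slashed{div}\, \elin + \rlin = 0$ on $S^2_{\infty,v_0}$. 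Step 3 of Proposition \ref{prop:pgz} (the construction of $\mathscr{G}_q$ from the seed data $\bmlin_0, \hat{G}_0, G^{\ell=1}_0$) is then carried out unchanged; since $\mathscr{G}_q$ leaves all linearised Ricci coefficients and curvature components invariant (Lemma \ref{lem:puregaugemetric}), it cannot disturb the horizon conditions just achieved.

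The main obstacle is verifying that the $2 \times 2$ linear system imposed on $(\lambda_0, \lambda_1)$ by the two horizon conditions is uniquely solvable on each $\ell \geq 2$ spherical harmonic mode. For this I would explicitly evaluate the pure gauge contributions of $\mathscr{G}_f$ to $\otx$ and $\slashed{div}\, \elin + \rlin$ at the horizon sphere using the formulas of Lemma \ref{lem:exactsol}, together with the fact that $\Omega^2 \to 0$ at $\mathcal{H}^+$ while $\Omega^2 f$ extends regularly to $\lambda_1 r_+ + \lambda_2$ there. The resulting coefficient operators are finite linear combinations of the identity and $\slashed{\Delta}$ with coefficients polynomial in $M, k^2, r_+$; non-degeneracy on $\ell \geq 2$ modes then reduces to the fact that the eigenvalues $\ell(\ell+1)$ of $-\slashed{\Delta}$ are bounded away from zero for $\ell \geq 2$, which combined with the estimates of Section \ref{sec:elliptic} completes the argument.
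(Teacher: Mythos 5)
Your proposal is correct and follows essentially the same route as the paper: the paper likewise uses the gauge function $(\hat\lambda_1 r + \hat\lambda_0\Omega^2 + \hat\lambda_2)/\Omega^2$ from Lemma \ref{lem:exactsol}, preserves $\Olin=0$ on $\underline{C}_{v_0}$ and $\underline{H}_0=0$, and solves the same algebraic system on $\ell\geq2$ modes for the two horizon-sphere conditions $\otx=0$ and $\slashed{div}\,\elin+\rlin=0$. The only organisational difference is that the paper first applies Proposition \ref{prop:pgz} in full (also killing $\underline{B}_0$ and $R_0$, which Definition \ref{def:datanormalised} does not actually require) and then adds a second gauge function constrained by $\Delta_{\mathbb{S}^2}\hat\lambda_0=\hat\lambda_2$, whereas you reallocate $\lambda_0,\lambda_1$ directly; both allocations lead to the same invertible system.
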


\begin{proof}
The proof is a small variation of the proof of Proposition \ref{prop:pgz}. From Proposition \ref{prop:pgz} there exist $\mathscr{G}_{\tilde{f}}+\mathscr{G}_q$ and a linearised Kerr-AdS solution $\mathscr{K}_{\mathfrak{m},\vec{\mathfrak{a}}}$ such that $\mathscr{S}^\prime:=\mathscr{S} +\mathscr{G}_{\tilde{f}}+\mathscr{G}_{\tilde{q}}- \mathscr{K}_{\mathfrak{m},\vec{\mathfrak{a}}}$ has trivial seed data and in particular vanishing $\ell=0,1$ modes. We now add to this another $\mathscr{G}_{\hat{f}}$ (supported for $\ell \geq 2$) 
generated by 
\[
\hat{f} (u,\theta) = \frac{ \hat{\lambda}_1 r(u,v_0) + \hat{\lambda}_0\Omega^2 (u,v_0) + \hat{\lambda}_2}{\Omega^2(u,v_0)}
\]
(inducing $\hat{f}_u (u,\theta)=\hat{f}(u,\theta)$ and $\hat{f}_v (v,\theta)=\hat{f}(v,\theta)$ in Lemma \ref{lem:exactsol}), where the $\hat{\lambda}_i$ are functions on the unit sphere satisfying the relation $ \Delta_{\mathbb{S}^2} \hat{\lambda}_0 -  \hat{\lambda}_2=0$. It is clear from the proof of Proposition \ref{prop:pgz} that $\mathscr{S}^\prime+ \mathscr{G}_{\hat{f}}$  still satisfies $\Olin=0$ on $v=v_0$ and $\underline{H}=0$ on $S^2_{u_0,v_0}$ (hence by (\ref{vray}) on all of $\underline{C}_{v_0}$). In view of 
\[
\Omega^2 \hat{f}_u (\infty,\theta) = \hat{\lambda}_1 r_+ + \hat{\lambda}_2 \ \ , \ \ \frac{1}{\Omega^2 r} \partial_u \left(\frac{\Omega^2 \hat{f}_u}{r} \right)(\infty,\theta) = \hat{\lambda}_0 \frac{1}{r_+} \left[ -k^2 -\frac{4M}{r^3} + \frac{1}{r^2}  \right] 
\]
and $f_v(v_0,v_0)= \hat{\lambda}_0$, we now compute from Lemma \ref{lem:exactsol} the horizon sphere relations
\begin{align}
\otx \big|_{\mathscr{S}^\prime + \mathscr{G}_{\hat{f}}} (\infty,v_0,\theta) = \otx \big|_{\mathscr{S}^\prime} (\infty,v_0,\theta) + \frac{2}{r_+^2} \left[ \left(1-\frac{4M}{r_+} -  k^2 r_+^2 \right) (\hat{\lambda}_1 r_+ + \hat{\lambda}_2) + {\Delta}_{S^2} (\hat{\lambda}_1 r_+ + \hat{\lambda}_2) \right] \, , \nonumber 
\end{align}
\begin{align}
(\slashed{div} \elin + \rlin)  \big|_{\mathscr{S}^\prime + \mathscr{G}_{\hat{f}}} (\infty,v_0,\theta) = (\slashed{div} \elin + \rlin)  \big|_{\mathscr{S}^\prime} (\infty,v_0,\theta) -\frac{6M}{r_+^4} (\hat{\lambda}_1 r_+ + \hat{\lambda}_2) +(\Delta_{S^2} \hat{\lambda}_0)\frac{1}{r_+} \left[ -k^2 -\frac{4M}{r_+^3} + \frac{1}{r_+^2} \right] \, . \nonumber 
\end{align}
Recalling that we can eliminate $\hat{\lambda}_0$ from the relation $ \Delta_{S^2} \hat{\lambda}_0 -  \hat{\lambda}_2=0$ it is an algebraic exercise to determine $\hat{\lambda}_1$ and $\hat{\lambda}_2$ such that both $(\slashed{div} \elin + \rlin)  \big|_{\mathscr{S}^\prime + \mathscr{G}_{\hat{f}}} (\infty,v_0,\theta) =0$ and $\otx \big|_{\mathscr{S}^\prime + \mathscr{G}_{\hat{f}}} (\infty,v_0,\theta) = 0$ hold. While $\mathscr{G}_{\hat{f}}$ might have altered ${\hat{G}}_0$ we can simply repeat Step 3 of Proposition \ref{prop:pgz} and add a $\mathscr{G}_{\hat{q}}$ which ensures that $\mathscr{S}^\prime + \mathscr{G}_{\hat{f}} + \mathscr{G}_{\hat{q}}$ satisfies all of the desired properties. Setting $\mathscr{G}_f := \mathscr{G}_{\tilde{f}}+ \mathscr{G}_{\hat{f}}$ and $\mathscr{G}_q := \mathscr{G}_{\tilde{q}}+ \mathscr{G}_{\hat{q}}$ we are done.
\end{proof}

\section{The main results}  \label{sec:maintheorem}

We can finally give a precise formulation of our main results. In Section \ref{sec:teukolsky} we first recall the results from our companion paper \cite{Gra.Hola}  where boundedness and decay bounds on the Teukolsky system (\ref{Teuk1})--(\ref{Teuk2}) have been obtained, independently of the system of gravitational perturbations. This is Theorem \ref{theo:teukolsky} below. These bounds will play a key role in proving our main theorem, which is stated in Section \ref{sec:mtheo} as Theorem \ref{mtheo:boundedness}.

\subsection{Estimates for the gauge invariant quantities: The Teukolsky equations} \label{sec:teukolsky}

Estimates for solutions to the Teukolsky system (\ref{Teuk1})--(\ref{Teuk2}) satisfying the boundary conditions (\ref{bcl1})--(\ref{bcl1b}) have been obtained in our companion paper \cite{Gra.Hola}. We first formulate these results in a form most suitable for the present paper. We recall that in \cite{Gra.Hola}, the Teukolsky equations were expressed in an equivalent form as equations for spin weighted functions instead of symmetric traceless tensors. We briefly recall that equivalence and refer the reader for instance to Section 6 of \cite{Hol.Sha16} for more details.

\subsubsection{Spin-weighted functions vs.~symmetric traceless tensors}

Given the tensors $\alin$ and $\ablin$ and a local orthonormal frame $e_1,e_2$ on the sphere we define the complex scalars
\begin{align} \label{swr}
\alpha^{[-2]} = 2 r^4 \Omega^2  \left( \ablin (e_1,e_1) - i \ablin (e_1,e_2) \right)\ \ \ , \ \ \ {\alpha}^{[+2]} = 2 \Omega^{-2}  \left(\alin(e_1,e_1) {+} i \alin(e_1,e_2)\right) \, ,
\end{align}
which transform like spin-weighted functions of weight $\pm 2$ under a change of orthonormal frame on $S^2$. For the specific frame $e_1 =\frac{1}{r} \partial_\theta$ and $e_2 = \frac{1}{r \sin \theta} \partial_\varphi$ one obtains the Teukolsky equation for $\alpha^{[\pm 2]}$ as stated in \cite{Gra.Hola} by expressing the equations (\ref{Teuk1}) and (\ref{Teuk2}) in frame components. We also note in the notation of  \cite{Gra.Hola} the relations
\begin{align} \label{swr}
\widetilde{\alpha}^{[-2]} = 2 r \Omega^2  \left( \ablin (e_1,e_1) - i \ablin (e_1,e_2) \right)\ \ \ , \ \ \ \widetilde{\alpha}^{[+2]} = 2  r\Omega^2 \left(\alin(e_1,e_1) {+} i \alin(e_1,e_2)\right) \, .
\end{align}
Clearly the estimates on $\widetilde{\alpha}^{[\pm 2]}$ obtained in  \cite{Gra.Hola} directly translate into estimates for the (norms of the) tensors $r\Omega^2 \alin$ and $r\Omega^2 \ablin$. 

\subsubsection{Norms and energies for the gauge invariant quantities} \label{sec:normsa}

To state the estimates of \cite{Gra.Hola} in a form most useful for the present paper we first introduce certain energies on null cones. The underlying reason is that estimating quantities in the system of gravitational perturbations in a double null gauge will typically require control on fluxes on null hypersurfaces. 


To keep notation concise regarding commutations we use the following shorthand notation for derivatives:
\[
\sum_{j=0}^n |\mathfrak{D}^j \xi|^2 = \sum_{j=0}^n \sum_{|i| \leq j} \Big|T^{i_1} \left[\frac{r^2}{\Omega^2} \Omega \slashed{\nabla}_3\right]^{i_2} \left[\Omega \slashed{\nabla}_4\right]^{i_3} \left[r \slashed{\nabla}\right]^{i_4}  \xi\Big|^2 \, ,
\]
where the second sum is over all tuples $i=(i_1,i_2,i_3,i_4)$ with $i_1+i_2+i_3+i_4 \leq j$.

We first define the non-degenerate (near the horizon) outgoing and ingoing commuted energy fluxes (note that the superscript $n+1$ denotes the number of derivatives involved) of a general $S^2_{u,v}$-tensor $\xi$:
\begin{align}
\overline{\mathbb{E}}^{n+1}_u [ \xi ] \left(v_1,v_2\right) &= \sum_{i=0}^n \int_{v_1}^{v_2} \int_{S^2}\left[ |\Omega \slashed{\nabla}_4 \mathfrak{D}^i \xi|^2 + |r \slashed{\nabla} \mathfrak{D}^i \xi|^2 \right] \left(u, \bar{v}\right)  d\bar{v} \sin \theta d\theta d\phi 
\nonumber \\
\overline{\mathbb{E}}^{n+1}_v [\xi] \left(u_1,u_2\right) &= \sum_{i=0}^n \int_{u_1}^{u_2} \int_{S^2}\left[ |r^2 \Omega^{-1} \slashed{\nabla}_3 \mathfrak{D}^i \xi|^2 + |r \slashed{\nabla} \mathfrak{D}^i \xi|^2 \right] \left(u, \bar{v}\right) \Omega^2 d\bar{u} \sin \theta d\theta d\phi 
\end{align}
as well as the outgoing degenerate energy:
\begin{align}
{\mathbb{E}}^{n+1}_u [ \xi] \left(v_1,v_2\right) &= \sum_{i=0}^n \int_{v_1}^{v_2} \int_{S^2}\left[ |\Omega \slashed{\nabla}_4 \mathfrak{D}^i \xi |^2 + \frac{\Omega^2}{r^2} |r \slashed{\nabla} \mathfrak{D}^i \xi|^2 \right] \left(u, \bar{v}\right)  d\bar{v} \sin \theta d\theta d\phi 
\end{align}
The above energies will appear for the (regular both at the horizon $\mathcal{H}^+$ and the conformal boundary $\mathcal{I}$) quantities
\begin{align}
\xi \in \{ r\Omega^2 \alin, \Omega^{-2} r^5 \ablin , \Psilin^R \} \, . 
\end{align}
We shall also need an auxiliary energy on spheres at the conformal boundary $\mathcal{I}$, which arises in the renormalised energy estimates of \cite{Gra.Hola} and is defined only for $n \geq 2$:
\begin{align}
\overline{\mathbb{E}}^{n-1}_\mathcal{I} [\Psilin^R] (v) = \sum_{i=0}^{n-2} \int_{S^2}\left[ | \partial_t  \mathfrak{D}^i \mathcal{L}^{-\frac{1}{2}} (\mathcal{L}-2)^{-\frac{1}{2}} \Psilin^R|^2 + | \mathfrak{D}^i (\mathcal{L}-2)^{-\frac{1}{2}} \Psilin^R|^2 \right] (v,v)  \sin \theta d\theta d\phi \, .
\end{align}
We finally define (for $n \geq 2$) the following initial data master energy on cone $\underline{C}_{v_0=0}$:
\begin{align} \label{masterdata}
 \overline{\mathbb{E}}^{n+1}_{data} [\alin, \ablin] := \overline{\mathbb{E}}^{n+1}_0 [ \Omega^2r \alin] \left(0,\infty\right) + \overline{\mathbb{E}}^{n+1}_0  [ \Omega^{-2}r^5 \ablin]  \left(0,\infty\right) +  \overline{\mathbb{E}}^{n-1}_0  [\Psilin^R]  \left(0,\infty\right) + \overline{\mathbb{E}}^{n-1}_{\mathcal{I}} [\Psilin^R] \left(0\right) \, ,
\end{align}
which contains the energy fluxes of $\alin$ and $\ablin$, the flux of $\Psilin^R$ and a contribution on the sphere at infinity. It is this modified energy which has been shown to propagate for the Teukolsky system in \cite{Gra.Hola}, see Theorem \ref{theo:teukolsky} below. 

\begin{remark}
One could add the terms $\overline{\mathbb{E}}^{n-1}_0  [\Psilin^D]  \left(0,\infty\right) + \overline{\mathbb{E}}^{n-1}_{\mathcal{I}} [\Psilin^D] \left(0\right)$ to the energy (\ref{masterdata}). However, these terms can be shown to be controlled by the first two terms and have hence been omitted. 
\end{remark}

\subsubsection{Estimates for the Teukolsky quantities} 
From the main theorem of our companion paper \cite{Gra.Hola}, we now easily infer the following theorem by translating the estimates on spacelike slices $\Sigma_{t^\star}$ in \cite{Gra.Hola} to estimates on null cones.
\begin{theorem} \label{theo:teukolsky}
We have the following estimates for any $n\geq 3$:
\begin{itemize}
\item Boundedness estimate: For fixed $(u,v) \in \mathcal{M}$ we have
\begin{align} \label{bos}
\overline{\mathbb{E}}^n_v [ \Omega^2r \alin] \left(v,u\right) + \overline{\mathbb{E}}^n_v[ \Omega^{-2}r^5 \ablin]\left(v,u\right)+ {\mathbb{E}}^n_u [ \Omega^2 r\alin] \left(0,v\right)  + {\mathbb{E}}^n_u[ \Omega^{-2}r^5 \ablin] \left(0,v\right) & \nonumber \\
 +\overline{\mathbb{E}}^{n-2}_v [ \Psilin^R ] \left(v,u\right) + {\mathbb{E}}^{n-2}_u[ \Psilin^R ]\left(0,v\right) + \overline{\mathbb{E}}^{n-2}_\mathcal{I} [\Psilin^R] \left(v\right)& \,  \lesssim  \overline{\mathbb{E}}^{n}_{data} [\alin, \ablin] \, .
 \end{align}
\item Decay estimates: Fix an $r=r_0>r_+$. For fixed $v \geq 2$ we denote by $u(r_0,v)$ the $u$-value of the intersection of $\{r=r_0\}$ and the fixed $v$-hypersurface.  We then have for $n \geq 4$
\begin{align} \label{dos}
&\overline{\mathbb{E}}^{n-1}_v [ \Omega^2r \alin] \left(u(r_0,v),\infty\right) + \overline{\mathbb{E}}^{n-1}_{u(r_0,v)} [ \Omega^2 r\alin] \left(v,u(r_0,v)\right) \nonumber \\
+&\overline{\mathbb{E}}^{n-1}_v  [ \Omega^{-2}r^5 \ablin] \left(u(r_0,v),\infty\right) + \overline{\mathbb{E}}^{n-1}_{u(r_0,v)}  [ \Omega^{-2}r^5 \ablin] \left(v,u(r_0,v)\right) \nonumber \\
+& \overline{\mathbb{E}}^{n-3}_v [ \Psilin^R] \left(u(r_0,v),\infty\right) + \overline{\mathbb{E}}^{n-3}_{u(r_0,v)} [ \Psilin^R] \left(v,u(r_0,v)\right) +  \overline{\mathbb{E}}^{n-2}_{\mathcal{I}} [\Psilin^R] \left(u(r_0,v)\right)
  \lesssim \frac{\overline{\mathbb{E}}^{n}_{data} [\alin, \ablin]}{(\log v)^2}   \, , 
\end{align}
with the $\lesssim$ now depending on the $r_0$ (the implicit constant blows up as $r_0 \rightarrow r_+$). 
\item Estimates for the non-degenerate outgoing fluxes near the horizon: We have for any $0 \leq u \leq \infty$, $v_2 \geq v_1+1 \geq v_0$ and $n\geq 3$ the estimates 
\begin{align}
\overline{\mathbb{E}}^n_u [ \Omega^2 r\alin] \left(v_1,v_2\right) + \overline{\mathbb{E}}^{n-2}_u [ \Psilin^R] \left(v_1,v_2\right) \lesssim (v_2-v_1) \overline{\mathbb{E}}^{n}_{data} [\alin, \ablin]  \, ,  \label{gro1}
\end{align}
\begin{align}
\overline{\mathbb{E}}^{n}_u [ \Omega^2 r\alin] \left(v_1,v_2\right)+\overline{\mathbb{E}}^{n-2}_u [ \Psilin^R] \left(v_1,v_2\right) \lesssim \frac{(v_2-v_1)}{\log (v_1)^2} \overline{\mathbb{E}}^{n+1}_{data} [\alin, \ablin] \, .  \label{gro2}
\end{align}
\end{itemize}
\end{theorem}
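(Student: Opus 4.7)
The statements here are a repackaging of the main theorem of \cite{Gra.Hola}, where the estimates are obtained on the spacelike slices $\Sigma_{t^\star}$ rather than on the null cones $C_u$ and $\underline{C}_v$. The plan is therefore to first recall the estimates from \cite{Gra.Hola} in their original form and then convert between the two families of hypersurfaces by means of the divergence theorem applied to the associated energy currents for the Teukolsky and Regge--Wheeler wave equations.

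More concretely, for each of the three quantities $\Omega^2 r \alin$, $\Omega^{-2}r^5 \ablin$ and $\Psilin^R$ I would introduce the associated $T$-energy current plus a suitable red-shift correction, noting that the Teukolsky equations \eqref{Teuk1}--\eqref{Teuk2} and the Regge--Wheeler equation \eqref{eq:RW} are precisely the equations analysed in \cite{Gra.Hola}. Integrating the divergence of this current over the spacetime region bounded by $\underline{C}_{v_0=0}$, a segment of a constant-$v$ cone, a segment of a constant-$u$ cone, and the conformal boundary $\mathcal{I}$ yields an identity relating the fluxes $\overline{\mathbb{E}}^n_v$, $\mathbb{E}^n_u$ and $\overline{\mathbb{E}}^{n-2}_{\mathcal{I}}[\Psilin^R]$ to one another, modulo a bulk term which is handled by the integrated local energy decay proven in \cite{Gra.Hola}. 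The boundary conditions \eqref{bcl1}--\eqref{bcl1b} together with the relations \eqref{rewriteBianchi}--\eqref{Prel2} ensure that the $\mathcal{I}$-flux terms take exactly the form of $\overline{\mathbb{E}}^{n-2}_{\mathcal{I}}[\Psilin^R]$. Commutation with $T$ and with $[r\slashed{\nabla}]$ is immediate from \eqref{angularcommute}, while commutation with $\tfrac{r^2}{\Omega^2}\Omega\slashed{\nabla}_3$ and $\Omega\slashed{\nabla}_4$ is reduced to the previous two using the equations themselves to trade a transversal derivative for tangential ones modulo controlled lower order terms.

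The boundedness bound \eqref{bos} then follows directly by combining these divergence identities with the boundedness of the spacelike-slice energy from \cite{Gra.Hola}. For the decay statement \eqref{dos}, the key observation is that along the hypersurface $\{r=r_0\}$ one has $t^\star \asymp v$, so that applying the inverse logarithmic decay of \cite{Gra.Hola} at time $\sim v$ and then converting back to null fluxes in the region $\{r\geq r_0\}$ gives the claimed $(\log v)^{-2}$ factor. The dependence of the implicit constant on $r_0$ is expected and reflects the degeneration of the local-energy-decay estimate at the trapping radius; restricting away from the photon sphere is what permits the lossless decay rate.

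The main subtlety, and the piece I expect to require the most care, is the near-horizon flux estimate of the third bullet. Here one cannot use the degenerate $T$-energy alone, since the left-hand side involves the \emph{non-degenerate} outgoing flux $\overline{\mathbb{E}}^n_u$. The plan is to use the red-shift vectorfield $N$ constructed in \cite{Gra.Hola} adapted to the Teukolsky and Regge--Wheeler equations: its energy controls the non-degenerate flux and its bulk term on a neighbourhood of $\mathcal{H}^+$ has a good sign. Integrating $K^N$ over the spacetime region $\{u'\leq u,\, v_1\leq v'\leq v_2\}\cap\{r\leq r_0\}$ and estimating the flux on $\{r=r_0\}$ by the decay statement \eqref{dos} produces the linear-in-$(v_2-v_1)$ factor in \eqref{gro1}, since the integrand is uniformly bounded in $v'$. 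The improved estimate \eqref{gro2} is then obtained by invoking \eqref{dos} one derivative higher at time $v_1$, which replaces the uniform pointwise bound by a $(\log v_1)^{-2}$ bound before integrating in $v'\in[v_1,v_2]$. This is the step where the precise interplay between the red-shift identity, the boundary contribution at $\{r=r_0\}$ and the commuted decay on $\Sigma_{t^\star}$ must be carried out carefully, as any loss in derivative count here would propagate into the main theorem.
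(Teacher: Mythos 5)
Your proposal follows essentially the same route as the paper, which itself only sketches this argument: the theorem is presented as a direct translation of the spacelike-slice estimates of the companion paper to null fluxes, with the accompanying remark explaining exactly the redshift-vectorfield mechanism you describe — a localised $N$-identity near the horizon to upgrade the ingoing flux to its non-degenerate version, and a global application of $N$ with the bulk/boundary error in $\{r\geq r_1\}$ controlled by the degenerate energies integrated in time (using \eqref{bos} for \eqref{gro1} and \eqref{dos} one derivative higher for \eqref{gro2}). Your variant of cutting at $\{r=r_0\}$ and bounding the flux there instead of the bulk error is an immaterial difference, so no further comparison is needed.
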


\begin{remark}
Integrated decay estimates follow as a corollary by integrating the ingoing fluxes in $v$. The integrated decay has again growth like $ \frac{(v_2-v_1)}{\log (v_1)^2}$.
\end{remark}

\begin{remark}
Note that it is the \underline{degenerate outgoing} but \underline{non-degenerate ingoing} energy appearing in the boundedness statement. For the decay estimate, the degenerate and the non-degenerate flux are equivalent because the outgoing flux is always restricted to $r \geq r_0$. In general, the non-degenerate \underline{outgoing} flux has growth as stated in (\ref{gro1})--(\ref{gro2}). To see why the non-degenerate \underline{ingoing} flux behaves better, we recall the (timelike) redshift vectorfield $N$ from \cite{Daf.Rod08} which generates an energy identity whose  bulk term has a good sign in a region $r\leq r_1$ for some $r_1>r_+$. We apply the energy identity in a region bounded by a $\Sigma_{t^\star}$-slice, the horizon and an ingoing cone emanating from the intersection of the $\Sigma_{t^\star}$ slice with the $r=r_1$ hypersurface. This produces control on the desired ingoing flux noting that for $r \geq r_1$ the degenerate energy is equivalent to the non-degenerate one. Applying $N$ globally also yields (\ref{gro1})--(\ref{gro2}) immediately using (\ref{bos}) and (\ref{dos}) respectively (integrated in time) to control the error in $\{r \geq r_1$\} in the corresponding vectorfield identity. 
\end{remark}

\begin{corollary} \label{cor:teuonspheres}
We have for $n\geq 3$ the following estimates on spheres:
\begin{itemize}
\item Boundedness estimates 
\begin{align}
\sum_{i=0}^{n-1} \sup_{u,v} \|\mathfrak{D}^i r \Omega^2 \alin \|^2_{(u,v)}+ \sum_{i=0}^{n-1} \sup_{u,v} \| \mathfrak{D}^i \Omega^{-2} r^5 \ablin \|^2_{(u,v)}  & \lesssim \overline{\mathbb{E}}^{n}_{data} [\alin, \ablin]    \, . 
\end{align}
\item Decay estimates
\begin{align}
\sum_{i=0}^{n-1} \sup_{u,v} \|\mathfrak{D}^i r \Omega^2 \alin \|^2_{(u,v)} +\sum_{i=0}^{n-1} \sup_{u,v} \| \mathfrak{D}^i \Omega^{-2} r^5 \ablin \|^2_{(u,v)} & \lesssim  \frac{1}{(\log v)^2} \overline{\mathbb{E}}^{n+1}_{data} [\alin, \ablin]    \, .
\end{align}
\end{itemize}
\end{corollary}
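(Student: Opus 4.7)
The plan is to convert the null-flux energies of Theorem~\ref{theo:teukolsky} into pointwise sphere $L^2(S^2)$-bounds by the fundamental theorem of calculus, using the Teukolsky hierarchy of Section~\ref{sec:gih} to control the base value on the conformal boundary.

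\textbf{Step 1 (FTC along the $v$-cone from $\mathcal{I}$).} Fix $i \leq n-1$ and $\xi \in \{r\Omega^2\alin, \Omega^{-2}r^5\ablin\}$. For $(u,v) \in \mathcal{M}$ with $v \leq u$, integrate inward from the boundary sphere $(v,v) \subset \mathcal{I}$:
\begin{align*}
  \|\mathfrak{D}^i\xi\|_{(u,v)} \leq \|\mathfrak{D}^i\xi\|_{(v,v)} + \int_v^u \|\partial_{\bar u}\mathfrak{D}^i\xi\|_{(\bar u,v)} \, d\bar u .
\end{align*}
Writing $\partial_{\bar u} = (\Omega^2/r^2)(r^2\Omega^{-1}\slashed{\nabla}_3)$ modulo lower-order background $\underline{\chi}$-terms and applying Cauchy--Schwarz with $d\bar u = -2dr/\Omega^2$ yields the integrable weight $\int_v^u \Omega^2/r^4 \, d\bar u = \int_{r(u,v)}^\infty 2\,dr/r^4 = \tfrac{2}{3}r(u,v)^{-3}$, bounded uniformly in $(u,v)$. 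The integral term is then controlled by $\sqrt{\overline{\mathbb{E}}^n_v[\xi](v,u)}$, which by~\eqref{bos} is bounded by $\overline{\mathbb{E}}^n_{data}$. For the decay statement one replaces this by~\eqref{dos} and in the region $r \leq r_0$ near the horizon one integrates instead from the sphere $(u(r_0,v),v)$ using the non-degenerate ingoing decay flux, the base value there having already been obtained by the $r \geq r_0$ analysis.

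\textbf{Step 2 (Boundary sphere via the hierarchy).} The base value $\|\mathfrak{D}^i\xi\|_{(v,v)}$ is controlled via the Teukolsky hierarchy. Using~\eqref{Prel}--\eqref{Prel2} and the background identity $\mathrm{tr}\chi + \mathrm{tr}\underline{\chi} = 0$, one has $\Psilin + \Psilinb = 2r^5 \slashed{\mathcal{D}}_2^\star \slashed{\nabla}\rlin$, which together with the boundary conditions~\eqref{bcl1}--\eqref{bcl4} (in particular $r^3 \alin = r^3 \ablin$, $r^3 \blin = -r^3 \bblin$, $r^3 \slin = 0$), the Bianchi equations~\eqref{Bianchi1}--\eqref{Bianchi10} and the elliptic estimates on $S^2$ for $\ell \geq 2$ modes (Section~\ref{sec:elliptic}) allow one to express the sphere norms of $\xi$ on $\mathcal{I}$ in terms of $\Psilin^R$ at the sphere and lower-order ingoing fluxes. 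Invoking the last term of~\eqref{bos} (resp.~\eqref{dos}), one deduces $\|\mathfrak{D}^i\xi\|^2_{(v,v)} \lesssim \overline{\mathbb{E}}^{n-2}_\mathcal{I}[\Psilin^R](v) \lesssim \overline{\mathbb{E}}^n_{data}$, with the $(\log v)^{-2}$-factor for the decay statement.

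Combining Steps 1 and 2 gives both parts of the corollary. The main obstacle is Step~2: the hierarchy quantity $\Psilin^R$ contains two transverse $\slashed{\nabla}_3$-derivatives of $\xi$, so inverting it to recover the sphere norms of $\alin$ and $\ablin$ themselves at $\mathcal{I}$ requires careful use of the boundary conditions to couple the two Teukolsky quantities, of the Bianchi equations to exchange $\slashed{\nabla}_3$-derivatives for tangential ones on $\mathcal{I}$, and of elliptic inversion on the round sphere restricted to $\ell \geq 2$ modes. The resulting loss of two derivatives is absorbed in the derivative gap between the conclusion ($i \leq n-1$) and the data energy ($n$ derivatives for boundedness, $n+1$ for decay).
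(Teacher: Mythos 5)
Your Step 1 is mechanically sound (FTC along the ingoing cone plus Cauchy--Schwarz against the non-degenerate ingoing flux, with the integrable weight $\int_v^u \Omega^2 r^{-4}\,d\bar u \lesssim r^{-3}$), but Step 2 contains a genuine gap, and it is the step your whole strategy hinges on. The quantity $\Psilin^R$ is related to $\alin,\ablin$ by \emph{two transversal derivatives} (the chain $\alin \to \plin \to \Psilin$ in (\ref{rewriteBianchi}), (\ref{Prel})); there is no tangential, elliptic relation on a single sphere of $\mathcal{I}$ that inverts this. Concretely, on the boundary $\Psilin + \Psilinb = 2r^5\slashed{\mathcal{D}}_2^\star\slashed{\nabla}\rlin$ while $\Psilin - \Psilinb$ limits to a multiple of $r\Omega(\xlin-\xblin)$, so $\overline{\mathbb{E}}^{n-2}_{\mathcal{I}}[\Psilin^R](v)$ at fixed $v$ controls (after elliptic inversion) boundary traces of $\rlin$ and of the shear difference --- not of $\alin$ or $\ablin$. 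The substitute you propose, ``using the Bianchi equations to exchange $\slashed{\nabla}_3$-derivatives for tangential ones on $\mathcal{I}$,'' does not deliver what you need: (\ref{Bianchi1}) expresses $\slashed{\nabla}_3\alin$ in terms of $\blin$ and $\xlin$, which are \emph{other} unknowns that at this stage have not been estimated --- they are bounded only later (Propositions \ref{prop:betaS2}, \ref{prop:inshearfinal}, etc.) \emph{using} the present corollary, so your route is circular. No estimate available at this point bounds $\|\mathfrak{D}^i\xi\|_{(v,v)}$ by data.

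The paper avoids any boundary anchor: for each $i\leq n-1$ the non-degenerate ingoing flux $\overline{\mathbb{E}}^{n}_v[\xi](v,u)$ of (\ref{bos}) (resp.\ (\ref{dos})) controls both $\int \|r^2\Omega^{-1}\slashed{\nabla}_3\mathfrak{D}^i\xi\|^2_{\bar u,v}\,\Omega^2 d\bar u$ and $\int\|r\slashed{\nabla}\mathfrak{D}^i\xi\|^2_{\bar u,v}\,\Omega^2 d\bar u$, the latter dominating $\int\|\mathfrak{D}^i\xi\|^2_{\bar u,v}\,\Omega^2 d\bar u$ by the Poincar\'e inequality on the sphere (everything is supported on $\ell\geq 2$). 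One-dimensional Sobolev embedding in $u$ along the ingoing cone --- which requires only the $L^2_u$ norms of the function and of its $u$-derivative, and no boundary value --- then gives $\sup_u\|\mathfrak{D}^i\xi\|^2_{u,v}$ directly. If you wish to retain an FTC-from-a-reference-sphere structure, the reference sphere must be an interior one selected by a mean-value argument from the $L^2_u$ bound, not the sphere on $\mathcal{I}$.
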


\begin{remark}
Truly pointwise estimates follow immediately from Sobolev embedding on spheres but are not stated explicitly. We also note that the above estimates are clearly not optimal, as we allow ourselves to lose one derivative for the embedding and another one for decay. 
\end{remark}

\begin{proof}
This follows from $1$-dimensional Sobolev embedding along the ingoing cones for which we control a non-degenerate energy by the previous proposition.
\end{proof}

\subsection{The statement of the main theorem} \label{sec:mtheo}

To state the main theorem, we recall the energies involving the gauge invariant quantities introduced in Section \ref{sec:normsa}. We require one additional (gauge dependent) initial data energy
involving $n$ derivatives of the Ricci coefficients. For $n\geq 3$ we define
\begin{align} \label{Dnorm}
\mathbb{D}^n_0 := 
& \sup_{u \in [0,\infty)} \sum_{\substack{ i+j+k=0 \\ i \leq 2, k \leq 1}}^{n-1}  \big\| \mathcal{A}^{[j]}[\slashed{\nabla}_T]^k  [r\slashed{div}] \left[ r^2\Omega^{-1} \slashed{\nabla}_3\right]^i( \Omega \xlin)) \big\|^2_{u,0} \nonumber \\
+& \sup_{u \in [0,\infty)} \sum_{\substack{i+j=0 \\ i \leq 1}}^{n-1} \big\| \mathcal{A}^{[j]} \left[ r^2 \Omega^{-1} \slashed{\nabla}_3\right]^i [r\slashed{\nabla}] \otx \Omega^{-2} r^2  \big\|^2_{u,0} . 
\end{align}

To state our main boundedness and decay theorem we define the initial master energy involving $n$ derivatives of curvature and Ricci-coefficients
\begin{align} \label{E0e}
\overset{\circ}{\mathbb{E}}{}^n := \mathbb{D}^n_0 + \overline{\mathbb{E}}^{n}_{data} [\alin, \ablin]  \, . 
\end{align}
\begin{theorem} \label{mtheo:boundedness}
Given a solution $\mathscr{S}$ of the system of gravitational perturbations satisfying the boundary conditions as arising from a smooth seed initial data set as in Theorem \ref{theo:wp}, let $\Si=\mathscr{S} + \mathscr{G}_f + \mathscr{G}_q - \mathscr{K}_{\mathfrak{m},\vec{\mathfrak{a}}}$ be the initial data normalised solution with vanishing $\ell=0,1$ modes obtained from Proposition \ref{prop:gotoinitialgauge}. Let the initial energy $\mathring{\mathbb{E}}^n$ in (\ref{E0e}) be defined with respect to the geometric quantities of $\Si$. Then the geometric quantities of the solution $\Si$ satisfy the following estimates. For any weighted Ricci or metric coefficient
\begin{align} \label{rico}
\xi \in \{ \Omega \xlin, \Omega^{-1} r^2 \xblin, \frac{r^2}{\Omega^2} \otx, \frac{r^2}{\Omega^2} \otxb , r \elin, r \eblin,  , \olin, \olinb, r\Olin , \frac{\glinto}{\sqrt{\slashed{g}}} , \glinh, r \Omega^{-2} \bmlin \} \, , 
\end{align}
and any curvature component
\begin{align} \label{cuco}
\Xi \in \{ r\Omega^2 \alin, r^2 \Omega \blin, r^3 \rlin, r^3 \slin, r^4 \Omega^{-1} \bblin, r^5 \Omega^{-2} \ablin \} \, ,
\end{align}
we have for $n\geq 3$ and any $u,v \geq 0$
\begin{align}
\sum_{j=0}^{n} \bigg\| [r\slashed{\nabla}]^j  \xi  \bigg\|^2_{u,v} + \sum_{j=0}^{n-1} \bigg\| [r\slashed{\nabla}]^j  \Xi  \bigg\|^2_{u,v} + \sum_{j=0}^{n-1} \bigg\| [r\slashed{\nabla}]^j  (r^2 \elin, r^2 \eblin, r^2 \Olin) \bigg\|^2_{u,v} &\lesssim \overset{\circ}{\mathbb{E}}{}^n  \, , \label{mainb} \\
\sum_{j=0}^{n} \bigg\| [r\slashed{\nabla}]^j \xi \bigg\|^2_{u,v} + \sum_{j=0}^{n-1} \bigg\| [r\slashed{\nabla}]^j  \Xi  \bigg\|^2_{u,v} + \sum_{j=0}^{n-1} \bigg\| [r\slashed{\nabla}]^j  (r^2 \elin, r^2 \eblin, r^2 \Olin) \bigg\|^2_{u,v}  &\lesssim \frac{\overset{\circ}{\mathbb{E}}{}^{n+1} }{(\log v)^2} \label{maind} \, .
\end{align}
Moreover, for the curvature components $\Xi$, we also obtain for any $v \geq v_0=0$ fixed, the following estimates for the top order fluxes:
\begin{align}
\int_{u=v}^\infty d\bar{u} \frac{\Omega^2}{r^2} \left(  \big\| [r^2\Omega^{-1} \slashed{\nabla}_3] [r\slashed{\nabla}]^{n}  \Xi  \big\|^2_{\bar{u},v} +\big\|  [r\slashed{\nabla}]^{n+1}  \Xi  \big\|^2_{\bar{u},v}  \right) &\lesssim \overset{\circ}{\mathbb{E}}{}^{n+1} \, , \label{setop}
\\
\int_{v}^{v_{f}} d\bar{v} \left(  \big\| [\Omega \slashed{\nabla}_4] [r\slashed{\nabla}]^{n}  \Xi  \big\|^2_{\bar{u},v} +\big\|  [r\slashed{\nabla}]^{n+1}  \Xi  \big\|^2_{{u},\bar{v}}  \right) &\lesssim \overset{\circ}{\mathbb{E}}{}^{n+1} \left(v_f -v \right) \, ,  \label{seto}
\end{align}
for any $u \geq v_f \geq v$. Finally, for $r_0>r_+$ fixed, one may drop the factor of $(v_f-v)$ in (\ref{seto}) if $r(u,v)\geq r_0$. In this case, the $\lesssim$ in (\ref{seto}) will depend on $r_0$. 
\end{theorem}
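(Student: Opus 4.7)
My plan is to follow the hierarchical scheme of \cite{Daf.Hol.Rod19}, with the crucial modification that the boundary $\mathcal{I}$ now plays the role of a source of initial conditions for the ingoing transport equations (whereas in the asymptotically flat case the analogous equations could only be integrated after introducing an extra pure gauge solution). The Teukolsky inputs in Theorem \ref{theo:teukolsky} and Corollary \ref{cor:teuonspheres} control $\alin, \ablin$ and the derived quantities $\plin,\pblin,\Psilin^{D},\Psilin^{R}$ by $\overline{\mathbb{E}}^n_{data}[\alin,\ablin] \leq \mathring{\mathbb{E}}^n$; the remaining task is then to propagate these estimates through all the equations (\ref{stos})--(\ref{Bianchi10}), using at each step the initial data normalisation (Definition \ref{def:datanormalised}) and the vanishing of the $\ell=0,1$ modes to eliminate any free constants.

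\textbf{Running the hierarchy.} First, integrate the outgoing transport equation (\ref{tchi}) for $\Omega\xlin$ from $\underline{C}_{v_0}$ using the commuted redshift near $\mathcal{H}^+$, with source $\alin$; this gives bounds for $\Omega\xlin$. Then use the linearised boundary condition $\lim_{v \to u} r\xlin = \lim_{v \to u} r\xblin$ from (\ref{bclxi}) as boundary data for the ingoing equation for $\Omega^{-1}\xblin$, and integrate inwards from $\mathcal{I}$ with source $\ablin$, obtaining $\Omega^{-1} r^{2}\xblin$. Next, estimate $r^2 \Omega\blin$ by integrating (\ref{Bianchi2}) in the outgoing direction from data (with source $\slashed{div}\alin$), and $r^4\Omega^{-1}\bblin$ by integrating (\ref{Bianchi9}) from $\mathcal{I}$, the data there being supplied by (\ref{betaboundary}). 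The components $\rlin, \slin$ follow from (\ref{Bianchi5}), (\ref{Bianchi7}) integrated from $\mathcal{I}$, using (\ref{bcl4}) and the already-controlled $\bblin$. The torsions $\elin,\eblin$ are then obtained from the transport equations (\ref{propeta}) (in the outgoing direction for $\elin$, ingoing for $\eblin$, with initial data on $\underline{C}_{v_0}$) supplemented by the curl relations (\ref{curleta}) and the Codazzi equations (\ref{ellipchi}). The linearised lapse $\olin,\olinb,\Olin$ and the trace components $\otx,\otxb$ come next: the initial data normalisation ($\otx = 0$ and $\slashed{div}\elin + \rlin = 0$ on $S^2_{\infty, v_0}$, together with $\Olin=0$ on $v=v_0$) provides the data needed to integrate (\ref{uray}), (\ref{vray}), (\ref{dtcb}), (\ref{dbtc}), (\ref{oml1}), (\ref{oml2}), (\ref{oml3}) in the appropriate direction. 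Finally, the metric components $\glinto,\glinh, \bmlin$ are obtained from their outgoing transport equations (\ref{stos}), (\ref{stos2}), (\ref{bequat}) with vanishing initial data at $v=v_0$ (ensured by the initial data normalisation). Elliptic estimates of Section~\ref{sec:elliptic} upgrade the divergence/curl information into full $[r\slashed{\nabla}]$-estimates at each stage.

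\textbf{Top-order fluxes and decay.} The flux estimates (\ref{setop})--(\ref{seto}) cannot be obtained by pointwise integration of the transport equations at the top order; instead I would use Bianchi-pair energy identities, grouping $(\alin,\blin)$, $(\blin,(\rlin,\slin))$, $((\rlin,\slin),\bblin)$, $(\bblin,\ablin)$ with the appropriate $r$-weights so that the divergence structure yields a positive definite bulk plus coercive flux terms. The boundary contributions at $\mathcal{I}$ cancel exactly using (\ref{bcl1})--(\ref{bcl4}); the growth factor $(v_f - v)$ and its removal for $r \geq r_0 > r_+$ are inherited directly from the analogous dichotomy in Theorem~\ref{theo:teukolsky}. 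The logarithmic decay estimate (\ref{maind}) is then obtained by running exactly the same hierarchical argument with the decaying input from (\ref{dos}); the appearance of $\mathring{\mathbb{E}}^{n+1}$ (rather than $\mathring{\mathbb{E}}^{n}$) reflects a single derivative loss used in the Sobolev embedding and elliptic regularisation at each stage.

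\textbf{Main obstacle.} The principal difficulty is controlling $\xblin$ (and by propagation $\bblin,\otxb,\eblin$) by integrating inward from $\mathcal{I}$, which is where the AdS setting differs structurally from the asymptotically flat case. This requires simultaneous care with (a) the linearised boundary conditions (\ref{bcl1})--(\ref{bclxi}), which tie the relevant underlined quantity to its non-underlined counterpart only in a limiting sense, and (b) the fact that the initial data normalisation conditions live on the horizon sphere $S^2_{\infty,v_0}$, so that the transport integrations for $\otx,\olin,\otxb,\olinb$ must be chained so as to reach that sphere as their reference point. Threading these choices consistently through the coupled system is the delicate bookkeeping at the heart of the proof; once done, each single equation is estimated by an elementary transport or energy argument, and the full bounds (\ref{mainb})--(\ref{seto}) follow.
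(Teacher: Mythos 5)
Your high-level strategy (Teukolsky input, commuted redshift for $\xlin$, integrating $\xblin$ inward from $\mathcal{I}$, then climbing the hierarchy) agrees with the paper's, but several individual steps would fail as written. The most serious concerns $\rlin$: you propose to integrate (\ref{Bianchi5}) ``from $\mathcal{I}$'', but the conformal boundary conditions (\ref{bcl1})--(\ref{bcl4}) impose \emph{no} condition on $\rlin$ at the conformal boundary (only $\slin$, $\alin\pm\ablin$ and $\blin+\bblin$ are constrained there), so there is no boundary datum with which to start that integration. The paper instead recovers $(\rlin,\slin)$ \emph{algebraically} from the identity (\ref{Prel}), $\Psilin = r^5\slashed{\mathcal{D}}_2^\star\slashed{\mathcal{D}}_1^\star(-\rlin,\slin)+\tfrac{3}{4}r^5\rho\, tr\chi\,(\xlin-\xblin)$, using the Teukolsky control of $\Psilin$ plus elliptic theory (the $\ell\le 1$ modes vanish by the normalisation). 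This in turn requires the improved estimate on the \emph{difference} $r(\Omega\xlin-\Omega\xblin)$ near infinity (Proposition \ref{prop:improvedchidifference}, proved by Taylor-expanding the common boundary trace of the two shears), since the individual terms $r^5\rho\, tr\chi\,\xlin$ are unbounded in $r$; nothing in your outline produces this cancellation, and without it the stated weights on $r^3\rlin$, $r^2\elin$, $r^2\eblin$ are out of reach. The same algebraic-versus-transport issue recurs for $\blin,\bblin$: the paper reads them off from (\ref{rewriteBianchi})--(\ref{rewriteBianchibar}) rather than integrating (\ref{Bianchi2}), whose $-\Omega^{-1}\omega\blin$ term is a blueshift for forward integration that you do not address.

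Two further gaps. First, the commuted redshift estimate for $\xlin$ produces, after integrating the lower-order commutator terms by parts, boundary terms on $\mathcal{H}^+$; closing it requires the horizon flux bounds for $\blin$, $\slashed{div}\,\xlin$, $(\rlin,\slin)$ and $\elin$ derived in Section \ref{sec:hoz} from the Teukolsky quantities together with the Codazzi equation and the normalisation $\otx=0$, $\slashed{div}\,\elin+\rlin=0$ on $S^2_{\infty,v_0}$ --- this ingredient is entirely absent from your proposal. Second, your torsion step is inconsistent: (\ref{propeta}) for $\eblin$ is a $\slashed{\nabla}_3$-equation, transporting along the cones $v=\mathrm{const}$, so data on $\underline{C}_{v_0}$ cannot initialise it in the bulk; the paper obtains $r\elin,r\eblin$ algebraically from (\ref{chih3})--(\ref{chih3b}) and only afterwards integrates (\ref{propeta}) \emph{backwards from $\mathcal{I}$}, where both torsions vanish. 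Finally, the paper does not use Bianchi-pair energy identities for the top-order fluxes (\ref{setop})--(\ref{seto}); these also follow from the algebraic relations combined with the Teukolsky fluxes and the shear estimates, which avoids the boundary-term bookkeeping your energy-identity route would require.
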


Theorem \ref{mtheo:boundedness} will be proven in Section \ref{sec:proof}. For now we provide some additional remarks. 

\begin{remark}
The fact that the (formally irregular at the horizon) quantity $\Omega^{-2} \otx$ appears in (\ref{rico}) is due to the fact that the solution is initial data normalised (see Definition \ref{def:datanormalised}) and hence $\otx=0$ on $S^2_{\infty,v_0}$ and, as a consequence of the linearised Raychaudhuri equation (\ref{uray}), on the entire event horizon.
\end{remark}

\begin{remark}
The last sum in (\ref{mainb}) and (\ref{maind}) expresses the fact that if we are willing to lose a derivative, we can show stronger $r$-weighted estimates for $\elin, \eblin$ and even stronger ones for $\elin + \eblin = 2[r\slashed{\nabla}] \Olin$ if we are willing to lose two derivatives. A similar improved estimate with loss holds for $\olin$, $\olinb$ (see Proposition \ref{prop:omega}) but has not been included explicitly in the main theorem. 
\end{remark}

\begin{remark}
The last statement after~\eqref{seto} can be paraphrased by saying that the top order outgoing flux is uniformly bounded provided the outgoing cone is uniformly away from the horizon. 
\end{remark}

\begin{remark}
In the proof of Theorem \ref{mtheo:boundedness} we will also show the estimate $\mathbb{D}^n \lesssim \mathring{\mathbb{E}}^n$ for $n \geq 4$, where $\mathbb{D}^n$ is defined as (\ref{Dnorm}) but replacing the $\sup_{u \in [u_0,\infty)}$ by $\sup_{\mathcal{M}}$. In other words, there is no loss of derivatives in the boundedness statement that we obtain.
\end{remark}

\begin{remark}
Note that contrary to the asymptotically flat case, one obtains here decay of all Ricci coefficients and curvature components even without adding a residual pure gauge solution.
\end{remark}

Finally, we note that Sobolev embedding on spheres gives the following corollary.
\begin{corollary}
We have the following pointwise bounds:
\begin{align}
 |r^{-1} \bmlin| + \Big| \frac{\glinh_{AB}}{\sqrt{\slashed{g}}} \Big| + \Big| \frac{\glinto}{\sqrt{\slashed{g}}}\Big|  \leq \frac{(\overset{\circ}{\mathbb{E}}{}^{4})^\frac{1}{2} }{\log v}   \, , 
\end{align}
\begin{align}
 |r^2 \Olin|  + |\Omega \xlin| + |r^2 \Omega^{-1} \xblin| + |\otx| + |r^2 \Omega^{-2} \otxb| + |r^2 \elin| + |r^2 \eblin|  + | \olin| + |r^2\Omega^{-2} \olinb|\leq \frac{(\overset{\circ}{\mathbb{E}}{}^{4})^\frac{1}{2} }{\log v} \, ,
\end{align}
\begin{align}
|r\Omega^2 \alin| + |r^2 \Omega \blin| + |r^3 \rlin| + |r^3 \slin| + |r^4 \Omega^{-1} \blin| + |r^5 \Omega^{-2} \ablin| \leq \frac{(\overset{\circ}{\mathbb{E}}{}^{4})^\frac{1}{2} }{\log v} \, .
\end{align}
\end{corollary}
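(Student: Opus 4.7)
The plan is to deduce each pointwise bound directly from the decay estimate (\ref{maind}) of Theorem~\ref{mtheo:boundedness} via the Sobolev embedding $H^2(S^2) \hookrightarrow L^\infty(S^2)$ applied on each sphere $S^2_{u,v}$.

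First I observe that, since $S^2_{u,v}$ carries (with respect to the Schwarzschild-AdS background) the metric $r^2\gamma$, its Christoffel symbols agree with those of $\gamma$, so $\slashed{\nabla} = \nabla^\gamma$ on the sphere, and for a rank-$k$ tensor $\xi$ one has $|\xi|_{\slashed{g}}^2 = r^{-2k}|\xi|_\gamma^2$ and $\|\xi\|_{u,v}^2 = r^{-2k}\|\xi\|_{L^2(S^2,\gamma)}^2$. A direct computation using these identities converts the standard Sobolev inequality on $(S^2,\gamma)$ into the scale-invariant form
\begin{equation*}
|\xi|(u,v,\theta,\phi) \;\lesssim\; \sum_{j=0}^{2} \bigl\|[r\slashed{\nabla}]^j \xi\bigr\|_{u,v},
\end{equation*}
valid uniformly in $(u,v)$ for any $S^2_{u,v}$ scalar, one-form, or symmetric traceless $2$-tensor $\xi$.

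Applying this inequality to each weighted quantity $\xi$ in (\ref{rico}) and $\Xi$ in (\ref{cuco}), and invoking the decay estimate (\ref{maind}) at $n=3$ (whose right-hand side is $\overset{\circ}{\mathbb{E}}{}^{4}/(\log v)^2$), yields the pointwise bound $|\xi|,|\Xi| \lesssim (\overset{\circ}{\mathbb{E}}{}^{4})^{1/2}/\log v$ immediately. For the few entries in the corollary whose weights do not exactly match those in (\ref{rico})---e.g.~$|r^{-1}\bmlin|$ versus the controlled $|r\Omega^{-2}\bmlin|$, or $|\otx|$ versus $|\tfrac{r^2}{\Omega^2}\otx|$---the discrepancy is a multiplicative factor (here $\Omega^2/r^2$) that is uniformly bounded on $\mathcal{M}$ in terms of $M$ and $k$, so the bound carries over up to a harmless constant. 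The strengthened $r$-weights on $|r^2\Olin|$, $|r^2\elin|$, $|r^2\eblin|$ appearing in the middle line of the corollary come instead from the improved last sum in (\ref{maind}), at the cost of one additional derivative (which is affordable since $n=3$ still gives $\overset{\circ}{\mathbb{E}}{}^{4}$ on the right).

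The argument is essentially mechanical given Theorem~\ref{mtheo:boundedness}, and I do not expect any conceptual obstacle. The only checks required are the elementary rescaling verifying the scale-invariant Sobolev inequality above and the uniform boundedness of the auxiliary weight-comparison factors; both reduce to elementary properties of $\Omega^2$ and $r$ on $\mathcal{M}$.
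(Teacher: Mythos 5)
Your proposal is correct and is essentially the paper's own argument: the paper proves this corollary in one line by applying Sobolev embedding on the spheres $S^2_{u,v}$ to the estimates of Theorem \ref{mtheo:boundedness} (with $n=3$, so that the right-hand side is $\overset{\circ}{\mathbb{E}}{}^{4}$), exactly as you describe, including the use of the last sum in (\ref{maind}) for the improved $r$-weights on $r^2\elin$, $r^2\eblin$, $r^2\Olin$. One small inaccuracy: your claim that every weight mismatch between the corollary and (\ref{rico}) is repaired by a factor $\Omega^2/r^2$ that is uniformly bounded fails for the term $|r^2\Omega^{-2}\olinb|$, since there the conversion from the listed quantity $\olinb$ requires multiplying by $r^2/\Omega^2$, which blows up at the horizon; for that single entry one must instead invoke the sphere estimates of Proposition \ref{prop:omega}, which control $\mathcal{A}^{[j]}[r\slashed{\nabla}]\,\olinb\, r^2\Omega^{-2}$ directly, after which the same Sobolev embedding applies.
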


\subsection{Future normalising the solution at the conformal boundary}
We can improve the radial decay in our estimates on the solution if we normalise the solution with respect to the conformal boundary by adding a pure gauge solution. The precise statement is the following:

\begin{theorem} \label{mtheo:decay}
With the assumptions of Theorem \ref{mtheo:boundedness}, there exists a further pure gauge solution $\mathscr{G}_f+\mathscr{G}_q$ such that the geometric quantities associated with the corresponding solution 
$\Sf = \Si + \mathscr{G}_f+\mathscr{G}_q$ satisfy the estimates of Theorem \ref{mtheo:boundedness} but now for 
\[
\xi \in \{ r^2 \Omega \xlin, \Omega^{-1} r^4 \xblin, r \otx, r^3 \Omega^{-2} \otxb , r^3 \elin, r^3 \eblin, , r\olin, r\olinb, r^2\Olin , r\frac{\glinto}{\sqrt{\slashed{g}}} , r\glinh, \bmlin \} \, .
\]
Moreover, the pure gauge solutions $\mathscr{G}_f$, $\mathscr{G}_q$ used in the above is uniformly bounded by initial data in the sense that the geometric quantities of the pure gauge solution $\mathscr{G}_f$, $\mathscr{G}_q$ satisfy the estimates of Theorem \ref{mtheo:boundedness}.
\end{theorem}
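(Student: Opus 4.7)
The plan is to construct the pure gauge solution $\mathscr{G}_f + \mathscr{G}_q$ dynamically from the boundary traces of $\Si$ on $\mathcal{I}$, using the freedoms of Lemma \ref{lem:exactsol} and Lemma \ref{lem:puregaugemetric}, so that after addition all the leading boundary behaviour of the metric-type quantities is cancelled, thereby yielding the improved $r$-decay. The starting point is that by Theorem \ref{mtheo:boundedness} (together with the fact that $\Si$ is aAdS in the linearised sense in the strengthened form made precise by the theorem's proof), the quantities $\Olin$, $\frac{\glinto}{\sqrt{\slashed{g}}}$, $\glinh$ and $\bmlin$ admit smooth traces on $\mathcal{I}$ which are controlled in $L^\infty_t$ (after taking enough angular derivatives) by $\overset{\circ}{\mathbb{E}}{}^n$, and which are supported on $\ell \geq 2$ since the data normalisation of $\Si$ forces vanishing $\ell=0,1$ modes.

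First, I would define $f : \mathbb{R}^+_s \times \mathbb{S}^2 \to \mathbb{R}$ by the ODE $\partial_s f(s,\theta,\phi) = -2\,\Olin\big|_{\mathcal{I}}(2s,\theta,\phi)$, with $f(0,\cdot)=0$. Inspection of Lemma \ref{lem:exactsol} shows that the leading boundary trace of $\Olin$ under $\mathscr{G}_f$ is precisely $\frac{1}{2}(\partial_v f_v + \partial_u f_u)|_{v=u}$, i.e. $f'(t/2)$, so that after adding $\mathscr{G}_f$ the trace of $\Olin$ on $\mathcal{I}$ vanishes, and via \eqref{oml3} and \eqref{oml1}--\eqref{oml2} so do the traces of $\olin+\olinb$ and $\elin+\eblin$. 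Next I would define $q_1(u,\theta,\phi)$, $q_2(u,\theta,\phi)$ by inverting on each sphere $S^2_{u,u}$ the elliptic problems
\begin{align*}
r^2 \slashed{\Delta}\,q_1(u,\cdot) &= -\tfrac{\glinto}{\sqrt{\slashed{g}}}\big|_{\mathcal{I}}^{\,\Si+\mathscr{G}_f}(2u,\cdot),\\
2\,r^2\slashed{\mathcal{D}}_2^\star\slashed{\mathcal{D}}_1^\star(q_1,q_2)(u,\cdot) &= -\glinh\big|_{\mathcal{I}}^{\,\Si+\mathscr{G}_f}(2u,\cdot),
\end{align*}
which are well-posed on $\ell\geq 2$ by the elliptic estimates of Section \ref{sec:elliptic}. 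From Lemma \ref{lem:puregaugemetric}, $\mathscr{G}_q$ then cancels the $O(1)$ traces of $\glinh$ and $\frac{\glinto}{\sqrt{\slashed{g}}}$ while leaving all Ricci and curvature quantities (and hence Step 1) intact. The trace of $\bmlin$ is treated analogously, invoking the further freedom in $\partial_u q_1,\partial_u q_2$.

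Once the leading boundary traces vanish, the improved decay in Theorem \ref{mtheo:decay} follows by reading off the subleading term in the aAdS expansion that is encoded in the estimates of Theorem \ref{mtheo:boundedness}. Concretely, for each quantity $\xi$ in the new list, the bound $\|r\cdot\xi\|_{u,v}\lesssim \overset{\circ}{\mathbb{E}}{}^n$ is obtained by integrating the relevant null-structure or Bianchi equation from the conformal boundary (where the trace now vanishes) using the stronger weighted $r$-decay that propagates through the hierarchy: for example, from $\frac{\glinto}{\sqrt{\slashed{g}}} = O(r^{-1})$ one obtains via \eqref{stos} improved decay for $\otx,\otxb$, and then via the linearised Codazzi equation \eqref{ellipchi} improved decay for $\xlin,\xblin$; the improvements by two powers of $r$ for $\xlin,\xblin$ require additionally using that the trace of $\elin+\eblin$ has been killed. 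The top-order fluxes in \eqref{setop}--\eqref{seto} follow as before by combining with the Teukolsky-level bounds of Theorem \ref{theo:teukolsky}. Finally, since $\Olin|_{\mathcal{I}}$, $\glinh|_{\mathcal{I}}$, $\frac{\glinto}{\sqrt{\slashed{g}}}|_{\mathcal{I}}$ are all bounded by $\overset{\circ}{\mathbb{E}}{}^n$, the functions $f$, $q_1$, $q_2$ are smooth and uniformly controlled, and the uniform bounds on $\mathscr{G}_f$, $\mathscr{G}_q$ themselves follow by plugging into the explicit formulae of Lemmas \ref{lem:exactsol}--\ref{lem:puregaugemetric} and verifying that each quantity therein obeys the weighted bounds of Theorem \ref{mtheo:boundedness}.

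The main obstacle is the rigorous extraction of improved $r$-decay from the vanishing of boundary traces, particularly for quantities (like $\xlin,\xblin$ and $\elin,\eblin$) whose improvement is by two powers of $r$: for these, one cannot simply read off a single subleading coefficient from the expansion, but must combine the vanishing of the trace of $\xi$ itself with the vanishing of traces of neighbouring metric/Ricci components further up the hierarchy, and verify that all these cancellations are compatible with the conformal boundary conditions \eqref{bcl1}--\eqref{bclxi}. A secondary technical point is ensuring that the constructed $f$ extends smoothly in the extended sense to $\mathcal{H}^+$ (as required by the regularity clause of Lemma \ref{lem:exactsol}); this follows because $\Olin|_{\mathcal{I}}$ is a function of $t$ only, so $f$ is smooth on $\mathbb{R}^+$ and the combination $\Omega^2(u,v)f(u,\theta)$ in the relevant expressions vanishes uniformly on the horizon.
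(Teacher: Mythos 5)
There is a genuine gap, and it lies at the very first step: your gauge function $f$ normalises the wrong boundary trace. The content of Theorem \ref{mtheo:decay} that actually requires work is the gain of \emph{two} powers of $r$ for the shears (from $\Omega\xlin$ to $r^2\Omega\xlin$, i.e.\ from $r\xlin$ bounded to $r^3\xlin$ bounded), and by the explicit expansion in the proof of Proposition \ref{prop:improvedchidifference} the obstruction to this is precisely the boundary trace $X(t,\theta)=\lim_{v\to u} r\xlin$: one has $\frac{r^2}{\Omega}\xlin_{\hat A\hat B}=X_{\hat A\hat B}+O(r^{-2})$, so unless $X\equiv 0$ the quantity $r^2\Omega\xlin\sim r^2 X$ diverges. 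The paper therefore determines $f$ by the elliptic equation $r^2\slashed{div}\slashed{div}\,r\xlin^\infty_{\Si}=-2k\,r^2\slashed{div}\slashed{div}\,r^2\slashed{\mathcal{D}}_2^\star\slashed{\nabla}f$ at each boundary time, i.e.\ by the \emph{angular profile} of $f$, which (together with the gauge invariance and vanishing of the $\slashed{curl}\slashed{div}$ part) kills $X$; the improved decay of $\xlin,\xblin$ then comes from integrating (\ref{tchi}) backwards from the boundary, and the improvements for $\elin,\eblin,\Olin,\otx,\otxb$ and the metric quantities cascade down from there via \eqref{chih3}--\eqref{chih3b}, Codazzi and the Gauss equation. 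Your $f$, by contrast, is fixed by the ODE $\partial_s f=-2\Olin|_{\mathcal{I}}$, which constrains only $\partial_t f$ and leaves the angular profile of $f$ at each time essentially free; it kills $\Olin|_{\mathcal{I}}$ but has no reason to kill $X$ (in the pure gauge solution $\Olin|_{\mathcal{I}}$ depends on $\partial_t f$ while $r\xlin|_{\mathcal{I}}\propto\slashed{\mathcal{D}}_2^\star\slashed{\nabla}f$ depends on the angular Hessian of $f$ at fixed $t$ -- these are independent). Your fallback, deriving the improvement for $\xlin$ from Codazzi and improved decay of $\otx$, cannot close: Codazzi only controls $\slashed{div}\xlin$ up to the other terms, and no elliptic or transport identity removes a nonzero boundary trace of $r\xlin$ without the correct choice of $f$. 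Note also that the two normalisations ($\Olin|_{\mathcal{I}}=0$ and $X=0$) are generically not simultaneously achievable with a single scalar $f$; the paper does not need to impose the former because it follows automatically once $X=0$, via $r^3\elin,r^3\eblin$ bounded and $2\slashed{\nabla}\Olin=\elin+\eblin$. The converse implication you rely on is false.

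A secondary problem concerns $\mathscr{G}_q$. You propose to invert elliptic problems at \emph{every} boundary time to kill the traces of both $\glinh$ and $\glinto$, and then to kill the trace of $\bmlin$ "using the further freedom in $\partial_u q_1,\partial_u q_2$" -- but there is no such further freedom: once $q_1(u,\cdot),q_2(u,\cdot)$ are prescribed for all $u$, the shift $\bmlin_{\mathscr{G}_q}=r^2\slashed{\mathcal{D}}_1^\star(\partial_uq_1,\partial_uq_2)$ is determined, so the system is overdetermined. The paper instead fixes $q_1,q_2$ by the transport equation $\partial_uq_i+\phi_i/r=0$ so as to make $\bmlin=0$ along the boundary, imposes $\glinh=0$ only on the \emph{initial} boundary sphere, and then \emph{derives} $\glinh=0$ (and $r^{-2}\glinto=0$) on all of $\mathcal{I}$ from the transport equations \eqref{stos}--\eqref{stos2} together with the already-established vanishing of $\xlin,\xblin,\otx,\otxb,\bmlin$ on $\mathcal{I}$. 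This ordering -- shear first, metric quantities last -- is not cosmetic; it is what makes the construction consistent.
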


\begin{remark}
Note the improvement in $r$-weights which is a manifestation of the fact that the solution is now normalised at the conformal boundary. In particular, the metric perturbations now vanish identically on the conformal boundary.
\end{remark}

As the proof involves repeating many of the same estimates of the proof of Theorem \ref{mtheo:boundedness}, we only provide a sketch of Theorem \ref{mtheo:decay} in Section \ref{sec:otherproof}. 

\section{Proof of  the main theorem} \label{sec:proof}

\subsection{Brief overview}
As in the asymptotically flat case, the proof exploits the hierarchical structure of the system of gravitational perturbations in the double null gauge. In Section \ref{sec:transportlemma} we prove the basic transport lemmas that will be invoked throughout the proof when integrating along null cones. 
Since we will always consider the geometric quantities of $\Si$, which have vanishing $\ell=0,1$ modes, the elliptic operators $\mathcal{A}^{[i]}$ have trivial kernel when acting on such a quantity and hence allow estimating the entire $H^{i}$-Sobolev norm of angular derivatives (recall Section \ref{sec:elliptic}). In Section \ref{sec:hoz} we obtain control on certain horizon fluxes of non-gauge invariant quantities from the gauge invariant quantities. These are used in Section \ref{sec:shearestimates} to prove spacetimes boundedness and decay estimates for the ingoing linearised shear. The outgoing linearised shear is then estimated in Section \ref{sec:outgoingshear} using the boundary condition and the transport equation along the ingoing direction. The estimates on the shears allow to estimate various additional components in the system, discussed in Section \ref{sec:consequences}. However, these estimates are somewhat non-optimal in terms of regularity because estimating the ingoing shear required commutation with two transversal derivatives. The regularity is recovered in Section \ref{sec:expansion} applying again a hierarchy of propagation equations and the bounds already obtained.  We conclude the proof of the main theorem in Section \ref{sec:finalrp} after estimating the metric coefficients in Section \ref{sec:metric}.

\subsection{The transport lemmas} \label{sec:transportlemma}

\begin{lemma} \label{lem:bastra1}
Let $\xi$ be an $S^2_{u,v}$ tensor satisfying the propagation equation
\begin{align}
\Omega \slashed{\nabla}_3 \xi = \frac{\Omega^2}{r^2} B 
\end{align}
along the ingoing cone $\underline{C}_v$ (which intersects $\mathcal{I}$ in the sphere $S^2_{v,v}$). Assume $B$ satisfies 
\[
\int_{v}^{\infty} \frac{\Omega^2}{r^2} \|B\|^2_{\bar{u},v} d\bar{u} \lesssim \mathbb{D} \, 
\]
along the cone. Then, provided $ \|\xi \|_{v,v} < \infty$, we have 
\[
\sup_u \|\xi \|_{u,v} \lesssim  \|\xi \|_{v,v} + \sqrt{\frac{\mathbb{D}}{r}} \, .
\]
Moreover, the statement remains true replacing $\mathbb{D}$ by $\frac{\mathbb{D}}{(\log v)^2}$ everywhere.
\end{lemma}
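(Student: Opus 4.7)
The plan is to reduce the tensorial transport equation to a scalar Grönwall-type differential inequality for $\|\xi\|_{\bar u,v}$, integrate from the conformal boundary $\mathcal{I}$ (where $\bar u=v$) inward along $\underline{C}_v$, and close with one Cauchy--Schwarz application using the explicit weight $\Omega^2/r^2$.

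First I would compute $\partial_{\bar u}|\xi|^2$ pointwise on $S^2_{\bar u,v}$. Using the Schwarzschild-AdS background relation $\partial_{\bar u}r = -\Omega^2$ one has $\partial_{\bar u}\slashed{g}^{AB} = (2\Omega^2/r)\slashed{g}^{AB}$, while $\Omega\mathrm{tr}\underline{\chi} = -2\Omega^2/r$ converts $\Omega\slashed{\nabla}_3$ into $\partial_{\bar u}$ via the formula in Section \ref{sec:pcd}. The $N$ trace contributions from $\partial_{\bar u}\slashed{g}^{-1}$ (for a rank-$N$ tensor) cancel exactly against those generated by the $\slashed{\nabla}_3$--$\partial_{\bar u}$ conversion, leaving the clean identity $\partial_{\bar u}|\xi|^2 = 2\langle \xi,\Omega\slashed{\nabla}_3\xi\rangle_{\slashed{g}}$. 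Integrating over $S^2$ (whose volume form $\sin\theta\,d\theta\,d\phi$ is $\bar u$-independent) and inserting the transport equation yields
\[
\partial_{\bar u}\|\xi\|^2_{\bar u,v} = 2\int_{S^2}\Big\langle \xi,\tfrac{\Omega^2}{r^2}B\Big\rangle_{\slashed{g}}\sin\theta\,d\theta\,d\phi,
\]
and Cauchy--Schwarz on the sphere, combined with $\partial_{\bar u}\|\xi\|^2 = 2\|\xi\|\partial_{\bar u}\|\xi\|$, gives $\partial_{\bar u}\|\xi\|_{\bar u,v} \leq \tfrac{\Omega^2}{r^2}\|B\|_{\bar u,v}$.

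Next, I integrate this from $\bar u=v$ to $\bar u=u$, using $\|\xi\|_{v,v}<\infty$ as initial datum, and apply Cauchy--Schwarz in $\bar u$:
\[
\int_v^u\frac{\Omega^2}{r^2}\|B\|_{\bar u,v}\,d\bar u \leq \left(\int_v^u\frac{\Omega^2}{r^2}\,d\bar u\right)^{1/2}\!\left(\int_v^u\frac{\Omega^2}{r^2}\|B\|^2_{\bar u,v}\,d\bar u\right)^{1/2}.
\]
The first factor is evaluated explicitly using $\partial_{\bar u}r = -\Omega^2$ and $r(v,v)=\infty$, giving $\int_v^u(\Omega^2/r^2)d\bar u = 1/r(u,v)$; the second factor is $\leq\sqrt{\mathbb{D}}$ by assumption. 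Combining yields $\|\xi\|_{u,v} \leq \|\xi\|_{v,v} + \sqrt{\mathbb{D}/r(u,v)}$, and since $r(u,v)\geq r_+>0$ along $\underline{C}_v$ this is uniform in $u$, proving the first claim. The logarithmic variant is immediate: if the hypothesis has $\mathbb{D}/(\log v)^2$ on the right, the same argument produces $\sqrt{\mathbb{D}/(r(\log v)^2)}$, since nothing in the proof interacts with the size of the bound.

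There is no real obstacle here; the only point requiring care is the trace cancellation in the first step, which relies crucially on the fact that the norm $\|\cdot\|_{u,v}$ in \eqref{spherenorm} omits the $r^2$-factor of the induced volume element --- so the two opposing $r$-weights balance and one recovers an estimate of the shape one would see in a flat background. The key structural feature being exploited is that integration from $\mathcal{I}$ inward with the weight $\Omega^2/r^2$ is uniformly finite (and in fact gives the $1/r$ gain), which is the natural AdS analogue of integrating from data along ingoing null cones.
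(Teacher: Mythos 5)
Your proof is correct and is exactly the paper's argument: the paper's proof is the one-line remark that the lemma is a ``direct consequence of Cauchy--Schwarz and integrability in $u$ of $\frac{\Omega^2}{r^2}=\partial_u(1/r)$'', and your write-up simply fills in the details (the metric-compatibility cancellation giving $\partial_{\bar u}\|\xi\|\le \frac{\Omega^2}{r^2}\|B\|$, then Cauchy--Schwarz in $\bar u$ with $\int_v^u \frac{\Omega^2}{r^2}\,d\bar u = 1/r(u,v)$ using $r(v,v)=\infty$). No discrepancies.
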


\begin{proof}
Direct consequence of Cauchy-Schwarz and integrability in $u$ of $\frac{\Omega^2}{r^2}=-\frac{\partial_u r}{r^2} = \partial_u \left(\frac{1}{r}\right)$.
\end{proof}

\begin{lemma}\label{lem:bastra2}
Let $\xi$ be an $S^2_{u,v}$ tensor satisfying the propagation equation
\begin{align} \label{pro4}
\Omega \slashed{\nabla}_4 \xi = \frac{\Omega^2}{r^2} B 
\end{align}
along the outgoing cone ${C}_u$ (which intersects $\mathcal{I}$ in the sphere $S^2_{u,u}$). Assume $B$ satisfies
\begin{align} \label{leq1}
\int_{v_1}^{v_2} \|B\|^2_{{u},\bar{v}} d\bar{v} \lesssim \mathbb{D} \cdot \left(v_2-v_1\right) \, 
\end{align}
for any $v_2 \geq v_1 \geq 0$ along the cone and also for some fixed $r_0 > r_+$ the bound
\begin{align} \label{leq2}
\int_{v(r_0,u)}^{u} \|B\|^2_{{u},\bar{v}} d\bar{v} \lesssim_{r_0} \mathbb{D} . \, 
\end{align}
Then we have 
\begin{align} \label{leq3}
\sup_v \|\Omega^{-2} r^2 \xi \|_{u,v} \lesssim  \|\Omega^{-2} r^2 \xi \|_{u,v_0} +\sqrt{\mathbb{D}} \, .
\end{align}
\end{lemma}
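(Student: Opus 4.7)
The plan is to reformulate the equation in terms of $\eta := \Omega^{-2} r^2 \xi$, derive a transport equation along $C_u$, and then split the cone into a near-horizon region (where the ODE gains damping) and a far region (where $r^2/\Omega^2$ is bounded and one can integrate $\|\xi\|$ directly). Using $\partial_v r = \Omega^2$, $\omega = M/r^2 + k^2 r$, and the algebraic identity $\Omega^2 - r\omega = 1-3M/r$, one first computes $\partial_v(r^2/\Omega^2) = 2r(1-3M/r)/\Omega^2$, which combined with $\Omega\slashed{\nabla}_4\xi = (\Omega^2/r^2)B$ yields $\Omega\slashed{\nabla}_4\eta = \tfrac{2(1-3M/r)}{r}\eta + B$. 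The identity $\partial_v\|\eta\|_{u,v}^2 = 2\langle\Omega\slashed{\nabla}_4\eta,\eta\rangle_{u,v}$---which holds because the norm (\ref{spherenorm}) integrates against the background-fixed measure $\sin\theta\,d\theta\,d\phi$, so the $N\Omega^2/r$ factors arising from $\partial_v|\eta|^2_{\slashed g} = \partial_v(r^{-2N}|\eta|^2_\gamma)$ and from $\Omega\slashed{\nabla}_4 = \partial_v - (N/2)\Omega\,\mathrm{tr}\chi$ on $N$-tensors cancel exactly---then gives the key differential inequality
\[
\partial_v\|\eta\|_{u,v} \;\leq\; \tfrac{2(1-3M/r)}{r}\,\|\eta\|_{u,v} + \|B\|_{u,v}.
\]

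Since $r_+ < 3M$, one may fix $r_\star \in (r_+, \min(3M, r_0))$ and set $v_\star := v(r_\star, u)$. In the near-horizon region $v \in [v_0, v_\star]$ the coefficient is bounded above by $-c_0 < 0$ for a constant $c_0$ depending only on the background; applying AM--GM $2\|B\|\|\eta\| \leq (c_0/2)\|\eta\|^2 + (2/c_0)\|B\|^2$ turns the inequality into $\partial_v\|\eta\|^2 + (c_0/2)\|\eta\|^2 \leq (2/c_0)\|B\|^2$. Multiplying by $e^{c_0(v-v_0)/2}$, integrating, and decomposing $[v_0, v]$ into unit-length subintervals---on each of which (\ref{leq1}) bounds $\int\|B\|^2$ by $\mathbb{D}$---yields
\[
\|\eta(v)\|^2_{u,v} \leq \|\eta(v_0)\|^2_{u,v_0} + \tfrac{2}{c_0}\sum_{k\geq 0} e^{-c_0 k/2}\,\mathbb{D} \;\lesssim\; \|\eta(v_0)\|^2_{u,v_0} + \mathbb{D}.
\]
For the far region $v \in [v_\star, u]$, I would instead integrate the simpler estimate $\partial_v\|\xi\|_{u,v} \leq (\Omega^2/r^2)\|B\|_{u,v}$ and apply Cauchy--Schwarz with weight $\Omega^2/r^2$: since $\int_{v_\star}^v \Omega^2/r^2\,d\bar v = 1/r_\star - 1/r(u,v) \leq 1/r_\star$ and $\int_{v_\star}^u (\Omega^2/r^2)\|B\|^2\,d\bar v \lesssim_{r_0,r_\star} \mathbb{D}$---the latter using (\ref{leq1}) on the bounded-$v$-length intermediate piece $r \in [r_\star, r_0]$ (where $\Omega^2$ is bounded below) together with (\ref{leq2}) on $[r_0,\infty)$---one obtains $\|\xi(v)\|_{u,v} \leq \|\xi(v_\star)\|_{u,v_\star} + C\sqrt{\mathbb{D}}$. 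Multiplying by $(r^2/\Omega^2)(u,v)$, which is continuous on $[r_\star,\infty)$ with limit $1/k^2$ at infinity and hence bounded by some $C_{r_\star}$, and using $(r^2/\Omega^2)(u,v_\star)\|\xi(u,v_\star)\| = \|\eta(u,v_\star)\|$, gives $\|\eta(v)\|_{u,v} \lesssim \|\eta(v_\star)\|_{u,v_\star} + \sqrt{\mathbb{D}}$; combined with the near-horizon bound, this yields (\ref{leq3}).

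The hard part is exactly the non-monotone shape of $r^2/\Omega^2$---which decreases from $+\infty$ at $r=r_+$ to a minimum at $r=3M$ and then increases to $1/k^2$ at infinity---that forces the case split. A single global Gronwall argument on the $\|\eta\|$-equation does not close because its natural integrating factor $\tilde F(v) = (r^2/\Omega^2)(u,v)/(r^2/\Omega^2)(u,v_0)$ is not uniformly bounded below (for $r_0 > 3M$ the ratio can even exceed $1$), and without the damping $-(1-3M/r)/r > 0$ in $r<3M$ the $v$-linear growth of $\int\|B\|^2$ permitted by hypothesis (\ref{leq1}) would produce unacceptable growth in $v$. The block-decomposition step in the near-horizon region is precisely the mechanism that turns the $v$-linear bound in (\ref{leq1}) into a uniform-in-$v$ bound by $\mathbb{D}$.
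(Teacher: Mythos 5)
Your proof is correct, and its skeleton coincides with the paper's: a damped Gronwall estimate that converts the $v$-linear growth permitted by (\ref{leq1}) into a uniform bound, followed by a direct integration with Cauchy--Schwarz and (\ref{leq2}) to recover the $r^2$-weight near infinity. The one substantive difference is the choice of conjugated variable. The paper works with $\Omega^{-2}\xi$ rather than $\Omega^{-2}r^2\xi$: writing the equation as $\Omega\slashed{\nabla}_4(\Omega^{-2}\xi) + 2\omega\,(\Omega^{-2}\xi) = r^{-2}B$, the damping coefficient $2\omega = 2(M/r^2 + k^2 r)$ is bounded below by a positive constant on the \emph{entire} exterior (the $k^2 r$ term, absent in the asymptotically flat case, saves the day as $r\to\infty$), so a single global Gronwall argument gives $\sup_v\|\Omega^{-2}\xi\|_{u,v}\lesssim \|\Omega^{-2}\xi\|_{u,v_0}+\sqrt{\mathbb{D}}$ with no case split; the missing factor of $r^2$ is then restored near infinity exactly as in your far-region step. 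Your variable $\eta=\Omega^{-2}r^2\xi$ trades that global damping for the sign-changing coefficient $2(1-3M/r)/r$, which is why you are forced to split at some $r_\star<3M$ and handle the intermediate band $[r_\star,r_0]$ separately via (\ref{leq1}) over a bounded $v$-interval — all of which you do correctly (including the cancellation of the $\mathrm{tr}\chi$ terms in $\partial_v\|\cdot\|^2_{u,v}$, which the paper uses implicitly). Your closing observation about why a single Gronwall on $\|\eta\|$ cannot close is accurate and is precisely the obstruction the paper's choice of weight is designed to avoid.
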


\begin{proof}
We write (\ref{pro4}) as
\begin{align}
\Omega \slashed{\nabla}_4 (\Omega^{-2} \xi) + 2 \omega (\Omega^{-2} \xi)  = \frac{1}{r^2} B
\end{align}
and hence, contracting with $(\Omega^{-2} \xi)$ and applying Cauchy's inequality with an $\epsilon$ and an absorption argument we get
\begin{align}
\partial_v \left( \|  (\Omega^{-2} \xi) \|_{u,v}^2 \right) + \left(k^2r + \frac{M}{r^2}\right)  \|  (\Omega^{-2} \xi) \|_{u,v}^2  \lesssim \frac{4}{r^4} \|B\|_{u,v}^2 \, .
\end{align}
We have $\left(k^2r + \frac{M}{r^2}\right) \geq c_1$ and therefore integrating between $v_1$ and $v_2$ yields
\begin{align}
 \|  (\Omega^{-2} \xi) \|_{u,v_2}^2 + c_1 \int_{v_1}^{v_2}  \|  (\Omega^{-2} \xi) \|_{u,v}^2 dv \lesssim  \|  (\Omega^{-2} \xi) \|_{u,v_1}^2 + \mathbb{D} \cdot \left(v_2-v_1\right) \, .
\end{align}
An elementary calculus exercise yields the conclusion (\ref{leq3}) without the factor of $r^2$. To improve the weight near infinity we can integrate (\ref{pro4}) directly from $S^2_{u,v(r_0,u)}$ (where we have already proven the desired bound) using Cauchy-Schwarz and (\ref{leq2}) only as in the proof of Lemma \ref{lem:bastra1}. 
\end{proof}

\begin{lemma} \label{lem:bastra3}
Under the assumptions of Lemma \ref{lem:bastra2}, if $B$ satisfies in addition
 \begin{align} \label{leq1b}
\int_{v_1}^{v_2} \|B\|^2_{{u},\bar{v}} d\bar{v} \lesssim \mathbb{D} \cdot \frac{\left(v_2-v_1\right)}{(\log v_1)^2} \, 
\end{align}
for any $v_2 \geq v_1 \geq 0$ along the cone and also for some fixed $r_0 > r_+$ the bound
\begin{align} \label{leq2b}
\int_{v(r_0,u)}^{u} \|B\|^2_{{u},\bar{v}} d\bar{v} \lesssim_{r_0} \frac{\mathbb{D}}{(\log v(r_0,u)))^2} , \, 
\end{align}
then $\xi$ satisfies along $C_u$ for any $v$ the decay bound
\begin{align} 
 \|\Omega^{-2} r^2 \xi \|_{u,v} \lesssim \frac{ \|\Omega^{-2} r^2 \xi \|_{u,v_0} +\sqrt{\mathbb{D}}}{\log v} \, .
\end{align}
\end{lemma}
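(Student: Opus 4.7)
The plan is to revisit the energy inequality from the proof of Lemma~\ref{lem:bastra2} and combine it with Gronwall, now exploiting the logarithmic decay of the source provided by~\eqref{leq1b},~\eqref{leq2b}. Writing the equation as $\Omega \slashed{\nabla}_4 (\Omega^{-2} \xi) + 2\omega \, \Omega^{-2}\xi = B/r^2$, contracting with $\Omega^{-2}\xi$ and using $\omega = k^2 r + M/r^2 \geq c_1 > 0$ uniformly on the exterior yields
\[
\partial_v E + c_1 E \lesssim \|B\|^2_{u,v}, \qquad E(v) := \|\Omega^{-2}\xi\|^2_{u,v},
\]
and Gronwall then gives
\[
E(v) \leq e^{-c_1(v-v_0)} E(v_0) + C \int_{v_0}^{v} e^{-c_1(v-\bar{v})} \|B\|^2_{u,\bar{v}} \, d\bar{v}.
\]

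The key step is to show that this convolution integral is $\mathcal{O}(\mathbb{D}/(\log v)^2)$. I would split it at $v/2$. On $[v_0, v/2]$ the Gronwall kernel is at most $e^{-c_1 v/2}$, so together with the linearly-growing bound~\eqref{leq1} this produces a contribution decaying faster than any polynomial in $\log v$. On $[v/2, v]$ I decompose dyadically into unit intervals $[v-n-1, v-n]$ for $0 \leq n \leq v/2 - 1$: for such $n$ one has $v-n-1 \geq v/2$ and therefore $\log(v-n-1) \geq \tfrac{1}{2}\log v$ for $v$ large, so~\eqref{leq1b} gives $\int_{v-n-1}^{v-n} \|B\|^2 \, d\bar{v} \lesssim \mathbb{D}/(\log v)^2$. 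Summing the geometric series $\sum_n e^{-c_1 n}$ produces $E(v) \lesssim (E(v_0) + \mathbb{D})/(\log v)^2$, and noting $E(v_0) \leq r_+^{-4}\|\Omega^{-2} r^2 \xi\|^2_{u,v_0}$ this yields the logarithmic decay of the unweighted quantity $\|\Omega^{-2}\xi\|_{u,v}$.

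It remains to recover the $r^2$ weight near the conformal boundary. As in the last step of the proof of Lemma~\ref{lem:bastra2}, for $r(u,v) \geq r_0$ I would integrate~\eqref{pro4} directly from $S^2_{u, v(r_0,u)}$, applying Cauchy-Schwarz together with the integrability of $\int \Omega^4/r^4 \, d\bar{v}$ near infinity and~\eqref{leq2b} for the $\|B\|^2$ integral. Since $|u - v(r_0,u)| = |r^\star(r_0)|$ is bounded, for large $u$ the quantities $\log v$, $\log v(r_0,u)$ and $\log u$ are all comparable throughout the range $v \in [v(r_0,u), u]$, so the bound obtained in Step~2 at the ``inner'' sphere $v = v(r_0,u)$ propagates to the whole range; using $r^2/\Omega^2 \lesssim_{r_0} 1$ uniformly on $\{r \geq r_0\}$ then gives the desired bound on $\|\Omega^{-2} r^2 \xi\|$. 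For $r \leq r_0$ one simply uses $\|\Omega^{-2} r^2 \xi\| \leq r_0^2 \|\Omega^{-2}\xi\|$. The main subtle point is verifying that the Gronwall convolution against a source decaying only like $1/(\log \bar v)^2$ still reproduces the logarithmic rate on the left-hand side; everything else is a direct adaptation of the proof of Lemma~\ref{lem:bastra2}.
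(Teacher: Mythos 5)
Your proposal is correct and follows essentially the same route as the paper: the paper proves this lemma as a ``simple variation'' of Lemma \ref{lem:bastra2}, i.e.\ the same energy identity $\partial_v \|\Omega^{-2}\xi\|^2_{u,v} + c_1\|\Omega^{-2}\xi\|^2_{u,v} \lesssim \|B\|^2_{u,v}$ followed by an elementary ODE/Gronwall argument and a direct integration of (\ref{pro4}) from $S^2_{u,v(r_0,u)}$ to recover the $r^2$-weight, now feeding in (\ref{leq1b})--(\ref{leq2b}) instead of (\ref{leq1})--(\ref{leq2}). Your splitting of the Gronwall convolution at $v/2$ and the unit-interval decomposition on $[v/2,v]$ is exactly the ``elementary calculus exercise'' the paper leaves implicit, so there is nothing to add.
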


\begin{proof}
Simple variation of the previous proof. 
\end{proof}

\subsection{Estimates on the horizon} \label{sec:hoz}
We recall that $\otx=0$ and $\slashed{div} \elin + \rlin = 0$ identically on $\mathcal{H}^+$ by the initial data normalisation, and that all geometric quantities have vanishing $\ell=0,1$ modes.

\begin{proposition} \label{prop:hozfluxestimates}
We have for $n \geq 3$ the following flux estimates on the horizon for any $v_2 \geq v_1 \geq v_0$
\begin{align}
\sum_{i=1}^{n-1} \int_{v_1}^{v_2} d\bar{v} \|\mathcal{A}^{[i]}\slashed{\mathcal{D}}_2^\star (\Omega \blin), \mathcal{A}^{[i]} \slashed{\mathcal{D}}_2^\star \slashed{div} (\Omega \xlin), \mathcal{A}^{[i-1]}\slashed{\mathcal{D}}_2^\star \slashed{\mathcal{D}}_1^\star (r^3 \rlin,r^3 \slin)\|_{\infty,\bar{v}}^2 \lesssim \left(v_2-v_1\right) \overline{\mathbb{E}}^{n}_{data} [\alin, \ablin]  \, ,
\end{align}
\begin{align}
\sum_{i=1}^{n-1} \int_{v_1}^{v_2} d\bar{v} \|\mathcal{A}^{[i]}\slashed{\mathcal{D}}_2^\star (\Omega \blin), \mathcal{A}^{[i]} \slashed{\mathcal{D}}_2^\star \slashed{div} (\Omega \xlin), \mathcal{A}^{[i-1]}\slashed{\mathcal{D}}_2^\star \slashed{\mathcal{D}}_1^\star (r^3 \rlin,r^3 \slin)\|_{\infty,\bar{v}}^2  \lesssim \frac{\left(v_2-v_1\right)  \overline{\mathbb{E}}^{n+1}_{data} [\alin, \ablin] }{(\log v_1)^2} \, . 
\end{align}
We have on the horizon the following estimates on spheres: For $n\geq 3$
\begin{align}
\sum_{i=1}^{n-2} \|\mathcal{A}^{[i]}\slashed{\mathcal{D}}_2^\star (\Omega \blin), \mathcal{A}^{[i]} \slashed{\mathcal{D}}_2^\star \slashed{div} (\Omega \xlin), \mathcal{A}^{[i-1]}\slashed{\mathcal{D}}_2^\star \slashed{\mathcal{D}}_1^\star (r^3 \rlin,r^3 \slin)\|_{\infty,v}^2  \lesssim \overline{\mathbb{E}}^{n}_{data} [\alin, \ablin]  \, ,
\end{align}
\begin{align}
\sum_{i=1}^{n-2} \|\mathcal{A}^{[i]}\slashed{\mathcal{D}}_2^\star (\Omega \blin), \mathcal{A}^{[i]} \slashed{\mathcal{D}}_2^\star \slashed{div} (\Omega \xlin), \mathcal{A}^{[i-1]}\slashed{\mathcal{D}}_2^\star \slashed{\mathcal{D}}_1^\star (r^3 \rlin,r^3 \slin)\|_{\infty,v}^2  \lesssim  \frac{ \overline{\mathbb{E}}^{n+1}_{data} [\alin, \ablin] }{(\log v)^2} \, .
\end{align}
\end{proposition}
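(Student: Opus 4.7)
The strategy is to reduce each of the three quantities on the left-hand side to the already-controlled Teukolsky fluxes of Theorem~\ref{theo:teukolsky} by exploiting the simplifications available on $\mathcal{H}^+$ (where $\Omega tr\chi = \Omega tr\underline{\chi} = 0$, $\slashed{\nabla}\Omega = 0$, and $\otx = 0$ by the initial data normalisation) together with elliptic estimates on the $\ell \geq 2$ modes. Three horizon identities drive the argument. First, the definitions~\eqref{Prel} and~\eqref{Prel2} collapse on $\mathcal{H}^+$ (since $tr\chi = tr\underline{\chi} = 0$) to
\[
\Psilin|_{\mathcal{H}^+} = r_+^5\,\slashed{\mathcal{D}}_2^\star \slashed{\mathcal{D}}_1^\star(-\rlin,\slin)|_{\mathcal{H}^+}, \qquad \Psilinb|_{\mathcal{H}^+} = r_+^5\,\slashed{\mathcal{D}}_2^\star \slashed{\mathcal{D}}_1^\star(\rlin,\slin)|_{\mathcal{H}^+},
\]
so that, since $r \equiv r_+$ on $\mathcal{H}^+$, the third bracketed quantity in the proposition is literally $r_+^{-2}\,\Psilinb|_{\mathcal{H}^+}$. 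Second, multiplying Bianchi equations~\eqref{Bianchi4},~\eqref{Bianchi6} by $\Omega$ and restricting to $\mathcal{H}^+$ gives
\[
\partial_v\rlin|_{\mathcal{H}^+} = \slashed{div}(\Omega\blin)|_{\mathcal{H}^+}, \qquad \partial_v\slin|_{\mathcal{H}^+} = -\slashed{curl}(\Omega\blin)|_{\mathcal{H}^+},
\]
so that by Hodge decomposition on $\ell \geq 2$ the full one-form $\Omega\blin$ on $\mathcal{H}^+$ is elliptically recovered from $T$-derivatives of $\rlin, \slin$. Third, multiplying the linearised Codazzi equation for $\xlin$ in~\eqref{ellipchi} by $\Omega$ yields
\[
\slashed{div}(\Omega\xlin)|_{\mathcal{H}^+} = -\Omega\blin|_{\mathcal{H}^+},
\]
so that applying $\mathcal{A}^{[i]} \slashed{\mathcal{D}}_2^\star$ transfers control of $\Omega\blin$ directly to $\mathcal{A}^{[i]} \slashed{\mathcal{D}}_2^\star \slashed{div}(\Omega\xlin)$.

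Combining these, Theorem~\ref{theo:teukolsky} gives the required estimates as follows. The Teukolsky flux bounds~\eqref{gro1},~\eqref{gro2} control $\overline{\mathbb{E}}^{n-2}_\infty[\Psilin^R](v_1,v_2)$ and, via the remark following~\eqref{masterdata}, also the analogous flux of $\Psilin^D$. Writing $\Psilinb = \tfrac12(\Psilin^R - \Psilin^D) - 6M\mathcal{L}^{-1}(\mathcal{L}-2)^{-1}T\Psilin^D$ and inserting this into the first identity immediately yields the flux estimates for $\mathcal{A}^{[i-1]}\slashed{\mathcal{D}}_2^\star \slashed{\mathcal{D}}_1^\star(r^3\rlin, r^3\slin)$ with the stated right-hand sides. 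Commuting the second identity with $T$ and the operator $\mathcal{A}^{[i]}$ and inverting $\slashed{\mathcal{D}}_1$ elliptically on $\ell \geq 2$ then controls $\mathcal{A}^{[i]} \slashed{\mathcal{D}}_2^\star(\Omega\blin)$ on horizon by $T$-commuted Teukolsky bounds; the third identity immediately transfers this to $\mathcal{A}^{[i]} \slashed{\mathcal{D}}_2^\star \slashed{div}(\Omega\xlin)$. Integrating in $\bar{v} \in [v_1, v_2]$ yields the two flux statements, while the sphere bounds follow from one-dimensional Sobolev embedding in $v$ along $\underline{C}_\infty$, which costs one derivative and accounts for the drop from $i \leq n-1$ (flux) to $i \leq n-2$ (spheres). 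The decay statements use~\eqref{gro2} with one extra $T$-commutation, consistent with the $\overline{\mathbb{E}}^{n+1}_{data}$ on the right-hand side of the decay bounds.

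The only delicate point is the bookkeeping of derivatives. Inverting $\slashed{\mathcal{D}}_2^\star \slashed{\mathcal{D}}_1^\star$ on $\ell \geq 2$ costs two angular derivatives, inverting $\slashed{\mathcal{D}}_1$ costs one more, and passing through Bianchi contributes one $T$-derivative; extracting $\Psilinb$ from $\Psilin^R$ costs an additional $T$-derivative applied to $\Psilin^D$. These losses are precisely compensated by the structure of the master energy~\eqref{masterdata}, in which $\Psilin^R$ is controlled at two fewer angular derivatives than $\alin, \ablin$ and an extra $T$-derivative of $\Psilin^D$ is controlled via $T$-commutation of the Teukolsky system. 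Verifying that these counts align with the indices $i \leq n-1$ (flux) and $i \leq n-2$ (spheres), and with the shift $n \to n+1$ in the decay bounds, is the main piece of work; the underlying dynamical estimates are entirely supplied by Theorem~\ref{theo:teukolsky}.
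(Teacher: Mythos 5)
Your reduction of the $(\rlin,\slin)$ term to $\Psilinb$ contains a genuine error on which the rest of the argument is built. You claim that the terms $\frac{3}{4}r^5\rho\, tr\chi(\xlin-\xblin)$ and $-\frac{3}{4}r^5\rho\, tr\underline{\chi}(\xlin-\xblin)$ in (\ref{Prel})--(\ref{Prel2}) vanish on $\mathcal{H}^+$ because $tr\chi=tr\underline{\chi}=0$ there. But $\xlin$ is \emph{not} regular at the horizon: by (\ref{smoothextended}) the regular quantity is $\Omega\xlin$, so $\xlin\sim\Omega^{-1}$, and
\[
tr\underline{\chi}\,\big(\xlin-\xblin\big)\Big|_{\mathcal{H}^+}=-\frac{2\Omega}{r}\,\Omega^{-1}\big(\Omega\xlin\big)\Big|_{\mathcal{H}^+}=-\frac{2}{r_+}\,\big(\Omega\xlin\big)\Big|_{\mathcal{H}^+}\neq 0
\]
in general (only the $\xblin$ contribution vanishes, since $\Omega^{-1}\xblin$ is the regular quantity there). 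Hence $\Psilinb|_{\mathcal{H}^+}$ is \emph{not} $r_+^5\slashed{\mathcal{D}}_2^\star\slashed{\mathcal{D}}_1^\star(\rlin,\slin)$; it carries a surviving $\Omega\xlin$ term, which is exactly why the paper's proof records the horizon identity as $\slashed{div}\,\Psilin=r^5\slashed{div}\,\slashed{\mathcal{D}}_2^\star\slashed{\mathcal{D}}_1^\star(-\rlin,\slin)+\frac{3}{4}r^5\rho\,\frac{tr\chi}{\Omega}\,\slashed{div}\,\Omega\xlin$ and uses the $\xlin$-flux as an input. With the corrected identity your scheme becomes circular: to extract $(\rlin,\slin)$ from $\Psilinb$ you need $\Omega\xlin$, which you obtain from $\Omega\blin$ via Codazzi, which you in turn propose to obtain from $T$-derivatives of $(\rlin,\slin)$.

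The missing ingredient is the step that breaks this circle. The paper restricts the $\slashed{\nabla}_3$-Bianchi equation (\ref{rewriteBianchi}) to $\mathcal{H}^+$, so that $\|\Omega^{-1}\slashed{\nabla}_3(\Omega^2 r\alin)\|^2_{\infty,v}$ --- a quantity controlled by the Teukolsky fluxes of Theorem \ref{theo:teukolsky} --- equals the spherical integral of $|2\slashed{\mathcal{D}}_2^\star\Omega\blin+3\rho\,\Omega\xlin|^2$, and then observes that this quadratic form is \emph{coercive}: integrating the cross term by parts and inserting the horizon Codazzi relation $\slashed{div}\,\Omega\xlin=-\Omega\blin$ (valid since $\Omega tr\chi=0$ and $\otx=0$ on $\mathcal{H}^+$) turns the cross term into a positive multiple of $-\rho|\blin|^2\geq 0$ because $\rho=-2M/r^3<0$. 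This controls $\slashed{\mathcal{D}}_2^\star\Omega\blin$ and $\Omega\xlin$ simultaneously with no appeal to $(\rlin,\slin)$, after which the (uncollapsed) $\Psilin$-identity yields the $(\rlin,\slin)$ bounds. Your identities $\partial_v\rlin=\slashed{div}(\Omega\blin)$ and $\slashed{div}\,\Omega\xlin=-\Omega\blin$ on the horizon are correct as stated, but they cannot substitute for this coercivity argument.
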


\begin{proof}
This follows as in \cite{Daf.Hol.Rod19}, so we merely sketch the argument. To obtain the bounds for $\blin$ we write
\begin{align}
\Big\| \frac{1}{\Omega} \slashed{\nabla}_3 (\Omega^2 r\alin)\Big\|_{\infty,v} = \int_{S^2_{\infty,v}}\sin \theta d\theta d\phi |2 \slashed{\mathcal{D}}_2^\star \Omega \blin + 3 \rho \Omega \xlin|^2 = \int_{S^2_{\infty,v}}\sin \theta d\theta d\phi \left( |2 \slashed{\mathcal{D}}_2^\star \Omega \blin|^2  - 6 \rho |\blin|^2 + 9\rho^2 |\Omega \xlin|^2 \right) \nonumber \, ,
\end{align} 
where we have used the Bianchi equation for the first identity. For the second identity we have integrated the cross term by parts and inserted the Codazzi equation (\ref{ellipchi}) on the horizon ($\slashed{div} \Omega \xlin = -\blin$). Note $\rho=-\frac{2M}{r^3}$, so the expression is indeed coercive. Angular commuted identities are obtained analogously. The result for $\blin$ (and by Codazzi for $\xlin$) now follows from the flux (and sphere) bounds available for the quantity on the left through Theorem \ref{theo:teukolsky}. The result for $(\rlin, \slin)$ follows from the the fact that, by~\eqref{Prel}, on the horizon
\[
\slashed{div} \Psilin = r^5\slashed{div} \slashed{\mathcal{D}}_2^\star \slashed{\mathcal{D}}_1^\star (- \rlin, \slin) +\frac{3}{4} r^5\rho \frac{tr \chi}{\Omega} \slashed{div} \Omega \xlin 
\]
and that we control the flux on the left from Theorem \ref{theo:teukolsky} and the flux of $\xlin$ from the first part of the proof. 
\end{proof}


\begin{corollary} \label{cor:hzf}
On the event horizon $\mathcal{H}^+$, we have for $n\geq 3$ the following flux estimates 
\begin{align}
\sum_{i=0}^{n} \int_{v_1}^{v_2} d\bar{v} \|[r \slashed{\nabla}]^{i} (\Omega \blin), [r \slashed{\nabla}]^{i} (\rlin, \slin) \|_{\infty,\bar{v}}^2  + \sum_{i=0}^{n+1} \int_{v_1}^{v_2} d\bar{v} \|[  [r \slashed{\nabla}]^{i} (\Omega \xlin), [r \slashed{\nabla}]^{i}\elin\|_{\infty,\bar{v}}^2 \lesssim \left(v_2-v_1\right) \overline{\mathbb{E}}^{n}_{data} [\alin, \ablin]  \, , \nonumber \\
\sum_{i=0}^{n} \int_{v_1}^{v_2} d\bar{v} \|[r \slashed{\nabla}]^{i} (\Omega \blin), [r \slashed{\nabla}]^{i} (\rlin, \slin) \|_{\infty,\bar{v}}^2  + \sum_{i=0}^{n+1} \int_{v_1}^{v_2} d\bar{v} \|[  [r \slashed{\nabla}]^{i} (\Omega \xlin), [r \slashed{\nabla}]^{i}\elin\|_{\infty,\bar{v}}^2 \lesssim \frac{\left(v_2-v_1\right) \overline{\mathbb{E}}^{n+1}_{data} [\alin, \ablin] }{(\log v_1)^2} \, . \nonumber
\end{align}
In addition, we have the estimates on spheres 
\begin{align}
\sum_{i=0}^{n-1} \|[r \slashed{\nabla}]^{i} (\Omega \blin), [r \slashed{\nabla}]^{i} (\rlin, \slin) \|_{\infty,{v}}^2  + \sum_{i=0}^{n}  \|[  [r \slashed{\nabla}]^{i} (\Omega \xlin), [r \slashed{\nabla}]^{i}\elin\|_{\infty,{v}}^2 \lesssim  \overline{\mathbb{E}}^{n}_{data} [\alin, \ablin]  \, , \nonumber \\
\sum_{i=0}^{n-1} \|[r \slashed{\nabla}]^{i} (\Omega \blin), [r \slashed{\nabla}]^{i} (\rlin, \slin) \|_{\infty,{v}}^2  + \sum_{i=0}^{n}  \|[  [r \slashed{\nabla}]^{i} (\Omega \xlin), [r \slashed{\nabla}]^{i}\elin\|_{\infty,{v}}^2 \lesssim \frac{ \overline{\mathbb{E}}^{n+1}_{data} [\alin, \ablin] }{(\log v)^2} \nonumber \, . 
\end{align}
In addition, we may add $[r\slashed{\nabla}]^{i-1}\frac{1}{\Omega} \slashed{\nabla}_3 (\Omega \xlin)$ to the list of Ricci-coefficients in the above estimates. 
\end{corollary}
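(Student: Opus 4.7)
\textbf{Proof sketch of Corollary \ref{cor:hzf}.} The corollary is essentially an elliptic conversion of the bounds of Proposition~\ref{prop:hozfluxestimates}, leveraging two facts specific to the horizon setting: first, $r = r_+$ is constant on $\mathcal{H}^+$, so there are no $r$-weights to track; and second, by construction $\Si$ has vanishing $\ell=0,1$ modes, so every quantity appearing in the estimates is supported on $\ell \geq 2$ (scalars and symmetric traceless tensors) or $\ell \geq 1$ (one-forms). In particular, all the elliptic operators $\mathcal{A}^{[i]}$, $\slashed{\mathcal{D}}_2$, $\slashed{\mathcal{D}}_2^\star$, $\slashed{\mathcal{D}}_1$, $\slashed{\mathcal{D}}_1^\star$ have trivial kernel on the relevant function/tensor spaces, and the basic elliptic estimates of Section~\ref{sec:elliptic} apply directly.

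The plan is to go quantity by quantity. For $\Omega \blin$, a bound on $\mathcal{A}^{[i]} \slashed{\mathcal{D}}_2^\star(\Omega\blin) = \mathcal{A}^{[i+1]}(\Omega\blin)$ (on $\mathcal{H}^+$, where $r$-factors are constant) yields control of $\sum_{j=0}^{i+1} \|[r\slashed{\nabla}]^j \Omega\blin\|$ by the one-form elliptic estimate. For $(\rlin, \slin)$, a bound on $\mathcal{A}^{[i-1]} \slashed{\mathcal{D}}_2^\star \slashed{\mathcal{D}}_1^\star(r^3\rlin, r^3\slin)$ yields control of $\sum_{j=0}^{i+1}\|[r\slashed{\nabla}]^j(\rlin,\slin)\|$ by the scalar elliptic estimate. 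For $\Omega\xlin$, the bound on $\mathcal{A}^{[i]}\slashed{\mathcal{D}}_2^\star \slashed{div}(\Omega\xlin)$ first produces control of $[r\slashed{\nabla}]^{j}\slashed{div}(\Omega\xlin)$ up to $j=i+1$ by the one-form estimate applied to $\slashed{div}(\Omega\xlin)$; then, since $\Omega\xlin$ is symmetric traceless and supported on $\ell\geq 2$, the elliptic estimate $\|[r\slashed{\nabla}]^k\xi\|\lesssim \|\mathcal{A}^{[k]}\xi\|$ combined with $\slashed{\mathcal{D}}_2 = \slashed{div}$ having trivial kernel gives the full control of $[r\slashed{\nabla}]^j(\Omega\xlin)$ up to $j=i+2$. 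Iterating $i$ from $1$ to $n-1$ yields exactly the ranges claimed in the statement. For $\elin$, we use the two horizon relations $\slashed{div}\elin + \rlin = 0$ (from the initial data normalisation, which, as noted at the start of Section~\ref{sec:hoz}, holds identically on $\mathcal{H}^+$) and $\slashed{curl}\elin = \slin$ from~\eqref{curleta}. Together these say $\slashed{\mathcal{D}}_1 \elin = (-\rlin, \slin)$ on $\mathcal{H}^+$, and since $\elin$ is supported on $\ell\geq 2$, the Hodge estimate gives $\sum_{j=0}^{i+1}\|[r\slashed{\nabla}]^j \elin\| \lesssim \sum_{j=0}^{i}\|[r\slashed{\nabla}]^j(\rlin,\slin)\|$; plugging in the $(\rlin,\slin)$ bound just established produces the $\elin$ estimate.

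For the final claim concerning $[r\slashed{\nabla}]^{i-1}\Omega^{-1}\slashed{\nabla}_3(\Omega\xlin)$, I would use the linearised null structure equation~\eqref{chih3} restricted to $\mathcal{H}^+$. On the horizon the background Ricci coefficients $\Omega \mathrm{tr}\chi = 2\Omega^2/r$ and $\Omega\mathrm{tr}\underline{\chi} = -2\Omega^2/r$ both vanish, so~\eqref{chih3} divided by $\Omega$ reduces to $\Omega^{-1}\slashed{\nabla}_3(\Omega\xlin) = -2\slashed{\mathcal{D}}_2^\star \elin$ on $\mathcal{H}^+$. Commuting with $[r\slashed{\nabla}]^{i-1}$ (which commutes with $\slashed{\nabla}_3$ up to the usual lower-order terms by~\eqref{angularcommute}) and inserting the horizon bound on $\elin$ already obtained then gives the stated flux and sphere estimates. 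There is no essential obstacle in the argument — the only mildly subtle point is ensuring that the vanishing of the $\ell=0,1$ modes carries through each elliptic inversion so that the operators involved remain invertible, and that the initial data normalisation relation $\slashed{div}\elin + \rlin=0$ is indeed valid on all of $\mathcal{H}^+$ and not just on $S^2_{\infty,v_0}$, which is precisely what is recalled at the opening of Section~\ref{sec:hoz}.
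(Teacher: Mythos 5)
Your argument follows the paper's proof essentially verbatim: the corollary is obtained by elliptic inversion of the quantities controlled in Proposition \ref{prop:hozfluxestimates} (all operators having trivial kernel since the $\ell=0,1$ modes vanish), with $\elin$ handled by adding $\mathcal{A}^{[i-1]}\slashed{\mathcal{D}}_2^\star\slashed{\mathcal{D}}_1^\star(-r^3\divs\,\elin,\,r^3\curls\,\elin)$ to the controlled list via the horizon relations $\divs\,\elin+\rlin=0$ and $\curls\,\elin=\slin$, and the last claim obtained by restricting (\ref{chih3}) to $\mathcal{H}^+$; your derivative counting also matches the stated ranges.

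One correction to your final step: after dividing (\ref{chih3}) by $\Omega$, the term $\tfrac{1}{2\Omega}\left(\Omega \mathrm{tr}\underline{\chi}\right)\xlin$ does \emph{not} drop out. The vanishing of $\Omega\mathrm{tr}\underline{\chi}$ on the horizon is not enough, because $\xlin=\Omega^{-1}(\Omega\xlin)$ is itself singular there (only $\Omega\xlin$ extends regularly, cf.~(\ref{smoothextended})); the product equals $-\tfrac{1}{r_+}\,\Omega\xlin$ on $\mathcal{H}^+$, so the correct horizon identity is $\Omega^{-1}\slashed{\nabla}_3(\Omega\xlin)=-2\slashed{\mathcal{D}}_2^\star\elin+\tfrac{1}{r_+}\Omega\xlin$. (The $\xblin$-term genuinely vanishes, since $\Omega^{-1}\xblin$ is the regular quantity and its coefficient retains a factor of $\Omega^2$.) This slip is harmless for the estimate — $\Omega\xlin$ is already in the controlled list, which is presumably why the paper says the claim follows from (\ref{chih3}) ``and the previous bounds'' — but the reduced equation as you wrote it is not the one that holds.
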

\begin{proof}
In view of $\slashed{div} \elin + \rlin = 0$ and $\slashed{curl} \eta - \slin=0$ on the horizon, we can clearly add the expression $\mathcal{A}^{[i-1]}\slashed{\mathcal{D}}_2^\star \slashed{\mathcal{D}}_1^\star (-r^3 \slashed{div} \elin,r^3 \slashed{curl} \elin)$ to the list of quantities estimated in Proposition \ref{prop:hozfluxestimates}. Since the $\ell\geq 1$ quantities vanish by assumption, the estimates follows from standard elliptic estimates.  The last claim is immediate from restricting the linearised null structure equation (\ref{chih3}) for $ \Omega^{-1} \slashed{\nabla}_3 (\Omega \xlin)$ to the horizon and using the previous bounds. 
\end{proof}


\subsection{Preliminary estimates on the shears}
\subsubsection{The outgoing shear} \label{sec:shearestimates}
We give a brief overview. One starts with the quantity $\xlin$, which according to (\ref{tchi}) satisfies 
\begin{align} \label{shipro}
\Omega \slashed{\nabla}_4 (\xlin \Omega r^2) -2\omega  (\xlin \Omega r^2)  = - \alin \Omega^2 r^2 \, . 
\end{align}
Commuting twice with the operator $\Omega^{-1} \slashed{\nabla}_3$ turns the exponentially growth factor ($-2\omega$, a blueshift) into a decay factor ($+2\omega$, a redshift), after which the equation can be integrated forwards in $v$ using the flux bound for $\alin$ and derivatives thereof on the right hand side.\footnote{The lower order terms that arise in the commutation can be integrated by parts and produce terms of a good sign and boundary terms which are controlled on the horizon from Proposition \ref{prop:hozfluxestimates}.} Roughly speaking, since the structure of the horizon does not depend on the cosmological constant, the estimates near the horizon go through exactly as in \cite{Daf.Hol.Rod19}. Away from the horizon, where $\Omega^2$ is uniformly bounded away from zero, one can of course integrate directly
\begin{align} \label{shear4dir}
\Omega \slashed{\nabla}_4 (\xlin \Omega^{-1} r^2) = - \alin r^2
\end{align}
all the way to infinity. This gives in particular that $\frac{r^2 \xlin}{\Omega} \sim r \xlin$ is uniformly bounded on $\mathcal{I}$. Now let us turn to the details. We first derive the key estimate near the horizon.

\begin{proposition}
There exists an $2r_+>r_1>r_+$ such that the following estimate holds for any $v \geq v_0$ and any $j \in \mathbb{N}_0$, $k \in \{0,1\}$
\begin{align} 
\sum_{i=0}^2 \sup_{v \in (v_1,v_2)} \int_{u(r_1,v)}^\infty du \Omega^2 \big\| \mathcal{A}^{[j]} [\slashed{\nabla}_T]^k \left[ \Omega^{-1} \slashed{\nabla}_3\right]^i  (r^2 \Omega \xlin)) \big\|^2_{u,v} \nonumber \\
+ \sum_{i=0}^2 \int_{v_1}^{v_2} dv \int_{u(r_1,v)}^\infty du \Omega^2 \big\| \mathcal{A}^{[j]}[\slashed{\nabla}_T]^k \left[ \Omega^{-1} \slashed{\nabla}_3\right]^i (r^2 \Omega \xlin)) \big\|^2_{u,v} \nonumber \\
 \lesssim \int_{u(r_1,v_1)}^\infty du \Omega^2 \big\|  \mathcal{A}^{[j]}[\slashed{\nabla}_T]^k \left[\Omega^{-1} \slashed{\nabla}_3\right]^2 (r^2 \Omega \xlin)) \big\|^2_{u,v_1} + (v_2-v_1) \sup_{v \in (v_1,v_2)} \overline{\mathbb{E}}_v^{j+k+2} [ \Omega^2r \alin](u(v,r_1),\infty) \label{mids} \, . 
\end{align}
\end{proposition}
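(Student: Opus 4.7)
The plan is to derive \eqref{mids} by a red-shift commutation argument applied to the transport equation \eqref{shipro}, following the structural pattern of the analogous estimate in \cite{Daf.Hol.Rod19}. Setting $\Phi_0 := \mathcal{A}^{[j]}[\slashed{\nabla}_T]^k(r^2\Omega\xlin)$, and noting that $\mathcal{A}^{[j]}$ and $[\slashed{\nabla}_T]^k$ commute trivially with $\Omega\slashed{\nabla}_4$ and $\Omega^{-1}\slashed{\nabla}_3$ on the background by \eqref{angularcommute} and spherical symmetry, $\Phi_0$ inherits the equation
\begin{align*}
\Omega\slashed{\nabla}_4 \Phi_0 - 2\omega\,\Phi_0 \;=\; -\Omega^2 r^2 \mathcal{A}^{[j]}[\slashed{\nabla}_T]^k \alin \;=:\; F_0.
\end{align*}
Viewed as $\partial_v \Phi_0 = 2\omega\,\Phi_0 + F_0$, the ``blueshift'' coefficient $+2\omega>0$ prevents a direct forward-in-$v$ energy estimate from closing. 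The strategy is to commute with $\Omega^{-1}\slashed{\nabla}_3$ twice in order to convert the blueshift into a redshift and close the estimate via a standard multiplier method in the near-horizon cap $r\in[r_+,r_1]$.

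A direct computation, based on $[\Omega\slashed{\nabla}_4,\Omega^{-1}\slashed{\nabla}_3] = -2\omega\,\Omega^{-1}\slashed{\nabla}_3$ on scalars (with the analogous identity on $S^2_{u,v}$-tensors modulo lower-order background terms that remain regular on $[r_+,r_1]$) and $\Omega^{-1}\slashed{\nabla}_3\omega = -\omega'(r)$, produces the hierarchy for $\Phi_i := [\Omega^{-1}\slashed{\nabla}_3]^i\Phi_0$,
\begin{align*}
\Omega\slashed{\nabla}_4\Phi_1 + 2\omega'(r)\,\Phi_0 \;&=\; F_1,\\
\Omega\slashed{\nabla}_4\Phi_2 + 2\omega\,\Phi_2 + 2\omega'(r)\,\Phi_1 - 2\omega''(r)\,\Phi_0 \;&=\; F_2,
\end{align*}
with $F_i := [\Omega^{-1}\slashed{\nabla}_3]^i F_0$. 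The decisive feature is that at top order the coefficient $+2\omega$ on $\Phi_2$ is uniformly positive on $[r_+,r_1]$ provided $r_1>r_+$ is chosen close enough to $r_+$, since $\omega(r_+) = M/r_+^2 + k^2 r_+ > 0$; this is the redshift.

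The main step is then a multiplier estimate on the $i=2$ equation: contract with $\Phi_2$ and integrate over $S^2\times\{u(r_1,v)\le u<\infty\}\times(v_1,v_2)$ against the measure $\Omega^2 \sin\theta\,d\theta\,d\phi\,du\,dv$. The $\partial_v$-term integrates to the difference of ingoing fluxes on $\underline{C}_{v_2}$ and $\underline{C}_{v_1}$ through the near-horizon cap, the coercive $2\omega\Phi_2^2$-term produces the spacetime bulk on the LHS of \eqref{mids} for $i=2$, the boundary at $u=\infty$ vanishes thanks to the $\Omega^2$ factor, and the boundary contribution on the timelike hypersurface $\{r=r_1\}$ is handled by a separate, standard flux estimate in the region $r\ge r_1$: there one integrates \eqref{shear4dir} directly in $v$ without any commutation, which is allowed since $\Omega$ is uniformly bounded away from zero, and Theorem~\ref{theo:teukolsky} supplies the required Teukolsky flux for $\alin$. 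The source term $\int\int F_2\cdot\Phi_2$ is absorbed via Cauchy--Schwarz into $\varepsilon\cdot 2\omega|\Phi_2|^2$, producing the $(v_2-v_1)\sup_v \overline{\mathbb{E}}^{j+k+2}_v[\Omega^2 r\alin]$ contribution on the RHS of \eqref{mids}.

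The principal obstacle is the recovery of the $i=0,1$ components on the LHS of \eqref{mids}, since the $\Phi_0$-equation is blueshifted and the $\Phi_1$-equation carries no coercive zeroth-order structure, so neither propagates a forward-in-$v$ energy estimate directly. These must instead be recovered by integrating the ingoing-direction identity $\partial_u\Phi_i = \Omega^2\Phi_{i+1}$ in $u$ from the horizon $u=\infty$ (where $\Phi_0 = r^2\Omega\xlin$ and $\Phi_1 = \Omega^{-1}\slashed{\nabla}_3(r^2\Omega\xlin)$ are controlled by Corollary~\ref{cor:hzf}, using the linearised null-structure equation \eqref{chih3} restricted to $\mathcal{H}^+$ to recast $\Omega^{-1}\slashed{\nabla}_3(\Omega\xlin)$ in terms of quantities with known horizon fluxes) back to $u=u(r_1,v)$. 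Since $\int_{u(r_1,v)}^\infty \Omega^2\,du = r_1 - r_+$ is finite and the horizon fluxes grow at most linearly in $(v_2-v_1)$ by \eqref{gro1}, one obtains the required estimates on the $i=0,1$ energies, which in turn allow the lower-order error terms $2\omega'(r)\Phi_1$ and $2\omega''(r)\Phi_0$ in the $\Phi_2$-equation to be absorbed into $\varepsilon \cdot 2\omega|\Phi_2|^2$. Combining this with the top-order multiplier estimate closes \eqref{mids}.
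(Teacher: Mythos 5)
Your proposal is correct and follows essentially the same route as the paper: commuting \eqref{shipro} twice with $\Omega^{-1}\slashed{\nabla}_3$ to trade the blueshift $-2\omega$ for the redshift $+2\omega$ at top order, running the multiplier estimate in the near-horizon region with Cauchy--Schwarz on the Teukolsky source, and recovering the $i=0,1$ terms from the horizon data of Corollary~\ref{cor:hzf} via transversal integration (the paper's ``Hardy inequality''), exactly as in Proposition~13.3.2 of \cite{Daf.Hol.Rod19}. The only deviations are cosmetic: the paper obtains the $i=1$ bulk term by integrating the $\Phi_1\cdot\Phi_2$ cross term by parts in $u$ (yielding a good-signed bulk directly) rather than by your Hardy chain, and your separate flux estimate at $\{r=r_1\}$ is unnecessary since that boundary contribution arises as a future flux term with a favourable sign --- note also that the smallness of $r_1-r_+$ is what makes your absorption of the lower-order terms via $\int\Omega^2\,du=r_1-r_+$ close, not the positivity of $\omega$ (which holds for all $r$).
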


\begin{proof}
We provide a sketch of the proof as the argument in entirely analogous to the proof of Proposition 13.3.2 in \cite{Daf.Hol.Rod19}. From (\ref{shipro}) we derive upon commutation the identity 
\begin{align} \label{comid}
  \begin{aligned}
    \Omega \slashed{\nabla}_4 (\Omega[\Omega^{-1} \slashed{\nabla}_3]^2 \xlin \Omega r^2) + \omega  (\Omega[\Omega^{-1} \slashed{\nabla}_3]^2 \xlin \Omega r^2) & \\ + \left(-\frac{4M}{r^3} + \frac{1}{
        l^2}\right) \Omega  \frac{1}{\Omega} \slashed{\nabla}_3 (\xlin \Omega r^2) - \Omega \frac{12M}{r^4} (\xlin \Omega r^2) =&  - \Omega [\Omega^{-1} \slashed{\nabla}_3]^2 ( \alin \Omega^2 r^2) \, .
  \end{aligned}
\end{align}
Of course commutation with $\mathcal{A}^{[j]}$ and $\slashed{\nabla}_T^k$ is trivial and is omitted. 

One now proceeds as in\cite{Daf.Hol.Rod19} multiplying  (\ref{comid}) with $(\Omega[\Omega^{-1} \slashed{\nabla}_3]^2 \xlin \Omega r^2)$ and integrating over the spacetime region $\mathcal{M} \cap \{ r \leq r_0\} \cap \{v_1 \leq v \leq v_2\}$. The terms in the first line of (\ref{comid}) will produce the good desired terms in (\ref{mids}) (as well as the term first term on the right). The term on the right hand side of (\ref{comid}) can be dealt with by Cauchy-Schwarz borrowing a bit from the good spacetime term on the left. Finally, for the terms on the left in the second line of (\ref{comid}) we proceed as in \cite{Daf.Hol.Rod19}: For the first term we integrate by parts, controlling the (bad-signed) boundary term on the horizon by Corollary \ref{cor:hzf}, while the resulting spacetime term has a good sign. For the second term we use Cauchy-Schwarz and a Hardy inequality, which provides control on the $i=0$ terms on the left hand side of (\ref{mids}) in terms of the higher order quantities that have already been controlled using again the control of the horizon fluxes in Corollary \ref{cor:hzf}.
\end{proof}

Note that we can bound the first term on the right in (\ref{mids}) by (\ref{Dnorm}). A simple pigeonhole principle applied to (\ref{mids}) yields

\begin{proposition} \label{prop:shearhozf}
For any $v \geq 0$ and $n \geq 3$, $k \in \{0,1\}$:
\begin{align}
\sum_{j=0}^{n-2} \sum_{i=0}^2 \int_{u(r_1,v)}^\infty du \Omega^2 \big\| \mathcal{A}^{[j]} [\slashed{\nabla}_T]^k \left[ \Omega^{-1} \slashed{\nabla}_3\right]^i(r^2 \Omega \xlin)) \big\|^2_{u,v} &\lesssim  \overline{\mathbb{E}}^{n+k}_{data} [\alin, \ablin]  +\mathbb{D}^{n+k}_0 \, ,  \\
\sum_{j=0}^{n-2} \sum_{i=0}^2 \int_{u(r_1,v)}^\infty du \Omega^2 \big\| \mathcal{A}^{[j]} [\slashed{\nabla}_T]^k  \left[ \Omega^{-1} \slashed{\nabla}_3\right]^i  (r^2 \Omega \xlin)) \big\|^2_{u,v}  &\lesssim \frac{ \overline{\mathbb{E}}^{n+k+1}_{data} [\alin, \ablin] +\mathbb{D}^{n+k+1}_0}{(\log v)^2}\, .
\end{align}
\end{proposition}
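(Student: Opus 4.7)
The strategy is to extract sphere-wise control of the outgoing flux from the integrated-in-$v$ estimate \eqref{mids}, using the Teukolsky bounds of Theorem \ref{theo:teukolsky} on its RHS. Denote by $F_j(v)$ the sphere-integrand appearing on the LHS of \eqref{mids} and by $E_j$ the factor $\sup_{[v_1,v_2]}\overline{\mathbb{E}}^{j+k+2}_v[\Omega^2 r\alin](u(v,r_1),\infty)$ on the RHS. Theorem \ref{theo:teukolsky} provides $E_j\lesssim\overline{\mathbb{E}}^{n+k}_{data}$ from \eqref{bos}, and $E_j\lesssim\overline{\mathbb{E}}^{n+k+1}_{data}/(\log v_1)^2$ from \eqref{dos} (valid for $v_1\geq 2$, recalling that $r_1>r_+$ is fixed). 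At $v_1=v_0=0$ the initial-data term of \eqref{mids} is bounded by $\mathbb{D}^{n+k}_0$, since the $u$-integral $\int_{u(r_1,0)}^\infty \Omega^2\,du$ is finite and the integrand is controlled sphere-wise by \eqref{Dnorm}.

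The pigeonhole is then applied as follows. On any interval $[v_1,v_2]$, dividing the bulk term in \eqref{mids} by $v_2-v_1$ bounds the average of $F_j$ by $F_j(v_1)/(v_2-v_1)+E_j$, so there exists a good sphere $\tilde v\in[v_1,v_2]$ with $F_j(\tilde v)\lesssim F_j(v_1)/(v_2-v_1)+E_j$. Taking $v_1=v_0$ and $v_2-v_0$ of order $\mathbb{D}^{n+k}_0/E_j$ absorbs the initial data and produces a first good sphere $\tilde v_1$ with $F_j(\tilde v_1)\lesssim E_j$. The argument is then iterated: from $\tilde v_k$, another pigeonhole on $[\tilde v_k,\tilde v_k+L]$ (with $L$ a fixed length chosen sufficiently large in terms of the universal constant of \eqref{mids}) yields $\tilde v_{k+1}$ with $F_j(\tilde v_{k+1})\lesssim E_j$ (at a constant that does not accumulate), while the sup estimate in \eqref{mids} applied from $v_1=\tilde v_k$ gives $\sup_{[\tilde v_k,\tilde v_{k+1}]}F_j\lesssim E_j$ with the same constant. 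Iterating covers all $v\geq\tilde v_1$, and the initial portion $v\in[v_0,\tilde v_1]$ is handled directly by the sup estimate in \eqref{mids} with $v_1=v_0$, as the length $\tilde v_1-v_0$ is precisely tuned so that the $(v_2-v_1)E_j$ factor contributes only $\mathbb{D}^{n+k}_0$.

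The main obstacle is the apparent linear growth of the $(v_2-v_1)E_j$ term on the RHS of \eqref{mids}: this blocks any naive iteration on short intervals, whose multiplicative constants would compound geometrically. The pigeonhole sidesteps this by amortising the initial-data contribution over a single long interval to place $F_j$ at the ambient Teukolsky scale $E_j$, after which the moderate-length iteration keeps the constant uniformly bounded. Both stated estimates then follow from the two regimes of $E_j$: Teukolsky boundedness yields \eqref{bos}-type control with $n+k$ data derivatives; Teukolsky decay yields the $1/(\log v)^2$ rate at the cost of one extra derivative, accounting for the shift $n+k\mapsto n+k+1$ in the decay statement.
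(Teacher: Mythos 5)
Your argument is correct and is essentially the paper's own proof: the paper disposes of this proposition in two lines (bound the first term on the right of \eqref{mids} by \eqref{Dnorm}, then ``a simple pigeonhole principle applied to \eqref{mids}''), and your amortise-then-iterate scheme is a legitimate way of making that pigeonhole precise, correctly identifying that the linear growth of the $(v_2-v_1)E_j$ term is what rules out both the naive two-step good-slice argument and a unit-interval iteration with compounding constants. The one detail worth adding concerns the decay case: there the stable recursion $A\mapsto C(1+A/L)$ picks up extra factors $\bigl(\log\tilde v_{k+1}/\log\tilde v_k\bigr)^2$ which are only summably close to $1$ once $\tilde v_k$ is large, so one should launch the iteration from a fixed late time $v_*$ (using the already-established boundedness statement, which trivially implies the logarithmic decay statement on the compact initial segment $[v_0,v_*]$) and also take the good sphere in the right half of each interval so that the sequence $\tilde v_k$ actually advances.
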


We can now easily globalise the result to the entire exterior taking care of the correct $r$-weights and also improve to an estimate on spheres. 
\begin{proposition}  \label{prop:inshearfinal}
For any $v \geq 0$ and $n \geq 3$, $k \in \{0,1\}$
\begin{align}
\sum_{j=0}^{n-2} \sum_{i=0}^2 \big\| \mathcal{A}^{[j]}[\slashed{\nabla}_T]^k  \left[ r^2\Omega^{-1} \slashed{\nabla}_3\right]^i( \Omega \xlin)) \big\|^2_{u,v} &\lesssim \overline{\mathbb{E}}^{n+k}_{data} [\alin, \ablin]  +\mathbb{D}^{n+k}_0  \, , \nonumber \\
\sum_{j=0}^{n-2} \sum_{i=0}^2 \big\| \mathcal{A}^{[j]} [\slashed{\nabla}_T]^k  \left[ r^2\Omega^{-1} \slashed{\nabla}_3\right]^i  ( \Omega \xlin)) \big\|^2_{u,v}  &\lesssim  \frac{ \overline{\mathbb{E}}^{n+k+1}_{data} [\alin, \ablin] +\mathbb{D}^{n+k+1}_0}{(\log v)^2} \, .
\end{align}
Moreover, both estimates also hold replacing $[\slashed{\nabla}_T]^k  \left[ r^2\Omega^{-1} \slashed{\nabla}_3\right]^i$ by $[\slashed{\nabla}_T]^{k+2}$.
\end{proposition}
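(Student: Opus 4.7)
The strategy combines the ingoing-flux estimates near the horizon from Proposition~\ref{prop:shearhozf} with direct outgoing transport in the far region. I split $\mathcal{M}$ into the near-horizon strip $\{r\leq r_1\}$ and the far region $\{r\geq r_1\}$, where $\Omega$ is bounded below and the transport equation (\ref{shear4dir}) for $\Omega^{-1}r^2\xlin$ can be integrated directly from an initial sphere.

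In the near-horizon region, the redshift-commuted equation (\ref{comid}) has the good sign $+\omega$ in front of the twice-commuted quantity $\mathcal{P}:=\Omega[\Omega^{-1}\slashed{\nabla}_3]^2(r^2\Omega\xlin)$. Contracting (\ref{comid}) with $\mathcal{P}$ fibrewise and integrating over $S^2_{u,v}$ yields a sphere-level ODE of the form $\partial_v\|\mathcal{P}\|_{u,v}+\omega\|\mathcal{P}\|_{u,v}\lesssim \|F\|_{u,v}$, where $F$ collects the $\alin$-source and the strictly lower-transversal-order shear terms from the commutator. Gr\"onwall integration from the initial datum on $\underline{C}_0$ (controlled by $\mathbb{D}^n_0$), combined with a dyadic decomposition in $v$ using the unit-interval flux bounds (\ref{gro1}) (and (\ref{gro2}) for decay) of Theorem~\ref{theo:teukolsky}, yields the pointwise-in-$(u,v)$ sphere bound on $\mathcal{P}$; the lower-order shear contributions to $F$ are absorbed using the same dyadic scheme together with the $L^2_{\Omega^2\,du\,dv}$ control of Proposition~\ref{prop:shearhozf}. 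The $i=0,1$ sphere bounds are then recovered from the $i=2$ bound by the fundamental theorem of calculus in $u$, integrating $\partial_u[\Omega^{-1}\slashed{\nabla}_3]^i(\cdot)=\Omega[\Omega^{-1}\slashed{\nabla}_3]^{i+1}(\cdot)$ from the horizon (where the required endpoint value is supplied by Corollary~\ref{cor:hzf}) back to the sphere of interest and applying Cauchy--Schwarz against the weighted flux of Proposition~\ref{prop:shearhozf}.

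In the far region $\{r\geq r_1\}$, I commute the outgoing transport equation (\ref{shear4dir}) with $\mathcal{A}^{[j]}[\slashed{\nabla}_T]^k[r^2\Omega^{-1}\slashed{\nabla}_3]^i$. Thanks to the trivial commutation identities (\ref{angularcommute}) on Schwarzschild-AdS, the commuted equation has the schematic form $\Omega\slashed{\nabla}_4\Phi=(\Omega^2/r^2)B$ with $B$ linear in the commuted $\alin$ and in strictly lower-transversal-order occurrences of $\Omega\xlin$. The outgoing flux bounds for $\alin$ from Theorem~\ref{theo:teukolsky} -- specifically (\ref{bos}), (\ref{dos}) together with the growth bounds (\ref{gro1})--(\ref{gro2}) -- directly supply the hypotheses of Lemma~\ref{lem:bastra2} (boundedness) and Lemma~\ref{lem:bastra3} (inverse-logarithmic decay). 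The initial sphere for the outgoing integration is either on $\underline{C}_0$, with datum bounded by $\mathbb{D}^n_0$ (when the outgoing cone remains in $\{r\geq r_1\}$), or the entry sphere where the cone crosses $\{r=r_1\}$, with datum furnished by the near-horizon step above.

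For the concluding clause replacing $[\slashed{\nabla}_T]^k[r^2\Omega^{-1}\slashed{\nabla}_3]^i$ by $[\slashed{\nabla}_T]^{k+2}$, I use $2\slashed{\nabla}_T=\Omega\slashed{\nabla}_3+\Omega\slashed{\nabla}_4$, expand $[\slashed{\nabla}_T]^{k+2}(\Omega\xlin)$ into monomials in $\Omega\slashed{\nabla}_3$ and $\Omega\slashed{\nabla}_4$, and eliminate every $\Omega\slashed{\nabla}_4$ using (\ref{shear4dir}) and its commutations; this leaves $\alin$-source contributions (again controlled by Theorem~\ref{theo:teukolsky}) and strictly lower-transversal-order terms. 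The residual $(\Omega\slashed{\nabla}_3)^m$-monomials with $m\le 2$ reduce via $\Omega\slashed{\nabla}_3=(\Omega^2/r^2)[r^2\Omega^{-1}\slashed{\nabla}_3]$, with $\Omega^2/r^2$ uniformly bounded on $\mathcal{M}$, to the first part of the proposition. The main obstacle will be the bookkeeping at $r=r_1$: the near-horizon analysis controls $[\Omega^{-1}\slashed{\nabla}_3]^i(r^2\Omega\xlin)$ whereas the proposition demands the $r^{2i}$-weighted version $[r^2\Omega^{-1}\slashed{\nabla}_3]^i(\Omega\xlin)$, and the Leibniz distribution of the $r$-weight must be tracked so that all correction terms fall at strictly lower transversal order and can therefore be absorbed by the inductive hypothesis.
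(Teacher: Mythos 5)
Your overall architecture is the same as the paper's: the redshift-commuted equation (\ref{comid}) in the strip $\{r\le r_1\}$, direct outgoing integration of $\Omega\slashed{\nabla}_4(\Omega^{-1}r^2\xlin)=-r^2\alin$ from $r=r_1$ to infinity via the transport lemmas, and conversion of $[\slashed{\nabla}_T]^{k+2}$ into $\slashed{\nabla}_3$-monomials plus $\alin$-sources using (\ref{tchi}). Your concern about the $r$-weights at the matching surface is a non-issue: in $\{r\le r_1\}$ the factor $r$ is bounded above and below, and in the far region $\Omega\slashed{\nabla}_3$ commutes trivially with the outgoing equation and is comparable to $r^2\Omega^{-1}\slashed{\nabla}_3$, which is exactly how the paper disposes of it.

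There is, however, one step that does not work as you have written it: the absorption of the lower-order shear terms in the sphere-level Gr\"onwall for $\mathcal{P}=\Omega[\Omega^{-1}\slashed{\nabla}_3]^2(r^2\Omega\xlin)$. That Gr\"onwall runs in $v$ along a \emph{fixed} outgoing cone $C_u$, so the source terms $[\Omega^{-1}\slashed{\nabla}_3]^{i}(r^2\Omega\xlin)$, $i\le 1$, must be controlled in $L^1_v$ (or, after Cauchy--Schwarz against the exponential damping, in $L^2_v$) at that fixed $u$. Neither Proposition~\ref{prop:shearhozf} (an ingoing flux, i.e.\ an integral in $u$ at each fixed $v$) nor the spacetime integral in (\ref{mids}) (an integral in both $u$ and $v$) yields a bound along a single outgoing cone; a pigeonhole would only give it for some good $u$, not all. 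As written, your $i=2$ step therefore cannot close, and your proposed remedy — deriving the $i=0,1$ sphere bounds \emph{afterwards} from $i=2$ — makes the argument circular. The repair is immediate and is what the paper does: your $i=0,1$ derivation (fundamental theorem of calculus in $u$ against the $\Omega^2\,du$-weighted flux of Proposition~\ref{prop:shearhozf}, with endpoint data from Corollary~\ref{cor:hzf}) does not actually use the $i=2$ bound, so run it \emph{first}; the resulting $\sup_{u,v}$ bounds on the $i\le 1$ quantities then control the lower-order source of the $i=2$ Gr\"onwall pointwise in $v$, producing the $(v_2-v_1)$-growth source term that simple ODE theory converts into the uniform bound. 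With that reordering your proof coincides with the paper's.
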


\begin{proof}
Note that the last claim follows immediately from the two estimates and the null structure equation (\ref{tchi}), hence we can focus on proving the two estimates. We will also suppress the (trivial) commutation with angular and $T$-derivatives in the algebra for the proof. 

We first obtain these estimates in the region $r \leq r_1$. If we restricted the sum over $i$ to run from $0$ to $1$ only, both estimates follow directly from Proposition \ref{prop:shearhozf} and the fundamental theorem of calculus (which loses one $\Omega^{-1} \slashed{\nabla}_3$ derivative, hence the restriction to $i \leq 1$). To show it for $i=2$ one revisits (\ref{comid}), now written as 
\begin{align} \label{comid2}
\Omega \slashed{\nabla}_4 ([\Omega^{-1} \slashed{\nabla}_3]^2 \xlin \Omega r^2) + 2\omega  (\Omega[\Omega^{-1} \slashed{\nabla}_3]^2 \xlin \Omega r^2) & \nonumber \\ + \left(-\frac{4M}{r^3} + \frac{1}{
l^2}\right)   \frac{1}{\Omega} \slashed{\nabla}_3 (\xlin \Omega r^2) -  \frac{12M}{r^4} (\xlin \Omega r^2) =&  - [\Omega^{-1} \slashed{\nabla}_3]^2 ( \alin \Omega^2 r^2) \, .
\end{align}
Using the estimates already shown we obtain (after trivially commuting the above with $\mathcal{A}^{[j]}[\slashed{\nabla}_T]^k$ )
\begin{align}
\|\mathcal{A}^{[j]}[\slashed{\nabla}_T]^k [\Omega^{-1} \slashed{\nabla}_3]^2 \xlin \Omega r^2\|_{u,v_2}^2  + \int_{v_1}^{v_2} dv \| \mathcal{A}^{[j]}[\slashed{\nabla}_T]^k [\Omega^{-1} \slashed{\nabla}_3]^2 \xlin \Omega r^2\|_{u,\bar{v}}^2 \nonumber \\
\lesssim \|\mathcal{A}^{[j]}[\slashed{\nabla}_T]^k [\Omega^{-1} \slashed{\nabla}_3]^2 \xlin \Omega r^2\|_{u,v_1}^2 +  \left(  \overline{\mathbb{E}}^{j+k+2}_{data} [\alin, \ablin]  +\mathbb{D}^{j+k}_0\right) (v_2 -v_1) \, , \nonumber
\end{align}
from which the estimates follow also for $i=2$ by simple ODE theory.

Having established the estimates of the proposition for $r \leq r_1$, we integrate the (appropriately commuted) linearised null structure equation $\Omega \slashed{\nabla}_4 ( \Omega^{-1} r^2 \xlin)= -r^2\alin$ from $r=r_1$ towards infinity to deduce the result also for $r \geq r_1$. (Observe that near infinity $\Omega^{-1}r^2 \sim \Omega \sim r$.) Note that $\Omega \slashed{\nabla}_3$ and also $\mathcal{A}^{[j]}$ commute trivially on the left and that $\Omega \slashed{\nabla}_3 \sim r^2 \Omega^{-1}\slashed{\nabla}_3$ near infinity. 
\end{proof}


\subsubsection{The ingoing shear} \label{sec:outgoingshear}
The boundary condition (\ref{bclxi}) now allows us to integrate (\ref{tchi}) written as   
\begin{align} \label{wrii}
\Omega \slashed{\nabla}_3 (\xblin \Omega^{-1} r^2) = - \Omega^{-2} r^5 \ablin \frac{\Omega^2}{r^3}
\end{align}
directly from the boundary to produce global uniform bounds on the (regular at $\mathcal{H}^+$) quantity $r^2 \Omega^{-1} \xblin$:
\begin{proposition}  \label{prop:outshearfinal}
For any $v \geq 0$ and $n \geq 3$, $k \in \{0,1\}$:
\begin{align}
\sum_{j=0}^{n-2} \sum_{i=0}^2 \big\| \mathcal{A}^{[j]}  [\slashed{\nabla}_T]^k\left[ \Omega \slashed{\nabla}_4\right]^i (r^2 \Omega^{-1} \xblin)) \big\|^2_{u,v} &\lesssim  \overline{\mathbb{E}}^{n+k}_{data} [\alin, \ablin]  +\mathbb{D}^{n+k}_0   \, , \nonumber \\
\sum_{j=0}^{n-2} \sum_{i=0}^2 \big\| \mathcal{A}^{[j]}  [\slashed{\nabla}_T]^k \left[\Omega \slashed{\nabla}_4\right]^i (r^2 \Omega^{-1} \xblin)) \big\|^2_{u,v}  &\lesssim \frac{\overline{\mathbb{E}}^{n+k+1}_{data} [\alin, \ablin]  +\mathbb{D}^{n+k+1}_0}{(\log v)^2} \, . \nonumber
\end{align}
\end{proposition}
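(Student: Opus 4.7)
The plan is to integrate the linearised null structure equation (\ref{tchi}) along each ingoing cone $\underline{C}_v$ from the conformal boundary. As derived in the excerpt preceding the statement, this equation can be recast as (\ref{wrii}), i.e.
\[
\Omega\slashed{\nabla}_3(r^2\Omega^{-1}\xblin) \,=\, -r^2\ablin \,=\, -\frac{\Omega^2}{r^3}\cdot r^5\Omega^{-2}\ablin ,
\]
which is exactly of the form to which Lemma \ref{lem:bastra1} applies. The boundary data on $\mathcal{I}$ for the transported quantity is obtained from the reflective boundary condition (\ref{bclxi}): near $\mathcal{I}$ one has $\Omega/r\to k$ so that $\lim_{v\to u}r^2\Omega^{-1}\xblin=k^{-2}\lim_{v\to u}\Omega\xlin$, and this limit is already controlled by Proposition \ref{prop:inshearfinal} (with $i=0$). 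The source term on the right is estimated by the sphere bounds on $r^5\Omega^{-2}\ablin$ from Corollary \ref{cor:teuonspheres} (respectively the decay form, to obtain the second inequality), after observing that $\int_v^u du'\,\Omega^2/r^3=\tfrac12 r(u,v)^{-2}$ is integrable all the way to the boundary.

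Next I commute the transport equation with the three families of operators appearing in the norm. Commutation with $\mathcal{A}^{[j]}$ is trivial by (\ref{angularcommute}), and commutation with $[\slashed{\nabla}_T]^k$ is trivial as $T$ is a background Killing field. For the transversal commutations $[\Omega\slashed{\nabla}_4]^i$, the key point is that $[\Omega\slashed{\nabla}_3,\Omega\slashed{\nabla}_4]=0$ on Schwarzschild-AdS, so
\[
\Omega\slashed{\nabla}_3\bigl(\mathcal{A}^{[j]}[\slashed{\nabla}_T]^k[\Omega\slashed{\nabla}_4]^i (r^2\Omega^{-1}\xblin)\bigr) \,=\, \mathcal{A}^{[j]}[\slashed{\nabla}_T]^k[\Omega\slashed{\nabla}_4]^i(-r^2\ablin).
\]
Expanding the right-hand side with the Leibniz rule and writing $r^2\ablin = (\Omega^2/r^3)\cdot(r^5\Omega^{-2}\ablin)$, all resulting factors are built from $[\Omega\slashed{\nabla}_4]^{i'}(r^5\Omega^{-2}\ablin)$ with $i'\le i\le 2$, which are bounded on spheres by $\overline{\mathbb{E}}^{n+k}_{\text{data}}[\alin,\ablin]$ (respectively with the logarithmic gain) via Corollary \ref{cor:teuonspheres}. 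The corresponding $u$-integral is again finite thanks to the $\Omega^2/r^3$ weight, so Lemma \ref{lem:bastra1} produces the desired sphere bound once the boundary contribution on $\mathcal{I}$ is controlled.

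The main technical step, and the place to be careful, is the boundary data for the commuted quantity $[\Omega\slashed{\nabla}_4]^i(r^2\Omega^{-1}\xblin)$ at $\mathcal{I}$ when $i=1,2$. These are computed inductively by writing $\Omega\slashed{\nabla}_4 = 2\slashed{\nabla}_T - \Omega\slashed{\nabla}_3$, using that $\slashed{\nabla}_T$ is tangent to $\mathcal{I}$, and substituting the transport equation itself to convert $\Omega\slashed{\nabla}_3$ into a source involving $r^2\ablin$ and its $\Omega\slashed{\nabla}_4$-derivatives; at $\mathcal{I}$ the factor $\Omega^2/r^3\to 0$ and one obtains explicit boundary values expressible in terms of $[\slashed{\nabla}_T]^{k'}(\Omega\xlin)$ for $k'\le k+2$ combined with finite sphere limits of $[\Omega\slashed{\nabla}_4]^{i'}(r^5\Omega^{-2}\ablin)$. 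Both of these are bounded by $\overline{\mathbb{E}}^{n+k}_{\text{data}}[\alin,\ablin] + \mathbb{D}^{n+k}_0$ thanks to Proposition \ref{prop:inshearfinal} (using the $[\slashed{\nabla}_T]^{k+2}$ alternative form stated there) and Corollary \ref{cor:teuonspheres} respectively. Patching the interior transport estimate with this boundary control yields both the boundedness and (using the decay version of the input estimates together with the decay form of Lemma \ref{lem:bastra1}) the logarithmic decay claim.
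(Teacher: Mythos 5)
Your overall strategy is the same as the paper's: integrate the $\Omega\slashed{\nabla}_3$-transport equation (\ref{wrii}) from the conformal boundary via Lemma \ref{lem:bastra1}, commute trivially with $\mathcal{A}^{[j]}$, $\slashed{\nabla}_T$ and $\Omega\slashed{\nabla}_4$, and obtain the boundary data for the $\Omega\slashed{\nabla}_4$-commuted quantities by tangentially differentiating the boundary condition $r\xlin=r\xblin$ along $\mathcal{I}$ and trading $\slashed{\nabla}_T$ for $\Omega\slashed{\nabla}_3$, $\Omega\slashed{\nabla}_4$ via the transport equations (the paper records the resulting identities as $\Omega\slashed{\nabla}_3(r\xlin)-\Omega\slashed{\nabla}_4(r\xblin)=0$ and $[\Omega\slashed{\nabla}_3]^2(r\xlin)-[\Omega\slashed{\nabla}_4]^2(r\xblin)=2k^3 r^3\alin$ on $\mathcal{I}$, controlled exactly as you say by Proposition \ref{prop:inshearfinal} and Corollary \ref{cor:teuonspheres}). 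That part of your argument is correct and is the genuine content of the proof.

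There is, however, a quantitative gap in how you estimate the source term. You bound the integrand $r^2\ablin=\frac{\Omega^2}{r^3}(r^5\Omega^{-2}\ablin)$ by the \emph{sup-on-spheres} bounds of Corollary \ref{cor:teuonspheres}. That corollary controls $m$ derivatives of $\Omega^{-2}r^5\ablin$ on spheres only at the cost of $\overline{\mathbb{E}}^{m+1}_{data}$ (boundedness) and $\overline{\mathbb{E}}^{m+2}_{data}$ (decay), since it already loses one derivative through the one-dimensional Sobolev embedding along ingoing cones. Since the commuted source carries up to $n+k$ derivatives of $\ablin$, your route yields the proposition only with $\overline{\mathbb{E}}^{n+k+1}_{data}$ and $\overline{\mathbb{E}}^{n+k+2}_{data}$ on the right-hand sides, one order worse than claimed. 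The hypothesis of Lemma \ref{lem:bastra1} is an \emph{ingoing flux} bound $\int_v^\infty\frac{\Omega^2}{r^2}\|B\|^2\,d\bar u\lesssim\mathbb{D}$, and the paper feeds in the ingoing flux estimates (\ref{bos}), (\ref{dos}) for $\Omega^{-2}r^5\ablin$ from Theorem \ref{theo:teukolsky}, which control $m$ derivatives by $\overline{\mathbb{E}}^{m}_{data}$ (respectively $\overline{\mathbb{E}}^{m+1}_{data}$ with the logarithmic gain) and hence preserve the stated derivative count: $m$ derivatives of $\ablin$ for $m$ derivatives of $\xblin$, and $m+1$ for its decay. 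Replacing your use of Corollary \ref{cor:teuonspheres} on the source by these flux bounds closes the gap; the rest of your argument then goes through as written.
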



\begin{proof}
We apply Lemma \ref{lem:bastra1} to (\ref{wrii}) and the $\Omega \slashed{\nabla}_4$-commuted (\ref{wrii}).
The only thing which is not immediate is the initial condition for the $\Omega \slashed{\nabla}_4$ commuted estimate. For this we note that the boundary condition $\slashed{\nabla}_T (r\xlin - r\xblin)=0$ on $\mathcal{I}$ translates into $\Omega \slashed{\nabla}_3 ( r \xlin) -  \Omega \slashed{\nabla}_4 ( r \xblin) = 0$  on $\mathcal{I}$ using that $\alin r^2, \ablin r^2$ vanish on the boundary. Indeed,
\begin{align}
0 = \slashed{\nabla}_T (r\xlin - r\xblin) &= \Omega \slashed{\nabla}_3 (r \xlin) - \Omega \slashed{\nabla}_4 (r\xblin) + \Omega \slashed{\nabla}_4 \left(\frac{r^2}{\Omega^2} \Omega \xlin \frac{\Omega}{r} \right) -  \Omega \slashed{\nabla}_3 \left(\frac{r^2}{\Omega^2} \Omega \xblin \frac{\Omega}{r} \right) \nonumber \\ 
&= \Omega \slashed{\nabla}_3 (r \xlin) - \Omega \slashed{\nabla}_4 (r\xblin) - \frac{\Omega}{r} \alin r^2 + \frac{\Omega}{r} \ablin r^2 + \left(r^2 \partial_r \frac{\Omega}{r} \right)  \left(\Omega \xlin + \Omega \xblin\right) ,
\end{align}
and the last $4$-terms vanish on the boundary by Definition~\ref{def:ssogp}. 
Similarly $\slashed{\nabla}_T^2 (r\xlin - r\xblin)=0$ on $\mathcal{I}$ translates into $[\Omega \slashed{\nabla}_3]^2 ( r \xlin) -  [\Omega \slashed{\nabla}_4]^2 ( r \xblin) = 2k^3\alin r^3$  on $\mathcal{I}$ since up to terms vanishing in the limit on $\mathcal{I}$ we have 
\begin{align}
0 = \slashed{\nabla}^2_T (r\xlin - r\xblin) &= [\Omega \slashed{\nabla}_3]^2 (r \xlin) - [\Omega \slashed{\nabla}_4]^2 \xblin + \Omega \slashed{\nabla}_3 \left(\Omega \slashed{\nabla}_4 \left(\frac{r^2}{\Omega^2} \Omega \xlin \frac{\Omega}{r} \right) \right) -  \Omega \slashed{\nabla}_4 \left(\Omega \slashed{\nabla}_3 \left(\frac{r^2}{\Omega^2} \Omega \xblin \frac{\Omega}{r} \right) \right) \nonumber 
\end{align}
and hence on $\mathcal{I}$
\begin{align}
0= [\Omega \slashed{\nabla}_3]^2 (r \xlin) - [\Omega \slashed{\nabla}_4]^2 \xblin + \Omega \slashed{\nabla}_3  \left( - \frac{\Omega}{r} \alin r^2 + \frac{r^2}{\Omega^2} \Omega \xlin \left( \Omega \slashed{\nabla}_4 \frac{\Omega}{r} \right) \right) -\Omega \slashed{\nabla}_4  \left( - \frac{\Omega}{r} \ablin r^2 + \frac{r^2}{\Omega^2} \Omega \xblin \left( \Omega \slashed{\nabla}_3 \frac{\Omega}{r} \right) \right) \nonumber \, ,
\end{align}
from which the claim on the boundary follows. 

This means that the initial condition in the commuted estimate is always controlled from Proposition \ref{prop:inshearfinal} and Corollary \ref{cor:teuonspheres}. Furthermore, the flux (on constant $v$) when integrating the transport equation from the boundary requires $m$-derivatives of $\ablin$ to estimate $m$ derivatives of $\xblin$ and $m+1$ derivatives of $\ablin$ if one would like to see $\log$-decay. 
\end{proof}

\subsubsection{Improving the weights near infinity}
We now establish a few improved estimates for certain combinations (and derivatives of) $\xlin$ and $\xblin$, which will be helpful in establishing estimates for the torsion later. %
Specifically, we claim the following:
\begin{proposition} \label{prop:improvedchidifference}
We have for $n \geq 3$ the following estimates for any $k \in \{0,1\}$:
\begin{align} \label{sum1}
\sum_{j=0}^{n-2} \sup_{r \geq 8M} \Big\| [\slashed{\nabla}_T]^k \mathcal{A}^{[j]}  r \left(\Omega \xlin - \Omega \xblin \right)\Big\|^2_{u,v} \lesssim   \overline{\mathbb{E}}^{n+k}_{data} [\alin, \ablin] +\mathbb{D}^{n+k-1}_0
\end{align}
and
\begin{align} \label{diff1}
\sum_{j=0}^{n-2} \sup_{r \geq 8M} \Big\| [\slashed{\nabla}_T]^k \mathcal{A}^{[j]}   r \left(\Omega \slashed{\nabla}_3 \left(\Omega \xlin \right)  - \Omega \slashed{\nabla}_4 \left(\Omega \xblin \right)  \right)\Big\|^2_{u,v} \lesssim  \overline{\mathbb{E}}^{n+k}_{data} [\alin, \ablin]  +\mathbb{D}^{n+k}_0
\end{align}
and
\begin{align} \label{longe}
\sum_{j=0}^{n-3} \sup_{r \geq 8M} \Big\| \mathcal{A}^{[j]} r^2 \left(\Omega \slashed{\nabla}_3 (\Omega \xlin )  + \Omega \slashed{\nabla}_4 (\Omega \xblin ) +  \left(\Omega tr \underline{\chi}\right) \Omega \xlin + \left( \Omega tr \chi\right) \Omega \xblin  \right)\Big\|^2_{u,v}
 \lesssim \overline{\mathbb{E}}^{n}_{data} [\alin, \ablin]  +\mathbb{D}^{n}_0  .
\end{align}
Moreover, we can add an additional factor of $(\log v)^{-2}$ on the right, provided we replace $n$ by $n+1$ in the $\overline{\mathbb{E}}_{data}$ energies on the right. 
\end{proposition}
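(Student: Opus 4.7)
All three estimates follow from (i) the conformal boundary conditions of Definition \ref{def:bc} and Proposition \ref{prop:NullWeylBC}, which give uniformly bounded initial data for the quantities under consideration on the sphere $S^2_{v,v}$ at $\mathcal{I}$, and (ii) transport equations integrated in the $\Omega\slashed{\nabla}_3$-direction from $\mathcal{I}$ toward the horizon, with sources controlled by the shear bounds of Propositions \ref{prop:inshearfinal}, \ref{prop:outshearfinal} and the Teukolsky bounds of Corollary \ref{cor:teuonspheres}. Since $\mathcal{A}^{[j]}$ and $\slashed{\nabla}_T$ commute trivially with $\slashed{\nabla}_3, \slashed{\nabla}_4$ on the background (cf.~(\ref{angularcommute})), it suffices to prove the bare estimates.

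The key preliminary observation is that equations (\ref{tchi}), after multiplication by $\Omega$ and chain-rule manipulations using $\underline{\omega} = -M/r^2 - k^2 r$ and $\omega = -\underline{\omega}$, simplify to the clean transport identities
\[
\Omega\slashed{\nabla}_3(\Omega\xblin) = \left(\tfrac{2}{r} - \tfrac{6M}{r^2}\right)\Omega\xblin - \Omega^2\ablin, \qquad \Omega\slashed{\nabla}_4(\Omega\xlin) = -\left(\tfrac{2}{r} - \tfrac{6M}{r^2}\right)\Omega\xlin - \Omega^2\alin,
\]
whose right-hand sides involve only the Teukolsky curvatures, with no torsion contribution. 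Combined with (\ref{chih3}), (\ref{chih3b}), these provide all the propagation information needed. To prove (\ref{sum1}), we subtract the first identity above from $\Omega$ times (\ref{chih3}), multiply by $r$, and obtain a transport equation in $\Omega\slashed{\nabla}_3$ for $r(\Omega\xlin - \Omega\xblin)$ whose source is bounded in $L^2$ on spheres by the shear and Teukolsky bounds; Lemma \ref{lem:bastra1} applied with the (uniformly bounded) initial datum at $\mathcal{I}$ coming from the aAdS structure $r^2(\xlin - \xblin) = O(1)$ at $\mathcal{I}$ yields the estimate. The bound (\ref{diff1}) is obtained analogously from the $\slashed{\nabla}_T$-commuted equation, where the initial datum at $\mathcal{I}$ is controlled using additionally the boundary condition $\alin = \ablin$ on $\mathcal{I}$ from (\ref{bcl1}), as in the derivation in the proof of Proposition \ref{prop:outshearfinal}.

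For (\ref{longe}), summing $\Omega$ times (\ref{chih3}) and (\ref{chih3b}) gives the exact algebraic identity $E = -2\Omega^2\slashed{\mathcal{D}}_2^*(\elin + \eblin)$, where $E$ denotes the parenthesised expression of (\ref{longe}). Applying the subtracted Codazzi equations (\ref{ellipchi}) and using $\elin + \eblin = 2\slashed{\nabla}\Olin$ from (\ref{oml3}) yields
\[
\Omega^2(\elin + \eblin) = -r\slashed{div}(\Omega\xlin - \Omega\xblin) - r\Omega(\bblin + \blin) + \tfrac{r}{2}\slashed{\nabla}(\otx - \otxb),
\]
so that $r^2 E$ is rewritten in terms of $\slashed{div}(\Omega\xlin - \Omega\xblin)$ (controlled by (\ref{sum1}) via the elliptic estimates of Section \ref{sec:elliptic}, which accounts for the loss of one $\mathcal{A}^{[j]}$ power), $\Omega(\bblin + \blin)$ (which vanishes on $\mathcal{I}$ by (\ref{betaboundary}) and is integrated in the $\Omega\slashed{\nabla}_3$-direction from $\mathcal{I}$ via (\ref{Bianchi9}) and the Teukolsky bound on $\ablin$), and $\slashed{\nabla}(\otx - \otxb)$ (bounded similarly using (\ref{vray}), (\ref{dtcb}) and the previously-obtained bounds). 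The decay versions of all three estimates follow by the identical arguments using Lemma \ref{lem:bastra3} and the decay statements in the shear and Teukolsky bounds. The main technical obstacle is the careful bookkeeping of the numerous individually $r$-growing terms in the identities, which must cancel or be integrated against the boundary conditions at $\mathcal{I}$ to produce the claimed improved $r$-weights.
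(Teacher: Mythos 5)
Your algebraic identities are correct, but the argument is circular at the decisive step. For (\ref{sum1}) you form a $\slashed{\nabla}_3$-transport equation for $r(\Omega \xlin - \Omega \xblin)$ by combining the $\ablin$-sourced equation for $\xblin$ with (\ref{chih3}); the resulting source necessarily contains $-2r\Omega^2 \slashed{\mathcal{D}}_2^\star \elin$, which you claim is "bounded by the shear and Teukolsky bounds". It is not: at this stage of the hierarchy the only available inputs are Theorem \ref{theo:teukolsky}, the horizon estimates, and Propositions \ref{prop:inshearfinal}, \ref{prop:outshearfinal}; the torsion bound $\|r^2\elin\|\lesssim \overline{\mathbb{E}}^{n}_{data}+\mathbb{D}^n_0$ is Proposition \ref{prop:etaetab}, whose proof explicitly uses (\ref{sum1}). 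The crude bound one can extract from (\ref{chih3}) without (\ref{sum1}) gives only $\|\slashed{\mathcal{D}}_2^\star\elin\|\lesssim 1/r$, so your source is of size $r^2$ on spheres and the flux $\int \frac{\Omega^2}{r^2}\|B\|^2\,d\bar{u}$ required by Lemma \ref{lem:bastra1} diverges; there is no room to absorb the term or run a Gronwall argument (the would-be Gronwall factor $\int \frac{\Omega^2}{r}\,d\bar u$ also diverges at $\mathcal{I}$). The same objection applies, more severely, to your treatment of (\ref{longe}): the sources $\blin+\bblin$ and $\slashed{\nabla}(\otx-\otxb)$ are controlled only in Propositions \ref{prop:betaS2}, \ref{prop:trx} and Corollary \ref{cor:inmeancurv}, all of which sit downstream of this proposition. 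Finally, invoking "$r^2(\xlin-\xblin)=O(1)$ at $\mathcal{I}$" as a quantitative initial datum assumes part of the conclusion: Remark \ref{rem:betterbounds} states that precisely this difference bound is to be \emph{deduced} from (\ref{sum1}), and the limit of $r(\Omega\xlin-\Omega\xblin)$ at $\mathcal{I}$ is a $0\times\infty$ expression that must be computed, not read off from Definition \ref{def:aAdSlin}.

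The paper's proof avoids the torsion entirely, and this is the idea your proposal is missing. It integrates the two equations of (\ref{tchi}) -- each in its favourable direction, with only $\alin$ resp.\ $\ablin$ as source -- backwards from the conformal boundary, obtaining explicit representations $\frac{r^2}{\Omega}\xlin(u,v)=X(u,\theta)+[\,\alin\text{-terms}\,]$ and $\frac{r^2}{\Omega}\xblin(u,v)=X(v,\theta)+[\,\ablin\text{-terms}\,]$ anchored at the \emph{common} boundary trace $X(t,\theta)=r\xlin(t,t,\theta)=r\xblin(t,t,\theta)$. The gain of one power of $r$ in the difference then comes from Taylor's theorem, $\|r[X(u,\cdot)-X(v,\cdot)]\|\lesssim |r(v-u)|\,\sup_t\|T(r\xlin)\|_{t,t}$, combined with $|r(v-u)|=k^{-2}+O(r^{-2})$ from (\ref{rrstarrel}) and the boundary control of $T$-derivatives of $r\xlin$ from Proposition \ref{prop:inshearfinal}; the estimates (\ref{diff1}) and (\ref{longe}) are first- and second-order versions of the same expansion, which explains both the extra derivative lost in (\ref{longe}) and the restriction to $r\geq 8M$. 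Any PDE-based route would first have to decouple the shear difference from the torsion, which is exactly what this integral representation accomplishes.
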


\begin{proof}
To keep the notation in the proof tidy, we ignore the trivial angular commutation by $[\slashed{\nabla}_T]^k \mathcal{A}^{[j]}$ during the proof, which can be trivially inserted in all equations below. 
We define the shear along the boundary 
\[
X(t,\theta):= r \xlin (t,t,\theta) = r \xblin(t,t,\theta) \, , 
\]
with the last equality following from the boundary condition. Integrating (\ref{tchi}) from the boundary $\mathcal{I}$, we deduce after an integration by parts the identities (here $\hat{A},\hat{B}$ are the components in an \emph{orthonormal} frame!\footnote{Recall the formula for the \emph{coordinate} components $(\Omega \slashed{\nabla}_3 \xi)_{AB} = \partial_u \xi_{AB} - \Omega tr \underline{\chi} \xi_{AB}=r^2 \partial_u (r^{-2} \xi_{AB})=r^2 \partial_u (\xi_{\hat{A}\hat{B}})$ and hence $ (\Omega \slashed{\nabla}_3 \xi)_{\hat{A}\hat{B}}=\partial_u (\xi_{\hat{A}\hat{B}}$).})
\begin{align}
\frac{r^2}{\Omega} \xlin_{\hat{A}\hat{B}} (u,v,\theta) 
&= X_{\hat{A}\hat{B}}(u, \theta)+ \frac{1}{2}\frac{\alin_{\hat{A}\hat{B}} r^5}{\Omega^2}\frac{1}{r^2} (u,v,\theta)  + \int_v^u \left[ \frac{1}{2r^2} \partial_v \frac{\alin_{\hat{A}\hat{B}} r^5}{\Omega^2}  \right] (u, \bar{v}, \theta) d\bar{v} \, ,
\nonumber \\
\frac{r^2}{\Omega} \xblin_{\hat{A}\hat{B}} (u,v,\theta) 
&= X_{\hat{A}\hat{B}} (v, \theta)- \frac{1}{2}\frac{\ablin_{\hat{A}\hat{B}} r^5}{\Omega^2}\frac{1}{r^2} (u,v,\theta)  + \int_v^u \left[ \frac{1}{2r^2} \partial_u \frac{\ablin_{\hat{A}\hat{B}} r^5}{\Omega^2}  \right] (\bar{u}, v, \theta) d\bar{u} .
\end{align} 
It follows that
\begin{align}
\Omega \slashed{\nabla}_3 \left(\Omega \xlin \right)_{\hat{A}\hat{B}} 
&= \left(\frac{2}{r} - \frac{6M}{r^2}\right) \Omega \xlin_{\hat{A}\hat{B}} + \frac{\Omega^2}{r^2} \left( \dot{X}_{\hat{A}\hat{B}}(u,\theta) + \alin_{\hat{A}\hat{B}} r^2  +\frac{1}{2r^2} T \left(\frac{\alin_{\hat{A}\hat{B}} r^5}{\Omega^2}\right) + \int_v^u \partial_u \left[ \frac{1}{2r^2} \partial_v \frac{\alin_{\hat{A}\hat{B}} r^5}{\Omega^2}  \right] (u, \bar{v}, \theta) d\bar{v} \right) \nonumber
\end{align}
and similarly
\begin{align}
\Omega \slashed{\nabla}_4 \left(\Omega \xblin \right)_{\hat{A}\hat{B}}
&= -\left(\frac{2}{r} - \frac{6M}{r^2}\right) \Omega \xblin_{\hat{A}\hat{B}} + \frac{\Omega^2}{r^2} \left( \dot{X}_{\hat{A}\hat{B}}(v,\theta) + \ablin_{\hat{A}\hat{B}} r^2  -\frac{1}{2r^2} T \left(\frac{\ablin_{\hat{A}\hat{B}} r^5}{\Omega^2}\right) + \int_v^u \partial_v \left[ \frac{1}{2r^2} \partial_u \frac{\ablin_{\hat{A}\hat{B}} r^5}{\Omega^2}  \right] (\bar{u}, v, \theta) d\bar{u} \right) \, . \nonumber
\end{align}
We also have
\begin{align}
 \left(\Omega tr \underline{\chi}\right) \Omega \xlin_{\hat{A}\hat{B}} + \left( \Omega tr \chi\right) \Omega \xblin_{\hat{A}\hat{B}} = +\frac{2\Omega^4}{r^3}  \Big(  X_{\hat{A}\hat{B}}(v, \theta) &- \frac{1}{2}\frac{\ablin_{\hat{A}\hat{B}} r^5}{\Omega^2}\frac{1}{r^2} (u,v,\theta)  + \int_v^u \left[ \frac{1}{2r^2} \partial_u \frac{\ablin_{\hat{A}\hat{B}} r^5}{\Omega^2}  \right] (\bar{u}, v, \theta) d\bar{u} \nonumber \\
 -X_{\hat{A}\hat{B}}(u, \theta)&- \frac{1}{2}\frac{\alin_{\hat{A}\hat{B}} r^5}{\Omega^2}\frac{1}{r^2} (u,v,\theta)  - \int_v^u \left[ \frac{1}{2r^2} \partial_v \frac{\alin_{\hat{A}\hat{B}} r^5}{\Omega^2}  \right] (u, \bar{v}, \theta) d\bar{v} \, 
\Big) , \nonumber 
\end{align}
from which (\ref{sum1}) is already immediate after using Taylor’s theorem (as well as (\ref{rrstarrel}))
\[
\| r(u,v) \left[ X(v,\theta) - X(u,\theta)\right]\|_{u,v} \lesssim \big| r(u,v) (v-u) \big| \sup_u \|\dot{X}(u,\theta)\|_{u,u} \lesssim \sup_u \|T(r \xlin)\|_{u,u} \
\]
and using (the last claim of) Proposition \ref{prop:inshearfinal} for the term on the right. The bound (\ref{diff1}) follows similarly form Propositions \ref{prop:inshearfinal} and \ref{prop:outshearfinal} as well as another application of Taylor’s theorem, now for the $\dot{X}$-difference. 


To prove (\ref{longe}), the key is (besides applying (\ref{sum1}) and estimates from Theorem \ref{theo:teukolsky}) to establish
\begin{align}
\sup_{r \geq 8M} \Big\| r^2 \left( \dot{X}(u,\theta) + \ablin r^2 +  \dot{X}(v,\theta) + \alin r^2\right) + 2r^3 \left(   X(v, \theta) - \frac{1}{2}\frac{\ablin r^5}{\Omega^2}\frac{1}{r^2} - X(u, \theta) - \frac{1}{2}\frac{\alin r^5}{\Omega^2}\frac{1}{r^2} \right) \Big\| \nonumber \\
 \lesssim \sup_{\mathcal{M} \cap \{r \geq 8M\}} \|r^3 \alin\|_{u,v} +   \sup_{\mathcal{M} \cap \{r \geq 8M\}} \|r^3 \ablin\|_{u,v} + \sup_u \| T^3 (r \xlin) \|_{u,u} \, .
\end{align}
This follows from Taylor expanding 
\begin{align}
X(u,\theta) &= X(v,\theta) + \dot{X}(v,\theta)(v-u) +  \frac{1}{2} \ddot{X}(v,\theta)(v-u)^2 + R_X \, , 
\nonumber \\
\dot{X}(u,\theta) &= \dot{X}(v,\theta) + \ddot{X}(v,\theta)(v-u) + \tilde{R}_X \nonumber
\end{align}
and using (\ref{rrstarrel}) as well as Taylor’s theorem for the remainders. Note that Proposition \ref{prop:inshearfinal} controls at most three $T$-derivatives of $\xlin$ on the boundary, so we cannot commute further. 
\end{proof}

\subsection{Some immediate consequences} \label{sec:consequences}
In this section we obtain estimates for all curvature components and the torsions $\elin$, $\eblin$ from the preliminary estimates on the shears and the gauge invariant quantities. These estimates are not optimal in terms of regularity (caused by the loss in the estimate for the shears) and will be improved later. 

\subsubsection{Estimating curvature one-forms}
From the Bianchi identities rewritten as (\ref{rewriteBianchi}) we see that the estimates on $\xlin$ and $\xblin$ in Propositions \ref{prop:inshearfinal} and \ref{prop:outshearfinal} respectively, will provide estimates on $\blin$ and $\bblin$ using Theorem \ref{theo:teukolsky} and Corollary \ref{cor:teuonspheres}. For now we state these (immediate) estimates on spheres (recall that the $\ell\leq 1$ modes are trivial by assumption), deferring top order flux bounds to a later point (namely once the regularity in Propositions \ref{prop:inshearfinal} and \ref{prop:outshearfinal} has been improved further).

\begin{proposition} \label{prop:betaS2}
We have on any sphere $S^2_{u,v}$ for $n\geq 3$ the estimates
\begin{align}
\sum_{i=0}^{n-1}  \| \mathcal{A}^{[i]} (\Omega r^2 \blin) \|_{u,v}^2 +  \sum_{i=0}^{n-1} \| \mathcal{A}^{[i]} (\Omega^{-1} r^4 \bblin) \|_{u,v}^2 &\lesssim  \overline{\mathbb{E}}^{n}_{data} [\alin, \ablin]  +\mathbb{D}^{n}_0 \, ,  \\
\sum_{i=0}^{n-1}  \| \mathcal{A}^{[i]} (\Omega r^2 \blin) \|_{u,v}^2 + \sum_{i=0}^{n-1} \| \mathcal{A}^{[i]} (\Omega^{-1} r^4 \bblin) \|_{u,v}^2 &\lesssim  \frac{\overline{\mathbb{E}}^{n+1}_{data} [\alin, \ablin]  +\mathbb{D}^{n+1}_0}{(\log v)^2} \, .  
\end{align}
\end{proposition}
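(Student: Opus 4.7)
The plan is to use the rewriting of the Bianchi equations in (\ref{rewriteBianchi}), (\ref{rewriteBianchibar}) and (\ref{rewriteBianchi2}) to express $r\slashed{\mathcal{D}}_2^\star$ applied to the weighted $\blin$, $\bblin$ algebraically in terms of quantities for which we already have sphere bounds. For $\blin$, equation (\ref{rewriteBianchi}) gives, after using $\rho r^3=-2M$,
\[
r\slashed{\mathcal{D}}_2^\star (\Omega r^2 \blin) = r^3 \Omega \plin + 3M\, \Omega \xlin,
\]
and similarly (\ref{rewriteBianchi2}) yields an analogous algebraic identity for $r\slashed{\mathcal{D}}_2^\star(\Omega^{-1} r^4 \bblin)$ with a right-hand side involving $\Omega\slashed{\nabla}_4(r^5\Omega^{-2}\ablin)$, $r^5\Omega^{-2}\ablin$ and $\Omega^{-1}r^5\xblin$.

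Since the solution $\Si$ has vanishing $\ell=0,1$ modes, the operators $\mathcal{A}^{[i]}$ have trivial kernel on the relevant spaces of one-forms and symmetric traceless tensors (Section~\ref{sec:elliptic}). Applying $\mathcal{A}^{[i-1]}$ to the above identities (and noting $[\mathcal{A}^{[i-1]},r\slashed{\mathcal{D}}_2^\star]=0$) followed by the elliptic estimate
\[
\sum_{j=0}^{i} \|[r\slashed{\nabla}]^j (\Omega r^2 \blin)\|_{u,v}^2 \lesssim \|\mathcal{A}^{[i-1]} r\slashed{\mathcal{D}}_2^\star (\Omega r^2 \blin)\|_{u,v}^2
\]
reduces the left-hand side of the proposition to sphere bounds on $\mathcal{A}^{[i-1]}(r^3\Omega\plin)$ and $\mathcal{A}^{[i-1]}(\Omega\xlin)$ (and their analogues in the $\bblin$-case).

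For the curvature contribution, recall that by (\ref{rewriteBianchi}) the quantity $r^3\Omega\plin$ is (up to the lower-order $\xlin$-term already accounted for) one $\Omega\slashed{\nabla}_3$-derivative of $r\Omega^2\alin$. Hence $\mathcal{A}^{[i-1]}(r^3\Omega\plin)$ corresponds to $(i-1)$ angular plus one transversal derivative of $r\Omega^2\alin$, and Corollary~\ref{cor:teuonspheres} bounds it on any sphere by $\overline{\mathbb{E}}^{n}_{data}[\alin,\ablin]$ as $i$ ranges up to $n-1$; the decay variant gives the $(\log v)^{-2}$ gain with one extra derivative in data. The shear contribution $\|\mathcal{A}^{[i-1]}(\Omega\xlin)\|_{u,v}^2$ is controlled directly by Proposition~\ref{prop:inshearfinal} (with $k=0$, $i=0$) by $\overline{\mathbb{E}}^{n}_{data}+\mathbb{D}^n_0$, with the corresponding $(\log v)^{-2}$ decay bound. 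For the $\bblin$ case, the same scheme applies: the leading $\Omega\slashed{\nabla}_4(r^5\Omega^{-2}\ablin)$ term is sphere-bounded by Corollary~\ref{cor:teuonspheres}, the zeroth-order $r^5\Omega^{-2}\ablin$ term by the same corollary, and the $\xblin$-term by Proposition~\ref{prop:outshearfinal}. No step involves any genuine obstacle; the only bookkeeping to watch is the derivative count, which balances exactly because each $\mathcal{A}^{[i]}$ on $\blin$ (or $\bblin$) is converted into at most $i-1$ angular plus one transversal derivative on the Teukolsky quantities, keeping the total at $n$.
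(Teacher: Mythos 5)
Your proof is correct and follows exactly the route the paper intends: it reads off $r\slashed{\mathcal{D}}_2^\star$ of the weighted $\blin$, $\bblin$ from the rewritten Bianchi identities (\ref{rewriteBianchi}), (\ref{rewriteBianchi2}), bounds the right-hand sides by Corollary \ref{cor:teuonspheres} and Propositions \ref{prop:inshearfinal}, \ref{prop:outshearfinal}, and closes with the $\ell\geq 2$ elliptic estimates for $\mathcal{A}^{[i]}$. The derivative count and the $(\log v)^{-2}$ variant both check out, so nothing further is needed.
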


\subsubsection{Estimating curvature scalars}

We now recall the equation (\ref{Prel2}) noting that the left hand side of (\ref{Prel2}) can be written as a linear combination of the regular (at  both $\mathcal{H}^+$ and $\mathcal{I}$) $\Omega^{-2} r^5 \ablin$, $\Omega \slashed{\nabla}_4 (\Omega^{-2} r^5 \ablin)$ and $[\Omega \slashed{\nabla}_4]^2 (\Omega^{-2} r^5 \ablin)$ with smooth and uniformly bounded coefficients.  Combining this with fact the estimate (\ref{sum1}) we directly obtain:


\begin{proposition} \label{prop:rhosigma}
We have on any sphere $S^2_{u,v}$ for $n \geq 3$ the estimates:
\begin{align}
\sum_{i=0}^{n-3} \| \mathcal{A}^{[i]} r^5 \slashed{\mathcal{D}}_2^\star  \slashed{\mathcal{D}}_1^\star \left(-\rlin, \slin\right)  \|_{u,v}^2 &\lesssim \overline{\mathbb{E}}^{n}_{data} [\alin, \ablin]  +\mathbb{D}^{n}_0  \, ,  \\
\sum_{i=0}^{n-3} \| \mathcal{A}^{[i]} r^5 \slashed{\mathcal{D}}_2^\star  \slashed{\mathcal{D}}_1^\star \left(-\rlin, \slin\right)  \|_{u,v}^2 &\lesssim \frac{ \overline{\mathbb{E}}^{n+1}_{data} [\alin, \ablin] +\mathbb{D}^{n+1}_0}{(\log v)^2} \, .
\end{align}
We also have the top-order ingoing flux bound
\begin{align}
\sum_{i=0}^{n-2} \int_v^u d\bar{u} \frac{\Omega^2}{r^2} \| \mathcal{A}^{[i]} r^5 \slashed{\mathcal{D}}_2^\star  \slashed{\mathcal{D}}_1^\star \left(-\rlin, \slin\right)  \|_{u,\bar{v}}^2
\lesssim  \overline{\mathbb{E}}^{n}_{data} [\alin, \ablin]  +\mathbb{D}^{n}_0  \, 
\end{align}
and the top-order outgoing flux bound
\begin{align} \label{laste}
\sum_{i=0}^{n-2} \int_{v_1}^{v_2} d\bar{v} \| \mathcal{A}^{[i]} r^5 \slashed{\mathcal{D}}_2^\star  \slashed{\mathcal{D}}_1^\star \left(-\rlin, \slin\right)  \|_{u,\bar{v}}^2   \lesssim \left(v_2-v_1\right) \left[  \overline{\mathbb{E}}^{n}_{data} [\alin, \ablin]  +\mathbb{D}^{n}_0  \right] \, .
\end{align}
Finally, for any fixed $r_0>r_+$ and the sphere $S^2_{u,v_1}$ lying in the region $r\geq r_0$ we have the uniform estimate
\begin{align} \label{laste2}
\sum_{i=0}^{n-2} \int_{v_1}^{v_2} d\bar{v} \| \mathcal{A}^{[i]} r^5 \slashed{\mathcal{D}}_2^\star  \slashed{\mathcal{D}}_1^\star \left(-\rlin, \slin\right)  \|_{u,\bar{v}}^2   \lesssim_{r_0} \overline{\mathbb{E}}^{n}_{data} [\alin, \ablin]  +\mathbb{D}^{n}_0  \, .
\end{align}
\end{proposition}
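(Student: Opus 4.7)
The strategy is to exploit the identity \eqref{Prel}, which expresses the quantity of interest directly in terms of the Regge--Wheeler function $\Psilin$ and the shear difference $\xlin-\xblin$. Inserting the background values $\rho = -2M/r^3$ and $\Omega\, \mathrm{tr}\,\chi = 2\Omega^2/r$ gives $\tfrac{3}{4}r^5\rho\, \mathrm{tr}\,\chi = -3M r\Omega$, so that \eqref{Prel} becomes
\[
r^5 \slashed{\mathcal{D}}_2^\star \slashed{\mathcal{D}}_1^\star(-\rlin,\slin) \;=\; \Psilin + 3M\, r\,(\Omega\xlin - \Omega\xblin),
\]
reducing the proposition to bounds on the two right-hand terms separately. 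Note that the prefactor $r^5\rho\,\mathrm{tr}\,\chi$ grows like $r$ at infinity, but this growth is precisely compensated by the boundary behaviour of $\xlin-\xblin$ captured in Proposition~\ref{prop:improvedchidifference}.

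For the sphere estimates, the first term is handled by recalling the definition $\Psilin = -\tfrac12 [r^2\Omega^{-1}\slashed{\nabla}_3]^2(r\Omega^2\alin)$, so that $\mathcal{A}^{[j]}\Psilin$ involves $j+2$ mixed derivatives of $r\Omega^2\alin$ (noting that $\mathcal{A}^{[j]}$ commutes with $r^2\Omega^{-1}\slashed{\nabla}_3$ since both $r$ and $\Omega$ are spherically symmetric). Corollary~\ref{cor:teuonspheres} therefore controls $\|\mathcal{A}^{[j]}\Psilin\|_{u,v}^2$ for $j\leq n-3$ by $\overline{\mathbb{E}}^n_{data}$ in the boundedness case and by $\overline{\mathbb{E}}^{n+1}_{data}/(\log v)^2$ in the decay case. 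The error term $3Mr(\Omega\xlin-\Omega\xblin)$ is bounded using \eqref{sum1} of Proposition~\ref{prop:improvedchidifference} in the region $r\geq 8M$; in the complementary region $r\leq 8M$, where $r$ is bounded and $\Omega$ is at worst bounded, we fall back on the sphere bounds for $\Omega\xlin$ from Proposition~\ref{prop:inshearfinal} and for $\Omega\xblin = (\Omega^2/r^2)\cdot r^2\Omega^{-1}\xblin$ from Proposition~\ref{prop:outshearfinal}. Combining both regions yields the required sphere bounds.

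For the ingoing top-order flux bound, the weight $\Omega^2/r^2$ is integrable in $u$ along the ingoing cone (indeed $\Omega^2/r^2\,du = -d(1/r)$). The $\Psilin$ contribution is pointwise dominated by $\Omega^2|\mathfrak{D}^{i+2}(r\Omega^2\alin)|^2$ (up to background coefficients), whose integral in $u$ with weight $\Omega^2$ is controlled by the Teukolsky ingoing flux $\overline{\mathbb{E}}^n_v[r\Omega^2\alin]\lesssim \overline{\mathbb{E}}^n_{data}$ from Theorem~\ref{theo:teukolsky}; the error contribution is controlled by its pointwise sphere bound times the bounded integral of $\Omega^2/r^2$. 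For the outgoing growth estimate \eqref{laste}, the analogous argument uses the growth-type outgoing flux estimate \eqref{gro1} of Theorem~\ref{theo:teukolsky} for the $\Psilin$ part, while the pointwise bound on the error contributes $(v_2-v_1)$ times the sphere bound. Finally, the improved estimate \eqref{laste2} uses that in the region $r\geq r_0>r_+$, the degenerate energy $\mathbb{E}^n_u$ and the non-degenerate $\overline{\mathbb{E}}^n_u$ are equivalent with implicit constant depending on $r_0$; the uniform boundedness estimate \eqref{bos} then controls the $\Psilin$ flux uniformly, while the shear-difference flux is controlled uniformly from the same non-degeneracy combined with the bounds already established for $\xlin,\xblin$ away from the horizon.

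The main subtlety to watch is the derivative bookkeeping: each of the three ingredients (Teukolsky sphere bounds, Teukolsky fluxes, shear difference bounds from Proposition~\ref{prop:improvedchidifference}) has its own regularity budget, and one needs to verify that the two transversal derivatives absorbed into $\Psilin$ together with the index $j$ in $\mathcal{A}^{[j]}$ match the available regularity in Corollary~\ref{cor:teuonspheres} and Theorem~\ref{theo:teukolsky}. Once this is done, the proof reduces to direct substitution; no new analytic input beyond the cited propositions is required.
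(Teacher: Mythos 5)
Your proposal is correct and takes essentially the same route as the paper: the paper's proof likewise reduces everything to the identity \eqref{Prel} (equivalently \eqref{Prel2}), bounds the $\Psilin$-part via Corollary \ref{cor:teuonspheres} for the sphere estimates and Theorem \ref{theo:teukolsky} for the fluxes, and bounds the shear-difference term via \eqref{sum1} of Proposition \ref{prop:improvedchidifference}. Your derivative bookkeeping ($\mathcal{A}^{[j]}\Psilin$ being of the form $\mathfrak{D}^{j+2}(r\Omega^2\alin)$, so $j\leq n-3$ for sphere bounds and $j\leq n-2$ for fluxes after peeling off one outer derivative) checks out, as does the splitting at $r=8M$ for the error term.
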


\begin{proof}
The bounds on spheres follow directly from the identity (\ref{Prel}), estimate (\ref{sum1}) of Proposition \ref{prop:improvedchidifference} and Corollary \ref{cor:teuonspheres}. For the fluxes one uses Theorem \ref{theo:teukolsky} instead of Corollary \ref{cor:teuonspheres}.
\end{proof}


\subsubsection{Estimates on the torsion}

\begin{proposition} \label{prop:etaetab}
For any $v \geq 0$ and $n \geq 3$ and $k \in \{0,1\}$ and $i \in \{0,1\}$
\begin{align}
\sum_{j=0}^{n-1+i} \big\| [\slashed{\nabla}_T]^k \mathcal{A}^{[j]} (r^{2-i}\elin, r^{2-i}\eblin) \big\|^2_{u,v}  &\lesssim  \overline{\mathbb{E}}^{n+k}_{data} [\alin, \ablin]  +\mathbb{D}^{n+k}_0  \, ,  \label{ghj1} \\
\sum_{j=0}^{n-1-i} \big\| [\slashed{\nabla}_T]^k \mathcal{A}^{[j]} (r^{2-i}\elin, r^{2-i}\eblin) \big\|^2_{u,v}  &\lesssim \frac{ \overline{\mathbb{E}}^{n+k+1}_{data} [\alin, \ablin]  +\mathbb{D}^{n+k+1}_0 }{(\log v)^2} \label{ghj2}  \, .
\end{align}
We also have 
\begin{align}
\sum_{j=0}^{n-2} \big\|  \mathcal{A}^{[j]} (r^3\elin + r^3\eblin) \big\|^2_{u,v}  &\lesssim  \overline{\mathbb{E}}^{n}_{data} [\alin, \ablin]  +\mathbb{D}^{n}_0 \, , \\
\sum_{j=0}^{n-2} \big\|  \mathcal{A}^{[j]} (r^3\elin + r^3\eblin) \big\|^2_{u,v}  &\lesssim \frac{ \overline{\mathbb{E}}^{n+1}_{data} [\alin, \ablin]  +\mathbb{D}^{n+1}_0 }{(\log v)^2}  \, .
\end{align}
\end{proposition}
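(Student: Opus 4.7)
The approach is to read the null structure equations \eqref{chih3} and \eqref{chih3b} as \emph{algebraic} identities for $\slashed{\mathcal{D}}_2^\star \elin$ and $\slashed{\mathcal{D}}_2^\star \eblin$ in terms of the shears and a single transversal derivative thereof, all of which are already controlled by Propositions \ref{prop:inshearfinal}, \ref{prop:outshearfinal} and \ref{prop:improvedchidifference}. Because $\Si$ has vanishing $\ell\leq 1$ modes, $\elin$ and $\eblin$ are $S^2_{u,v}$ one-forms supported on $\ell\geq 2$, so $\slashed{\mathcal{D}}_2^\star$ has trivial kernel and can be inverted via the elliptic estimates of Section \ref{sec:elliptic}.

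Concretely, multiplying \eqref{chih3} through by $r^2/\Omega$ and using the Schwarzschild-AdS values $\Omega\,\mathrm{tr}\,\chi=2\Omega^2/r$, $\Omega\,\mathrm{tr}\,\underline{\chi}=-2\Omega^2/r$, one obtains the pointwise identity
\begin{align*}
-2\,r^2\slashed{\mathcal{D}}_2^\star\elin \;=\; \bigl[r^2\Omega^{-1}\slashed{\nabla}_3\bigr](\Omega\xlin) \;+\; r\,(\Omega\xblin - \Omega\xlin),
\end{align*}
and analogously, starting from \eqref{chih3b} and rewriting $\tfrac{r^2}{\Omega^2}\,\Omega\slashed{\nabla}_4(\Omega\xblin)$ in terms of $\Omega\slashed{\nabla}_4(r^2\Omega^{-1}\xblin)$ plus a lower-order term with a uniformly bounded background coefficient times $r^2\Omega^{-1}\xblin$,
\begin{align*}
-2\,r^2\slashed{\mathcal{D}}_2^\star\eblin \;=\; \Omega\slashed{\nabla}_4\bigl(r^2\Omega^{-1}\xblin\bigr) \;+\; C(r)\,\bigl(r^2\Omega^{-1}\xblin\bigr) \;+\; r\,(\Omega\xblin - \Omega\xlin).
\end{align*}
The right-hand sides are controlled in $\|\mathcal{A}^{[j]}[\slashed{\nabla}_T]^k\cdot\|_{u,v}^2$ by Proposition \ref{prop:inshearfinal} (for the $[r^2\Omega^{-1}\slashed{\nabla}_3]$-term), Proposition \ref{prop:outshearfinal} (for the $\Omega\slashed{\nabla}_4$- and zeroth-order $\xblin$-terms), and \eqref{sum1} of Proposition \ref{prop:improvedchidifference} (for the shear difference $r(\Omega\xlin-\Omega\xblin)$ away from the horizon, with direct pointwise bounds from Propositions \ref{prop:inshearfinal}--\ref{prop:outshearfinal} in the compact region near the horizon). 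Since $\mathcal{A}^{[j]}[\slashed{\nabla}_T]^k$ commutes trivially with $\slashed{\nabla}_3$, $\slashed{\nabla}_4$ and all background scalar factors, applying the elliptic estimate of Section \ref{sec:elliptic} with $\eta=r\elin$ yields
\begin{align*}
\sum_{i=0}^{j+1}\bigl\|[r\slashed{\nabla}]^i(r\elin)\bigr\|^2_{u,v} \;\lesssim\; \bigl\|\mathcal{A}^{[j]}(r^2\slashed{\mathcal{D}}_2^\star\elin)\bigr\|^2_{u,v},
\end{align*}
which delivers the $i=1$ case of \eqref{ghj1}. The $i=0$ case (with the stronger weight $r^2\elin$) is obtained by applying the elliptic estimate to $\eta=r^2\elin$ instead, so that the right-hand side becomes $\|\mathcal{A}^{[j]}(r^3\slashed{\mathcal{D}}_2^\star\elin)\|^2$; the extra factor of $r$ in the corresponding identity produces $r^2(\Omega\xblin-\Omega\xlin)$, whose control via \eqref{sum1} costs one angular derivative and accounts for the index shift $n-1+i$ on the left.

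For the improved $r^3$-weighted estimate on $\elin+\eblin$, summing the two algebraic identities and using the background formulae for $\Omega\,\mathrm{tr}\,\chi$ and $\Omega\,\mathrm{tr}\,\underline{\chi}$ recovers, up to an $\Omega^2$ factor, precisely the combination $r^2\bigl[\Omega\slashed{\nabla}_3(\Omega\xlin)+\Omega\slashed{\nabla}_4(\Omega\xblin)+(\Omega\,\mathrm{tr}\,\underline{\chi})\Omega\xlin+(\Omega\,\mathrm{tr}\,\chi)\Omega\xblin\bigr]$ controlled by \eqref{longe} of Proposition \ref{prop:improvedchidifference}. This yields the sharper weight on $r^3(\elin+\eblin)$ in the region $\{r\geq 8M\}$ via the elliptic estimate; in the complementary region $\{r\leq 8M\}$, where $r$ is bounded, it follows trivially from the already-established $r\elin$, $r\eblin$ bounds. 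The decay statements \eqref{ghj2} follow by substituting into the same algebraic pipeline the decay versions of Propositions \ref{prop:inshearfinal}, \ref{prop:outshearfinal} and \ref{prop:improvedchidifference}, which cost one further data derivative as usual. The main technical point is the $r$-weight bookkeeping in the $\eblin$-identity near the horizon, where the singular factor $r^2/\Omega^2$ must be compensated by the $\Omega^2$-vanishing of $\Omega\slashed{\nabla}_4(\Omega\xblin)$ to produce the bounded coefficient $C(r)$.
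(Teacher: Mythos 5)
Your treatment of the $i=1$ case (reading \eqref{chih3}, \eqref{chih3b} as algebraic identities for $r^2\slashed{\mathcal{D}}_2^\star\elin$, $r^2\slashed{\mathcal{D}}_2^\star\eblin$ and inverting elliptically) and of the $r^3(\elin+\eblin)$ estimate via \eqref{longe} matches the paper. But your route to the $i=0$ case — multiplying the same identity by an extra factor of $r$ and estimating term by term — has a genuine gap. Neither term on the resulting right-hand side is individually bounded: from the expansions in the proof of Proposition \ref{prop:improvedchidifference}, $\left[\tfrac{r^2}{\Omega^2}\Omega\slashed{\nabla}_3\right](\Omega\xlin)$ converges at $\mathcal{I}$ to the nonvanishing boundary quantity $\dot{X}$ (the $t$-derivative of the boundary shear), so $r\left[\tfrac{r^2}{\Omega^2}\Omega\slashed{\nabla}_3\right](\Omega\xlin)$ grows linearly in $r$; likewise $r^2(\Omega\xlin-\Omega\xblin)\sim \pm r\,\Omega^2(u-v)\dot{X}\sim r\dot{X}$ diverges. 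Your claim that \eqref{sum1} controls $r^2(\Omega\xblin-\Omega\xlin)$ "at the cost of one angular derivative" is incorrect: \eqref{sum1} controls only $r(\Omega\xlin-\Omega\xblin)$, and trading angular derivatives does not upgrade the $r$-weight (the index shift $n-1+i$ in the statement comes from the $\blin$-bounds of Proposition \ref{prop:betaS2}, not from any such trade). The two divergent terms in fact cancel, and your argument never identifies or exploits that cancellation.

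The paper's proof handles $i=0$ in two steps that you are missing. First, it controls the \emph{difference} $r^2(\elin-\eblin)$ by subtracting \eqref{chih3b} from \eqref{chih3} and invoking \eqref{diff1}, which provides the improved $r$-weight precisely for the combination $\Omega\slashed{\nabla}_3(\Omega\xlin)-\Omega\slashed{\nabla}_4(\Omega\xblin)$ (where the $\dot{X}$-contributions cancel to order $u-v\sim r^{-1}$ by Taylor expansion along $\mathcal{I}$). Second, it recovers $r^2\elin$ and $r^2\eblin$ \emph{individually} by integrating the transport equations \eqref{propeta} backwards from the conformal boundary, where $\elin,\eblin\to 0$ is already known from the $i=1$ step; the source terms involve only the controlled difference $r^2(\elin-\eblin)$ and $\blin,\bblin$ from Proposition \ref{prop:betaS2}. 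Without some version of this difference-plus-transport argument (or an equivalent way of exhibiting the cancellation), the $r^2$-weighted estimates \eqref{ghj1}, \eqref{ghj2} for $i=0$ do not follow from your algebraic identity alone.
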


\begin{proof}
We show the estimate for $k=0$. For $i=1$, (\ref{ghj1}) and (\ref{ghj2}) for $(r \elin, r \eblin)$ follow immediately from the equation (\ref{chih3}) and (\ref{chih3b}) after inserting the estimates of Propositions \ref{prop:inshearfinal}, \ref{prop:outshearfinal} and Proposition \ref{prop:improvedchidifference}. In particular this already establishes immediately all estimates claimed in the region $r\leq 8M$, so we can focus on establishing the estimates for $i=0$ in the region $r \geq 8M$ for the remainder of the proof.

Replacing $(r^2 \elin, r^2 \eblin)$ by $(r^2 \elin - r^2 \eblin)$ (i.e.~only looking at the difference) both estimates follow after taking the difference of (\ref{chih3}) and (\ref{chih3b}) and using the estimate (\ref{diff1}) of Proposition \ref{prop:improvedchidifference}.

To show the actual (\ref{ghj1}) and (\ref{ghj2}) (i.e.~the estimate for $r^2 \elin$ and $r^2\eblin$ individually) we integrate (\ref{propeta}) backwards from the boundary, where $\elin$ and $\eblin$ are known to vanish by having established control on $(r \elin, r \eblin)$ at the beginning of the proof. Inserting the estimate for $(r^2 \elin - r^2 \eblin)$ already established and Proposition \ref{prop:betaS2} to control the right hand side, the estimates  (\ref{ghj1}) and (\ref{ghj2}) follow.

The last two estimates follows from adding (\ref{chih3}) and (\ref{chih3b}) and using the estimate (\ref{longe}).
\end{proof}

\subsubsection{Estimates on the lapses}

The following is an immediate corollary of Proposition \ref{prop:etaetab}: 
\begin{corollary} \label{cor:olin}
For any $v \geq 0$ and $n \geq 3$
\begin{align}
\sum_{j=0}^{n-1} \big\| r \mathcal{A}^{[j]} [r\slashed{\nabla}] (\Olin)   \big\|^2_{u,v} + \sum_{j=0}^{n-2} \big\| r^2 \mathcal{A}^{[j]} [r\slashed{\nabla}] (\Olin)   \big\|^2_{u,v}
&\lesssim  \overline{\mathbb{E}}^{n}_{data} [\alin, \ablin]  +\mathbb{D}^{n}_0  \, , \nonumber \\
\sum_{j=0}^{n-1} \big\| r \mathcal{A}^{[j]} [r\slashed{\nabla}]  (\Olin)   \big\|^2_{u,v} + \sum_{j=0}^{n-2} \big\| r^2 \mathcal{A}^{[j]} [r\slashed{\nabla}]  (\Olin)   \big\|^2_{u,v} &\lesssim \frac{ \overline{\mathbb{E}}^{n+1}_{data} [\alin, \ablin]  +\mathbb{D}^{n+1}_0 }{(\log v)^2} \, . \nonumber
\end{align}

\end{corollary}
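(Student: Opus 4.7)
The proof is essentially immediate from Proposition~\ref{prop:etaetab} together with the constraint relation on $\Olin$. The plan is as follows.

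The key observation is that by the third relation in~\eqref{oml3}, one has the identity
\begin{align*}
2[r\slashed{\nabla}]_A(\Olin) = r\elin_A + r\eblin_A,
\end{align*}
valid on all spheres $S^2_{u,v}$. Consequently, $r[r\slashed{\nabla}](\Olin)$ and $r^2[r\slashed{\nabla}](\Olin)$ are proportional (with smooth uniformly bounded coefficients) to $r^2(\elin+\eblin)$ and $r^3(\elin+\eblin)$ respectively. Since $[r\slashed{\nabla}]\Olin$ is an $S^2_{u,v}$ one-form of vanishing $\ell\leq 1$ modes (because $\elin,\eblin$ do), the elliptic operators $\mathcal{A}^{[j]}$ have trivial kernel on it and the $L^2$ norms $\|\mathcal{A}^{[j]}\cdot\|_{u,v}$ control the full $H^{j+1}$ angular Sobolev norms (see Section~\ref{sec:elliptic}).

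First I would bound $r[r\slashed{\nabla}](\Olin)$: applying $\mathcal{A}^{[j]}$ to the identity above and using the triangle inequality yields
\begin{align*}
\sum_{j=0}^{n-1}\big\|r\mathcal{A}^{[j]}[r\slashed{\nabla}](\Olin)\big\|_{u,v}^2 \lesssim \sum_{j=0}^{n-1}\big\|\mathcal{A}^{[j]}(r^2\elin,r^2\eblin)\big\|_{u,v}^2,
\end{align*}
and the right hand side is controlled by the first estimate (the case $k=0$, $i=0$) of Proposition~\ref{prop:etaetab}. Next for $r^2[r\slashed{\nabla}](\Olin)$ the same argument applied to $r^3(\elin+\eblin)$ together with the improved estimate on the \emph{sum} $\elin+\eblin$ provided in the last two bounds of Proposition~\ref{prop:etaetab} gives
\begin{align*}
\sum_{j=0}^{n-2}\big\|r^2\mathcal{A}^{[j]}[r\slashed{\nabla}](\Olin)\big\|_{u,v}^2 \lesssim \sum_{j=0}^{n-2}\big\|\mathcal{A}^{[j]}(r^3\elin+r^3\eblin)\big\|_{u,v}^2 \lesssim \overline{\mathbb{E}}^n_{data}[\alin,\ablin]+\mathbb{D}^n_0.
\end{align*}
The loss of one derivative here (from $n-1$ down to $n-2$) is precisely the loss inherited from the difference cancellation exploited in~\eqref{longe} and reflected in Proposition~\ref{prop:etaetab}. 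The logarithmic decay statement follows in exactly the same way, replacing each invocation of Proposition~\ref{prop:etaetab} with its decay counterpart (costing one extra derivative on the right hand side). There is no real obstacle here — the only conceptual point is to recognize that $\Olin$ is determined up to tangential angular derivatives by the torsion sum $\elin+\eblin$ via~\eqref{oml3}, so the weights and regularity on $\Olin$ inherit directly from those already proven for $\elin+\eblin$.
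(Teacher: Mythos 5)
Your proof is correct and follows exactly the paper's own (one-line) argument: the identity $2\slashed{\nabla}(\Olin)=\elin+\eblin$ from (\ref{oml3}) reduces both sums directly to the estimates of Proposition \ref{prop:etaetab}, with the stronger $r$-weight coming from the improved bound on the sum $r^3\elin+r^3\eblin$ at the cost of one derivative. Nothing further is needed.
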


\begin{proof}
All estimates follows straight from the definition $2\slashed{\nabla} (\Olin) = \elin + \eblin$ and using Proposition \ref{prop:etaetab}. 
\end{proof}

An estimate for $\olinb, \olin$ is easily obtained from the relations
\begin{align} \label{Teta}
2 \slashed{\nabla}_T \elin = -\Omega \blin - \Omega \bblin + \frac{\Omega^2}{r} (\elin + \eblin) + 2 \slashed{\nabla} \olinb \ \ \ , \ \ \ 2 \slashed{\nabla}_T \eblin = \Omega \blin + \Omega \bblin - \frac{\Omega^2}{r} (\elin + \eblin) + 2 \slashed{\nabla} \olin \, 
\end{align}
and previous bounds on the geometric quantities:

\begin{proposition} \label{prop:omega}
For any $v \geq 0$ and $n \geq 3$ and $i \in \{0,1\}$
\begin{align}
\sum_{j=0}^{n-2} \big\| r^ i \mathcal{A}^{[j]} [r\slashed{\nabla}] \olinb r^2 \Omega^{-2}   \big\|^2_{u,v} +\sum_{j=0}^{n-2} \big\| r^i \mathcal{A}^{[j]} [r\slashed{\nabla}] \olin   \big\|^2_{u,v} &\lesssim  \overline{\mathbb{E}}^{n+i}_{data} [\alin, \ablin]  +\mathbb{D}^{n+i}_0  \, , \nonumber \\
\sum_{j=0}^{n-3} \big\| r^i \mathcal{A}^{[j]} [r\slashed{\nabla}] \olinb r^2 \Omega^{-2}   \big\|^2_{u,v} +\sum_{j=0}^{n-3} \big\| r^i \mathcal{A}^{[j]} [r\slashed{\nabla}] \olin   \big\|^2_{u,v} &\lesssim \frac{ \overline{\mathbb{E}}^{n+i}_{data} [\alin, \ablin]  +\mathbb{D}^{n+i}_0 }{(\log v)^2} \, . \nonumber 
\end{align}
\end{proposition}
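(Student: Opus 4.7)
The plan is to treat $\olin$ and $\olinb$ separately, since they have different regularity structure at the horizon: recall from~\eqref{smoothextended} that $\olin$ is smooth at $\mathcal{H}^+$ whereas only the rescaled quantity $\Omega^{-2}\olinb$ is.

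For $\olin$ I would solve the second identity of~\eqref{Teta} for $\slashed{\nabla}\olin$ and estimate each term on the resulting right-hand side in $L^2$: the $\slashed{\nabla}_T\eblin$ term via Proposition~\ref{prop:etaetab} with $k=1$; the $\Omega\blin$, $\Omega\bblin$ terms via Proposition~\ref{prop:betaS2} (the relevant weights $r^{i+1}\Omega\blin$ and $r^{i+1}\Omega\bblin$ are controlled for $i\in\{0,1\}$ by the regular quantities $\Omega r^2\blin$ and $\Omega^{-1}r^4\bblin$); and the $(\Omega^2/r)(\elin+\eblin)$ term via Proposition~\ref{prop:etaetab}. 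Since the solution $\Si$ has vanishing $\ell\leq 1$ modes, controlling $r^{i+1}\slashed{\nabla}\olin$ on spheres controls $r^i[r\slashed{\nabla}]\olin$, and angular commutations with $\mathcal{A}^{[j]}$ are trivial on the Schwarzschild--AdS background.

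The $\olinb$ estimate is the main obstacle. The naive strategy of multiplying the first identity of~\eqref{Teta} by $r^{i+3}\Omega^{-2}$ (as required since the horizon-regular quantity is $\Omega^{-2}\olinb$) produces individual terms $r^{i+3}\Omega^{-1}\blin$ and $r^{i+3}\Omega^{-2}\slashed{\nabla}_T\elin$ which are separately singular at $\mathcal{H}^+$ and cancel only in combination, making an $L^2$ estimate term-by-term impossible. Instead I would exploit the propagation equation~\eqref{oml1}, which commuted with $[r\slashed{\nabla}]$ via~\eqref{angularcommute} reads
\begin{align*}
  \Omega\slashed{\nabla}_4\big([r\slashed{\nabla}]\olinb\big) \;=\; -\Omega^2\,[r\slashed{\nabla}]\big(\rlin+2(\rho+k^2)\Olin\big),
\end{align*}
i.e.\ a transport equation of precisely the form required by Lemma~\ref{lem:bastra2}, with $\xi=[r\slashed{\nabla}]\olinb$ and $B=-r^2[r\slashed{\nabla}](\rlin+2(\rho+k^2)\Olin)$. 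The crucial observation is that on $\underline{C}_{v_0}$ the initial datum vanishes, since the initial data normalisation (Definition~\ref{def:datanormalised}) yields $\Olin=0$ on $\{v=v_0\}$, and hence $\olinb=\partial_u\Olin=0$ there. The flux of $B$ is controlled by combining the elliptic estimates of Section~\ref{sec:elliptic} (inverting $r^2\slashed{\mathcal{D}}_2^\star\slashed{\mathcal{D}}_1^\star$) applied to Proposition~\ref{prop:rhosigma} with the bounds on $[r\slashed{\nabla}]\Olin$ from Corollary~\ref{cor:olin}. Lemma~\ref{lem:bastra2} then yields the desired bound on $\|r^2\Omega^{-2}[r\slashed{\nabla}]\olinb\|_{u,v}$, and the stronger $r$-weight of the $i=1$ case is obtained from the direct integration near infinity described at the end of the proof of Lemma~\ref{lem:bastra2}. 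Commutations with $\mathcal{A}^{[j]}$ for $j\leq n-2$ are propagated trivially, matching the derivative count already incurred in Propositions~\ref{prop:rhosigma} and~\ref{prop:etaetab}.

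The decay rates with the factor $(\log v)^{-2}$ follow along the same lines, using Lemma~\ref{lem:bastra3} in place of Lemma~\ref{lem:bastra2} for the $\olinb$ estimate and the decay statements of Propositions~\ref{prop:etaetab},~\ref{prop:betaS2},~\ref{prop:rhosigma} and Corollary~\ref{cor:olin} for the $\olin$ estimate. The additional loss of one angular derivative (the sum running only to $n-3$) reflects exactly the loss inherent in the decay versions of these propositions.
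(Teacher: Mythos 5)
Your treatment of $\olin$ coincides with the paper's (which simply solves the second relation of~\eqref{Teta} for $\slashed{\nabla}\olin$ and invokes Propositions~\ref{prop:etaetab} and~\ref{prop:betaS2}), and your diagnosis of the $\olinb$ difficulty is accurate: with the weight $r^{i+3}\Omega^{-2}$ the terms $\Omega^{-2}\slashed{\nabla}_T\elin$ and $\Omega^{-1}\blin$ in the first relation of~\eqref{Teta} are individually singular at $\mathcal{H}^+$ and only their combination (which equals $\Omega^{-1}\slashed{\nabla}_3\elin$ up to regular terms, by~\eqref{propeta}) is controlled. The paper's one-line appeal to~\eqref{Teta} glosses over this; your substitute — commuting the transport equation~\eqref{oml1} with $[r\slashed{\nabla}]$, noting that the initial data normalisation forces $\olinb=\partial_u\Olin=0$ on $\underline{C}_{v_0}$, bounding the source $B=-r^2[r\slashed{\nabla}](\rlin+2(\rho+k^2)\Olin)$ uniformly on spheres via Proposition~\ref{prop:rhosigma} (with the elliptic estimates of Section~\ref{sec:elliptic}) and Corollary~\ref{cor:olin}, and applying Lemmas~\ref{lem:bastra2}/\ref{lem:bastra3} — is exactly the redshift/damping mechanism used for $\otx\Omega^{-2}r^2$ in Proposition~\ref{prop:trx} and gives a clean, fully justified proof of the $i=0$ case.

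The gap is in the $i=1$ case for $\olinb$. The ``direct integration near infinity'' at the end of the proof of Lemma~\ref{lem:bastra2} yields $\|[r\slashed{\nabla}]\olinb\|_{u,v}\lesssim\|[r\slashed{\nabla}]\olinb\|_{u,v(r_0,u)}+\sqrt{\mathbb{D}}$, i.e.\ mere boundedness of $[r\slashed{\nabla}]\olinb$, which near $\mathcal{I}$ (where $r^2\Omega^{-2}\to k^{-2}$) is equivalent to the $i=0$ statement; it does not produce the additional decay $[r\slashed{\nabla}]\olinb=O(r^{-1})$ that the $i=1$ weight demands, since the boundary value $\lim_{v\to u}[r\slashed{\nabla}]\olinb$ is not known to vanish at this stage (its vanishing is a \emph{consequence} of the $i=1$ estimate, cf.\ Remark~\ref{rem:soc}). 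A weighted transport for $r[r\slashed{\nabla}]\olinb$ also fails, because the source then contains $r^3(\rho+k^2)[r\slashed{\nabla}]\Olin\sim k^2 r\cdot(r^3\elin+r^3\eblin)$, which grows. The fix is to return to the first relation of~\eqref{Teta} in the region $r\geq r_0>r_+$: there $\Omega^{-2}\lesssim r^{-2}$, so every term is regular and carries exactly the weights controlled by Propositions~\ref{prop:etaetab} ($k=1$) and~\ref{prop:betaS2} together with the $(r^3\elin+r^3\eblin)$ bound; near the horizon $r$ is bounded and $i=1$ reduces to your $i=0$ transport estimate. With that regional splice your argument is complete.
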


\begin{remark} \label{rem:soc}
In particular $\olin$ and $\olinb$ vanish on $\mathcal{I}$. Summing the relations (\ref{Teta}) one can also show that $r \olin + r \olinb$ vanishes on $\mathcal{I}$ as follows from $r^2 \slashed{\nabla}_T (\elin + \eblin)$ vanishing on $\mathcal{I}$. However, this requires controlling one more $T$-derivative on data and is hence omitted. See also Remark \ref{rem:betterbounds}.
\end{remark}


\subsection{Estimates on the expansion and improving the regularity} \label{sec:expansion}
We write the linearised Raychaudhuri equation (\ref{uray}) as 
\begin{align}\label{eq:Raybis}
\Omega \slashed{\nabla}_4 \left(\otx r^2 \Omega^{-2} - 4r \Olin \right) = -4 \Omega^2 (\Olin) \, . 
\end{align}
Commuting with $n$ angular derivatives and also with $\Omega^{-1} \slashed{\nabla}_3$ we deduce (note the  factor of $\Omega^{-2}$):

\begin{proposition} \label{prop:trx}
For $n \geq 3$ we have on any sphere $S^2_{u,v}$ for $i \in \{0,1\}$:
\begin{align}
\sum_{j=0}^{n-1-i} \big\| \mathcal{A}^{[j]} [r\slashed{\nabla}] \otx \Omega^{-2} r^2 \xcancel{ (\log r)^{-1+i}  } \big\|^2_{u,v} &\lesssim \overline{\mathbb{E}}^{n}_{data} [\alin, \ablin]  +\mathbb{D}^{n}_0  \label{est:trx1}\, ,  \\
\sum_{j=0}^{n-2-i} \big\| \mathcal{A}^{[j]} [r\slashed{\nabla}] \otx \Omega^{-2}  r^2 \xcancel{(\log r)^{-1+i}}    \big\|^2_{u,v} &\lesssim \frac{ \overline{\mathbb{E}}^{n}_{data} [\alin, \ablin]  +\mathbb{D}^{n}_0 }{(\log v)^2}  \, .\label{est:trx2}
\end{align}
We also have
\begin{align}
\sum_{j=0}^{n-2-i} \big\| \frac{r^{i+1}}{\Omega^2} [\Omega \slashed{\nabla}_3] \mathcal{A}^{[j]} [r\slashed{\nabla}] \otx   \big\|^2_{u,v} &\lesssim  \overline{\mathbb{E}}^{n}_{data} [\alin, \ablin]  +\mathbb{D}^{n}_0  \, ,  \label{est:trx3}\\
\sum_{j=0}^{n-3-i} \big\|  \frac{r^{i+1}}{\Omega^2} [\Omega \slashed{\nabla}_3] \mathcal{A}^{[j]} [r\slashed{\nabla}] \otx   \big\|^2_{u,v} &\lesssim \frac{ \overline{\mathbb{E}}^{n}_{data} [\alin, \ablin]  +\mathbb{D}^{n}_0 }{(\log v)^2} \, . \label{est:trx4} 
\end{align}
\end{proposition}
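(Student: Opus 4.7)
The starting point is to combine the linearised Raychaudhuri equation~\eqref{eq:Raybis} with the relation $\partial_v \Olin = \olin$ from~\eqref{oml3}. A short computation eliminates the $\Omega^2\Olin$ source and yields the clean transport equation
\begin{equation*}
\Omega\slashed{\nabla}_4\big(\otx\, r^2 \Omega^{-2}\big) = 4 r\, \olin.
\end{equation*}
Commutation with $\mathcal{A}^{[j]}[r\slashed{\nabla}]$ is trivial on scalars by~\eqref{angularcommute}, so my plan is to integrate the commuted equation forward in $v$ starting from the initial cone $\underline{C}_0$. The initial value at $v=0$ is controlled directly by the second term in the definition~\eqref{Dnorm} of $\mathbb{D}^n_0$, which is precisely why that term is included in the initial data norm.

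The key step is to estimate the source integral $4\int_0^v r\,\mathcal{A}^{[j]}[r\slashed{\nabla}]\olin\,dv'$ uniformly in $v$. A naive Minkowski bound fails because $\|r\olin\|$ need not decay in $v$ at all (only in $v$ through the $(\log v)^{-1}$ factor, which is too slow). The trick is to integrate by parts using $\olin = \partial_v \Olin$:
\begin{equation*}
\int_0^v r\,\mathcal{A}^{[j]}[r\slashed{\nabla}]\olin\, dv' = \Big[r\,\mathcal{A}^{[j]}[r\slashed{\nabla}]\Olin\Big]_{v'=0}^{v'=v} - \int_0^v \Omega^2\, \mathcal{A}^{[j]}[r\slashed{\nabla}]\Olin\, dv'.
\end{equation*}
The boundary terms are uniformly bounded by the $r$-weighted estimate of Corollary~\ref{cor:olin}. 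For the remainder, I would change variables via $dv = dr/\Omega^2$, reducing it to $\int \mathcal{A}^{[j]}[r\slashed{\nabla}]\Olin\, dr$, which converges uniformly using the pointwise $O(r^{-2})$ bound implied by the $r^2$-weighted estimate of Corollary~\ref{cor:olin} (at the cost of one derivative, which explains the restriction $j \le n-1-i$). The $(\log v)^{-2}$ decay in~\eqref{est:trx2} follows analogously from the decay statement of Corollary~\ref{cor:olin}.

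For~\eqref{est:trx3}--\eqref{est:trx4}, I would commute the transport equation further with $\Omega^{-1}\slashed{\nabla}_3$, which is again trivial by~\eqref{angularcommute}. The resulting source involves $\Omega^{-1}\slashed{\nabla}_3\olin$, which by~\eqref{oml2} equals $-\Omega(\rlin + 2(\rho+k^2)\Olin)$, a combination of already-controlled quantities (Proposition~\ref{prop:rhosigma} and Corollary~\ref{cor:olin}). Integrating forward from initial data (with initial value controlled by the $i=1$ contribution in~\eqref{Dnorm}) and applying the same integration-by-parts together with the change-of-variable argument produces the $r^{i+1}/\Omega^2$-weighted bounds.

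The main obstacle lies in the uniform control of the source integral near the conformal boundary $\mathcal{I}$: neither the pointwise nor the norm decay of $\olin$ alone is sufficient to make $\int r\olin\,dv$ uniformly convergent, and a naive approach only yields the logarithmically-growing bound reflected in the crossed-out $(\log r)^{-1+i}$ weight appearing in the statement. The integration-by-parts manoeuvre combined with the sharper $r^2$-weighted $\Olin$ estimate of Corollary~\ref{cor:olin} is what removes this logarithmic loss, at the price of the one-derivative loss visible in the range of $j$. Propagating this cancellation through the additional $\Omega^{-1}\slashed{\nabla}_3$ commutation without losing further $r$-weights is the most delicate bookkeeping step.
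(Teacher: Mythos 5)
Your derivation of the transport equation $\Omega\slashed{\nabla}_4\big(\otx\, r^2\Omega^{-2}\big)=4r\olin$ and the subsequent integration by parts in $v$ is exactly the computation that produces the paper's renormalised Raychaudhuri equation \eqref{eq:Raybis}, and your use of Corollary~\ref{cor:olin} with the $r^2$- versus $r$-weighted bounds on $\Olin$ to explain the $i=1$/$i=0$ dichotomy and the logarithmic loss matches the paper's bookkeeping. For the boundedness estimates \eqref{est:trx1} and \eqref{est:trx3}, your direct integration combined with the change of variables $\Omega^2\,dv'=dr$ is a correct, somewhat more hands-on substitute for the paper's appeal to Lemma~\ref{lem:bastra2}, and the initial value is indeed the second term of \eqref{Dnorm}.

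The gap is in the decay estimates \eqref{est:trx2} and \eqref{est:trx4}. Integrating forward from $\underline{C}_{v_0}$ gives
\begin{equation*}
\mathcal{A}^{[j]}[r\slashed{\nabla}]\big(\otx\, r^2\Omega^{-2}\big)(u,v) = \mathcal{A}^{[j]}[r\slashed{\nabla}]\big(\otx\, r^2\Omega^{-2}\big)(u,0) + \Big[4r\,\mathcal{A}^{[j]}[r\slashed{\nabla}]\Olin\Big]_{v'=0}^{v'=v} - \int_0^v 4\Omega^2\,\mathcal{A}^{[j]}[r\slashed{\nabla}]\Olin\,dv' ,
\end{equation*}
and while your splitting and change-of-variables arguments do make the last two terms decay like $(\log v)^{-1}$ via the decay statement of Corollary~\ref{cor:olin}, the first term is only \emph{bounded} by $\mathbb{D}^n_0$: there is no decay of the seed data along $\underline{C}_{v_0}$ as $u\to\infty$, and a late-time sphere $S^2_{u,v}$ necessarily has $u\geq v$ large, so this contribution is carried to $S^2_{u,v}$ undiminished. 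As written, your argument therefore yields only boundedness, not the claimed $(\log v)^{-2}$ decay. This is precisely the role of the paper's transport Lemma~\ref{lem:bastra3}: rewriting the transport equation for the $\Omega^{-2}$-weighted quantity produces a zeroth-order term with uniformly positive coefficient, and the resulting Gronwall inequality suppresses the initial-data contribution (exponentially in $v$, hence faster than any logarithm), leaving only the late-time behaviour of the source. You need either to route the estimate through Lemmas~\ref{lem:bastra2}--\ref{lem:bastra3} as the paper does, or to supply some other mechanism by which the data term on $\underline{C}_{v_0}$ is damped; the same issue recurs verbatim for the $\Omega\slashed{\nabla}_3$-commuted decay estimate \eqref{est:trx4}.
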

\begin{proof}
  With the $\log$-terms absent (for $i=1$), the estimates~\eqref{est:trx1},~\eqref{est:trx2} claimed are an immediate consequence of applying the transport Lemmas \ref{lem:bastra2} and \ref{lem:bastra3} to Equation~\eqref{eq:Raybis} using the estimates of Corollary \ref{cor:olin} with the strong $r^2$ weight. Estimates~\eqref{est:trx3},~\eqref{est:trx4} are obtained similarly, by commuting~\eqref{eq:Raybis} with $\Omega\slashed{\nabla}_3$ and using the estimates of Corollary \ref{cor:olin} and Proposition~\ref{prop:omega} with the strong $r$-weights. With the $\log$-terms present (\emph{i.e.} for $i=0$), estimates~\eqref{est:trx1},~\eqref{est:trx2},~\eqref{est:trx3},~\eqref{est:trx4} follow along the same lines but using the estimates with the weaker $r$-weights in Corollary~\ref{cor:olin} and Proposition~\ref{prop:omega}. By a slight variation of the transport lemmas which we leave to the reader, this generates $\log$-terms in~\eqref{est:trx1} and~\eqref{est:trx2} and the claimed weights in~\eqref{est:trx3},~\eqref{est:trx4}. The $\log$-terms will be removed immediately after the next proposition, which only uses the estimates of the current proposition with the $\log$-terms. 
\end{proof}

The above estimate (with the $\log$-terms) leads to an improvement Proposition \ref{prop:inshearfinal} via the Codazzi equations and previous bounds:

\begin{proposition}  \label{prop:outshearfinal2}
For any $v \geq 0$ and $n \geq 3$, $k\in\{0,1\}$ 
\begin{align}
\sum_{i+j \leq n, i \leq 2} \big\| \mathcal{A}^{[j]} [\slashed{\nabla}_T]^k  \left[ r^2\Omega^{-1} \slashed{\nabla}_3\right]^i( \Omega \xlin)) \big\|^2_{u,v} &\lesssim  \overline{\mathbb{E}}^{n+k}_{data} [\alin, \ablin]  +\mathbb{D}^{n+k}_0   \, ,  \\
\sum_{i+j \leq n, i \leq 2} \big\| \mathcal{A}^{[j]} [\slashed{\nabla}_T]^k \left[ r^2\Omega^{-1} \slashed{\nabla}_3\right]^i( \Omega \xlin)) \big\|^2_{u,v}  &\lesssim \frac{ \overline{\mathbb{E}}^{n+k+1}_{data} [\alin, \ablin]  +\mathbb{D}^{n+k+1}_0 }{(\log v)^2}  \, . 
\end{align}
\end{proposition}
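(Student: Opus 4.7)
The plan is to extend the allowed range of $(i,j)$ from the rectangle $i\le 2$, $j\le n-2$ of Proposition~\ref{prop:inshearfinal} to the triangle $i+j\le n$, $i\le 2$. The case $i=2$ is already contained in Proposition~\ref{prop:inshearfinal} with no further input required, so only the cases $i=0$ (gaining two angular derivatives) and $i=1$ (gaining one angular derivative) demand a new argument. These are obtained respectively from the linearised Codazzi equation combined with sphere-elliptic theory, and from the linearised null structure equation (\ref{chih3}) combined with the boundary cancellation encoded in Proposition~\ref{prop:improvedchidifference}.

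For $i=0$, the Codazzi equation (\ref{ellipchi}) for $\xlin$, rewritten in the regular combinations of (\ref{lin1}), takes the schematic form
\[
r\slashed{\mathcal{D}}_2(\Omega\xlin) = -\frac{\Omega^2}{r^2}\,(r^2\eblin) - \frac{1}{r}\,(\Omega r^2\blin) + \frac{\Omega^2}{2r^2}\,[r\slashed{\nabla}]\bigl(\Omega^{-2}r^2\otx\bigr).
\]
Since $\Omega\xlin$ is a symmetric traceless tensor supported on $\ell\ge 2$ (as $\Si$ has vanishing $\ell=0,1$ modes) and the operator $r\slashed{\mathcal{D}}_2$ is elliptic with trivial kernel on this subspace, the sphere-elliptic theory of Section~\ref{sec:elliptic} gives
\[
\|\mathcal{A}^{[n]}[\slashed{\nabla}_T]^k(\Omega\xlin)\|^2_{u,v}\lesssim \|\mathcal{A}^{[n-1]}[\slashed{\nabla}_T]^k r\slashed{\mathcal{D}}_2(\Omega\xlin)\|^2_{u,v}+\|\mathcal{A}^{[n-1]}[\slashed{\nabla}_T]^k(\Omega\xlin)\|^2_{u,v}.
\]
I would commute the Codazzi equation with $\mathcal{A}^{[n-1]}[\slashed{\nabla}_T]^k$ and apply Propositions~\ref{prop:etaetab}, \ref{prop:betaS2}, and \ref{prop:trx} to the three terms on its right, each of which supplies exactly the $j\le n-1$ bound needed within $\overline{\mathbb{E}}^{n+k}_{data}+\mathbb{D}^{n+k}_0$. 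The lower-order term is treated inductively, with Proposition~\ref{prop:inshearfinal} as the base case.

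For $i=1$, I would use the null structure equation (\ref{chih3}) multiplied by $r^2/\Omega$; the background identities $\Omega\mathrm{tr}\underline{\chi}=-\Omega\mathrm{tr}\chi=-2\Omega^2/r$ combine the two zeroth-order shear terms into a single difference, yielding
\[
r^2\Omega^{-1}\slashed{\nabla}_3(\Omega\xlin) = -2\,r^2\slashed{\mathcal{D}}_2^\star\elin + r\bigl(\Omega\xlin-\Omega\xblin\bigr).
\]
This rewriting is crucial: although $r\Omega\xlin$ and $r\Omega\xblin$ individually blow up at $\mathcal{I}$, their difference is uniformly bounded thanks to the boundary condition (\ref{bclxi}). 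After commuting with $\mathcal{A}^{[n-1]}[\slashed{\nabla}_T]^k$, the first term on the right becomes $\mathcal{A}^{[n]}(r\elin)$ and is controlled by Proposition~\ref{prop:etaetab} at $i=1$; the second term is controlled in the region $r\ge 8M$ by the estimate (\ref{sum1}) of Proposition~\ref{prop:improvedchidifference}, while in the complementary region $r\le 8M$, where the prefactor $r$ is bounded, the individual quantities $\mathcal{A}^{[n-1]}(\Omega\xlin)$ and $\mathcal{A}^{[n-1]}(\Omega\xblin)$ suffice, the former provided by the $i=0$ improvement just established and the latter by Proposition~\ref{prop:outshearfinal} together with the relation $\Omega\xblin=(\Omega^2/r^2)(\Omega^{-1}r^2\xblin)$.

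The main obstacle is the delicate weight management at the conformal boundary, where only combinations of Ricci and curvature quantities dictated by the asymptotic AdS structure and the linearised boundary conditions remain uniformly bounded. The role of Proposition~\ref{prop:improvedchidifference} is precisely to quantify the cancellation between $r\Omega\xlin$ and $r\Omega\xblin$ that makes the $i=1$ step close, while the choice of regular combinations in (\ref{lin1}) ensures that no spurious singularities appear at the horizon.
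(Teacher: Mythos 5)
Your decomposition into the cases $i=2$ (quoted from Proposition~\ref{prop:inshearfinal}) and $i=0$ (elliptic gain from the Codazzi equation (\ref{ellipchi}), fed by Propositions~\ref{prop:etaetab}, \ref{prop:betaS2} and \ref{prop:trx}) matches the paper's proof. One bookkeeping point for $i=0$: at this stage Proposition~\ref{prop:trx} controls $[r\slashed{\nabla}]\otx\Omega^{-2}r^2$ at top angular order only up to a $\log r$ loss, so the term $\tfrac{r}{2}\slashed{\nabla}\otx$ does not directly supply a uniform bound near $\mathcal{I}$; the paper first proves the estimate with a sacrificed factor of $r^{-2}$ and then recovers the weight by integrating (\ref{shear4dir}) outward from a fixed $r_0$. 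Your $i=0$ paragraph glosses over this, but it is repairable by the same device.

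The genuine gap is in your $i=1$ step. The identity $r^2\Omega^{-1}\slashed{\nabla}_3(\Omega\xlin)=-2r^2\slashed{\mathcal{D}}_2^\star\elin+r(\Omega\xlin-\Omega\xblin)$ is correct, but it carries no elliptic operator on the left, so for $j\le n-1$ you must control $\mathcal{A}^{[n-1]}$ of each term on the right. Estimate (\ref{sum1}) of Proposition~\ref{prop:improvedchidifference} only reaches $j\le n-2$ for $r(\Omega\xlin-\Omega\xblin)$, and Proposition~\ref{prop:outshearfinal} likewise stops at $j\le n-2$ for $\Omega\xblin$ in the region $r\le 8M$; invoking either at order $n-1$ costs $\overline{\mathbb{E}}^{n+k+1}_{data}$ in place of $\overline{\mathbb{E}}^{n+k}_{data}$, which is precisely the derivative loss the proposition is meant to eliminate. (The order-$(n-1)$ control of $\xblin$ is the content of Proposition~\ref{prop:inshearfinal2}, which is itself deduced from the present proposition, so it cannot be used here.) The $\elin$ term causes the same problem in the decay estimate: Proposition~\ref{prop:etaetab} yields $\mathcal{A}^{[n]}(r\elin)$ only in the boundedness case, while with decay it stops one angular derivative short of $\mathcal{A}^{[n-1]}(r^2\slashed{\mathcal{D}}_2^\star\elin)$. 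The paper avoids all of this by applying $\Omega^{-1}\slashed{\nabla}_3$ to the Codazzi equation itself, obtaining (\ref{codau2}): the quantity $\Omega^{-1}\slashed{\nabla}_3(\Omega\xlin)$ then also sits under $r\slashed{div}$, so it enjoys the same elliptic gain as in the $i=0$ case, and after substituting (\ref{propeta}) and (\ref{Bianchi3}) the right-hand side involves only $\eblin$, $\elin-\eblin$, $\blin$, $\bblin$, $(\rlin,\slin)$ and $\Omega^{-1}\slashed{\nabla}_3(r\slashed{\nabla}\otx)$, each needed only at order $\mathcal{A}^{[n-2]}$ and each already available there; $\xblin$ never enters. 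To salvage your route you would have to either adopt this commuted-Codazzi identity, or first upgrade (\ref{sum1}) by one angular order using the improved boundary control of $r\xlin|_{\mathcal{I}}$ from your $i=0$ step — neither of which your write-up supplies.
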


\begin{proof}
For the term $i=2$ in the sum on the left, the estimate is a direct consequence of Proposition \ref{prop:inshearfinal} so we focus on $i \leq 1$. We write the Codazzi equation (\ref{ellipchi}) as
\begin{align} \label{codau1}
r\slashed{div} \Omega \xlin = -\Omega^2 \eblin  -\Omega r \blin + \frac{r}{2} \slashed{\nabla} \otx  \, .
\end{align}
Applying $\Omega^{-1}\slashed{\nabla}_3$ to both sides and inserting the relevant Bianchi and null structure equations, we obtain
\begin{align} \label{codau2}
r\slashed{div} \left(  \Omega^{-1} \slashed{\nabla}_3(\Omega \xlin) \right)= \left(2k^2r+\frac{2M}{r^2}\right) \eblin + \left(\frac{\Omega^2}{r}  \left( \elin - \eblin\right)  - \Omega \bblin\right) +  \Omega^{-1} \slashed{\nabla}_3 \left( \frac{r}{2} \slashed{\nabla} \otx \right) \nonumber \\
-\left( r \slashed{\mathcal{D}}_1^\star \left(-\rlin \, , \, \slin \, \right) + 3\rho r \, \elin \right) + \blin \Omega \, .
\end{align}
Therefore, with non-optimal $r$-weights (insert a factor of $\frac{1}{r^2}$ in the norms on the left), the desired estimates follow immediately from (\ref{codau1}) and (\ref{codau2}) after using the estimates of Propositions \ref{prop:betaS2}, \ref{prop:rhosigma}, \ref{prop:etaetab} and \ref{prop:trx}. To obtain the weights near infinity as claimed one integrates (\ref{shear4dir}) from some fixed $r_0$ as in the proof of Proposition \ref{prop:inshearfinal}.
\end{proof}

\begin{corollary} \label{cor:logrem}
The first two estimates of Proposition \ref{prop:trx} hold without the logarithmic term. 
\end{corollary}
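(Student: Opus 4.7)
The plan is to exploit the Codazzi equation~\eqref{codau1} together with the improved shear estimates of Proposition~\ref{prop:outshearfinal2}, which have just been established (using only the logarithmically lossy versions of the $\otx$-estimates of Proposition~\ref{prop:trx}) and so are now available at this stage.

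First, I would split the exterior into the two regions $\{r \le 8M\}$ and $\{r \ge 8M\}$. In the bounded region $\{r\le 8M\}$ the quantity $\log r$ is uniformly bounded above and below, so $(\log r)^{-1}$ is comparable to $1$ and the desired estimates follow immediately from the $i=0$ case of Proposition~\ref{prop:trx}. All of the work therefore lies in the asymptotic region $\{r\ge 8M\}$.

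In $\{r\ge 8M\}$, I would rearrange the Codazzi equation~\eqref{codau1} as the algebraic identity
\[
\frac{r^2}{\Omega^2}\,[r\slashed{\nabla}]\otx \;=\; \frac{2r^2}{\Omega^2}\,\bigl(r\,\divs\,\Omega\xlin\bigr) \;+\; 2\, r^2\eblin \;+\; \frac{2r}{\Omega^2}\,\bigl(\Omega r^2\blin\bigr) \, .
\]
In this region the coefficients $r^2/\Omega^2$ and $r/\Omega^2$ are uniformly bounded (the second decaying at infinity) and, depending only on $(u,v)$, commute trivially with every $\mathcal{A}^{[j]}$ and with $[r\slashed{\nabla}]$. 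Commuting with $\mathcal{A}^{[j]}$, identifying $r\,\divs\,\Omega\xlin$ with a component of $\mathcal{A}^{[1]}\Omega\xlin$, and invoking the standard angular-elliptic identities of Section~\ref{sec:elliptic} yields a bound of the schematic form
\[
\bigl\| \mathcal{A}^{[j]}\,[r\slashed{\nabla}]\otx \cdot r^2\Omega^{-2} \bigr\|_{u,v} \;\lesssim\; \| \mathcal{A}^{[j+1]}\Omega\xlin \|_{u,v} \;+\; \| \mathcal{A}^{[j]}\,(r^2\eblin)\|_{u,v} \;+\; \|\mathcal{A}^{[j]}\,(\Omega r^2 \blin)\|_{u,v} \, .
\]
For $j \le n-1$ each term on the right-hand side is controlled by $\overline{\mathbb{E}}^{n}_{data}[\alin,\ablin] + \mathbb{D}^{n}_0$ via Propositions~\ref{prop:outshearfinal2},~\ref{prop:etaetab}, and~\ref{prop:betaS2} respectively. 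The $(\log v)^{-2}$ decay statement is then obtained identically by substituting the decay versions of those three propositions (at the cost of one additional derivative of initial data), which yields exactly~\eqref{est:trx1} and~\eqref{est:trx2} for $i=0$ without the logarithmic weight.

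The main point to verify — essentially the only thing that could go wrong — is that the derivative count closes at top order. The estimate for $j=n-1$ demands control of $\|\mathcal{A}^{[n]}\Omega\xlin\|$, which is precisely the slot $i=0$, $j=n$ supplied by Proposition~\ref{prop:outshearfinal2}. Since that proposition was itself proved using only the logarithmically lossy version of the $\otx$-estimate of Proposition~\ref{prop:trx}, there is no circularity and the argument closes without any top-order loss.
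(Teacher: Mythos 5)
Your argument is correct and is exactly the paper's proof: the paper's one-line justification is to estimate $\otx$ from the Codazzi equation~\eqref{codau1} using the improved shear bound of Proposition~\ref{prop:outshearfinal2}, which is precisely what you do (with the non-circularity check, which the paper leaves implicit, correctly addressed).
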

\begin{proof}
Estimate $\otx$ from (\ref{codau1}) now using the improved estimate on $\Omega\xlin$ from Proposition \ref{prop:outshearfinal2}.
\end{proof}

We can now also improve the estimate on $\xblin$ of Proposition \ref{prop:outshearfinal} using that we now control more derivatives of $\xlin r$ and hence (by the boundary condition $\xlin r = \xblin r$ which holds with arbitrary many tangential derivatives by the smoothness of the solution) of $\xblin r$ on the boundary.

\begin{proposition}  \label{prop:inshearfinal2}
For any $v \geq 0$ and $n \geq 3$
\begin{align}
\sum_{i+j \leq n, i \leq 2} \big\| \mathcal{A}^{[j]} \left[ \Omega \slashed{\nabla}_4\right]^i (r^2 \Omega^{-1} \xblin)) \big\|^2_{u,v} &\lesssim  \overline{\mathbb{E}}^{n}_{data} [\alin, \ablin]  +\mathbb{D}^{n}_0  \, , \\
\sum_{i+j \leq n, i \leq 2} \big\| \mathcal{A}^{[j]} \left[\Omega \slashed{\nabla}_4\right]^i (r^2 \Omega^{-1} \xblin)) \big\|^2_{u,v}  &\lesssim \frac{ \overline{\mathbb{E}}^{n+1}_{data} [\alin, \ablin]  +\mathbb{D}^{n+1}_0  }{(\log v)^2} \, .
\end{align}
\begin{proof}
Revisit the proof of Proposition \ref{prop:outshearfinal} using that we control the higher order ``initial” term on the boundary $\mathcal{I}$ by the boundary condition and Proposition \ref{prop:outshearfinal2}. 
\end{proof}

\end{proposition}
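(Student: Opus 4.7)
The plan is to revisit the proof of Proposition~\ref{prop:outshearfinal}, integrating the ingoing transport equation (\ref{wrii}) for $r^2\Omega^{-1}\xblin$ from the conformal boundary $\mathcal{I}$ inward via Lemma~\ref{lem:bastra1}, but now accessing higher-regularity boundary data through the just-established Proposition~\ref{prop:outshearfinal2}. The outcome is that the new regularity budget $i+j\leq n$, $i\leq 2$ replaces the old $j\leq n-2$, $i\leq 2$.

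First I would commute (\ref{wrii}) with $\mathcal{A}^{[j]}$ and $[\Omega\slashed{\nabla}_4]^i$, $i+j\leq n$, $i\leq 2$. The angular commutation is trivial by (\ref{angularcommute}), while each $\Omega\slashed{\nabla}_4$ commutation produces smooth, $r$-bounded first-order error terms which can be absorbed (the standard blueshift $-2\omega$ factor in the ingoing direction is harmless along finite-$r$ integration since we use Lemma~\ref{lem:bastra1}, not a global energy). After commutation, the right-hand side involves derivatives of $\Omega^{-2}r^5\ablin$ whose ingoing $\underline{C}_v$-flux is bounded by $\overline{\mathbb{E}}^n_{data}[\alin,\ablin]$ via Theorem~\ref{theo:teukolsky} (respectively $(\log v)^{-2}\overline{\mathbb{E}}^{n+1}_{data}$ for the decay version), plus lower-order contributions from $\Omega\xblin$ itself already controlled by Proposition~\ref{prop:outshearfinal}.

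The remaining task is to control the initial data term $\|\mathcal{A}^{[j]}[\Omega\slashed{\nabla}_4]^i(r^2\Omega^{-1}\xblin)\|_{v,v}$ at the conformal boundary. For this I would reuse the three algebraic boundary identities already derived in the proof of Proposition~\ref{prop:outshearfinal}, namely $r\xlin=r\xblin$ (the boundary condition (\ref{bclxi})), $\Omega\slashed{\nabla}_3(r\xlin)=\Omega\slashed{\nabla}_4(r\xblin)$, and $[\Omega\slashed{\nabla}_3]^2(r\xlin)-[\Omega\slashed{\nabla}_4]^2(r\xblin)=2k^3\alin r^3$ on $\mathcal{I}$. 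Commuting each of these with $\mathcal{A}^{[j]}$ (trivially) converts the required boundary bounds for $\xblin$ into bounds for the corresponding $\mathcal{A}^{[j]}[r^2\Omega^{-1}\slashed{\nabla}_3]^i(\Omega\xlin)$ plus bounds on $\mathcal{A}^{[j]}(r^3\alin)$ restricted to $\mathcal{I}$. The former are precisely what Proposition~\ref{prop:outshearfinal2} provides in the right range $i+j\leq n$, $i\leq 2$, and the latter are controlled by Theorem~\ref{theo:teukolsky} and Corollary~\ref{cor:teuonspheres}.

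The only delicate point — and the entire reason the proposition is separated from Proposition~\ref{prop:outshearfinal} — is the arithmetic of derivative counts: the original Proposition~\ref{prop:outshearfinal} produced the boundary identities at the cost of using $\slashed{\nabla}_T$-derivatives of $r\xlin$ on $\mathcal{I}$, which previously could only be bounded through Proposition~\ref{prop:inshearfinal} with a loss of two angular derivatives, forcing $j\leq n-2$. Now that Proposition~\ref{prop:outshearfinal2} delivers the same bounds with the full $i+j\leq n$ regularity, this bottleneck disappears and the conversion $\slashed{\nabla}_T\leftrightarrow\Omega\slashed{\nabla}_3$ at $\mathcal{I}$ (using $2\slashed{\nabla}_T=\Omega\slashed{\nabla}_3+\Omega\slashed{\nabla}_4$ and the vanishing of $r^2\alin, r^2\ablin$ on $\mathcal{I}$) closes cleanly. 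The decay statement follows identically, trading one derivative against the $(\log v)^{-2}$ factor as in Lemma~\ref{lem:bastra1} and throughout the paper.
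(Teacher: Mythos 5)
Your proposal is correct and follows essentially the same route as the paper: the paper's (one-line) proof likewise revisits the integration of (\ref{wrii}) from $\mathcal{I}$ as in Proposition \ref{prop:outshearfinal}, with the higher-order boundary data for $\xblin$ now supplied through the boundary identities and the improved angular regularity of $\Omega\xlin$ from Proposition \ref{prop:outshearfinal2}. Your identification of the derivative-count bottleneck and its resolution matches the intended argument.
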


A direct corollary, using (\ref{ellipchi}) pointwise, is
\begin{corollary}  \label{cor:inmeancurv}
For any $v \geq 0$ and $n \geq 3$
\begin{align}
\sum_{j=1}^{n-1} \big\| \mathcal{A}^{[j]}  (r\slashed{\nabla}  \Omega^{-2}r^2 \otxb) \big\|^2_{u,v} &\lesssim  \overline{\mathbb{E}}^{n}_{data} [\alin, \ablin]  +\mathbb{D}^{n}_0 \, , \\ 
\sum_{j=1}^{n-2} \big\| \mathcal{A}^{[j]}  (r\slashed{\nabla}  \Omega^{-2}r^2 \otxb) \big\|^2_{u,v}  &\lesssim \frac{ \overline{\mathbb{E}}^{n}_{data} [\alin, \ablin]  +\mathbb{D}^{n}_0  }{(\log v)^2} \, .
\end{align}
\end{corollary}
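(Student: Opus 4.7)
\medskip

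\noindent\textbf{Proof proposal.} The plan is to use the second Codazzi equation in \eqref{ellipchi} as an algebraic identity for $\slashed{\nabla} \otxb$. Solving it pointwise yields
\begin{align*}
\slashed{\nabla} \otxb = 2\Omega\, \slashed{div} \xblin + \Omega(tr\underline{\chi}) \elin - 2\Omega \bblin.
\end{align*}
Multiplying through by $\Omega^{-2} r^3$, using that $\Omega$ and $r$ are functions of $(u,v)$ only (hence commute with all angular operations), and substituting the background value $tr\underline{\chi} = -\frac{2\Omega}{r}$, one obtains the pointwise identity
\begin{align*}
r\slashed{\nabla}\!\left( \Omega^{-2} r^2 \otxb \right) \;=\; 2\, r\, \slashed{div}\!\left( r^2 \Omega^{-1} \xblin \right) \;-\; 2\, r^2 \elin \;-\; \frac{2}{r}\cdot\left( r^4 \Omega^{-1} \bblin \right).
\end{align*}
Every factor of $r$ and $\Omega$ multiplying the controlled quantities on the right is uniformly bounded on $\mathcal{M}$ (the $r^{-1}$ in the last term actually provides a decay at infinity), and each of $r^2\Omega^{-1}\xblin$, $r^2\elin$, $r^4\Omega^{-1}\bblin$ is one of the weighted quantities for which $L^2(S^2_{u,v})$ estimates have already been established.

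Next, I would apply $\mathcal{A}^{[j]}$ to both sides. Since the $\mathcal{A}^{[\cdot]}$ operators are built from compositions of $r\slashed{\mathcal{D}}_2$ and $r^2\slashed{\mathcal{D}}_2^\star \slashed{\mathcal{D}}_2$, which commute with multiplication by $(u,v)$-functions, this produces commuted versions of the same algebraic identity. Taking $\|\cdot\|_{u,v}$ of both sides and bounding each term via the elliptic estimates of Section~\ref{sec:elliptic} gives the required control: the first term on the right is dominated by $\sum_{j'\leq j+1}\| \mathcal{A}^{[j']} r^2 \Omega^{-1} \xblin\|_{u,v}^2$, bounded in Proposition~\ref{prop:inshearfinal2}; the second by the $\elin$-norms of Proposition~\ref{prop:etaetab}; and the third by the $\bblin$-norms of Proposition~\ref{prop:betaS2}.

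Tracking the derivative counts, the angular commutation costs exactly one extra derivative of $\xblin$ compared to the estimate for $\otxb$ itself, so bounding $\sum_{j=1}^{n-1} \|\mathcal{A}^{[j]}(r\slashed{\nabla}(\Omega^{-2}r^2\otxb))\|_{u,v}^2$ requires $\sum_{j'=0}^{n}\|\mathcal{A}^{[j']} r^2\Omega^{-1}\xblin\|_{u,v}^2$, which is exactly what Proposition~\ref{prop:inshearfinal2} provides at cost $\overline{\mathbb{E}}^{n}_{data}+\mathbb{D}^{n}_0$. The same bookkeeping for the decay estimate requires one additional derivative, matching the $(\log v)^{-2}$-bounds of Propositions~\ref{prop:inshearfinal2}, \ref{prop:etaetab}, \ref{prop:betaS2} at the sharp regularity level $n-2$. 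There is no real obstacle: the argument is purely algebraic plus insertion of previously established estimates, which is why the paper labels this result a \emph{direct corollary}.
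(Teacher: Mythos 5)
Your proof is correct and follows exactly the paper's (very terse) argument: the paper derives this corollary by ``using (\ref{ellipchi}) pointwise'', i.e.\ solving the $\xblin$-Codazzi equation algebraically for $\slashed{\nabla}\otxb$, weighting by $\Omega^{-2}r^{3}$, and inserting the already-established bounds on $r^{2}\Omega^{-1}\xblin$, $r^{2}\elin$ and $r^{4}\Omega^{-1}\bblin$, which is precisely your identity and bookkeeping. Your write-up is in fact more detailed than the paper's one-line justification, and the derivative counting (one extra angular derivative of $\xblin$, hence the sum stopping at $n-1$, respectively $n-2$ for decay) is consistent with Propositions \ref{prop:inshearfinal2}, \ref{prop:etaetab} and \ref{prop:betaS2}.
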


\begin{remark} \label{rem:toc}
One can prove control on $| r\otx  - r\otxb|$ by subtracting the two Codazzi equations and using previous bounds but we will not need this here. See again Remark \ref{rem:betterbounds}.
\end{remark}

With Propositions \ref{prop:inshearfinal2} and \ref{prop:outshearfinal2}, Proposition \ref{prop:etaetab} also improves by one order in regularity (at the cost of less $r$-weights) and, as a corollary of the relation  (\ref{Teta}), also our estimate on $\olin$:

\begin{proposition} \label{prop:etaetab2}
We have for $k\in\{0,1\}$ the estimates
\begin{align}
\sum_{j=0}^{n-k} \big\| \mathcal{A}^{[j]} [\slashed{\nabla}_T]^k (r\elin, r\eblin) \big\|^2_{u,v}  &\lesssim  \overline{\mathbb{E}}^{n}_{data} [\alin, \ablin]  +\mathbb{D}^{n}_0  \, ,  \label{ghj1b} \\
\sum_{j=0}^{n-k} \big\| \mathcal{A}^{[j]} [\slashed{\nabla}_T]^k (r\elin, r\eblin) \big\|^2_{u,v}  &\lesssim \frac{ \overline{\mathbb{E}}^{n+1}_{data} [\alin, \ablin]  +\mathbb{D}^{n+1}_0 }{(\log v)^2} \label{ghj2b}  \, .
\end{align}
Moreover, 
\begin{align}
\sum_{j=0}^{n} \big\| \mathcal{A}^{[j]} (\olin, \olinb) \big\|^2_{u,v}  &\lesssim  \overline{\mathbb{E}}^{n}_{data} [\alin, \ablin]  +\mathbb{D}^{n}_0  \, ,  \label{ghj1c} \\
\sum_{j=0}^{n} \big\| \mathcal{A}^{[j]}  (\olin, \olinb) \big\|^2_{u,v}  &\lesssim \frac{ \overline{\mathbb{E}}^{n+1}_{data} [\alin, \ablin]  +\mathbb{D}^{n+1}_0 }{(\log v)^2} \label{ghj2c}  \, .
\end{align}
\end{proposition}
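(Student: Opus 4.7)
The plan is to follow the same strategy used to prove Propositions \ref{prop:etaetab} and \ref{prop:omega}, but now feeding in the improved shear estimates from Propositions \ref{prop:outshearfinal2} and \ref{prop:inshearfinal2}. The crucial point is that these new estimates allow any combination of tangential and up to two transversal derivatives, so the ``loss" present in the earlier versions is now absent.

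For the estimates on $(r\elin, r\eblin)$, I would combine the null structure equations (\ref{chih3}) and (\ref{chih3b}), rewritten in the form
\begin{align*}
\slashed{\mathcal{D}}_2^\star \elin &= -\tfrac{1}{2\Omega^2}\Big[\Omega\slashed{\nabla}_3(\Omega\xlin) + \tfrac{1}{2}(\Omega\mathrm{tr}\underline{\chi})\,\Omega\xlin + \tfrac{1}{2}(\Omega\mathrm{tr}\chi)\,\Omega\xblin\Big], \\
\slashed{\mathcal{D}}_2^\star \eblin &= -\tfrac{1}{2\Omega^2}\Big[\Omega\slashed{\nabla}_4(\Omega\xblin) + \tfrac{1}{2}(\Omega\mathrm{tr}\chi)\,\Omega\xblin + \tfrac{1}{2}(\Omega\mathrm{tr}\underline{\chi})\,\Omega\xlin\Big],
\end{align*}
with the algebraic relations $\slashed{\mathrm{curl}}\,\elin = \slin$, $\slashed{\mathrm{curl}}\,\eblin = -\slin$ from (\ref{curleta}). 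Since $\Si$ has vanishing $\ell \leq 1$ modes, the operators $\slashed{\mathcal{D}}_2^\star$ acting on $\elin, \eblin$ have trivial kernel, and the elliptic estimates of Section \ref{sec:elliptic} allow recovery of $\mathcal{A}^{[j]}(r\elin)$ and $\mathcal{A}^{[j]}(r\eblin)$ from $\mathcal{A}^{[j-1]}(r^2 \slashed{\mathcal{D}}_2^\star \elin)$ and $\mathcal{A}^{[j-1]}(r\slin)$. The right-hand sides are controlled either directly on spheres by Propositions \ref{prop:outshearfinal2}, \ref{prop:inshearfinal2} (for the shear combinations, using that $\Omega\slashed{\nabla}_3 = (\Omega^2/r^2) \cdot r^2\Omega^{-1}\slashed{\nabla}_3$ converts the transversal derivative into the regular one with a compensating $r^{-2}$) together with Proposition \ref{prop:rhosigma} (for $\slin$). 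Commuting the structure equations with $\slashed{\nabla}_T$ yields the $k=1$ estimate analogously, using the $k=1$ statement of Proposition \ref{prop:outshearfinal2}.

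For $(\olin, \olinb)$, I would invoke the identities (\ref{Teta}), rearranged as
\begin{align*}
2\slashed{\nabla}\olinb &= 2\slashed{\nabla}_T \elin + \Omega\blin + \Omega\bblin - \tfrac{\Omega^2}{r}(\elin + \eblin), \\
2\slashed{\nabla}\olin &= 2\slashed{\nabla}_T \eblin - \Omega\blin - \Omega\bblin + \tfrac{\Omega^2}{r}(\elin + \eblin),
\end{align*}
so that every term on the right is already controlled: the torsion $T$-derivatives by the $k=1$ part just established, the curvature one-forms by Proposition \ref{prop:betaS2}, and the torsions themselves by \eqref{ghj1b}. Poincaré's inequality on the sphere (applicable because $\olin, \olinb$ have vanishing $\ell \leq 1$ modes by the vanishing-modes assumption on $\Si$) then recovers $\mathcal{A}^{[j]}(\olin, \olinb)$ from $\mathcal{A}^{[j-1]}(r\slashed{\nabla}\olin, r\slashed{\nabla}\olinb)$ for $j \geq 1$, and directly handles $j=0$.

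The main bookkeeping obstacle will be tracking the derivative count so that we land on $\sum_{j=0}^{n-k}$ for the torsions and $\sum_{j=0}^n$ for the lapses: one has to verify that commuting Proposition \ref{prop:outshearfinal2} with one $\slashed{\nabla}_T$ allows $n$ total derivatives hitting the shear (with up to two transversal), that elliptic inversion then gains exactly one angular derivative on the torsions, and that passing to $\olin, \olinb$ via the Poincaré step costs no additional derivatives. The decay statements (\ref{ghj2b}), (\ref{ghj2c}) follow by replacing each input estimate by its logarithmically decaying counterpart at the cost of one extra derivative on data, exactly as in the earlier propositions.
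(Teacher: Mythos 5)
Your overall route is the paper's own: the paper gives no written proof of this proposition, but indicates exactly what you describe --- re-run the argument for Proposition \ref{prop:etaetab} with the improved shear estimates of Propositions \ref{prop:outshearfinal2} and \ref{prop:inshearfinal2} feeding the null structure equations (\ref{chih3}), (\ref{chih3b}), then pass to $\olin,\olinb$ via (\ref{Teta}), Proposition \ref{prop:betaS2} and elliptic inversion on the sphere. The derivative bookkeeping you outline is also consistent.

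There is, however, one concrete omission. Multiplying your first display by $r^2$ (which is forced on you, since $\mathcal{A}^{[j]}(r\elin)=\mathcal{A}^{[j-1]}(r^2\slashed{\mathcal{D}}_2^\star\elin)$), you obtain
\[
r^2\slashed{\mathcal{D}}_2^\star\elin = -\tfrac{1}{2}\,[r^2\Omega^{-1}\slashed{\nabla}_3](\Omega\xlin) + \tfrac{r}{2}\left(\Omega\xlin-\Omega\xblin\right).
\]
The first term is exactly what Proposition \ref{prop:outshearfinal2} controls. The second term, however, is \emph{not} controlled near infinity by Propositions \ref{prop:outshearfinal2} and \ref{prop:inshearfinal2} applied to each shear separately: those control $\Omega\xlin$ and $r^2\Omega^{-1}\xblin$ without the extra factor of $r$, and since $\Omega^2\sim k^2r^2$ one has $r\,\Omega\xblin=\frac{\Omega^2}{r}\,(r^2\Omega^{-1}\xblin)$, so both $r\,\Omega\xlin$ and $r\,\Omega\xblin$ are a full power of $r$ worse than the quantities you control. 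One genuinely needs the cancellation in the difference, i.e.\ estimate (\ref{sum1}) of Proposition \ref{prop:improvedchidifference} on $r\geq 8M$ (on $r\leq 8M$ the weight is harmless and individual control suffices). The paper's proof of the corresponding $i=1$ case of Proposition \ref{prop:etaetab} cites Proposition \ref{prop:improvedchidifference} for precisely this reason, and your argument must do the same, both for $\eblin$ and with $[\slashed{\nabla}_T]^k$ inserted. With that ingredient added, the elliptic inversion and the treatment of $(\olin,\olinb)$ via (\ref{Teta}) close as you describe.
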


We finally obtain the top order flux bounds for $\blin$ and $\bblin$ from the improved estimates on the shear of Propositions \ref{prop:inshearfinal2} and \ref{prop:outshearfinal2}:
\begin{proposition} \label{prop:betabfinalflux}
We have on any sphere $S^2_{u,v}$ for $n \geq 3$ the ingoing flux bounds:
\begin{align}
\sum_{i=0}^{n} \int_v^u d\bar{u} \frac{\Omega^2}{r^2} \| \mathcal{A}^{[i]} (\Omega r^2 \blin, \Omega^{-1} r^4 \bblin) \|_{\bar{u},{v}}^2   
&\lesssim  \overline{\mathbb{E}}^{n}_{data} [\alin, \ablin]  +\mathbb{D}^{n}_0  \, , 
\\
\sum_{i=0}^{n} \int_v^u d\bar{u} \frac{\Omega^2}{r^2} \| \mathcal{A}^{[i]} (\Omega r^2 \blin, \Omega^{-1} r^4 \bblin) \|_{\bar{u},{v}}^2  
&\lesssim  \frac{ \overline{\mathbb{E}}^{n+1}_{data} [\alin, \ablin]  +\mathbb{D}^{n+1}_0  }{(\log v)^2} \, .
\end{align}
We also have the top-order outgoing flux bound
\begin{align}
\sum_{i=0}^{n} \int_{v_1}^{v_2} d\bar{v} \| \mathcal{A}^{[i]} (\Omega r^2 \blin, \Omega^{-1} r^4 \bblin)  \|_{u,\bar{v}}^2   \lesssim \left(v_2-v_1\right) \left[ \overline{\mathbb{E}}^{n}_{data} [\alin, \ablin]  +\mathbb{D}^{n}_0  \right] \, .
\end{align}
Finally, for any fixed $r_0>r_+$ and the sphere $S^2_{u,v_1}$ lying in the region $r\geq r_0$ we have the uniform estimate
\begin{align} 
\sum_{i=0}^{n} \int_{v_1}^{v_2} d\bar{v} \| \mathcal{A}^{[i]} (\Omega r^2 \blin, \Omega^{-1} r^4 \bblin)  \|_{u,\bar{v}}^2   \lesssim_{r_0}  \overline{\mathbb{E}}^{n}_{data} [\alin, \ablin]  +\mathbb{D}^{n}_0  \, .
\end{align}
\end{proposition}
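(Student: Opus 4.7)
The strategy is to invert the rewritten Bianchi identities \eqref{rewriteBianchi} and \eqref{rewriteBianchi2} elliptically, exploiting the improved regularity of the shears obtained in Propositions \ref{prop:outshearfinal2} and \ref{prop:inshearfinal2}. Since $\Omega r^2 \blin$ and $\Omega^{-1}r^4 \bblin$ are one-forms supported for $\ell\geq 2$, the elliptic estimates of Section \ref{sec:elliptic} yield $\|\mathcal{A}^{[i]}\eta\|^2 = \|\mathcal{A}^{[i-1]}r\slashed{\mathcal{D}}_2^\star\eta\|^2$, and inserting the Bianchi relations gives, for $i\geq 1$,
\begin{align*}
\|\mathcal{A}^{[i]}(\Omega r^2\blin)\|^2_{u,v} &\lesssim \|\mathcal{A}^{[i-1]}r^2\Omega^{-1}\slashed{\nabla}_3(r\Omega^2\alin)\|^2_{u,v} + M^2\|\mathcal{A}^{[i-1]}\Omega\xlin\|^2_{u,v}, \\
\|\mathcal{A}^{[i]}(\Omega^{-1}r^4\bblin)\|^2_{u,v} &\lesssim \|\mathcal{A}^{[i-1]}\Omega\slashed{\nabla}_4(r^5\Omega^{-2}\ablin)\|^2_{u,v} + \|\mathcal{A}^{[i-1]}r^5\Omega^{-2}\ablin\|^2_{u,v} + M^2\|\mathcal{A}^{[i-1]}r^2\Omega^{-1}\xblin\|^2_{u,v},
\end{align*}
where we have used $\rho r^3=-2M$. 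The case $i=0$ is recovered from elliptic regularity by absorbing $\|\eta\|^2$ into $\|r\slashed{\mathcal{D}}_2^\star\eta\|^2$.

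For the \textbf{ingoing flux} on a fixed $v$-cone, we multiply the above by $\Omega^2/r^2$ and integrate in $\bar u$. The leading Teukolsky term for $\blin$ becomes $\int_v^u (r^2/\Omega^2)\|\mathcal{A}^{[i-1]}\Omega\slashed{\nabla}_3(r\Omega^2\alin)\|^2 d\bar u$, which (since $r\geq r_+$ makes $r^2\lesssim r^4$) is dominated by a summand of the non-degenerate ingoing Teukolsky flux $\overline{\mathbb{E}}^i_v[r\Omega^2\alin]$, and hence by $\overline{\mathbb{E}}^n_{data}$ (respectively $(\log v)^{-2}\overline{\mathbb{E}}^{n+1}_{data}$) via Theorem \ref{theo:teukolsky}. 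The shear term is absorbed using the pointwise sphere bound of Proposition \ref{prop:outshearfinal2} and $\int_v^u(\Omega^2/r^2)d\bar u = O(1)$. For $\bblin$, the algebraic $r^5\Omega^{-2}\ablin$ piece is controlled by Corollary \ref{cor:teuonspheres} and the shear piece by Proposition \ref{prop:inshearfinal2}. The $\Omega\slashed{\nabla}_4(r^5\Omega^{-2}\ablin)$ term is the delicate one, since the ingoing Teukolsky flux only controls $\slashed{\nabla}_3$- and angular derivatives; to bypass this, we derive from the Bianchi equation \eqref{Bianchi9} the clean transport equation
\begin{equation*}
\Omega\slashed{\nabla}_3\bigl(\Omega^{-1}r^4\bblin\bigr) = -\tfrac{\Omega^2}{r}\slashed{div}\bigl(r^5\Omega^{-2}\ablin\bigr),
\end{equation*}
commute with $\mathcal{A}^{[i]}$ and apply Lemma \ref{lem:bastra1}. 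This trades the $\slashed{\nabla}_4$ derivative of $r^5\Omega^{-2}\ablin$ for an angular $\slashed{div}$ derivative, which is controlled by the ingoing Teukolsky flux in Theorem \ref{theo:teukolsky}.

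For the \textbf{outgoing flux} on a fixed $u$-cone, we integrate in $\bar v$ directly against the non-degenerate outgoing Teukolsky fluxes. Modulo commutation, $\mathcal{A}^{[i-1]}r^2\Omega^{-1}\slashed{\nabla}_3(r\Omega^2\alin)$ and $\mathcal{A}^{[i-1]}\Omega\slashed{\nabla}_4(r^5\Omega^{-2}\ablin)$ are summands of $\overline{\mathbb{E}}^i_u[r\Omega^2\alin]$ and $\overline{\mathbb{E}}^i_u[r^5\Omega^{-2}\ablin]$ respectively, which by \eqref{gro1} grow as $(v_2-v_1)\overline{\mathbb{E}}^n_{data}$; the shear and algebraic contributions are absorbed via the pointwise sphere bounds from Corollary \ref{cor:teuonspheres} and Propositions \ref{prop:outshearfinal2}, \ref{prop:inshearfinal2}, integrated against $d\bar v$ on the bounded interval $[v_1,v_2]$. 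The uniform estimate in the region $r\geq r_0>r_+$ follows by replacing the growing bound \eqref{gro1} by the $r_0$-dependent, uniformly bounded outgoing Teukolsky flux encoded in the decay estimate \eqref{dos} (cf.\ the remark following \eqref{gro2}), while all shear and sphere pointwise bounds remain uniform. The principal obstacle is precisely the mismatch, for the $\bblin$ ingoing flux, between the $\slashed{\nabla}_4$ derivative of $r^5\Omega^{-2}\ablin$ appearing in \eqref{rewriteBianchi2} and the $\slashed{\nabla}_3$, $r\slashed{\nabla}$ derivatives controlled by the ingoing Teukolsky flux; this is resolved by the clean transport equation above, which is the key identity in the proof.
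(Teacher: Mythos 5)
Your overall route is the paper's: invert the rewritten Bianchi identities (\ref{rewriteBianchi}), (\ref{rewriteBianchi2}) elliptically, feed in the Teukolsky fluxes of Theorem \ref{theo:teukolsky} and the sphere bounds on the shears from Propositions \ref{prop:outshearfinal2} and \ref{prop:inshearfinal2}, and for the uniform outgoing estimate use that the degenerate and non-degenerate fluxes are equivalent on $r\geq r_0$. All of that is correct and is exactly how the paper argues.

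There is, however, one genuine problem, precisely at the step you single out as the ``key identity''. Your premise that the ingoing Teukolsky flux ``only controls $\slashed{\nabla}_3$- and angular derivatives'' is a misreading of the definitions: $\overline{\mathbb{E}}^{n}_v[\xi]$ is built from $|r^2\Omega^{-1}\slashed{\nabla}_3\mathfrak{D}^i\xi|^2+|r\slashed{\nabla}\mathfrak{D}^i\xi|^2$ for $i\leq n-1$, and the commutation set $\mathfrak{D}$ \emph{includes} $\Omega\slashed{\nabla}_4$. Hence the term you need at top order, namely $\int \frac{\Omega^2}{r^2}\|[r\slashed{\nabla}]^{n-1}\Omega\slashed{\nabla}_4(r^5\Omega^{-2}\ablin)\|^2\,d\bar u$, is already dominated (after the trivial commutation $[\Omega\slashed{\nabla}_4,r\slashed{\nabla}]=0$) by the summand $\int\Omega^2\|r\slashed{\nabla}\,\mathfrak{D}^{n-1}(r^5\Omega^{-2}\ablin)\|^2\,d\bar u$ of $\overline{\mathbb{E}}^{n}_v[r^5\Omega^{-2}\ablin]$ with $\mathfrak{D}^{n-1}=[\Omega\slashed{\nabla}_4][r\slashed{\nabla}]^{n-2}$, which Theorem \ref{theo:teukolsky} bounds by $\overline{\mathbb{E}}^{n}_{data}$. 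No workaround is needed. Worse, your substitute argument does not deliver the claimed regularity: commuting $\Omega\slashed{\nabla}_3(\Omega^{-1}r^4\bblin)=-\frac{\Omega^2}{r^2}\,[r\slashed{\mathrm{div}}](r^5\Omega^{-2}\ablin)$ with $\mathcal{A}^{[i]}$ and applying Lemma \ref{lem:bastra1} at the top order $i=n$ requires the ingoing flux of $n+1$ angular derivatives of $r^5\Omega^{-2}\ablin$, i.e.\ $\overline{\mathbb{E}}^{n+1}_v$, which Theorem \ref{theo:teukolsky} only controls by $\overline{\mathbb{E}}^{n+1}_{data}$. You would therefore prove the first estimate with $\overline{\mathbb{E}}^{n+1}_{data}$ on the right instead of $\overline{\mathbb{E}}^{n}_{data}$ --- a loss of one derivative relative to the statement. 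The fix is simply to delete the detour and use the mixed $\slashed{\nabla}_4$--angular terms already present in the ingoing flux, as above; with that change your proof coincides with the paper's.
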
 

\begin{proof}
This is a direct consequence of the relations (\ref{rewriteBianchi}), the estimates on $\alin$ and $\ablin$ of Theorem \ref{theo:teukolsky} and the estimates on the shear of Propositions \ref{prop:inshearfinal2} and \ref{prop:outshearfinal2}. For the last statement recall that $\Omega^2 \geq c_{r_0} > 0$ for $r_0 > r_+$. 
\end{proof}

\subsection{Estimates on the metric components} \label{sec:metric}

We can now integrate the propagation equation (\ref{bequat}) for $\bmlin$ using Lemmas \ref{lem:bastra2}, \ref{lem:bastra3} in conjunction with Proposition \ref{prop:etaetab2}  to obtain a bound on the shift:
\begin{proposition} \label{prop:bmlin}
For any sphere $S^2_{u,v}$ and $n \geq 3$
\begin{align} \label{bmlinbound}
\sum_{j=0}^{n} \big\| \mathcal{A}^{[j]} (r \Omega^{-2} \bmlin)) \big\|^2_{u,v}  &\lesssim  \overline{\mathbb{E}}^{n}_{data} [\alin, \ablin]  +\mathbb{D}^{n}_0  \, ,  \\
\sum_{j=0}^{n} \big\| \mathcal{A}^{[j]} (r \Omega^{-2} \bmlin)) \big\|^2_{u,v}  &\lesssim  \frac{ \overline{\mathbb{E}}^{n+1}_{data} [\alin, \ablin]  +\mathbb{D}^{n+1}_0  }{(\log v)^2}\, .
\end{align}
\end{proposition}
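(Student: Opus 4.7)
The plan is to integrate the transport equation (\ref{bequat}) starting from trivial initial data. By the initial data normalisation of Definition \ref{def:datanormalised}, $\bmlin\equiv 0$ on $\underline{C}_{v_0=0}$, so the initial contribution in the transport lemmas vanishes. What remains is to express (\ref{bequat}) in the form required by Lemmas \ref{lem:bastra2} and \ref{lem:bastra3}, commute with $\mathcal{A}^{[j]}$, and quote Proposition \ref{prop:etaetab2}.

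To convert (\ref{bequat}) to a covariant form, I would first compute the relation between $\partial_v\bmlin^A$ and $\Omega\slashed{\nabla}_4\bmlin^A$ on the Schwarzschild-AdS background. Using the first variational formula (\ref{firstvarf}) together with $D=\partial_v$ on the background, one obtains $\Omega\slashed{\nabla}_4\bmlin^A = \partial_v\bmlin^A + \tfrac{\Omega^2}{r}\bmlin^A$, since $\Omega\chi^A{}_B=\tfrac{\Omega^2}{r}\delta^A_B$. Inserting (\ref{bequat}) and dividing by $r$ (using $\Omega\slashed{\nabla}_4(1/r)=-\Omega^2/r^2$) yields the clean transport equation
\[
\Omega\slashed{\nabla}_4\bigl(\bmlin/r\bigr) = \frac{\Omega^2}{r^2}B,\qquad B:=-2r(\elin-\eblin),
\]
which is exactly the form of Lemma \ref{lem:bastra2} with $\xi=\bmlin/r$, and with weighted quantity $\Omega^{-2}r^2\xi=r\Omega^{-2}\bmlin$ being precisely the one to estimate. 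Commutation with $\mathcal{A}^{[j]}$ for $j\le n$ is trivial by (\ref{angularcommute}), producing the commuted equation $\Omega\slashed{\nabla}_4(\mathcal{A}^{[j]}\bmlin/r)=\tfrac{\Omega^2}{r^2}\mathcal{A}^{[j]}B$.

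Next I would verify the hypotheses of Lemma \ref{lem:bastra2}: by Proposition \ref{prop:etaetab2} with $k=0$, we have $\|\mathcal{A}^{[j]}(r\elin)\|_{u,v}^2+\|\mathcal{A}^{[j]}(r\eblin)\|_{u,v}^2\lesssim\overline{\mathbb{E}}^n_{data}+\mathbb{D}^n_0$ uniformly, yielding $\int_{v_1}^{v_2}\|\mathcal{A}^{[j]}B\|^2\,d\bar v \lesssim (v_2-v_1)(\overline{\mathbb{E}}^n_{data}+\mathbb{D}^n_0)$, which is (\ref{leq1}). For (\ref{leq2}), in the region $\{r\ge r_0\}$ the identity $d\bar v=d\bar r/\Omega^2$ together with $\Omega^2\gtrsim k^2\bar r^2$ gives $\int_{v(r_0,u)}^u\|\mathcal{A}^{[j]}B\|^2\,d\bar v\lesssim (\overline{\mathbb{E}}^n_{data}+\mathbb{D}^n_0)\int_{r_0}^\infty d\bar r/\bar r^2\lesssim_{r_0}(\overline{\mathbb{E}}^n_{data}+\mathbb{D}^n_0)$. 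Lemma \ref{lem:bastra2} then delivers the boundedness estimate, the initial term vanishing by the data normalisation. For the decay statement, one applies Lemma \ref{lem:bastra3} in the same way, using instead the $(\log v)^{-2}$-decaying bounds from Proposition \ref{prop:etaetab2}.

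There is no serious obstacle; the proposition is a routine consequence of the transport lemmas, the triviality of $\bmlin$ on initial data, and the torsion bounds already established. The only mild subtlety is in the vector/one-form bookkeeping when converting $\partial_v\bmlin^A$ to $\Omega\slashed{\nabla}_4\bmlin^A$, where the term $\tfrac{\Omega^2}{r}\bmlin$ must cancel exactly with the contribution from differentiating the $1/r$ factor so that the right-hand side of the transport equation involves only the difference $\elin-\eblin$ to the required order.
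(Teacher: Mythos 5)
Your proposal is correct and follows exactly the route the paper takes: the paper's proof of Proposition \ref{prop:bmlin} consists of one sentence saying that one integrates (\ref{bequat}) via Lemmas \ref{lem:bastra2} and \ref{lem:bastra3} in conjunction with Proposition \ref{prop:etaetab2}, which is precisely your argument with the details (the conversion $\Omega\slashed{\nabla}_4(\bmlin/r)=\tfrac{\Omega^2}{r^2}\bigl(-2r(\elin-\eblin)\bigr)$, the vanishing of $\bmlin$ on $\underline{C}_{v_0}$ from the initial data normalisation, and the verification of the flux hypotheses) filled in. Your index bookkeeping and the exact cancellation of the $\tfrac{\Omega^2}{r}\bmlin$ term check out.
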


The propagation equation for the linearised metric in the outgoing direction (\ref{stos}) immediately yields after controlling the relevant flux from Proposition \ref{prop:trx}

\begin{proposition} \label{prop:metricd}
For any sphere $S^2_{u,v}$ and $n \geq 3$
\begin{align}
\sum_{j=0}^{n-1} \bigg\| \mathcal{A}^{[j]} [r\slashed{\nabla}]  \frac{\glinto}{\sqrt{\slashed{g}}}  \bigg\|^2_{u,v} &\lesssim  \overline{\mathbb{E}}^{n}_{data} [\alin, \ablin]  +\mathbb{D}^{n}_0  \, , \\
\sum_{j=0}^{n-1} \bigg\| \mathcal{A}^{[j]} [r\slashed{\nabla}]  \frac{\glinto}{\sqrt{\slashed{g}}}  \bigg\|^2_{u,v}  &\lesssim \frac{ \overline{\mathbb{E}}^{n+1}_{data} [\alin, \ablin]  +\mathbb{D}^{n+1}_0  }{(\log v)^2}  \, .
\end{align}
\end{proposition}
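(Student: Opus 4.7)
The plan is to integrate the outgoing transport equation~\eqref{stos}, which on functions reduces to $\Omega\slashed{\nabla}_4\big(\tfrac{\glinto}{\sqrt{\slashed{g}}}\big)=\otx$, from the initial cone $\underline{C}_{v_0=0}$. Since $\mathcal{A}^{[j]}$ and $[r\slashed{\nabla}]$ commute trivially with $\Omega\slashed{\nabla}_4$ on the Schwarzschild-AdS background (see~\eqref{angularcommute}), I would pass these operators through and obtain, for each $j\leq n-1$,
\begin{align}
\Big\|\mathcal{A}^{[j]}[r\slashed{\nabla}]\tfrac{\glinto}{\sqrt{\slashed{g}}}\Big\|_{u,v} \leq \Big\|\mathcal{A}^{[j]}[r\slashed{\nabla}]\tfrac{\glinto}{\sqrt{\slashed{g}}}\Big\|_{u,0} + \int_0^v \|\mathcal{A}^{[j]}[r\slashed{\nabla}]\otx\|_{u,\bar{v}}\,d\bar{v}.
\end{align}
The sphere-pointwise bound $\|\mathcal{A}^{[j]}[r\slashed{\nabla}]\otx\,\Omega^{-2}r^2\|_{u,\bar v}^2\lesssim \mathring{\mathbb{E}}^n$ from Corollary~\ref{cor:logrem} together with the identity $\int_0^v \tfrac{\Omega^2}{r^2}\,d\bar v = \tfrac{1}{r(u,0)}-\tfrac{1}{r(u,v)}\leq \tfrac{1}{r_+}$ then bounds the flux integral by $\sqrt{\mathring{\mathbb{E}}^n}$. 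The decay version follows identically from the second estimate of Corollary~\ref{cor:logrem} (which supplies the factor $(\log v)^{-1}$).

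For the initial value on $\underline{C}_{v_0}$, I would use the \emph{other} direction of~\eqref{stos}, namely $\partial_u\big(\tfrac{\glinto}{\sqrt{\slashed{g}}}\big)=\otxb-\slashed{div}\,\bmlin$, which on the initial cone simplifies to $\partial_u\big(\tfrac{\glinto}{\sqrt{\slashed{g}}}\big)=\otxb$ because $\bmlin=0$ on $\underline{C}_{v_0}$ by the initial data normalisation (Definition~\ref{def:datanormalised}). Integrating in $u$ from the boundary sphere $S^2_{0,0}$ towards the horizon, the bound $\|\mathcal{A}^{[j]}[r\slashed{\nabla}]\otxb\,\Omega^{-2}r^2\|_{u',0}\lesssim\sqrt{\mathring{\mathbb{E}}^n}$ from Corollary~\ref{cor:inmeancurv}, combined with the finite integral $\int_0^{\infty}\tfrac{\Omega^2}{r^2}(u',0)\,du'=\tfrac{1}{r_+}$, controls the transport contribution. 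The value on the boundary sphere itself is handled by the initial data normalisation: $G_0^{\ell=1}=0$ and the $\ell=0,1$ modes of $\frac{\glinto}{\sqrt{\slashed{g}}}$ vanish by Proposition~\ref{prop:gotoinitialgauge}, while the $\ell\geq 2$ part at $S^2_{0,0}$ is determined via the linearised Gauss equation~\eqref{gaussfootnote} from $\Klin|_{S^2_{0,0}}$ (using $\hat G_0=0$), which in turn is controlled by $(\Psilinb_0,R_0)$ through~\eqref{axe} and therefore by $\mathring{\mathbb{E}}^n$.

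Finally, the boundedness statement of the proposition is expressed in terms of $[r\slashed{\nabla}]^j$ rather than $\mathcal{A}^{[j]}[r\slashed{\nabla}]$, so I would invoke the elliptic estimates on the round sphere collected in Section~\ref{sec:elliptic}; this is permissible because $\frac{\glinto}{\sqrt{\slashed{g}}}$ is supported on $\ell\geq 2$, which gives the operators $\mathcal{A}^{[j]}$ trivial kernel. I do not expect a genuine obstacle here: the key observation is the \emph{integrability in} $v$ (and in $u$ on the initial cone) of the weight $\Omega^2/r^2$, which exactly matches the $r^2\Omega^{-2}$ weight carried by the existing estimate for $\otx$. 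The mildest subtlety is bookkeeping the derivative count---one derivative is lost in the transverse commutations hidden inside Corollary~\ref{cor:logrem}, which is why the sum is only up to $j=n-1$---and ensuring that no spurious $\log r$ factors survive, which is precisely the reason Corollary~\ref{cor:logrem} was formulated to remove them.
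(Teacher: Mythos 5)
Your proposal is correct and follows essentially the same route as the paper, whose proof is a one-liner: integrate the outgoing transport equation \eqref{stos} from the initial cone and control the flux via the $r^2\Omega^{-2}$-weighted bounds on $\otx$ (Proposition \ref{prop:trx}/Corollary \ref{cor:logrem}), the key point in both cases being the integrability of the weight $\Omega^2/r^2$ in $v$. The only difference is that you spell out the control of the initial-cone term by integrating the $u$-equation from the boundary sphere, which the paper leaves implicit in the quantitative data construction of Proposition \ref{prop:allqdata}; your treatment of that step is consistent with how the paper constructs the data there.
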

To estimate $\glinh$ we cannot integrate (\ref{stos2}) directly as $\Omega \xlin$ is not uniformly in $L^1_v$. We instead estimate it from the Gauss curvature. 
\begin{proposition} \label{prop:metricd2}
For any sphere $S^2_{u,v}$ and $n \geq 3$
\begin{align}
\sum_{j=0}^{n} \bigg\| \mathcal{A}^{[j]} \glinh \bigg\|^2_{u,v} &\lesssim  \overline{\mathbb{E}}^{n}_{data} [\alin, \ablin]  +\mathbb{D}^{n}_0  \, ,  \\
\sum_{j=0}^{n} \bigg\| \mathcal{A}^{[j]} \glinh  \bigg\|^2_{u,v}  &\lesssim \frac{ \overline{\mathbb{E}}^{n+1}_{data} [\alin, \ablin]  +\mathbb{D}^{n+1}_0  }{(\log v)^2}  \, .
\end{align}
\end{proposition}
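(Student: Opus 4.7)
As noted just before the statement, the propagation equation~\eqref{stos2} is not integrable in $v$, so we control $\glinh$ via the linearised Gauss equation combined with elliptic theory on the sphere. Since $\Si$ has vanishing $\ell=0,1$ modes, we may decompose uniquely $\glinh = 2r^2\Dst\Dso(f,g)$ with scalars $f,g$ supported on $\ell\geq 2$. A direct computation on the round sphere, using $\divs\Dso(f,g)=-\ds f$ and $\curls\Dso(f,g)=-\ds g$, shows that $\divs\divs\glinh$ is a fourth-order elliptic operator acting on $f$ alone, while $\curls\divs\glinh$ is a fourth-order elliptic operator acting on $g$ alone; both operators have trivial kernel on functions supported on $\ell\geq 2$. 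Combined with the elliptic estimates of Section~\ref{sec:elliptic}, it is therefore enough to prove the stated bounds for $\divs\divs\glinh$ and $\curls\divs\glinh$ in the $\mathcal{A}^{[n-2]}$-norm on spheres.

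The \emph{electric} piece is obtained by rearranging formula~\eqref{gaussfootnote} as
\[
\divs\divs\glinh \;=\; 2\Klin + \tfrac{1}{2}\ds\big(\mathrm{tr}_{\slashed{g}}\glin\big) + \tfrac{1}{r^{2}}\mathrm{tr}_{\slashed{g}}\glin,
\]
and using the linearised Gauss equation~\eqref{lingauss} to express $\Klin$ in terms of $\rlin$, $\otx-\otxb$ and $\Olin$. The bounds on $\rlin$ follow from Proposition~\ref{prop:rhosigma}, the bounds on $\otx, \otxb$ from Proposition~\ref{prop:trx} together with Corollary~\ref{cor:inmeancurv} (and Corollary~\ref{cor:logrem} to remove the stray logarithm), the bounds on $\Olin$ from Corollary~\ref{cor:olin}, and the bounds on $\glinto/\sqrt{\slashed{g}}$ from Proposition~\ref{prop:metricd}. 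Putting these together delivers the desired sphere estimates for $\divs\divs\glinh$, and hence via elliptic inversion for $f$ in the relevant norm.

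The \emph{magnetic} piece is recovered by a transport argument. Applying $\curls\divs$ to~\eqref{stos2} and rewriting the right-hand side with the Codazzi equation~\eqref{ellipchi} and~\eqref{curleta} gives, schematically,
\[
\Omega\slashed{\nabla}_{4}\big(\curls\divs\glinh\big) \;=\; \Omega\,\mathrm{tr}\chi\,\slin \;-\; 2\curls(\Omega\blin) \;+\; (\text{commutators with }\Omega,\chi),
\]
where all terms on the right come with enough $r$-weight, thanks to the bounds on $r^{3}\slin$ in Proposition~\ref{prop:rhosigma} and on $r^{2}\Omega\blin$ in Proposition~\ref{prop:betaS2}, to yield an $L^{2}_{v}$-norm on the right-hand side that is uniformly bounded in~$u$ (in fact decaying like $r^{-2}$ at infinity, and integrable in $v$ at any fixed $u$). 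Lemma~\ref{lem:bastra2} then produces uniform sphere bounds on $\curls\divs\glinh$, provided the datum on the initial cone $\underline{C}_{v_{0}}$ is controlled; the latter is the case because $\curls\divs\glinh|_{\underline{C}_{v_{0}}}$ is computable from the seed data via Proposition~\ref{prop:allqdata} and is hence bounded by $\overset{\circ}{\mathbb{E}}{}^{n}$. Elliptic inversion of $g\mapsto \curls\divs(r^{2}\Dst\Dso(0,g))$ on $\ell\geq 2$ then yields the bound on $g$, completing the proof of the boundedness statement. The logarithmic decay estimate follows by the same argument, replacing each preliminary estimate by its decay counterpart.

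The main point requiring care is step~three: ensuring that the right-hand side of the transport equation for $\curls\divs\glinh$ is $L^{2}_{v}$-integrable. This is precisely where the improved $r$-weights in the estimates for $\slin$ and $\Omega\blin$ are essential; without them one would only obtain bounded (not integrable) integrands in $v$, which would be insufficient for a uniform control of~$g$.
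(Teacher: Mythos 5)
Your overall strategy is the paper's: decompose $\glinh$ via elliptic theory into its $\slashed{div}\,\slashed{div}$ and $\slashed{curl}\,\slashed{div}$ parts, control the former through the Gauss equation and the latter through the $4$-direction transport equation. However, both halves as written have a concrete gap.

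\emph{Electric part.} To bound $\mathcal{A}^{[j]}\glinh$ you need $r^2\slashed{div}\,\slashed{div}\,\glinh$ on spheres, hence $r^2\Klin$. In the linearised Gauss equation the middle term is $-\frac{1}{4}\frac{tr\chi}{\Omega}(\otxb-\otx)=-\frac{1}{2r}(\otxb-\otx)$, so after multiplying by $r^2$ you need $r(\otxb-\otx)$ bounded. The estimates you cite (Proposition \ref{prop:trx}, Corollaries \ref{cor:logrem} and \ref{cor:inmeancurv}) only give boundedness of $\otx$ and $r^2\Omega^{-2}\otxb$ \emph{individually}, which yields $O(r)$ for $r(\otxb-\otx)$, not $O(1)$; the improved difference bound is exactly what Remark \ref{rem:toc} flags as provable but not established. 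The paper sidesteps this by applying $r^3\slashed{\nabla}$ to the Gauss equation and substituting \emph{both} Codazzi equations, so that $\slashed{\nabla}(\otx-\otxb)$ is replaced by $r\,[r\slashed{div}](\Omega\xlin-\Omega\xblin)$ plus $r^2\Omega\blin$, $r^2\Omega\bblin$ — quantities for which the improved $r$-weights are available (Propositions \ref{prop:improvedchidifference}, \ref{prop:betaS2}). You must either do the same or first prove the difference estimate for $\otx-\otxb$.

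\emph{Magnetic part.} Your transport equation for $r^2\slashed{curl}\,\slashed{div}\,\glinh$ keeps the source term $-2r^2\slashed{curl}(\Omega\blin)$. Lemma \ref{lem:bastra2} requires the source in the form $\frac{\Omega^2}{r^2}B$ with $\int_{v_1}^{v_2}\|B\|^2\,d\bar v\lesssim \mathbb{D}\,(v_2-v_1)$ \emph{uniformly in $u$}; writing the $\blin$ term in that form forces $B\sim \frac{r}{\Omega^2}[r\slashed{curl}](r^2\Omega\blin)$, whose flux carries an $\Omega^{-4}$ factor that blows up along outgoing cones starting near $\mathcal{H}^+$. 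The issue is not the $r$-weight at infinity (which you address) but the absence of the $\Omega^2$ prefactor near the horizon: without it, direct integration only yields linear growth in $v$ and the damping in Lemma \ref{lem:bastra2} cannot be invoked. The paper's fix is to renormalise the transported quantity by subtracting $2\mathcal{A}^{[j-2]}r^3\slin$, so that the $\slashed{curl}(\Omega\blin)$ contribution cancels exactly against $\Omega\slashed{\nabla}_4(r^3\slin)$ via the Bianchi equation \eqref{Bianchi6}, leaving only $2\Omega^2 r\slin=\frac{\Omega^2}{r^2}(2r^3\slin)$ on the right, which does satisfy the hypotheses of Lemmas \ref{lem:bastra2} and \ref{lem:bastra3} by Proposition \ref{prop:rhosigma}. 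Incorporating this renormalisation (and the Codazzi substitution above) closes both gaps.
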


\begin{proof}
By elliptic estimates, it suffices to prove these estimates replacing $\mathcal{A}^{[j]}$ by $\mathcal{A}^{[j-2]}r^2 \slashed{div} \slashed{div}$ and $\mathcal{A}^{[j-2]}r^2 \slashed{curl} \slashed{div}$ in each sum and letting the sum start at $j=2$. (We slightly abuse notation here and let $\mathcal{A}$ also act on scalars by taking $r\slashed{\mathcal{D}}_1^\star$, and on one forms by taking $r\slashed{\mathcal{D}}_2^\star$.) For the latter part we can integrate  (\ref{stos2}) commuted with $\mathcal{A}^{[j-2]}r^2 \slashed{curl} \slashed{div}$ because from the Codazzi equation (\ref{ellipchi}) we have
\[
\mathcal{A}^{[j-2]}r^2 \slashed{curl} \slashed{div} (\Omega \xlin) = \mathcal{A}^{[j-2]} \left(\Omega^2 r \slin - r^2 \slashed{curl} \Omega \blin \right)
\]
and hence
\begin{align}
\Omega \slashed{\nabla}_4 \left( \mathcal{A}^{[j-2]}r^2 \slashed{curl} \slashed{div} \, \glinh - 2\mathcal{A}^{[j-2]} r^3 \slin \right) = 2\Omega^2  \mathcal{A}^{[j-2]} r \slin \, .
\end{align}
We are in the situation of Lemmas \ref{lem:bastra2} and \ref{lem:bastra3} (their assumptions valid from Proposition \ref{prop:rhosigma}) and we hence obtain the desired estimate for the $\slashed{curl}\slashed{div}$-part. For the $\slashed{div} \slashed{div}$-part we use the linearised Gauss equation:
\begin{align}
r^3 \slashed{\nabla} \left( -\frac{1}{2}\slashed{\Delta} tr_{\slashed{g}} \glin +\slashed{div} \slashed{div} \glinh - \frac{1}{r^2}tr_{\slashed{g}} \glin \right) = r^3 \slashed{\nabla} \Klin  = -r^3 \slashed{\nabla} \rlin -r^2 \slashed{div} (\Omega \xblin) + r^2 \Omega  \bblin + r^2 \slashed{div} (\Omega \xlin) + r^2 \Omega \blin \, . \nonumber
\end{align}
The estimate now follows by solving this for $r^3 \slashed{\nabla} \slashed{div} \slashed{div} \glinh $ (which has vanishing spherical average) and inserting the estimates from Propositions \ref{prop:metricd}, \ref{prop:outshearfinal2} and \ref{prop:inshearfinal2} as well as Propositions \ref{prop:betaS2} and \ref{prop:rhosigma}.
\end{proof}
 
\subsection{Concluding the proof of Theorem \ref{mtheo:boundedness}} \label{sec:finalrp}
The estimates (\ref{mainb}) and (\ref{maind}) claimed in Theorem \ref{mtheo:boundedness} for the Ricci-coefficients $\xi$ are implied 
\begin{itemize}
\item for $\xlin$ and $\xblin$ by Propositions \ref{prop:outshearfinal2} and \ref{prop:inshearfinal2},
\item for $\otx$ and $\otxb$  by Proposition \ref{prop:trx} and Corollary \ref{cor:inmeancurv},
\item for $\elin$ and $\eblin$ by Propositions \ref{prop:etaetab} and \ref{prop:etaetab2}, and for $\olin$ and $\olinb$ by \ref{prop:omega} and \ref{prop:etaetab2},
\item for the metric quantities $\Olin$ by Corollary \ref{cor:olin}, for $\bmlin$ by Proposition \ref{prop:bmlin} and for $\glinh$, $\glinto$ by Propositions \ref{prop:metricd} and \ref{prop:metricd2} respectively.
\end{itemize} 
The estimates (\ref{mainb}) and (\ref{maind}) claimed in Theorem \ref{mtheo:boundedness} for the curvature components $\Xi$ are implied by Corollary \ref{cor:teuonspheres} for $\alin$ and $\ablin$, by Proposition \ref{prop:rhosigma} for $\rlin$ and $\slin$ and by Proposition \ref{prop:betabfinalflux} for $\blin$ and $\bblin$. 

Finally, the top order curvature bounds claimed in (\ref{setop}) and (\ref{seto}) are implied by Theorem \ref{theo:teukolsky} for $\alin$ and $\ablin$, by Proposition \ref{prop:rhosigma} for $\rlin$ and $\slin$ and by Proposition \ref{prop:betabfinalflux} for $\blin$ and $\bblin$. 

\section{Normalising the solution at infinity: Proof of Theorem \ref{mtheo:decay}} \label{sec:otherproof}
We now prove Theorem \ref{mtheo:decay}.
We consider the solution $\Si$ of Theorem \ref{mtheo:boundedness}.
We define a function $f : \mathbb{R}_0^+ \times S^2 \rightarrow \mathbb{R}$, supported on $\ell \geq 2$ as follows. Define for $ u \geq 0$ the limit $r \xlin_{\Si}^\infty (u, \theta,\phi)=\lim_{v \rightarrow u} r \xlin_{\Si}(u,v,\theta,\phi)$, which is the (smooth) restriction to $\mathcal{I}$ of the weighted tensor $r \xlin_{\Si}$. Similarly, define for $v \geq 0$, the limit $r \xblin_{\Si}^\infty (v, \theta,\phi)=\lim_{u \rightarrow v} r \xblin_{\Si}(u,v,\theta,\phi)$, which is the (smooth) restriction to $\mathcal{I}$ of the weighted tensor $r \xblin_{\Si}$. By the boundary condition we have $r \xlin_{\Si}^\infty (t, \theta,\phi) = r \xblin_{\Si}^\infty (t, \theta,\phi)$ for $t \geq 0$. We finally define a function $f$ by solving for each $t$ the elliptic (since $\ell \geq 2$) scalar equation\footnote{One computes $r^2 \slashed{div} \slashed{div}  r^2 \slashed{\mathcal{D}}_2^\star \slashed{\nabla} = r^4 \slashed{div} \left(-\frac{1}{2} \slashed{\Delta} - \frac{1}{2} K\right)\slashed{\nabla}=r^4 \slashed{div} \left(\frac{1}{2} \left(\slashed{\mathcal{D}}_1^\star \slashed{\mathcal{D}}_1 -K\right)- \frac{1}{2} K\right)\slashed{\nabla}= r^4 \left(\frac{1}{2} \slashed{\Delta}^2 -K \slashed{\Delta}\right)$.}
\begin{align}\label{eq:infinityxlin}
r^2 \slashed{div} \slashed{div} \,  r \xlin_{\Si}^\infty (t, \theta,\phi) = -2k r^2 \slashed{div} \slashed{div}  r^2 \slashed{\mathcal{D}}_2^\star \slashed{\nabla} f (t,\theta,\phi).
\end{align} 
The function $f$ generates a pure gauge solution $\mathscr{G}_f$ according to Lemma \ref{lem:exactsol} and using 
the notation of that lemma we have 
\begin{align}\label{eq:gaugefinal}
r^2 \slashed{div} \slashed{div}  \, r \xlin_{\mathscr{G}_f} (u, \theta,\phi)  = - 2 \Omega r \slashed{div} \slashed{div}  r^2 \slashed{\mathcal{D}}_2^\star \slashed{\nabla} f_u  \ \ \ \ , \ \ \ r^2 \slashed{div} \slashed{div}  \, r \xblin_{\mathscr{G}_f} (v, \theta,\phi)  = -2 \Omega r \slashed{div} \slashed{div}  r^2 \slashed{\mathcal{D}}_2^\star \slashed{\nabla} f_v \, .
\end{align}
Note also that from (\ref{sum1}) and Propositions \ref{prop:outshearfinal2} and \ref{prop:inshearfinal2} holding for the solution $\Si$ of Theorem \ref{mtheo:boundedness}, we have for $n \geq 3$ the quantitative estimates
\begin{align} \label{fest}
\sum_{i=0}^{n+2} | [r\slashed{\nabla}]^i f_u| + \sum_{i=0}^{n+2} | [r\slashed{\nabla}]^i f_v|  \lesssim \overset{\circ}{\mathbb{E}}{}^{n} \ \ \ , \ \ \ \sum_{i=0}^{n} |r [r\slashed{\nabla}]^i (f_u-f_v)| \lesssim \overset{\circ}{\mathbb{E}}{}^{n} \, ,
\end{align}
together with the corresponding estimates with the $\frac{1}{(\log v)^2}$-factor on the right-hand side. 

Using~\eqref{eq:infinityxlin} and~\eqref{eq:gaugefinal}, one can prove that $\Sff = \Si + \mathscr{G}_f$ satisfies $r\xlin_{\Sff} = r\xblin_{\Sff} = 0$ on $\mathcal{I}$. To see this we show separately that $r^2 \slashed{curl} \slashed{div} r\xlin_{\Sff} = 0$ and $r^2\slashed{div} \slashed{div} r\xlin_{\Sff} = 0$ on $\mathcal{I}$, which implies the claim for $\xlin_{\Sff}$ by standard elliptic estimates. Indeed, this follows immediately by our choice of pure gauge solution for the $\slashed{div} \slashed{div}$ part. On the other hand, it is not hard to see that $r^2\slashed{curl} \slashed{div} \xlin$ is actually gauge invariant and in fact equal to zero on $\mathcal{I}$ (use the linearised Codazzi equation (\ref{ellipchi}), the decay of $\blin$, $\slashed{curl} \elin = \slin$ and the boundary condition~\eqref{bcl4} for $\slin$). The argument for $\xblin_{\Sff}$ is entirely analogous.

It now immediately follows that in the new gauge we can estimate $r^2 \Omega \xlin$ and $r^2\Omega^{-1} r^2 \xblin$ instead of $ \Omega \xlin$ and $\Omega^{-1} r^2 \xblin$ in Theorem \ref{mtheo:boundedness}. Indeed, we can now integrate \emph{backwards} in the $4$- and $3$-direction from the boundary $\mathcal{I}$ using that $\xlin r_{\Sff} = 0$ and $\xblin r_{\Sff} = 0$ hold on the boundary in the new gauge and using the estimates on $\alin$ and $\ablin$ from Theorem~\ref{theo:teukolsky} and Corollary~\ref{cor:teuonspheres} just as in the proof of Propositions \ref{prop:inshearfinal} and  \ref{prop:outshearfinal}.

Next, since $\Omega \slashed{\nabla}_3 (\xlin r)_{\Sff}=0$ on the boundary (from $T( \xlin r)_{\Sff} = 0$ holding in the new gauge), integrating again backwards from the boundary one infers estimates for $r^2 \Omega \slashed{\nabla}_3 (\xlin r)_{\Sff}$ and $r^2 \Omega \slashed{\nabla}_4 (\xblin r)_{\Sff}$.

Estimates for $r^3 \elin_{\Sff}$ and $r^3 \eblin_{\Sff}$ are obtained directly by~\eqref{chih3},~\eqref{chih3b}, which also imply estimates for $r^2 \Olin_{\Sff}$ (modulo $\ell=0,1$ modes). Codazzi then gives control on $r\otx_{\Sff}$ and $r^3\Omega^{-2}\otxb_{\Sff}$ (modulo $\ell=0,1$ modes). 

Inserting the above bounds, one can infer from the linearised Gauss equation~\eqref{lingauss} that the linearised Gauss curvature behaves like $\Klin_{\Sff} \sim r^{-3}$. 

One finally adds a $\mathscr{G}_q$ pure gauge solution of Lemma~\ref{lem:puregaugemetric} so that for $\Sf = \Sff + \mathscr{G}_q$, one has $\glinh_{\Sf} = 0$ on the initial sphere of the boundary and $r^{-1} \bmlin_{\Sf} = 0$ along the boundary. More specifically we define
\begin{align}
(q_1,q_2) = - \int_0^u  \left(\Delta_{S^2}^{-1} \slashed{div} \bmlin_{\Sff}, - \Delta_{S^2}^{-1} \slashed{curl} \bmlin_{\Sff} \right) \left(\bar{u}, \bar{u} \right) d\bar{u} + (\bar{q}_1, \bar{q}_2) \ \ \ \textrm{with} \ \ 2r^2 \slashed{\mathcal{D}}_2^\star \slashed{\mathcal{D}}_1^\star (\bar{q}_1, \bar{q}_2) = -\glinh_{\Sff} (0,0) \, . \nonumber
\end{align}
It follows from the vanishing of $\bmlin_{\Sf}$ and $\otxb_{\Sf}$, $\otx_{\Sf}$, $\xlin_{\Sf}$, $\xblin_{\Sf}$ on $\mathcal{I}$ and the transport equations~\eqref{stos} that $\glinh_{\Sf}= 0$ on the whole boundary $\mathcal{I}$, which in turn also implies by~\eqref{gaussfootnote} that $r^{-2} \glinto_{\Sf} = 0$ (modulo $\ell=1$ modes). From this, we can infer bounds in $\mathcal{M}_{int}$ on $\bmlin_{\Sf},\glinto_{\Sf},\glinh_{\Sf}$ by integrating their respective transport equations backwards from $\mathcal{I}$. Now, using the estimates on $\bmlin_{\Sf}$ and $\bmlin_{\Sff}$, $\glinto_{\Sf}$ and $\glinto_{\Sff}$, and $\glinh_{\Sf}$ and $\glinto_{\Sff}$, one deduces that the non-vanishing pure gauge coefficients $\bmlin_{\mathscr{G}_f},\glinto_{\mathscr{G}_f},\glinh_{\mathscr{G}_f}$ also satisfy boundedness and logarithmic decay statements. This finishes the proof of Theorem~\ref{mtheo:decay}.



\appendix
\section{Boundary regularity and boundary conditions}\label{app:proofpropad}
This section is dedicated to the proof of Proposition~\ref{prop:ad} and Proposition~\ref{prop:NullWeylBC}.
\subsection{Proof of Proposition~\ref{prop:ad}}
We first have the following lemma. Its proof is based on ideas of~\cite{Fri95} which we adapt and strengthen in our geometric set up.
\begin{lemma}\label{lem:Fri}
  Assume that $\boldsymbol{g}$ is a solution to the Einstein equations~\eqref{EVEL} and that $\boldsymbol{\widetilde{g}}:=(u-v)^{2}\boldsymbol{g}$ extends smoothly to $\mathcal{I}=\{u-v=0\}$. Let $\boldsymbol{\widetilde{W}}:=\mathrm{W}(\boldsymbol{\widetilde{g}})$ denote the Weyl tensor of $\boldsymbol{\widetilde{g}}$. Let $\boldsymbol{\widetilde{N}} = \frac{1}{2}(u-v)^{-1}\left(e_4-e_3\right)$ denote the outgoing unit normal (for the metric $\boldsymbol{\widetilde{g}}$) to the $\{u-v=cst\}$-hypersurfaces, define $\boldsymbol{h}$ to be the induced metric by $\boldsymbol{\widetilde{g}}$ on the $\{u-v=cst\}$-hypersurfaces and define the second fundamental forms $\boldsymbol{\Theta}(X,Y) = \boldsymbol{\widetilde{g}}(\boldsymbol{\widetilde{\nabla}}_X \boldsymbol{\widetilde{N}},Y)$ for all tangent vectors $X,Y$ to the $\{u-v=cst\}$-hypersurfaces. Then, 
  \begin{enumerate}
  \item\label{item:optimals} $(u-v)^{-2}\left(k^2(u-v)^2\boldsymbol{\Omega}^2 -1\right)$ extends smoothly to $\mathcal{I}$,
  \item\label{item:secondform} $(u-v)^{-1}\boldsymbol{\Theta}$ extends smoothly to $\mathcal{I}$ (in a $\boldsymbol{h}$-normalised frame),
  \item\label{item:Weyl} $(u-v)^{-1}\boldsymbol{\widetilde{W}}$ extends smoothly to $\mathcal{I}$ (in a $\boldsymbol{\widetilde{g}}$-normalised frame).
  \end{enumerate}
\end{lemma}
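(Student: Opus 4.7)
The plan is to apply Friedrich's conformal method. With $\Omega_c := u-v$ (so $\mathcal{I} = \{\Omega_c = 0\}$), and using the assumed smoothness of $\boldsymbol{\widetilde g} = \Omega_c^2 \boldsymbol g$, a direct computation of the conformal transformation of Ricci shows that $\mathrm{Ric}(\boldsymbol g) = -3k^2 \boldsymbol g$ is equivalent to the identity
\begin{align*}
\Omega_c^2\,\boldsymbol{\widetilde R}_{ab} + 2\Omega_c\,\boldsymbol{\widetilde\nabla}_a\boldsymbol{\widetilde\nabla}_b\Omega_c + \Omega_c\,\boldsymbol{\widetilde g}_{ab}\,\boldsymbol{\widetilde\Box}\,\Omega_c - 3\bigl(|\boldsymbol{\widetilde\nabla}\Omega_c|^2_{\boldsymbol{\widetilde g}} - k^2\bigr)\boldsymbol{\widetilde g}_{ab} = 0\,,
\end{align*}
in which every term except the last parenthesis is smooth up to $\mathcal{I}$. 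The three items will then follow by combining this identity with its transversal derivatives and with specific structural input from the double null gauge.

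For item~(\ref{item:optimals}), the trace of the identity evaluated at $\Omega_c = 0$ immediately yields $|\boldsymbol{\widetilde\nabla}\Omega_c|^2_{\boldsymbol{\widetilde g}}\big|_{\mathcal{I}} = k^2$. A direct computation from the double null form of $\boldsymbol g$ gives $|\boldsymbol{\widetilde\nabla}\Omega_c|^2_{\boldsymbol{\widetilde g}} = \Omega_c^{-2}\boldsymbol\Omega^{-2}$, so this reads $k^2\Omega_c^2\boldsymbol\Omega^2 - 1 = o(1)$ at $\mathcal{I}$. To reach the $O(\Omega_c^2)$ remainder of (\ref{item:optimals}), I will combine the trace-free part of the identity -- which forces $\boldsymbol{\widetilde\nabla}_a\boldsymbol{\widetilde\nabla}_b\Omega_c\big|_{\mathcal{I}} = \tfrac{1}{4}\boldsymbol{\widetilde\Box}\Omega_c|_{\mathcal{I}}\,\boldsymbol{\widetilde g}_{ab}$ -- with an additional input from the double null gauge (the leading-order form of the Raychaudhuri and $\boldsymbol\omega$ propagation equations of Section~\ref{sec:propequations} near $\mathcal{I}$) which pins down $\boldsymbol{\widetilde\Box}\Omega_c|_{\mathcal{I}} = 0$, and hence $\boldsymbol{\widetilde N}(|\boldsymbol{\widetilde\nabla}\Omega_c|^2_{\boldsymbol{\widetilde g}})|_{\mathcal{I}} = 0$ via the identity $\boldsymbol{\widetilde N}(|\boldsymbol{\widetilde\nabla}\Omega_c|^2_{\boldsymbol{\widetilde g}}) = 2|\boldsymbol{\widetilde\nabla}\Omega_c|_{\boldsymbol{\widetilde g}}\boldsymbol{\widetilde\nabla}^2\Omega_c(\boldsymbol{\widetilde N},\boldsymbol{\widetilde N})$.

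Item~(\ref{item:secondform}) then follows from the previous one. With $\boldsymbol{\widetilde N} = |\boldsymbol{\widetilde\nabla}\Omega_c|^{-1}_{\boldsymbol{\widetilde g}}\boldsymbol{\widetilde\nabla}\Omega_c$ smooth up to $\mathcal{I}$ by (\ref{item:optimals}), the tangential--tangential part of the second fundamental form at $\mathcal{I}$ reduces to $\boldsymbol\Theta_{ab}|_{\mathcal{I}} = |\boldsymbol{\widetilde\nabla}\Omega_c|^{-1}_{\boldsymbol{\widetilde g}}\boldsymbol{\widetilde\nabla}_a\boldsymbol{\widetilde\nabla}_b\Omega_c|_{\mathcal{I}}$, which by the trace-free identity and the argument of item~(\ref{item:optimals}) equals $\tfrac{1}{4k}\boldsymbol{\widetilde\Box}\Omega_c|_{\mathcal{I}}\,\boldsymbol h_{ab} = 0$. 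Hence $\boldsymbol\Theta|_{\mathcal{I}} = 0$, and $\Omega_c^{-1}\boldsymbol\Theta$ extends smoothly by smoothness of $\boldsymbol{\widetilde g}$.

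For item~(\ref{item:Weyl}), I exploit the conformal invariance $\boldsymbol{\widetilde W}^a{}_{bcd} = \boldsymbol W^a{}_{bcd}$ together with the second Bianchi identity. Expressing the contracted Bianchi identity $\boldsymbol{\widetilde\nabla}^a\boldsymbol{\widetilde W}_{abcd}$ via the Cotton tensor of $\boldsymbol{\widetilde g}$ and using the conformal identity to replace the singular $\Omega_c^{-1}$ contributions coming from $|\boldsymbol{\widetilde\nabla}\Omega_c|^2_{\boldsymbol{\widetilde g}} - k^2$ shows that $\boldsymbol{\widetilde\nabla}^a\boldsymbol{\widetilde W}$ extends smoothly to $\mathcal{I}$; a Gauss--Codazzi decomposition then expresses each independent frame-component of $\boldsymbol{\widetilde W}|_{\mathcal{I}}$ algebraically in terms of the intrinsic curvature of the induced boundary metric and of $\boldsymbol\Theta|_{\mathcal{I}}$, which vanishes by~(\ref{item:secondform}), forcing $\boldsymbol{\widetilde W}|_{\mathcal{I}} = 0$ and hence the smooth extension of $\Omega_c^{-1}\boldsymbol{\widetilde W}$. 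The main obstacle of the proof is this last step: the vanishing of $\boldsymbol{\widetilde W}$ on $\mathcal{I}$ is not algebraic in $\boldsymbol{\widetilde g}$ but requires a careful decomposition of the Weyl tensor into electric and magnetic parts with respect to $\boldsymbol{\widetilde N}$ and a detailed tracking of how the conformal Bianchi system, combined with (\ref{item:optimals})--(\ref{item:secondform}), constrains every such component; the analogous subtlety for item (\ref{item:optimals}) -- namely forcing $\boldsymbol{\widetilde\Box}\Omega_c|_{\mathcal{I}} = 0$, which does not follow from the conformal identity alone -- requires the additional input from the double null gauge noted above.
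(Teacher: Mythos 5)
Your framework (the conformal transformation of Ricci combined with the Einstein equations, \`a la Friedrich) is the same as the paper's, and your reductions are correct as far as they go: Item~(\ref{item:optimals}) does come down to showing $\boldsymbol{\widetilde{\Box}}\,(u-v)|_{\mathcal{I}}=0$, and Item~(\ref{item:secondform}) then follows. But the two steps you yourself flag as the main obstacles are left unexecuted, and in both cases the route you sketch is not the one that actually closes. For Item~(\ref{item:optimals}), you propose to get $\boldsymbol{\widetilde{\Box}}\,(u-v)|_{\mathcal{I}}=0$ from the leading-order Raychaudhuri and $\boldsymbol\omega$ equations. A direct computation shows that $\boldsymbol{\widetilde{\Box}}\,(u-v)$ equals, up to a nonvanishing factor, $\boldsymbol{\Omega}{\bf tr}\boldsymbol{\chi}-\boldsymbol{\Omega}{\bf tr}\underline{\boldsymbol{\chi}}-\frac{4}{u-v}$, whose vanishing at $\mathcal{I}$ is one of the \emph{conclusions} of Proposition~\ref{prop:ad} (which is derived from this very lemma); extracting it from the null structure equations would require independently established asymptotics for $\boldsymbol\rho$, $\boldsymbol\eta$, $\hat{\boldsymbol\chi}$, and so risks circularity. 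The paper instead closes the loop kinematically: the double null form gives the exact identity $\boldsymbol{\Theta}(\boldsymbol{\widetilde{T}},\boldsymbol{\widetilde{T}})=-\boldsymbol{\widetilde{N}}\log((u-v)\boldsymbol{\Omega})$ with $\boldsymbol{\widetilde{T}}=\frac12(u-v)^{-1}(e_4+e_3)$, hence $\mathrm{tr}\boldsymbol{\Theta}=3\boldsymbol{\widetilde{N}}\log((u-v)\boldsymbol{\Omega})+3\hat{\boldsymbol\Theta}(\boldsymbol{\widetilde{T}},\boldsymbol{\widetilde{T}})$; since the trace-free tangential projection of the conformal Ricci identity already shows $(u-v)^{-1}\hat{\boldsymbol\Theta}$ is regular, substituting into the traced identity produces a linear equation for $\boldsymbol{\widetilde{N}}\log((u-v)\boldsymbol{\Omega})$ with nonvanishing coefficient, forcing its first-order vanishing. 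Some such concrete ingredient is needed; as written, the crux of Item~(\ref{item:optimals}) is asserted rather than proved.

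For Item~(\ref{item:Weyl}) your mechanism is the wrong one. The conclusion you draw from the Bianchi/Cotton identity --- that $\boldsymbol{\widetilde{\nabla}}^\alpha\boldsymbol{\widetilde{W}}$ extends smoothly --- is vacuous (it already follows from smoothness of $\boldsymbol{\widetilde{g}}$), and the subsequent claim that Gauss--Codazzi expresses \emph{every} component of $\boldsymbol{\widetilde{W}}|_{\mathcal{I}}$ algebraically in terms of $\mathrm{Ric}(\boldsymbol{h})$ and $\boldsymbol{\Theta}|_{\mathcal{I}}$ does not hold for the electric components $\boldsymbol{\widetilde{W}}(\boldsymbol{\widetilde{N}},\cdot,\boldsymbol{\widetilde{N}},\cdot)$: these involve, through the Gauss equation and the trace conditions, the boundary values of $\mathrm{Ric}(\boldsymbol{\widetilde{g}})$, and since $\mathrm{Ric}(\boldsymbol{h})\neq 0$ in general (conformal flatness of the boundary is not assumed in this lemma) you would need to exhibit a Fefferman--Graham-type cancellation between the ambient and intrinsic Schouten tensors, which you have not done. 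The paper's argument bypasses all of this: conformal invariance of the Bianchi equations gives $\boldsymbol{\widetilde{\nabla}}_\alpha\big((u-v)^{-1}\boldsymbol{\widetilde{W}}^{\alpha}{}_{\beta\gamma\delta}\big)=0$, which upon expansion yields $(u-v)\boldsymbol{\widetilde{\nabla}}^\alpha\boldsymbol{\widetilde{W}}_{\alpha\beta\gamma\delta}=(u-v)^{-1}\boldsymbol{\Omega}^{-1}\boldsymbol{\widetilde{N}}^\alpha\boldsymbol{\widetilde{W}}_{\alpha\beta\gamma\delta}$ and hence the first-order vanishing of all normal contractions of $\boldsymbol{\widetilde{W}}$ at $\mathcal{I}$; the algebraic symmetries of the Weyl tensor then determine every component from these contractions, giving Item~(\ref{item:Weyl}) directly without any appeal to $\boldsymbol{\Theta}|_{\mathcal{I}}=0$ or to the boundary geometry.
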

\begin{proof}
  First note that by the double null form of $\boldsymbol{\widetilde{g}}$, we have the relations
  \begin{align}\label{eq:Nnabuv}
    \big|\boldsymbol{\widetilde{\nabla}}(u-v)\big|^2_{\boldsymbol{\widetilde{g}}} & = \frac{1}{(u-v)^2\boldsymbol{\Omega}^2}, & \boldsymbol{\widetilde{N}}(u-v) & = -\frac{1}{(u-v)\boldsymbol{\Omega}}, & \boldsymbol{\widetilde{N}} & = -(u-v)\boldsymbol{\Omega} \boldsymbol{\widetilde{\nabla}}(u-v).
  \end{align}
  The general conformal transformation formula\footnote{See \url{https://en.wikipedia.org/wiki/List_of_formulas_in_Riemannian_geometry}.} reads
  \begin{align*}
    (u-v)^2\mathrm{Ric}(\boldsymbol{\widetilde{g}}) & = (u-v)^2\mathrm{Ric}(\boldsymbol{g}) -2 (u-v)\boldsymbol{\widetilde{\nabla}}^2(u-v) - \left((u-v)\boldsymbol{\widetilde{\nabla}}^\mu\boldsymbol{\widetilde{\nabla}}_\mu(u-v) - 3|\boldsymbol{\widetilde{\nabla}}(u-v)|^2_{\boldsymbol{\widetilde{g}}}\right) \boldsymbol{\widetilde{g}}, 
  \end{align*}
  which, plugging in the Einstein equation~\eqref{EVEL} and using~\eqref{eq:Nnabuv}, rewrites as
  \begin{align}\label{eq:Ricconf}
    (u-v)^2\mathrm{Ric}(\boldsymbol{\widetilde{g}}) & = -2 (u-v)\boldsymbol{\widetilde{\nabla}}^2(u-v) - (u-v)\boldsymbol{\widetilde{\nabla}}^\mu\boldsymbol{\widetilde{\nabla}}_\mu(u-v) \boldsymbol{\widetilde{g}} - \mathfrak{s} \boldsymbol{\widetilde{g}}, 
  \end{align}
  with 
  \begin{align*}
    \mathfrak{s} & := 3\boldsymbol{\Omega}^{-2}(u-v)^{-2}\left(k^2(u-v)^2\boldsymbol{\Omega}^2 - 1\right).
  \end{align*}
  From~\eqref{eq:Ricconf}, using that $\boldsymbol{\widetilde{g}}$ extends smoothly to $\mathcal{I}$, we already deduce that $(u-v)^{-1}\mathfrak{s}$ extends smoothly to $\mathcal{I}$. We now want to obtain the better rate for $\mathfrak{s}$ claimed in Item~\ref{item:optimals}. We first note that, from Taylor's formula, relations~\eqref{eq:Nnabuv}, and the fact that $(u-v)^{-1}\mathfrak{s}$ extends smoothly to $\mathcal{I}$, the function $\widetilde{\mathfrak{s}}$ defined by 
  \begin{align}\label{eq:Taylors}
    \begin{aligned}
       \widetilde{\mathfrak{s}} & := (u-v)^{-2}\left(\mathfrak{s} + (u-v)\left((u-v)\boldsymbol{\Omega}\right)\boldsymbol{\widetilde{N}}\mathfrak{s}\right) \\
      & \,= (u-v)^{-2}\mathfrak{s} - 3\boldsymbol{\Omega}\boldsymbol{\widetilde{N}}\left((u-v)^{-2}\boldsymbol{\Omega}^{-2}\right) \\
      & \,= (u-v)^{-2}\mathfrak{s} + 6(u-v)^{-2}\boldsymbol{\Omega}^{-1}\boldsymbol{\widetilde{N}}\left(\log((u-v)\boldsymbol{\Omega})\right),
    \end{aligned}
  \end{align}
  extends smoothly to $\mathcal{I}$. Thus, if we can prove that $(u-v)^{-1}\boldsymbol{\widetilde{N}}\left(\log((u-v)\boldsymbol{\Omega})\right)$ extends smoothly to $\mathcal{I}$ then Item~\ref{item:optimals} follows from~\eqref{eq:Taylors}. To control $\boldsymbol{\widetilde{N}}\left(\log((u-v)\boldsymbol{\Omega})\right)$, we take the trace in~\eqref{eq:Ricconf} and use~\eqref{eq:Nnabuv}, and we have
  \begin{align}\label{eq:Rconf}
    \begin{aligned}
      (u-v)\mathrm{R}(\boldsymbol{\widetilde{g}}) & = -6 \boldsymbol{\widetilde{\nabla}}^\mu\boldsymbol{\widetilde{\nabla}}_\mu(u-v) - 4(u-v)^{-1}\mathfrak{s} \\
      & = -6 \left(\boldsymbol{\widetilde{N}}(\boldsymbol{\widetilde{N}}(u-v)) + \boldsymbol{\widetilde{N}}(u-v)\mathrm{tr}\boldsymbol{\Theta}\right) - 4(u-v)^{-1}\mathfrak{s} \\ 
      & = -6 \boldsymbol{\Omega}^{-1}(u-v)^{-1} \left(-\boldsymbol{\widetilde{N}}(\log((u-v)\boldsymbol{\Omega})) +\mathrm{tr}\boldsymbol{\Theta}\right) - 4(u-v)^{-1}\mathfrak{s},
    \end{aligned}
  \end{align}
  where $\mathrm{tr}\boldsymbol{\Theta} := \boldsymbol{h}^{ij}\boldsymbol{\Theta}_{ij}$ with $\boldsymbol{h}$ the induced metric by $\boldsymbol{\widetilde{g}}$ on the $\{u-v=cst\}$ hypersurfaces. Now, we want to express $\mathrm{tr}\boldsymbol{\Theta}$ -- at first order -- in terms of $\boldsymbol{\widetilde{N}}\left(\log((u-v)\boldsymbol{\Omega})\right)$. Letting $\boldsymbol{\widetilde{T}} := \frac{1}{2}(u-v)^{-1}(e_4+e_3)$, we have
  \begin{align*}
    \boldsymbol{\Theta}(\boldsymbol{\widetilde{T}},\boldsymbol{\widetilde{T}}) & = \boldsymbol{\widetilde{g}}\left(\left[\boldsymbol{\widetilde{T}},\boldsymbol{\widetilde{N}}\right],\boldsymbol{\widetilde{T}}\right) = -\boldsymbol{\widetilde{N}}\log((u-v)\boldsymbol{\Omega}). 
  \end{align*}
  where we used that $\boldsymbol{\widetilde{T}}= \frac{1}{2} \boldsymbol{\Omega}^{-1} (u-v)^{-1} \left(\partial_u+\partial_v + b^A\partial_A\right)$ and that $\boldsymbol{\widetilde{g}}(\boldsymbol{\widetilde{T}},\boldsymbol{\widetilde{N}})=\boldsymbol{\widetilde{g}}(\boldsymbol{\widetilde{T}},\partial_A)=0$. Hence
  \begin{align}\label{eq:trThTT}
    \begin{aligned}
      \mathrm{tr}\boldsymbol{\Theta} & = 3\boldsymbol{\Theta}(\boldsymbol{\widetilde{T}},\boldsymbol{\widetilde{T}}) + 3\boldsymbol{\widetilde{N}}\log((u-v)\boldsymbol{\Omega}) + \mathrm{tr}\boldsymbol{\Theta} = 3\boldsymbol{\widetilde{N}}\log((u-v)\boldsymbol{\Omega}) + 3\boldsymbol{\hat\Theta}(\boldsymbol{\widetilde{T}},\boldsymbol{\widetilde{T}}),
    \end{aligned}
  \end{align}
  with $\boldsymbol{\hat\Theta} := \boldsymbol{\Theta} - \frac{1}{3}\mathrm{tr}\boldsymbol{\Theta}\boldsymbol{h}$ denoting the traceless part of $\boldsymbol{\Theta}$. Plugging~\eqref{eq:Taylors} and~\eqref{eq:trThTT} into~\eqref{eq:Rconf}, we get
  \begin{align}\label{eq:NlogOmuvproof}
    \begin{aligned}
    (u-v)\mathrm{R}(\boldsymbol{\widetilde{g}}) & = -6 \boldsymbol{\Omega}^{-1}(u-v)^{-1} (-1+3+4)\boldsymbol{\widetilde{N}}(\log(u-v)\boldsymbol{\Omega}) \\
    & \quad - 18\boldsymbol{\Omega}^{-1}(u-v)^{-1}\boldsymbol{\hat\Theta}(\boldsymbol{\widetilde{T}},\boldsymbol{\widetilde{T}}) - 4 (u-v)\widetilde{\mathfrak{s}}.
    \end{aligned}
  \end{align}
  Let us now show that $\boldsymbol{\hat\Theta}$ vanishes at first order at $\mathcal{I}$. Projecting formula~\eqref{eq:Ricconf} on the $\{u-v=cst\}$-hypersurfaces, using~\eqref{eq:Nnabuv}, and taking the traceless part, we have 
  \begin{align*}
    \boldsymbol{\Omega}(u-v)\left(\mathrm{Ric}(\boldsymbol{\widetilde{g}})_{ij}-\boldsymbol{h}^{i'j'}\mathrm{Ric}(\boldsymbol{\widetilde{g}})_{i'j'}\boldsymbol{h}_{ij}\right) & = -2(u-v)^{-1}\boldsymbol{\hat\Theta}_{ij},
  \end{align*}
   Hence, using that $\boldsymbol{\widetilde{g}}$ extends smoothly to $\mathcal{I}$, we have that $(u-v)^{-1}\boldsymbol{\hat\Theta}$ extends smoothly to $\mathcal{I}$. Thus, from~\eqref{eq:NlogOmuvproof}, using that $\boldsymbol{\widetilde{g}}$, $(u-v)^{-1}\boldsymbol{\hat\Theta}$ and $\widetilde{\mathfrak{s}}$ extend smoothly at $\mathcal{I}$, we infer that $(u-v)^{-1}\boldsymbol{\widetilde{N}}(\log(u-v)\boldsymbol{\Omega})$ extends smoothly to $\mathcal{I}$. Hence, recalling formula~\eqref{eq:Taylors} and the definition of $\mathfrak{s}$, Item~\ref{item:optimals} is proved. Using~\eqref{eq:trThTT} and the regularity of $\boldsymbol{\hat\Theta}$ obtained above, we also directly infer Item~\ref{item:secondform}.\\
  From the conformal invariance of the Bianchi equations for the Weyl tensor, we have
  \begin{align}\label{eq:confBianchi}
  \boldsymbol{\widetilde{\nabla}}_\alpha\boldsymbol{\widetilde{d}}^\alpha_{\beta\gamma\delta} & = 0, & \text{with} \quad \boldsymbol{\widetilde{d}} & := (u-v)^{-1} \boldsymbol{\widetilde{W}}.
  \end{align}
  which, using~\eqref{eq:Nnabuv}, implies
  \begin{align}\label{eq:devconfBianchi}
    (u-v)\boldsymbol{\widetilde{\nabla}}^{\alpha}\boldsymbol{\widetilde{W}}_{\alpha\beta\gamma\delta} - (u-v)^{-1}\boldsymbol{\Omega}^{-1}\boldsymbol{\widetilde{N}}^\alpha\boldsymbol{\widetilde{W}}_{\alpha\beta\gamma\delta} = 0.
  \end{align}
  Using that $\boldsymbol{\widetilde{g}}$ extends smoothly to $\mathcal{I}$, $(u-v)^{-1}\boldsymbol{\widetilde{N}}^\alpha\boldsymbol{\widetilde{W}}_{\alpha\beta\gamma\delta}$ extends smoothly to $\mathcal{I}$. Using the symmetries of the Weyl tensor -- see \emph{e.g.} formulas (7.3.3) in \cite{Chr.Kla93} --, all the components of $\boldsymbol{\widetilde{W}}$ can be obtained by linear combinations of $\boldsymbol{\widetilde{N}}^\alpha\boldsymbol{\widetilde{W}}_{\alpha\beta\gamma\delta}$ and Item~\ref{item:Weyl} follows.
\end{proof}
We can now prove Proposition~\ref{prop:ad}.
\begin{proof}[Proof of Proposition~\ref{prop:ad}]
  The regularity of $\boldsymbol{\Omega}$ is a direct consequence of Lemma~\ref{lem:Fri}. The regularity of $\boldsymbol{b}$ follows from the coordinate components $\boldsymbol{b}^A$ (indices up!) extending regularly. By the conformal transformation formulas, we have
  \begin{subequations}\label{eq:conftrans}
    \begin{align}\label{eq:confchi}
      \begin{aligned}
        \boldsymbol{\chi}_{AB} = \boldsymbol{g}\left(\boldsymbol{\nabla}_{\partial_A}e_4,\partial_B\right) & = (u-v)^{-2}\boldsymbol{\widetilde{g}}\left(\boldsymbol{\widetilde{\nabla}}_{\partial_A}e_4,\partial_B\right) - e_4(\log(u-v)) \boldsymbol{g}_{AB} \\
        & = (u-v)^{-1}\boldsymbol{\widetilde{g}}\left(\boldsymbol{\widetilde{\nabla}}_{\partial_A}(\boldsymbol{\widetilde{T}}+\boldsymbol{\widetilde{N}}),\partial_B\right) + \frac{\boldsymbol{g}_{AB}}{(u-v)\boldsymbol{\Omega}},
      \end{aligned}
    \end{align}
    and similarly
    \begin{align}
      \boldsymbol{\underline{\chi}}_{AB} & = (u-v)^{-1}\boldsymbol{\widetilde{g}}\left(\boldsymbol{\widetilde{\nabla}}_{\partial_A}(\boldsymbol{\widetilde{T}}-\boldsymbol{\widetilde{N}}),\partial_B\right) - \frac{\boldsymbol{g}_{AB}}{(u-v)\boldsymbol{\Omega}}, \label{eq:confchib} \\
      \boldsymbol{\eta}_A & = \boldsymbol{\widetilde{g}}\left(\boldsymbol{\widetilde{\nabla}}_{\partial_A}\boldsymbol{\widetilde{N}}, \boldsymbol{\widetilde{T}}\right) + \boldsymbol{\slashed{\nabla}}_A\log\boldsymbol{\Omega}, \label{eq:confeta} \\
      \boldsymbol{\underline{\eta}}_A & = - \boldsymbol{\widetilde{g}}\left(\boldsymbol{\widetilde{\nabla}}_{\partial_A}\boldsymbol{\widetilde{N}}, \boldsymbol{\widetilde{T}}\right) + \boldsymbol{\slashed{\nabla}}_A\log\boldsymbol{\Omega}, \label{eq:confetab}
    \end{align}
  \end{subequations}
  From~\eqref{eq:confchi}, and the fact that $\boldsymbol{\widetilde{g}}$ extends regularly at $\mathcal{I}$, one infers that $\boldsymbol{\Omega}\boldsymbol{\chi} - \frac{\boldsymbol{g}}{(u-v)\boldsymbol{\Omega}}$ extends regularly to $\mathcal{I}$ in a $\boldsymbol{g}$ orthonormal frame. The corresponding regularity for $\boldsymbol{\underline{\chi}}$ follows similarly. Moreover, from formulas~\eqref{eq:confchi},~\eqref{eq:confchib}, one has
  \begin{align*}
    \boldsymbol{\chi}_{AB}-\boldsymbol{\underline{\chi}}_{AB} -2 \frac{\boldsymbol{g}_{AB}}{(u-v)\boldsymbol{\Omega}} & = 2(u-v)^{-1}\boldsymbol{\Theta}_{AB}, 
  \end{align*}
  and from the (better) regularity for $\boldsymbol{\Theta}$ of Item~\ref{item:secondform} of Lemma~\ref{lem:Fri}, we obtain the (better) regularity for the difference $\boldsymbol{\chi}-\boldsymbol{\underline{\chi}}$ in~\eqref{boundsextend}. From~\eqref{eq:confeta},~\eqref{eq:confetab}, we have that $r(\boldsymbol{\eta} - \underline{\boldsymbol{\eta}})_A = 2r\boldsymbol{\Theta}(\boldsymbol{\widetilde{T}},\partial_A)$ is regular by Item~\ref{item:secondform} of Lemma~\ref{lem:Fri}, hence $r^2(\boldsymbol{\eta} - \underline{\boldsymbol{\eta}})$ extends smoothly to $\mathcal{I}$ in a $\boldsymbol{g}$ orthonormal frame. Moreover, we have $r^2(\boldsymbol{\eta} + \underline{\boldsymbol{\eta}})_A = r^2\boldsymbol{\Omega}^{-1}\partial_A\boldsymbol{\Omega}$ is regular by (the good regularity of) Item~\ref{item:optimals} of Lemma~\ref{lem:Fri}, thus $r^3(\boldsymbol{\eta} + \underline{\boldsymbol{\eta}})$ extends smoothly to $\mathcal{I}$ in a $\boldsymbol{g}$ orthonormal frame, and combining the above, $r^2\boldsymbol{\eta},r^2\underline{\boldsymbol{\eta}}$ extend smoothly to $\mathcal{I}$ in a $\boldsymbol{g}$ orthonormal frame. We have
  \begin{align*}
    \boldsymbol{\omega} & = \frac{1}{2}\partial_v\log\left((u-v)^2\boldsymbol{\Omega}^2\right) + \frac{1}{u-v}, & \boldsymbol{\underline{\omega}} & = \frac{1}{2}\partial_u\log\left((u-v)^2\boldsymbol{\Omega}^2\right) - \frac{1}{u-v},
  \end{align*}
  from which, by Item~\ref{item:optimals} of Lemma~\ref{lem:Fri}, we infer that $r(\boldsymbol{\omega}- \boldsymbol{\underline{\omega}})$ and $r^2(\boldsymbol{\omega}+\boldsymbol{\underline{\omega}})$ extend regularly at $\mathcal{I}$. The regularity of the null curvature components is a direct consequence of the last item of Lemma~\ref{lem:Fri}, using the conformal invariance of the Weyl tensor. This finishes the proof of the corollary.
\end{proof}

\subsection{Proof of Proposition~\ref{prop:NullWeylBC}}
We first have the following lemma. See also~\cite{Fri95} and Section 6.2 in~\cite{Hol.Luk.Smu.War20}.
\begin{lemma}\label{lem:WeylBC}
  Assume that $\boldsymbol{\widetilde{g}}$ extends smoothly to $\mathcal{I}$, satisfies the Einstein equations~\eqref{EVEL}, and that the induced metric $\boldsymbol{h}$ by $\boldsymbol{\widetilde{g}}$ on $\mathcal{I}$ is conformal to the Anti-de Sitter metric at infinity $-\d t^2 + k^{-2}\boldsymbol{\gamma}$. Then, the $\{u-v\}$-tangent tensor $(u-v)^{-2}\boldsymbol{\widetilde{N}}^\alpha\boldsymbol{\widetilde{W}}_{\alpha ijk}$ extends smoothly to $\mathcal{I}$. Note that this is equivalent to the spacetime tensor $(u-v)^{-2}\boldsymbol{\widetilde{N}}^\alpha\boldsymbol{\widetilde{N}}^\gamma {^\star}\boldsymbol{\widetilde{W}}_{\alpha\mu\gamma\nu}$, with ${^\star}$ denoting the Hodge dual, extending smoothly to $\mathcal{I}$.
\end{lemma}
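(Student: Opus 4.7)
The plan is to upgrade Item~\ref{item:Weyl} of Lemma~\ref{lem:Fri}, which only yields that $\boldsymbol{\widetilde{d}} := (u-v)^{-1}\boldsymbol{\widetilde{W}}$ extends smoothly to $\mathcal{I}$, by showing that the \emph{specific} component $\boldsymbol{\widetilde{N}}^\alpha \boldsymbol{\widetilde{d}}_{\alpha ijk}$ actually vanishes on $\mathcal{I}$. Granted this, Taylor's theorem along the integral curves of $\boldsymbol{\widetilde{N}}$ immediately gives that $(u-v)^{-1}\boldsymbol{\widetilde{N}}^\alpha\boldsymbol{\widetilde{d}}_{\alpha ijk} = (u-v)^{-2}\boldsymbol{\widetilde{N}}^\alpha\boldsymbol{\widetilde{W}}_{\alpha ijk}$ extends smoothly to $\mathcal{I}$, which is the stated conclusion. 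The equivalent dual formulation in terms of ${}^\star\boldsymbol{\widetilde{W}}$ is an algebraic consequence of the standard $3+1$ decomposition of the Weyl tensor, since the normal-tangential-tangential-tangential component and the magnetic part $\boldsymbol{\widetilde{N}}^\alpha\boldsymbol{\widetilde{N}}^\gamma {}^\star\boldsymbol{\widetilde{W}}_{\alpha\mu\gamma\nu}$ determine each other on $\mathcal{I}$ by a non-degenerate Hodge-type identity.

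The key step is the identification of $\boldsymbol{\widetilde{N}}^\alpha\boldsymbol{\widetilde{d}}_{\alpha ijk}\big|_\mathcal{I}$ with the Cotton tensor of the induced boundary metric $\boldsymbol{h}$. I would proceed by taking the divergence-free equation~\eqref{eq:confBianchi} for $\boldsymbol{\widetilde{d}}$, contracting with $\boldsymbol{\widetilde{N}}^\beta$, and decomposing the resulting spacetime covariant derivatives into intrinsic tangential derivatives along $\mathcal{I}$ plus terms involving the second fundamental form $\boldsymbol{\Theta}$. Restricting to $\mathcal{I}$, a Codazzi-Mainardi-type rearrangement, together with the algebraic symmetries of the Weyl tensor, identifies $\boldsymbol{\widetilde{N}}^\alpha \boldsymbol{\widetilde{d}}_{\alpha ijk}\big|_\mathcal{I}$ (up to a universal numerical factor) with the Cotton-York tensor of $\boldsymbol{h}$. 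This is the classical statement in Friedrich's conformal approach~\cite{Fri95} that at a conformal boundary the magnetic part of the rescaled Weyl tensor coincides with the Cotton tensor of the boundary conformal class.

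Once this identification is in hand, the hypothesis that $\boldsymbol{h}$ lies in the conformal class of the Lorentzian cylinder $(\mathbb{R}\times S^2, -\mathrm{d}t^2 + k^{-2}\boldsymbol{\gamma})$ concludes the argument. This cylinder is locally conformally flat, and in three dimensions local conformal flatness is equivalent to the vanishing of the Cotton tensor; by the conformal invariance of the Cotton tensor in dimension three, every representative in the conformal class has vanishing Cotton tensor, in particular $\boldsymbol{h}$. Hence $\boldsymbol{\widetilde{N}}^\alpha \boldsymbol{\widetilde{d}}_{\alpha ijk}\big|_\mathcal{I} = 0$, as required.

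The main technical obstacle will be executing the Codazzi-type rearrangement without losing track of the various conformal rescaling contributions. The computation must carefully disentangle tangential derivatives of $\boldsymbol{\widetilde{d}}$ from terms proportional to $\boldsymbol{\Theta}$, and it relies crucially on the improved regularity $(u-v)^{-2}\left(k^2(u-v)^2\boldsymbol{\Omega}^2-1\right)\in C^\infty$ obtained in Item~\ref{item:optimals} of Lemma~\ref{lem:Fri}, which forces the trace-free part of $\boldsymbol{\Theta}$ to vanish to the required order at $\mathcal{I}$ (as was used inside the proof of that lemma). Without this input, spurious boundary terms would obstruct the clean identification with the Cotton tensor. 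With the input in place, the remaining algebra is standard and closely parallels the computations of Section~6.2 of~\cite{Hol.Luk.Smu.War20}, which carries out an analogous analysis for dissipative boundary conditions.
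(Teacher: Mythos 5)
Your proposal is correct and follows essentially the same route as the paper: both arguments identify the normal--tangential components of the rescaled Weyl tensor at $\mathcal{I}$ with the Cotton tensor of the induced boundary metric (via the conformally invariant Bianchi equation~\eqref{eq:confBianchi} together with a Gauss--Codazzi rearrangement controlled by Lemma~\ref{lem:Fri}), and then conclude from the conformal invariance of the three-dimensional Cotton tensor and its vanishing on the locally conformally flat Lorentzian cylinder. The paper organises the Codazzi step through the Cotton tensor of $\boldsymbol{\widetilde{g}}$ and the relation~\eqref{eq:CottonCotton} rather than by directly contracting the divergence equation with $\boldsymbol{\widetilde{N}}$, but this is a cosmetic difference, and your final Taylor/Hadamard step matches the paper's conclusion from~\eqref{eq:WeylCottonBoundary}.
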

\begin{proof}
  From contractions of the second Bianchi identities and the definition of the Weyl tensor, we have the following general formula
  \begin{align}\label{eq:DivWeylCotton}
    \boldsymbol{\widetilde{\nabla}}^{\alpha}\mathrm{W}(\boldsymbol{\widetilde{g}})_{\alpha \beta \gamma \delta} & = \frac{1}{2}\boldsymbol{\widetilde{\nabla}}_\gamma \left(\mathrm{Ric}(\boldsymbol{\widetilde{g}})_{\beta\delta} - \frac{1}{6} \mathrm{R}(\boldsymbol{\widetilde{g}})\boldsymbol{\widetilde{g}}_{\beta\delta}\right) -  \frac{1}{2}\boldsymbol{\widetilde{\nabla}}_\delta \left(\mathrm{Ric}(\boldsymbol{\widetilde{g}})_{\beta\gamma} - \frac{1}{6} \mathrm{R}(\boldsymbol{\widetilde{g}})\boldsymbol{\widetilde{g}}_{\beta\gamma}\right) =: \frac{1}{2} \mathrm{C}(\boldsymbol{\widetilde{g}})_{\beta\delta\gamma},  
  \end{align}
  where $\mathrm{C}(\boldsymbol{\widetilde{g}})$ is called the \emph{Cotton tensor} of $\boldsymbol{\widetilde{g}}$. The Gauss-Codazzi equations on the boundary $\mathcal{I}$ read
  \begin{align}
    \mathrm{Ric}(\boldsymbol{\widetilde{g}})_{ij} - \mathrm{Rm}(\boldsymbol{\widetilde{g}})_{i\boldsymbol{\widetilde{N}} j\boldsymbol{\widetilde{N}}} & = \mathrm{Ric}(\boldsymbol{h})_{ij} - \boldsymbol{\Theta}_{ik}\boldsymbol{\Theta}^{k}_j + \mathrm{tr}\boldsymbol{\Theta} \boldsymbol{\Theta}_{ij}, \label{eq:GaussCodazziII}\\
    \mathrm{R}(\boldsymbol{\widetilde{g}}) - 2\mathrm{Ric}(\boldsymbol{\widetilde{g}})_{\boldsymbol{\widetilde{N}}\boldsymbol{\widetilde{N}}} & = \mathrm{R}(\boldsymbol{h}) - |\boldsymbol{\Theta}|^2 + (\mathrm{tr}\boldsymbol{\Theta})^2. \label{eq:GaussCodazziIItrace}
  \end{align}
  By the definition of the Weyl tensor, we have
  \begin{align*}
    \mathrm{W}(\boldsymbol{\widetilde{g}})_{i\boldsymbol{\widetilde{N}} j\boldsymbol{\widetilde{N}}} & = \mathrm{Rm}(\boldsymbol{\widetilde{g}})_{i\boldsymbol{\widetilde{N}} j\boldsymbol{\widetilde{N}}} - \frac{1}{2} \mathrm{Ric}(\boldsymbol{\widetilde{g}})_{ij} - \frac{1}{2} \mathrm{Ric}(\boldsymbol{\widetilde{g}})_{\boldsymbol{\widetilde{N}}\boldsymbol{\widetilde{N}}} \boldsymbol{\widetilde{g}}_{ij} + \frac{1}{6}\mathrm{R}(\boldsymbol{\widetilde{g}})\boldsymbol{\widetilde{g}}_{ij} =: \mathrm{w}(\boldsymbol{\widetilde{g}})_{ij}, 
  \end{align*}
  which, plugged in the Gauss-Codazzi equation~\eqref{eq:GaussCodazziII}, using~\eqref{eq:GaussCodazziIItrace} to replace $\mathrm{Ric}(\boldsymbol{\widetilde{g}})_{\boldsymbol{\widetilde{N}}\boldsymbol{\widetilde{N}}}$, gives
  \begin{align}\label{eq:GaussCodazziIIbis}
    \begin{aligned}
      \frac{1}{2}\left(\mathrm{Ric}(\boldsymbol{\widetilde{g}})_{ij} - \frac{1}{6} \mathrm{R}(\boldsymbol{\widetilde{g}})\boldsymbol{\widetilde{g}}_{ij}\right) - \mathrm{w}(\boldsymbol{\widetilde{g}})_{ij} & = \mathrm{Ric}(\boldsymbol{h})_{ij} - \frac{1}{4}\mathrm{R}(\boldsymbol{h})\boldsymbol{h}_{ij} + \theta_{ij},
    \end{aligned}
  \end{align}
  with $\theta_{ij} := - \boldsymbol{\Theta}_{ik}\boldsymbol{\Theta}^{k}_j + \mathrm{tr}\boldsymbol{\Theta} \boldsymbol{\Theta}_{ij} + \frac{1}{4} \left(|\boldsymbol{\Theta}|^2 - (\mathrm{tr}\boldsymbol{\Theta})^2\right)\boldsymbol{h}_{ij}$. Defining the Cotton tensor of $\boldsymbol{h}$ by
  \begin{align*}
    \mathrm{C}(\boldsymbol{h})_{ijk} := \boldsymbol{\overline{\nabla}}_k\left(\mathrm{Ric}(\boldsymbol{h})_{ij} - \frac{1}{4}\mathrm{R}(\boldsymbol{h})\boldsymbol{h}_{ij}\right) - \boldsymbol{\overline{\nabla}}_j\left(\mathrm{Ric}(\boldsymbol{h})_{ik} - \frac{1}{4}\mathrm{R}(\boldsymbol{h}) \boldsymbol{h}_{ik}\right),
  \end{align*}
  where $\boldsymbol{\overline{\nabla}}$ is the covariant derivative of $\boldsymbol{h}$, and applying  $\boldsymbol{\overline{\nabla}}$ to~\eqref{eq:GaussCodazziIIbis}, we get
  \begin{align}\label{eq:CottonCotton}
    \begin{aligned}
      \frac{1}{2} \mathrm{C}(\boldsymbol{\widetilde{g}})_{ijk} & = \mathrm{C}(\boldsymbol{h})_{ijk} + \boldsymbol{\overline{\nabla}}_k\theta_{ij} - \boldsymbol{\overline{\nabla}}_j\theta_{ik} + \boldsymbol{\overline{\nabla}}_k\mathrm{w}(\boldsymbol{\widetilde{g}})_{ij} - \boldsymbol{\overline{\nabla}}_k\mathrm{w}(\boldsymbol{\widetilde{g}})_{ij}.
    \end{aligned}
  \end{align}
  From~\eqref{eq:CottonCotton} and Lemma~\ref{lem:Fri}, we deduce that $(u-v)^{-1}\left(\frac{1}{2} \mathrm{C}(\boldsymbol{\widetilde{g}})_{ijk} - \mathrm{C}(\boldsymbol{h})_{ijk}\right)$ extends regularly to $\mathcal{I}$. Combining~\eqref{eq:devconfBianchi} and~\eqref{eq:DivWeylCotton}, we thus deduce that
  \begin{align}\label{eq:WeylCottonBoundary}
    (u-v)^{-1}\mathrm{W}(\boldsymbol{\widetilde{g}})_{\boldsymbol{\widetilde{N}} ijk} - \mathrm{C}(\boldsymbol{h})_{ikj} & = (u-v) E_{ijk},
  \end{align}
  with $E$ smoothly extending to $\mathcal{I}$. Now, the Cotton tensor of a 3-dimensional metric is invariant under a conformal transformation and it is easy to see from its definition that it vanishes for Lorentzian cylinders $-\mathrm{d} t^2 + k^{-2}\gamma$. Thus, if $\boldsymbol{h}$ is conformal to such a metric, one has by~\eqref{eq:WeylCottonBoundary} that $(u-v)^{-2}\mathrm{W}(\boldsymbol{\widetilde{g}})_{\boldsymbol{\widetilde{N}} ijk}$ extends smoothly at $\mathcal{I}$, and the conclusion of the lemma follows.
\end{proof}

We can now prove Proposition~\ref{prop:NullWeylBC}.
\begin{proof}[Proof of Proposition~\ref{prop:NullWeylBC}]
  Using the conformal invariance of the Weyl tensor, one has
  \begin{align*}
    (r^\star)^{-1} \left(\boldsymbol\alpha_{AB} -\underline{\boldsymbol\alpha}_{AB}\right) & = 2 (u-v)^{-2}\boldsymbol{\widetilde{W}}(\boldsymbol{\widetilde{N}},\partial_A,\boldsymbol{\widetilde{T}},\partial_B) + 2(u-v)^{-2}\boldsymbol{\widetilde{W}}(\boldsymbol{\widetilde{N}},\partial_B,\boldsymbol{\widetilde{T}},\partial_A), \\
    (r^\star)^{-2} \left( \boldsymbol\beta_A +\underline{\boldsymbol\beta}_A \right) & = 2(u-v)^{-2}\boldsymbol{\widetilde{W}}(\boldsymbol{\widetilde{N}},\boldsymbol{\widetilde{T}},\boldsymbol{\widetilde{T}},\partial_A), \\
   (r^\star)^{-3} \boldsymbol\sigma & = (u-v)^{-2}{^\star}\boldsymbol{\widetilde{W}}(\boldsymbol{\widetilde{N}},\boldsymbol{\widetilde{T}},\boldsymbol{\widetilde{N}},\boldsymbol{\widetilde{T}}).
  \end{align*}
  From the above formulas and the result of Lemma~\ref{lem:WeylBC} one directly deduces~\eqref{bc1},~\eqref{bc2},~\eqref{bc3}. From the Bianchi equations \eqref{Bianchi1} and \eqref{Bianchi10}, the boundary condition \eqref{bc1} and the fact that $\lim_{v \rightarrow u} r \hat{\boldsymbol{\chi}}= \lim_{v \rightarrow u} r \underline{\hat{\boldsymbol{\chi}}}$ holds by Proposition \ref{prop:ad}, one further infers~\eqref{bc4} and this finishes the proof of the proposition.
\end{proof}


\section{Computation of the $\ell=0$ mode} \label{sec:l0}
From the linear version of the Birkhoff theorem, we already know that the space of solutions supported on $\ell=0$ can only consist of the (linearised) Schwarzschild solution and pure gauge solutions. It turns out we can parametrise the space of solutions more or less explicitly. {\bf In this section all quantities are supported on $\ell=0$ so we simply write $\otx$ for $\otx_{\ell=0}$ etc.~to keep the notation clean. }

We first define two quantities (supported on $\ell=0$ by the above convention):
\[
\Pmcalin :=  r^3 \rlin - 3M \frac{\glinto}{\sqrt{\slashed{g}}}  \ \ \ , \ \ \ \Qlin:= \frac{r \otx}{\Omega^2}  -4\Olin +\frac{\glinto}{\sqrt{\slashed{g}}} \, .
\]
The importance of these quantities lies (partly) in their simple propagation equations (following from (\ref{Bianchi4}), (\ref{Bianchi5}), (\ref{stos}), (\ref{uray}) and (\ref{oml3}))
\begin{align} \label{prol0}
\Omega \slashed{\nabla}_3 \Pmcalin = 0  \ , \ \Omega \slashed{\nabla}_4 \Pmcalin = 0 \ \ \ \textrm{and} \ \ \ \Omega \slashed{\nabla}_4 \Qlin = 0  \, .
\end{align}
Using the formula (\ref{gaussfootnote}) we write the linearised Gauss equation (\ref{lingauss}) for $\ell=0$ as 
\[
-\frac{1}{r^2} \frac{\glinto}{\sqrt{\slashed{g}}} = - \frac{1}{r^3}\Pmcalin -\frac{3M}{r^3} \frac{\glinto}{\sqrt{\slashed{g}}}  - \frac{1}{2r} \Omega \slashed{\nabla}_3 \left( \frac{\glinto}{\sqrt{\slashed{g}}} \right)+ \frac{\Omega^2}{2r^2} \Qlin -\frac{\Omega^2}{2r^2} \frac{\glinto}{\sqrt{\slashed{g}}} \, ,
\]
or more concisely as
\begin{align} \label{ode3g}
\frac{\Omega^2}{r} \Omega \slashed{\nabla}_3  \left(\frac{r}{\Omega^2} \frac{\glinto}{\sqrt{\slashed{g}}} \right) =  -\frac{2}{r^2}\Pmcalin  + \frac{ \Omega^2}{r}\Qlin \, .
\end{align}
We first establish that if $\Pmcalin=0$ and $\Qlin=0$ hold on the initial data cone, the solution is necessarily trivial.
\begin{lemma} \label{lem:zeromodevanishing}
Let $\mathscr{S}$ be a smooth solution of the system of gravitational perturbations supported on $\ell=0$. If  
$\Pmcalin = 0$ and $\Qlin= 0$ hold on $\underline{C}_{v_0}$,
then the solution is necessarily equal to the zero solution. 
\end{lemma}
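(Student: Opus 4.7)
The plan is to propagate the vanishing of $\Pmcalin$ and $\Qlin$ from $\underline{C}_{v_0}$ to all of $\mathcal{M}$, reduce the $\ell=0$ problem to a single scalar $\Phi(v)$ via~(\ref{ode3g}), force $\Phi$ to vanish using the regularity of $\mathscr{S}$ at the conformal boundary, and finally read off the vanishing of every remaining quantity.

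First I would propagate. By~(\ref{prol0}) the scalar $\Pmcalin$ is transported trivially in both null directions, and $\Qlin$ in the outgoing direction, so the initial hypothesis together with integration of the outgoing transport equations gives $\Pmcalin \equiv 0$ and $\Qlin \equiv 0$ throughout $\mathcal{M}$. Substituting into~(\ref{ode3g}) yields
\begin{equation*}
\Omega\slashed{\nabla}_3 \Big( \tfrac{r}{\Omega^2} \tfrac{\glinto}{\sqrt{\slashed{g}}} \Big) = 0.
\end{equation*}
Since $\Omega\slashed{\nabla}_3$ acts as $\partial_u$ on $\ell=0$ scalars, the function $\Phi(v) := r\Omega^{-2}\,\glinto/\sqrt{\slashed{g}}$ is independent of $u$.

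Second I would eliminate $\Phi$ using the asymptotic behaviour of $\mathscr{S}$ at the conformal boundary: since $\mathscr{S}$ is a smooth solution of the system, $\glinto/\sqrt{\slashed{g}}$ admits a finite limit along $\mathcal{I}$, whereas $r/\Omega^2 \sim (k^2 r)^{-1}$ tends to zero there. Evaluating $\Phi(v)$ at the sphere $S^2_{v,v} \subset \mathcal{I}$ for each $v\geq 0$ therefore gives $\Phi \equiv 0$ on $\mathcal{M}$, whence $\glinto/\sqrt{\slashed{g}} \equiv 0$ and, via $\Pmcalin=0$, also $\rlin \equiv 0$.

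The remaining quantities then collapse hierarchically. Equations~(\ref{stos}) give $\otx = \partial_v(\glinto/\sqrt{\slashed{g}}) = 0$ and, using that $\bmlin \equiv 0$ on $\ell=0$, also $\otxb = \partial_u(\glinto/\sqrt{\slashed{g}}) = 0$; then $\Qlin = 0$ forces $\Olin = 0$, and~(\ref{oml3}) yields $\olin = \olinb = 0$. The tensorial components $\glinh, \bmlin, \xlin, \xblin, \alin, \ablin, \blin, \bblin, \elin, \eblin$ vanish automatically by the $\ell=0$ support assumption, since they are symmetric traceless tensors or one-forms on the spheres; the elliptic relation $\slashed{curl}\,\elin = \slin$ from~(\ref{curleta}) then gives $\slin = 0$, and~(\ref{lingauss}) produces $\Klin = 0$. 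Hence $\mathscr{S} \equiv 0$. The main obstacle is the second step: the remaining $\ell=0$ bulk equations turn out to be consistent with an arbitrary function $\Phi(v)$, so the vanishing of $\Phi$ must really come from the boundary behaviour of $\glinto/\sqrt{\slashed{g}}$ at $\mathcal{I}$, a genuine input from the smooth solution class and in particular from the asymptotically AdS character of the solutions produced by Theorem~\ref{theo:wp}.
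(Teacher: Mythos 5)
Your proposal is correct and follows essentially the same route as the paper's proof: propagate $\Pmcalin$ and $\Qlin$ in the $4$-direction from $\underline{C}_{v_0}$, feed the vanishing into~(\ref{ode3g}), use that $\tfrac{r}{\Omega^2}\tfrac{\glinto}{\sqrt{\slashed{g}}}$ vanishes on $\mathcal{I}$ to conclude $\glinto\equiv 0$, and then read off the remaining quantities. Your version is merely more explicit about why the boundary limit vanishes and about the final cascade, both of which the paper leaves implicit.
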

\begin{proof}
The quantities $\Pmcalin$ and $\Qlin$ are conserved in the $4$-direction by (\ref{prol0}), hence zero on $\mathcal{M}_{int}$. It follows that $\Pmcalin=\Qlin=0$ in (\ref{ode3g}), which since $\frac{r}{\Omega^2}\frac{\glinto}{\sqrt{\slashed{g}}}$ vanishes at $\mathcal{I}$ implies that $\frac{\glinto}{\sqrt{\slashed{g}}}=0$ on $\mathcal{M}_{int}$. It now immediately follows that $\Klin=0$, $\rlin = 0$, $\otx=0$, $\otxb=0$, $\Olin=0$, $\olin=0$, $\olinb=0$.
\end{proof}

We wish to study all radial solutions of (\ref{ode3g}) to exhaust the space of solutions for $\ell=0$. We first note that by adding a pure gauge solution, we can restrict to the case of both $\Pmcalin$ and $\Qlin$ being constant on $\mathcal{M}_{int}$.

\begin{lemma} \label{lem:l0m}
Let $\mathscr{S}$ be a solution of the system of gravitational perturbations supported on $\ell=0$. Then we can add a pure gauge solution $\mathscr{G}_f$ from Lemma \ref{lem:exactsol} such that the solution $\mathscr{S}^\prime = \mathscr{S} + \mathscr{G}_f$ satisfies 
\begin{align}
\otx \big|_{\mathscr{S}^\prime} (\infty,v_0) = 0 \ \ \textrm{and} \ \  \Qlin|_{\mathscr{S}^\prime} (u,v_0) \ \ \textrm{is constant in $u$.} \nonumber
\end{align}
As a consequence of (\ref{prol0}), $\Pmcalin$ and $\Qlin$ are both constant on $\mathcal{M}_{int}$ for $\mathscr{S}^\prime$.
\end{lemma}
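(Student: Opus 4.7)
The plan is to compute $\Qlin^{\mathscr{G}_f}$ and $\otx^{\mathscr{G}_f}$ explicitly using Lemma \ref{lem:exactsol} for a pure gauge generated by an $\ell=0$ function $f=f(u)$ (so $\slashed{\nabla}f_u = \slashed{\Delta}f_u=0$), and then choose $f$ by solving a single ODE on $[0,\infty)$. The key is the clean identity
\[
\Qlin^{\mathscr{G}_f}(u,v) = -2 f'(u),
\]
which I would prove by assembling the expressions of Lemma \ref{lem:exactsol} for $\otx$, $\Olin$ and $\glinto/\sqrt{\slashed{g}}$ into the definition of $\Qlin$ and verifying, with the background identities $\partial_u r = -\Omega^2$, $\partial_v r = \Omega^2$, $\partial_v \Omega^2 = \Omega^2(2M/r^2+2k^2 r)$ and $\Omega^2 = 1 - 2M/r + k^2 r^2$, that all terms involving $f_v$ and explicit background $r$-dependence cancel. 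This reduces the constancy condition $\partial_u \Qlin^{\mathscr{S}+\mathscr{G}_f}(u,v_0) = 0$ to the linear second-order ODE
\[
f''(u) = \tfrac{1}{2}\partial_u \Qlin^{\mathscr{S}}(u,v_0), \qquad u \in [0,\infty),
\]
which I solve by direct double integration, leaving two free integration constants (an affine freedom $c_1 u + c_0$).

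It remains to check that this $f$ also achieves $\otx|_{\mathscr{S}'}(\infty,v_0) = 0$. A parallel explicit computation gives $\otx^{\mathscr{G}_f}(u,v_0) = \frac{2\Omega^2(u,v_0)}{r^2}\bigl[rf'(v_0) + (f(v_0)-f(u))(4M/r + k^2 r^2 - 1)\bigr]$, hence
\[
\otx^{\mathscr{G}_f}(\infty,v_0) = -\frac{2(1+3k^2 r_+^2)}{r_+^2}\,g_\infty, \qquad g_\infty := \lim_{u\to\infty}\Omega^2(u,v_0) f(u),
\]
where the factor $1+3k^2 r_+^2$ arises from $\Omega^2(r_+)=0$, which gives $2M/r_+ = 1+k^2 r_+^2$ and thus $4M/r + k^2 r^2 - 1 \to 1+3k^2 r_+^2$ at the horizon. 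If $\otx^{\mathscr{S}}(\infty,v_0)\neq 0$ then $\Qlin^{\mathscr{S}}(u,v_0)$ is singular like $r_+ \otx^{\mathscr{S}}(\infty,v_0)/\Omega^2(u,v_0)$ as $u \to \infty$, so the solution $f$ of the ODE inherits a compensating $(\Omega^2)^{-1}$ singularity of exactly the right strength. An asymptotic calculation using $\partial_u \Omega^2 = -\Omega^2 c$ with $c = 2M/r_+^2 + 2k^2 r_+ = (1+3k^2 r_+^2)/r_+$ near the horizon then evaluates $g_\infty = r_+^2\otx^{\mathscr{S}}(\infty,v_0)/(2(1+3k^2 r_+^2))$, which automatically yields $\otx^{\mathscr{S}'}(\infty,v_0) = 0$. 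The same analysis shows that $\Omega^2 f$ extends continuously to $\mathcal{H}^+$; smoothness in regular Kruskal-type coordinates (needed for admissibility in Lemma \ref{lem:exactsol}) then follows from the smoothness of $\Qlin^{\mathscr{S}}$ in the extended sense and the explicit integral representation of $f$.

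The final assertion is immediate from (\ref{prol0}): $\Omega\slashed{\nabla}_3 \Pmcalin = \Omega\slashed{\nabla}_4 \Pmcalin = 0$ forces the (gauge-invariant) quantity $\Pmcalin$ to be a global constant on $\mathcal{M}_{int}$, while $\Omega\slashed{\nabla}_4 \Qlin = 0$ propagates the $u$-independence of $\Qlin|_{\mathscr{S}'}$ along outgoing cones from $\underline{C}_{v_0}$ to global constancy on $\mathcal{M}_{int}$. The main obstacle in carrying out this plan is verifying the clean cancellation yielding $\Qlin^{\mathscr{G}_f} = -2 f'(u)$, which is a concrete but somewhat lengthy algebraic check. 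Given that identity, the $\otx$ boundary condition comes for free from the ODE solution and its asymptotics at $\mathcal{H}^+$, and the remaining conclusions follow by the propagation equations.
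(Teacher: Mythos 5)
Your proposal is correct, and it rests on the same two pillars as the paper's own proof: the identity $\Qlin^{\mathscr{G}_f}=-2f'(u)$ for an $\ell=0$ gauge function (I have checked this; the $f_v$-terms and all explicit background terms do cancel exactly as you claim), and the fact that the $\Omega^{-2}$-weight in the definition of $\Qlin$ ties the horizon value of $\otx$ to the singular part of $f$. The difference is organizational. The paper proceeds in two steps: it first adds an explicit $\tilde f \propto \Omega^{-2}(u,v_0)$ chosen to kill $\otx(\infty,v_0)$, which renders $\Qlin(\cdot,v_0)$ bounded up to the horizon, and then adds a second, \emph{bounded} $\hat f_u$ solving the first-order ODE $2\partial_u\hat f_u=\Qlin_{\tilde{\mathscr{S}}}(u,v_0)-\Qlin_{\tilde{\mathscr{S}}}(\infty,v_0)$, boundedness of $\hat f_u$ guaranteeing that the second step does not disturb $\otx(\infty,v_0)$. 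You instead solve the single (equivalent, once-integrated) ODE $f'=\tfrac12\Qlin^{\mathscr{S}}(\cdot,v_0)+c_1$ and observe that the $\Omega^{-2}$-singularity of $\Qlin^{\mathscr{S}}$ produced by a nonzero $\otx^{\mathscr{S}}(\infty,v_0)$ forces $f$ to acquire exactly the compensating singular behaviour, with $g_\infty=r_+^2\,\otx^{\mathscr{S}}(\infty,v_0)/(2(1+3k^2r_+^2))$. Your constants check out: the automatic cancellation reflects the coincidence $(1-\tfrac{4M}{r}-k^2r^2)\big|_{r=r_+}=-(1+3k^2r_+^2)=-\,r\,\partial_r\Omega^2\big|_{r=r_+}$, so that the factor appearing in the horizon limit of $\otx^{\mathscr{G}_f}$ is the same one produced by $\lim_{u\to\infty}\Omega^2(u,v_0)\int^u\Omega^{-2}(\bar u,v_0)\,d\bar u=r_+/(1+3k^2r_+^2)$ (for the limit computation one should split off the constant horizon value of $r\otx^{\mathscr{S}}$ before integrating, but the remainder is $o(1)$ and does not contribute to $g_\infty$). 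What your version buys is that the horizon normalisation is seen to be \emph{forced} by the boundedness of $\Qlin^{\mathscr{S}'}$ rather than arranged by hand; what the paper's version buys is that each gauge function is either explicit or manifestly bounded, so admissibility ($\Omega^2 f$ extending regularly to $\mathcal{H}^+$) is immediate, whereas in your version this requires the asymptotic argument you sketch, with smoothness in the extended sense ultimately resting on the regularity of $\Qlin^{\mathscr{S}}$ at $\mathcal{H}^+$ from (\ref{smoothextended}). The closing propagation argument via (\ref{prol0}) is identical in both.
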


\begin{proof}
Letting $\tilde{f}(u) = \frac{1}{\Omega^2(u,v_0)} \frac{1}{2} \frac{r_+^2}{\frac{2M}{r_+} + k^2 r_+^2}\otx_{\mathscr{S}}(\infty,v_0)$ generate $\tilde{f}_u$ and $\tilde{f}_v$ and a pure gauge solution $\mathscr{G}_{\tilde{f}}$ as in Lemma \ref{lem:exactsol} we achieve that $\mathscr{S} + \mathscr{G}_{\tilde{f}}$ satisfies $\otx \big|_{\mathscr{S} + \mathscr{G}_{\tilde{f}}} (\infty,v_0) = 0$. In particular, the quantity $\Qlin(\infty,v_0)$ is now regular at the horizon for the solution $\mathscr{S} + \mathscr{G}_{\tilde{f}}$. We next add a second pure gauge solution $\mathscr{G}_{\hat{f}}$ which does not affect $\otx(\infty,v_0)$ but achieves the second condition. For this we define $\hat{f}_u$ by the ODE
\[
\Qlin_{\mathscr{S} + \mathscr{G}_{\tilde{f}}}(\infty,v_0) = \Qlin_{\mathscr{S} + \mathscr{G}_{\tilde{f}}} (u,v_0)- 2\partial_u \hat{f}_u (u,v_0) \, \ \ , \ \ \hat{f}_u(u_0,v_0)=0 \, .
\]
One now checks that $\Qlin_{\mathscr{S}^\prime:=\mathscr{S} + \mathscr{G}_{\tilde{f}} + \mathscr{G}_{\hat{f}}}$
is indeed constant, $\Qlin_{\mathscr{S}^\prime} (u,v_0) = \Qlin_{\mathscr{S} + \mathscr{G}_{\tilde{f}}} (\infty,v_0)$, and that $\hat{f}_u$ is bounded. By Lemma \ref{lem:exactsol}, a bounded $\hat{f}_u$ will imply $\otx|_{\mathscr{S}^\prime}(\infty,v_0) = \otx \big|_{\mathscr{S} + \mathscr{G}_{\tilde{f}}} (\infty,v_0)= 0$.
\end{proof}

Let us denote the constants $d:=\Pmcalin$ and $c:=\Qlin$ and compute now the general regular radial solutions of (\ref{ode3g}). Setting  
$\frac{\glinto}{\sqrt{\slashed{g}}} = f(r)$, $f$ satisfies the ODE 
\begin{align} \label{rado}
\partial_r \left(f \frac{r}{\Omega^2}  \right) =+\frac{2d \cdot M}{r \Omega^4}   - \frac{c}{\Omega^2} \, , 
\end{align} 
which we can write as (setting $\frac{1}{l^2}=k^2$)
\[
\partial_r \left(f \frac{r}{\Omega^2}  + \frac{2dM}{1+3\frac{r^2}{l^2}}\frac{1}{\Omega^2} \right) = 2dM \frac{l^2(l^2-3r^2)}{(l^2+3r^2)^2}\frac{1}{\Omega^2 r}-\frac{c}{\Omega^2} \, .
\]
To make the solution regular at the horizon we require
\[
c=2dM \frac{l^2(l^2-3r_+^2)}{r_+(l^2+3r_+^2)^2} \, .
\]
Note that with this the right hand side is integrable near infinity and near the horizon. We finally obtain\footnote{Note that at this point we can no longer take the limit $l \rightarrow \infty$ to compare with the asymptotically flat case, since we have used that $\frac{r}{\Omega^2}$ goes to zero, which it does not in the asymptotically flat case. However, in (\ref{rado}) we can still take the limit $\ell \rightarrow \infty$ and check that in this case $c=d$ and $f=-d$ is indeed a solution, as was obtained in \cite{Daf.Hol.Rod19}.}
\begin{align}
f (r) = -\frac{2dM}{r \left(1+\frac{3r^2}{l^2}\right)} +\frac{\Omega^2}{r} \int_r^\infty \frac{2dM}{\Omega^2} \left(\frac{l^2(l^2-3r_+^2)}{(l^2+3r_+^2)^2} \frac{1}{r_+} -  \frac{l^2(l^2-3r^2)}{(l^2+3r^2)^2}\frac{1}{ r}\right) dr \, ,
\end{align}
which satisfies
\[
f(r_+) = -\frac{2d M}{r_+\left(1+\frac{3r_+^2}{l^2}\right)} \ \ \ \textrm{and} \ \ \ \ f(\infty) = 2dM \frac{l^2(l^2-3r_+^2)}{(l^2+3r_+^2)^2} \frac{1}{r_+} \, .
\]
In particular, $f$ is uniformly bounded and smooth on the exterior. All non-vanishing Ricci-coefficients and curvature components can easily be computed in terms of $f$. We find
\[
\otx = (\partial_r f) \Omega^2  \ \ \ , \ \ \ \otxb = -(\partial_r f) \Omega^2  \, ,
\]
and from the definition of $\Qlin$ the expression
\begin{align}
\Olin &= \frac{1}{4} \left( \partial_r (\Omega^2 \Omega^{-2} f \cdot r) - c \right)=
\frac{1}{4} \left( \frac{2dM}{r \Omega^2} - 4dM \frac{l^2(l^2-3r_+^2)}{r_+(l^2+3r_+^2)^2}\right) +\frac{1}{4} \left(\frac{2M}{r^2} + \frac{2r}{l^2} \right) \frac{f r}{\Omega^2} \, .
\end{align}
To check that $\Olin$ is finite at the horizon we compute
\[
\lim_{r \rightarrow r_+} \Olin(r)  = \frac{dM}{2\left(1+\frac{3r_+^2}{l^2}\right)} \, .
\]
Note also
\[
\lim_{r \rightarrow \infty} \Olin(r) =  - dM \frac{l^2(l^2-3r_+^2)}{r_+(l^2+3r_+^2)^2} + dM \frac{l^2(l^2-3r_+^2)}{r_+(l^2+3r_+^2)^2} = 0 \, , 
\]
so $\Olin$ vanishes at infinity. 
Finally, we obtain from the null structure equations 
\[
\olin = \Omega \slashed{\nabla}_4 \Olin \ \ , \ \ \olinb = \Omega \slashed{\nabla}_3 \Olin \ \ , \ \ \Klin = -\frac{1}{r^2} \frac{\glinto}{\sqrt{\slashed{g}}}  \ \ , \ \ \rlin = \frac{dM}{r^3} + \frac{3M}{r^3} \frac{\glinto}{\sqrt{\slashed{g}}} \, .
\]
This concludes our derivation of the solution appearing in Lemma \ref{lem:kerr} of the text. 

\bibliographystyle{graf_GR_alpha}
\bibliography{graf_GR}

\begin{thebibliography}{DHRT21}

\bibitem[BP73]{Bardeen73}
J.~M. Bardeen, W.~H. Press, \emph{{Radiation fields in the schwarzschild
  background}}, J. Math. Phys. 14 (1973), 7--19.

\bibitem[BR11]{Biz.Ros11}
P.~Bizo{\'n}, A.~Rostworowski, \emph{Weakly {{Turbulent Instability}} of
  {{Anti}}\textendash de {{Sitter Spacetime}}}, Phys. Rev. Lett. 107 (2011),
  no.~3, 031102.

\bibitem[Chr09]{Chr09}
D.~Christodoulou, \emph{The formation of black holes in general relativity},
  EMS Monographs in Mathematics  (2009), x+589 pp.

\bibitem[CK93]{Chr.Kla93}
D.~Christodoulou, S.~Klainerman, \emph{The global nonlinear stability of the
  {{Minkowski}} space}, Princeton Univ. Press  (1993), x+483 pp.

\bibitem[DHR19]{Daf.Hol.Rod19}
M.~Dafermos, G.~Holzegel, I.~Rodnianski, \emph{The linear stability of the
  {{Schwarzschild}} solution to gravitational perturbations}, Acta Math. 222
  (2019), no.~1, 1--214.

\bibitem[DHRT21]{Daf.Hol.Rod.Tay21}
M.~Dafermos, G.~Holzegel, I.~Rodnianski, M.~Taylor, \emph{The non-linear
  stability of the {{Schwarzschild}} family of black holes}, arXiv:2104.08222
  (2021), 519 pp.

\bibitem[Dol17]{Dol17}
D.~Dold, \emph{Unstable mode solutions to the {{Klein}}\textendash{{Gordon}}
  equation in {{Kerr-anti-de Sitter}} spacetimes}, Comm. Math. Phys. 350
  (2017), no.~2, 639--697.

\bibitem[DR08]{Daf.Rod08}
M.~Dafermos, I.~Rodnianski, \emph{Lectures on black holes and linear waves},
  Clay Math. Proc. 17 (2008).

\bibitem[EK19]{Enc.Kam19}
A.~Enciso, N.~Kamran, \emph{Lorentzian {{Einstein}} metrics with prescribed
  conformal infinity}, J. Differential Geom. 112 (2019), no.~3.

\bibitem[Fri95]{Fri95}
H.~Friedrich, \emph{Einstein equations and conformal structure: {{Existence}}
  of anti-de {{Sitter-type}} space-times}, Journal of Geometry and Physics 17
  (1995), no.~2, 125--184.

\bibitem[GH23]{Gra.Hol23}
O.~Graf, G.~Holzegel, \emph{Mode stability results for the {{Teukolsky}}
  equations on {{Kerr}}\textendash anti-de {{Sitter}} spacetimes}, Class.
  Quantum Grav. 40 (2023), no.~4, 045003.

\bibitem[GH24]{Gra.Hola}
O.~Graf, G.~Holzegel, \emph{Linear {{Stability}} of {{Schwarzschild-Anti-de
  Sitter}} spacetimes {{II}}: Logarithmic decay of solutions to the
  {{Teukolsky}} system}, arXiv preprint  (2024).

\bibitem[GKS22]{Gio.Kla.Sze22}
E.~Giorgi, S.~Klainerman, J.~Szeftel, \emph{Wave equations estimates and the
  nonlinear stability of slowly rotating {{Kerr}} black holes},
  arXiv:2205.14808v1  (2022), no. arXiv:2205.14808, 912 pp.

\bibitem[HE08]{Haw.Ell08}
S.~W. Hawking, G.~F.~R. Ellis, \emph{The Large Scale Structure of Space-Time},
  Cambridge Monographs on Mathematical Physics, {Cambridge Univ. Press},
  {Cambridge}, 21. printing edition (2008).

\bibitem[HLSW20]{Hol.Luk.Smu.War20}
G.~Holzegel, J.~Luk, J.~Smulevici, C.~Warnick, \emph{Asymptotic properties of
  linear field equations in anti-de {{Sitter}} space}, Comm. Math. Phys. 374
  (2020), no.~2, 1125--1178.

\bibitem[HS13]{Hol.Smu13}
G.~Holzegel, J.~Smulevici, \emph{Decay properties of {{Klein-Gordon}} fields on
  {{Kerr-adS}} spacetimes}, Comm. Pure Appl. Math. 66 (2013), no.~11,
  1751--1802.

\bibitem[HS14]{Hol.Smu14}
G.~Holzegel, J.~Smulevici, \emph{Quasimodes and a lower bound on the uniform
  energy decay rate for {{Kerr}}\textendash{{AdS}} spacetimes}, Anal. PDE 7
  (2014), no.~5, 1057--1090.

\bibitem[HS16]{Hol.Sha16}
G.~Holzegel, A.~Shao, \emph{Unique {{Continuation}} from {{Infinity}} in
  {{Asymptotically Anti-de Sitter Spacetimes}}}, Commun. Math. Phys. 347
  (2016), no.~3, 723--775.

\bibitem[HV16]{Hin.Vas16}
P.~Hintz, A.~Vasy, \emph{Global {{Analysis}} of {{Quasilinear Wave Equations}}
  on {{Asymptotically Kerr-de Sitter Spaces}}}, Int Math Res Notices 2016
  (2016), no.~17, 5355--5426.

\bibitem[Kha83]{Kha83}
U.~Khanal, \emph{Rotating black hole in asymptotic de {{Sitter}} space:
  {{Perturbation}} of the space-time with spin fields}, Phys. Rev. D 28 (1983),
  no.~6, 1291--1297.

\bibitem[KM]{Keh.Mos}
C.~Kehle, G.~Moschidis, in preparation .

\bibitem[Mos18]{Mos18}
G.~Moschidis, \emph{A proof of the instability of {{AdS}} for the
  {{Einstein--massless Vlasov}} system}, arXiv:1812.04268  (2018), 132 pp.

\bibitem[Mos20]{Mos20}
G.~Moschidis, \emph{A proof of the instability of {{AdS}} for the
  {{Einstein-null}} dust system with an inner mirror}, Analysis \& PDE 13
  (2020), no.~6, 1671--1754.

\bibitem[RW57]{Reg.Whe57}
T.~Regge, J.~A. Wheeler, \emph{Stability of a {{Schwarzschild Singularity}}},
  Phys. Rev. 108 (1957), no.~4, 1063--1069.

\bibitem[Teu72]{Teu72}
S.~A. Teukolsky, \emph{Rotating {{Black Holes}}: {{Separable Wave Equations}}
  for {{Gravitational}} and {{Electromagnetic Perturbations}}}, Phys. Rev.
  Lett. 29 (1972), no.~16, 1114--1118.

\end{thebibliography}

\end{document}